\newtheorem{theorem}{Theorem}[section]
\newtheorem{lemma}[theorem]{Lemma}
\newtheorem{corollary}[theorem]{Corollary}
\newtheorem{proposition}[theorem]{Proposition}
\renewcommand{\mathbf}{\boldsymbol} 
\newcommand{\mb}{\mathbf}
\newcommand{\mc}{\mathcal}
\newcommand{\bb}{\mathbb}
\newcommand{\magnitude}[1]{ \left| #1 \right| } 
\newcommand{\set}[1]{\left\{ #1 \right\}}
\newcommand{\reals}{\bb R}
\newcommand{\eps}{\varepsilon}
\newcommand{\R}{\reals}
\newcommand{\indicator}[1]{\mathbbm 1_{#1}}
\newcommand{ \Brac }[1]{\left\lbrace #1 \right\rbrace}
\newcommand{ \brac }[1]{\left[ #1 \right]}
\newcommand{ \paren }[1]{ \left( #1 \right) }
\DeclareMathOperator{\supp}{supp}
\DeclareMathOperator{\diag}{diag}
\DeclareMathOperator{\sign}{sign}
\DeclareMathOperator{\mini}{minimize}
\DeclareMathOperator{\st}{subject\; to}
\numberwithin{equation}{section}
\newcommand{\event}{\mc E}
\newcommand{\rconcave}{r_\fgecap}
\newcommand{\Lconcave}{L_\fgecap}
\newcommand{\rconvex}{r_\fgecup}
\newcommand{\Lconvex}{L_\fgecup}
\newcommand{\wh}{\widehat}
\newcommand{\wt}{\widetilde}
\newcommand{\ol}{\overline}
\newcommand{\norm}[2]{\left\| #1 \right\|_{#2}}
\newcommand{\abs}[1]{\left| #1 \right|}
\newcommand{\innerprod}[2]{\left\langle #1,  #2 \right\rangle}
\newcommand{\prob}[1]{\bb P\left[ #1 \right]}
\newcommand{\expect}[1]{\bb E\left[ #1 \right]}
\newcommand{\js}[1]{#1}
\begin{document}
\title{Complete Dictionary Recovery over the Sphere \\ I: Overview and the Geometric Picture}
\author{Ju~Sun,~\IEEEmembership{Student Member,~IEEE,}
        Qing~Qu,~\IEEEmembership{Student Member,~IEEE,}
        and~John~Wright,~\IEEEmembership{Member,~IEEE}
\thanks{JS, QQ, and JW are all with Electrical Engineering, Columbia University, New York, NY 10027, USA. Email: \{js4038, qq2105, jw2966\}@columbia.edu. An extended abstract of the current work has been published in~\cite{sun2015complete_conf}. Proofs of some secondary results are contained in the combined technical report~\cite{sun2015complete_tr}. }
\thanks{Manuscript received xxx; revised xxx.}}

\markboth{IEEE Transaction on Information Theory,~Vol.~xx, No.~xx, xxxx~2016}%
{Sun \MakeLowercase{\textit{et al.}}: Complete Dictionary Recovery over the Sphere}



\maketitle

\begin{abstract}
We consider the problem of recovering a complete (i.e., square and invertible) matrix $\mb A_0$, from $\mb Y \in \R^{n \times p}$ with $\mb Y = \mb A_0 \mb X_0$, provided $\mb X_0$ is sufficiently sparse. This recovery problem is central to theoretical understanding of dictionary learning, which seeks a sparse representation for a collection of input signals and finds numerous applications in modern signal processing and machine learning. We give the first efficient algorithm that provably recovers $\mb A_0$ when $\mb X_0$ has $O\paren{n}$ nonzeros per column, under suitable probability model for $\mb X_0$. In contrast, prior results based on efficient algorithms \js{either only guarantee recovery when $\mb X_0$ has $O(\sqrt{n})$ zeros per column, or require multiple rounds of SDP relaxation to work when $\mb X_0$ has $O(n^{1-\delta})$ nonzeros per column (for any constant $\delta \in (0, 1)$). }

Our algorithmic pipeline centers around solving a certain nonconvex optimization problem with a spherical constraint. In this paper, we provide a geometric characterization of the objective landscape. In particular, we show that the problem is highly structured: with high probability, (1) there are no ``spurious'' local minimizers; and (2) around all saddle points the objective has a negative directional curvature. This distinctive structure makes the problem amenable to efficient optimization algorithms. In a companion paper~\cite{sun2015complete_b}, we design a second-order trust-region algorithm over the sphere that provably converges to a local minimizer from arbitrary initializations, despite the presence of saddle points.
\end{abstract}

\begin{IEEEkeywords}
Dictionary learning, Nonconvex optimization, Spherical constraint, Escaping saddle points, Trust-region method, Manifold optimization, Function landscape, Second-order geometry, Inverse problems, Structured signals, Nonlinear approximation
\end{IEEEkeywords}

%
\IEEEpeerreviewmaketitle


\section{Introduction}
Given $p$ signal samples from $\R^n$, i.e., $\mb Y \doteq \brac{\mb y_1, \dots, \mb y_p}$, is it possible to construct a ``dictionary'' $\mb A \doteq \brac{\mb a_1, \dots, \mb a_m}$ with $m$ much smaller than $p$, such that $\mb Y \approx \mb A \mb X$ and the coefficient matrix $\mb X$ has as few nonzeros as possible? In other words, this model \emph{dictionary learning} (DL) problem seeks a concise representation for a collection of input signals. Concise signal representations play a central role in compression, and also prove useful to many other important tasks, such as signal acquisition, denoising, and classification. 

Traditionally, concise signal representations have relied heavily on explicit analytic bases constructed in nonlinear approximation and harmonic analysis. This constructive approach has proved highly successful; the numerous theoretical advances in these fields (see, e.g., ~\cite{devore1998nonlinear, temlyakov2003nonlinear, devore2009nonlinear, candes2002new, ma2010review} for summary of relevant results) provide ever more powerful representations, ranging from the classic Fourier basis to modern multidimensional, multidirectional, multiresolution bases, including wavelets, curvelets, ridgelets, and so on. However, two challenges confront practitioners in adapting these results to new domains: which function class best describes signals at hand, and consequently which representation is most appropriate. These challenges are coupled, as function classes with known ``good'' analytic bases are rare. \footnote{As Donoho et al~\cite{donoho1998data} put it, ``...in effect, uncovering the optimal codebook structure of naturally occurring data involves more challenging empirical questions than any that have ever been solved in empirical work in the mathematical sciences.''}

Around 1996, neuroscientists Olshausen and Field discovered that sparse coding, the principle of encoding a signal with few atoms from a learned dictionary, reproduces important properties of the receptive fields of the simple cells that perform early visual processing~\cite{olshausen1996emergence, olshausen1997sparse}. The discovery has spurred a flurry of algorithmic developments and successful applications for DL in the past two decades, spanning classical image processing, visual recognition, compressive signal acquisition, and also recent deep architectures for signal classification (see, e.g., \cite{elad2010sparse, mairal2014sparse} for review of this development). 

The learning approach is particularly relevant to modern signal processing and machine learning, which deal with data of huge volume and great variety (e.g., images, audios, graphs, texts, genome sequences, time series, etc). The proliferation of problems and data seems to preclude analytically deriving optimal representations for each new class of data in a timely manner. On the other hand, as datasets grow, learning dictionaries directly from data looks increasingly attractive and promising. When armed with sufficiently many data samples of one signal class, by solving the model DL problem, one would expect to obtain a dictionary that allows sparse representation for the whole class. This hope has been borne out in a number of successful examples~\cite{elad2010sparse, mairal2014sparse} and theories~\cite{maurer2010dimensional, Vainsencher:2011, Mehta13, gribonval2013sample}. 

\subsection{Theoretical and Algorithmic Challenges}

In contrast to the above empirical successes, theoretical study of dictionary learning is still developing. For applications in which dictionary learning is to be applied in a ``hands-free'' manner, it is desirable to have efficient algorithms which are guaranteed to perform correctly, when the input data admit a sparse model. There have been several important recent results in this direction, which we will review in Section \ref{sec:lit_review}, after our sketching main results. Nevertheless, obtaining algorithms that provably succeed under broad and realistic conditions remains an important research challenge. 

To understand where the difficulties arise, we can consider a model formulation, in which we attempt to obtain the dictionary $\mb A$ and coefficients $\mb X$ which best trade-off sparsity and fidelity to the observed data:
\begin{align} \label{eq:dl_concept}
\mini_{\mb A \in \R^{n \times m}, \mb X \in \R^{m \times p}}\; \lambda \norm{\mb X}{1} + \frac{1}{2} \norm{\mb A \mb X - \mb Y}{F}^2, \; \st \;  \mb A \in \mc A.  
\end{align}
Here, $\norm{\mb X}{1} \doteq \sum_{i, j} \abs{X_{ij}}$ promotes sparsity of the coefficients, $\lambda \ge 0$ trades off the level of coefficient sparsity and quality of approximation, and $\mc A$ imposes desired structures on the dictionary.

This formulation is nonconvex: the admissible set $\mc A$ is typically nonconvex (e.g., orthogonal group, matrices with normalized columns)\footnote{For example, in nonlinear approximation and harmonic analysis, orthonormal basis or (tight-)frames are preferred; to fix the scale ambiguity discussed in the text, a common practice is to require that $\mb A$ to be column-normalized. 
}, while the most daunting nonconvexity comes from the bilinear mapping: $\paren{\mb A, \mb X} \mapsto \mb A \mb X$. Because $\paren{\mb A, \mb X}$ and $\paren{\mb A \mb \Pi \mb \Sigma, \mb \Sigma^{-1} \mb \Pi^* \mb X}$ result in the same objective value for the conceptual formulation~\eqref{eq:dl_concept}, where $\mb \Pi$ is any permutation matrix, and $\mb \Sigma$ any diagonal matrix with diagonal entries in $\{ \pm 1 \}$, and $\paren{\cdot}^*$ denotes matrix transpose. Thus, we should expect the problem to have combinatorially many global minimizers. These global minimizers are generally isolated, likely jeopardizing natural convex relaxation (see similar discussions in, e.g.,~\cite{gribonval2010dictionary} and~\cite{geng2011local}).\footnote{\js{Simple convex relaxations normally replace the objective function with a convex surrogate, and the constraint set with its convex hull. When there are multiple isolated global minimizers for the original nonconvex problem, any point in the convex hull of these global minimizers are necessarily feasible for the relaxed version, and such points tend to produce smaller or equal values than that of the original global minimizers by the relaxed objective function, due to convexity. This implies such relaxations are bound to be loose. } Semidefinite programming (SDP) lifting may be one useful general strategy to convexify bilinear inverse problems, see, e.g., \cite{ahmed2014blind, choudhary2014identifiability}. However, for problems with general nonlinear constraints, it is unclear whether the lifting always yields tight relaxation; consider, e.g.,~\cite{bandeira2013approximating, briet2014tight,choudhary2014identifiability} and the identification issue in blind deconvolution~\cite{li2015identifiability,kech2016optimal}. } This contrasts sharply with problems in sparse recovery and compressed sensing, in which simple convex relaxations are often provably effective
\cite{donoho2009observed, oymak2010new, candes2011robust, donoho2013phase, mccoy2014sharp, mu2013square, chandrasekaran2012convex, candes2013phaselift, amelunxen2014living, candes2014mathematics}. Is there any hope to obtain global solutions to the DL problem? 

\subsection{An Intriguing Numerical Experiment with Real Images} \label{sec:intro_exp}
\begin{figure}[t]
\centering  
\begin{subfigure}[t]{0.3\textwidth}
\centering
\includegraphics[width = 0.9\linewidth]{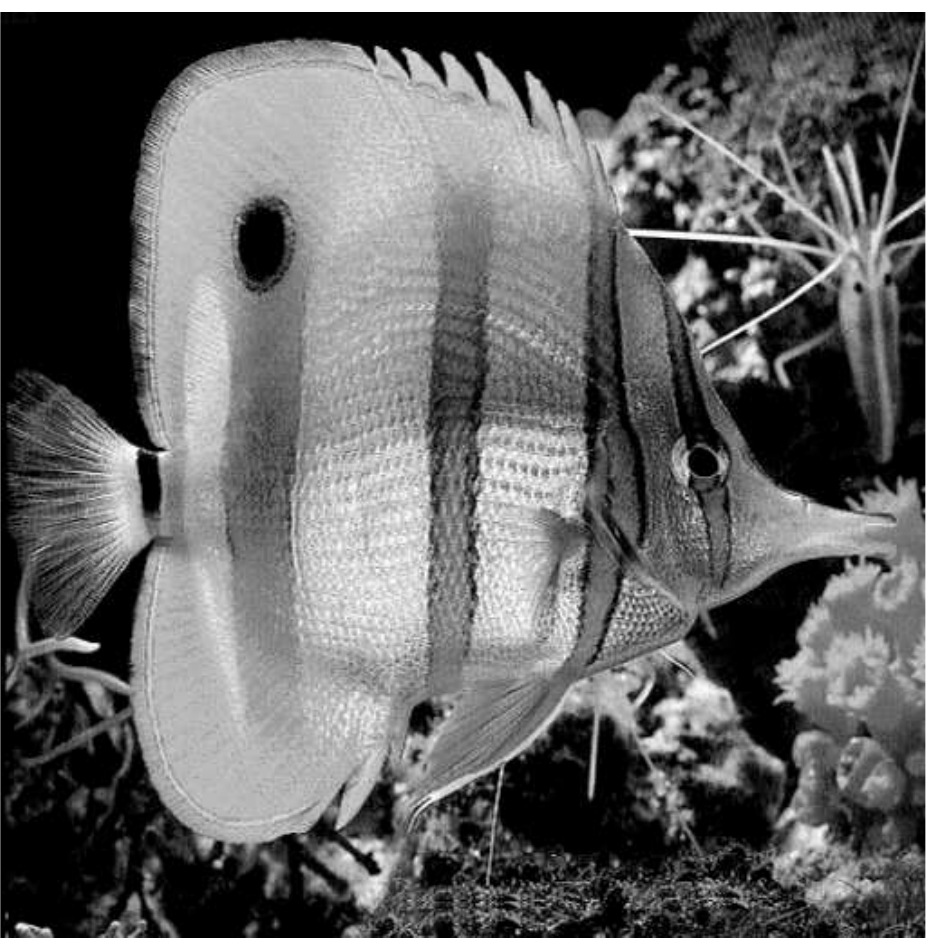} \\
\vspace{2mm}
\includegraphics[width = 0.9\linewidth]{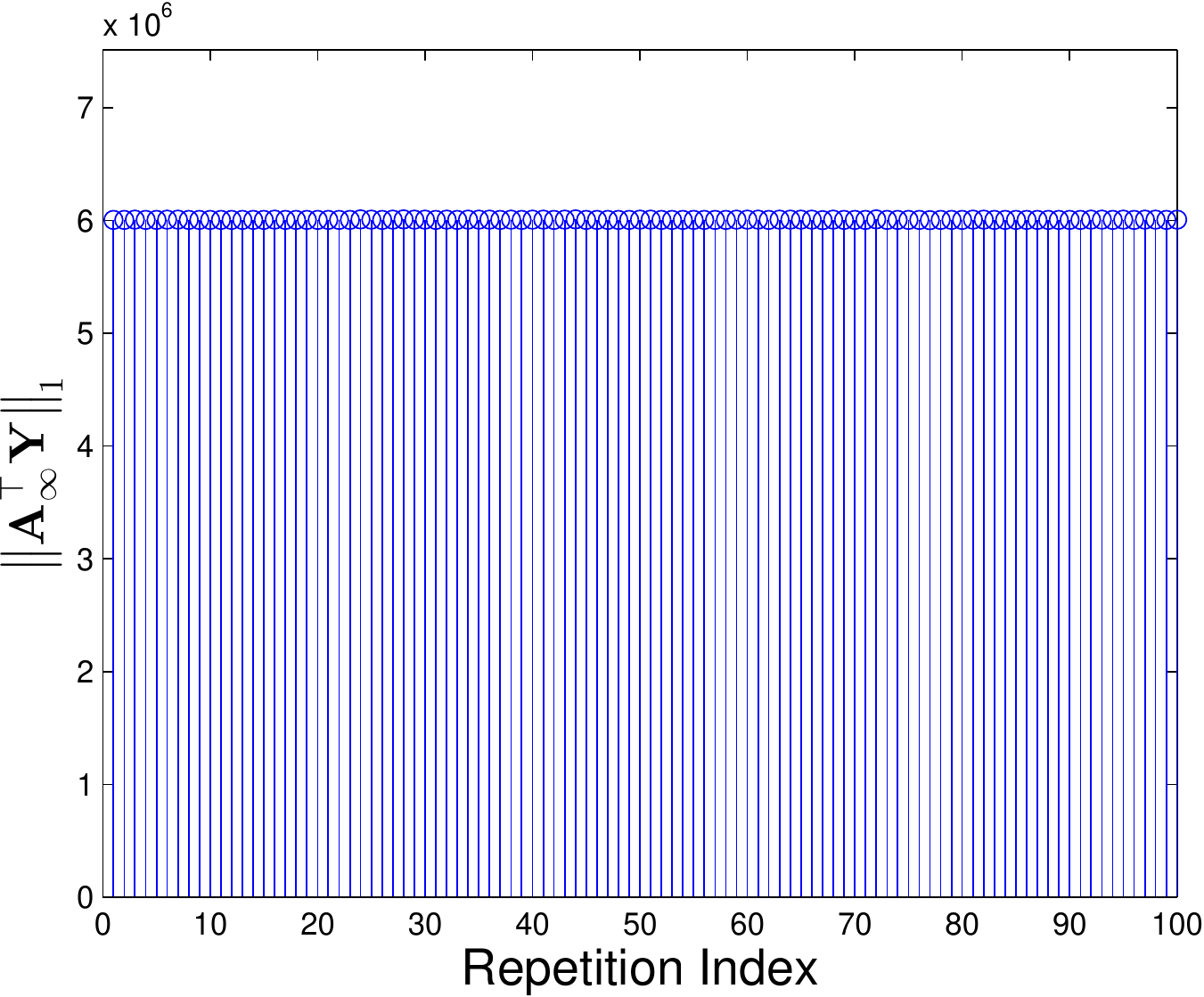} 
\end{subfigure}
\begin{subfigure}[t]{0.3\textwidth}
\centering
\includegraphics[width = 0.9\linewidth]{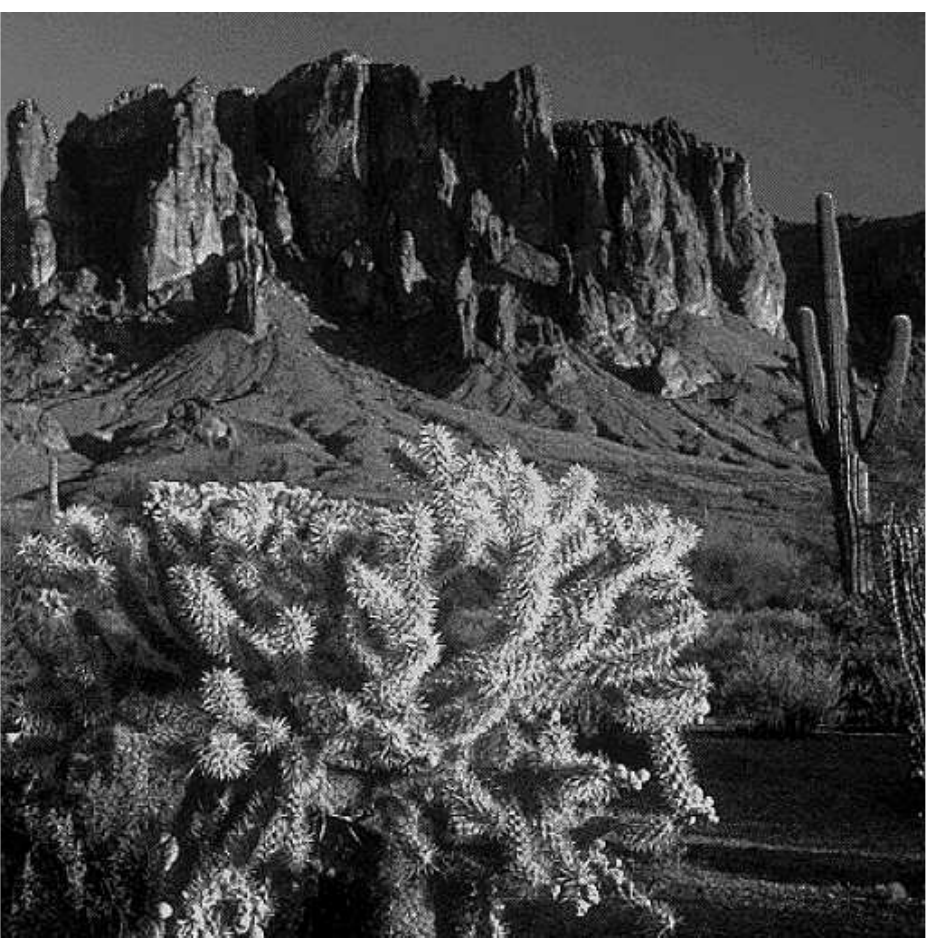} \\
\vspace{2mm}
\includegraphics[width = 0.9\linewidth]{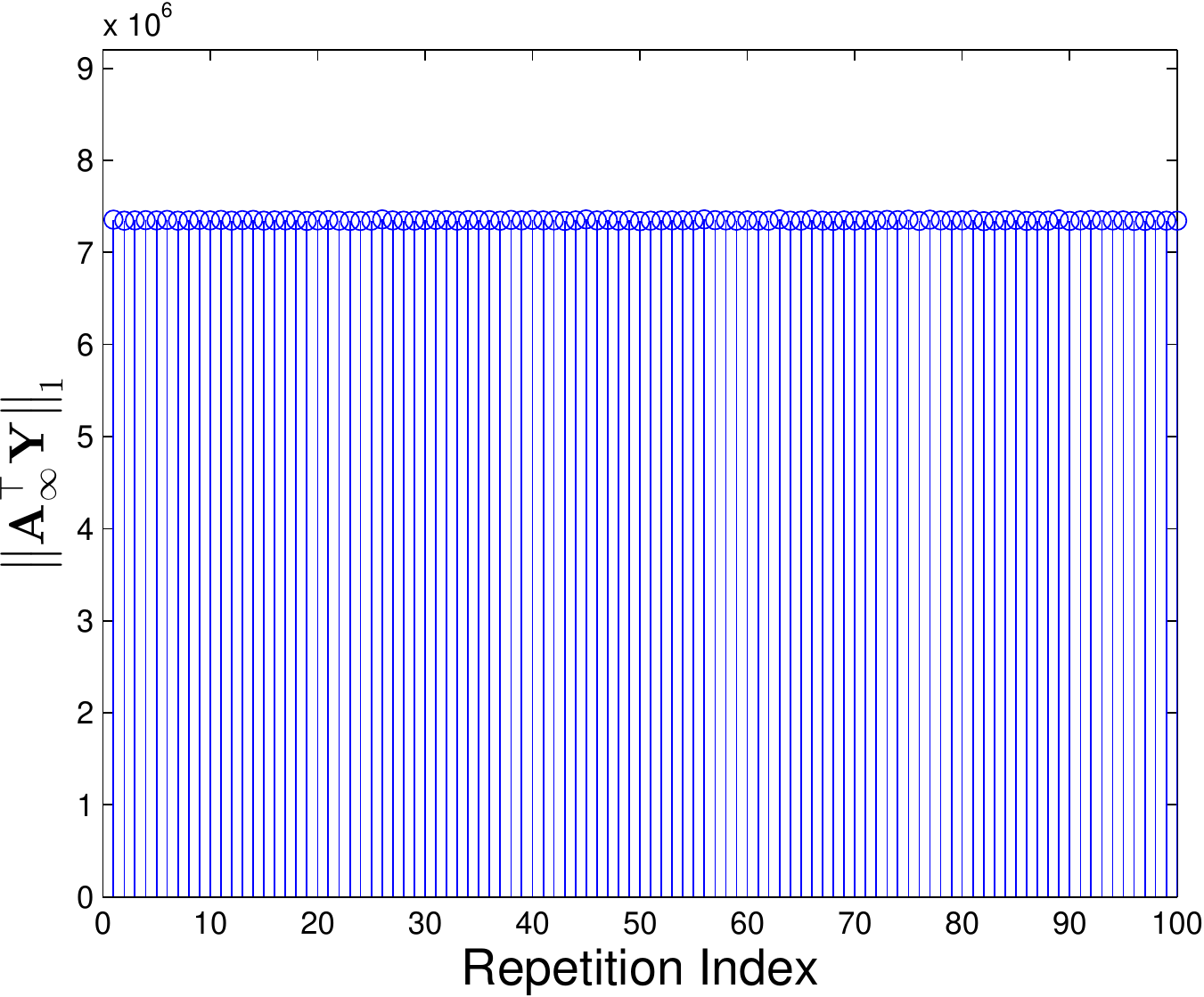}
\end{subfigure}
\begin{subfigure}[t]{0.3\textwidth}
\centering
\includegraphics[width = 0.9\linewidth]{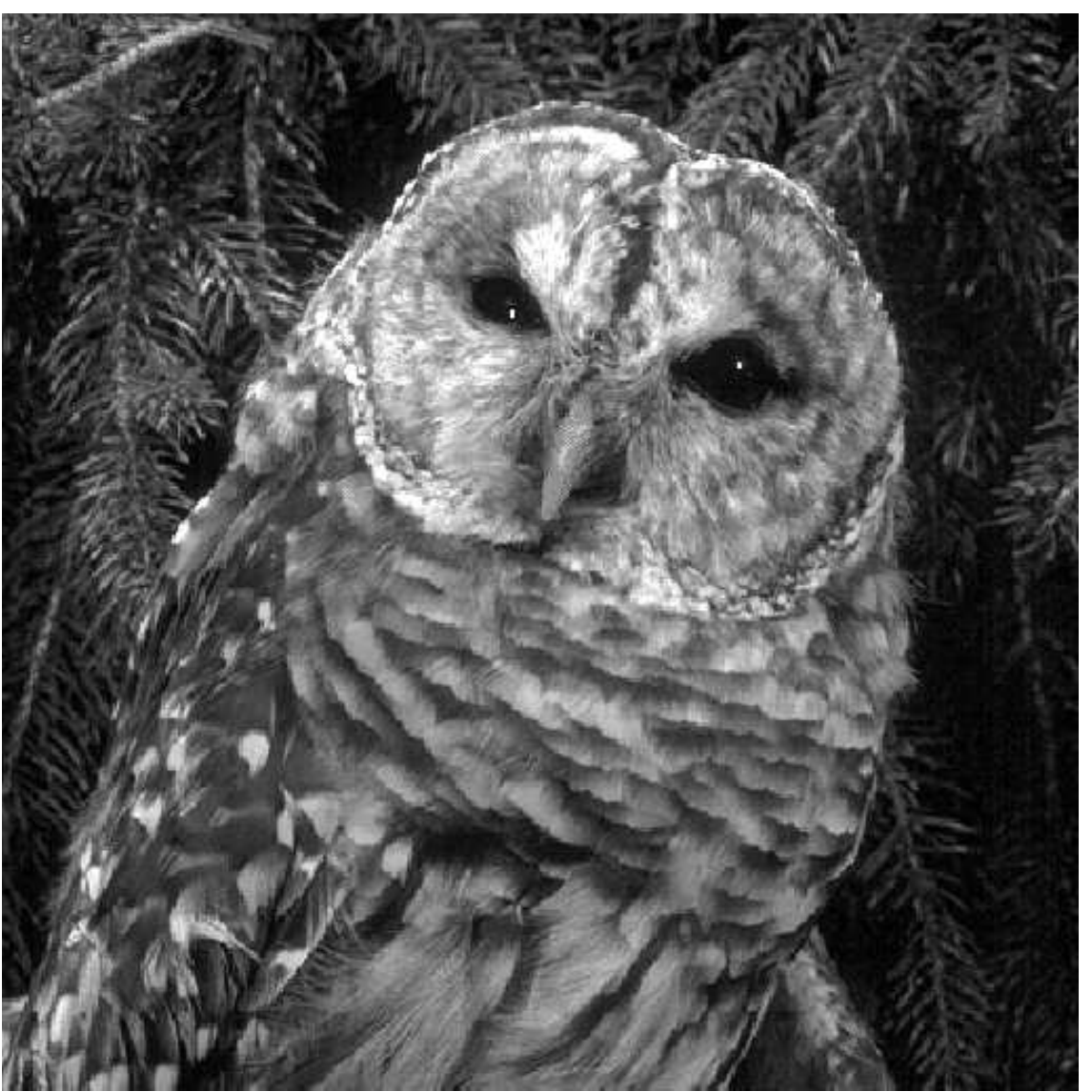} \\
\vspace{2mm}
\includegraphics[width = 0.9\linewidth]{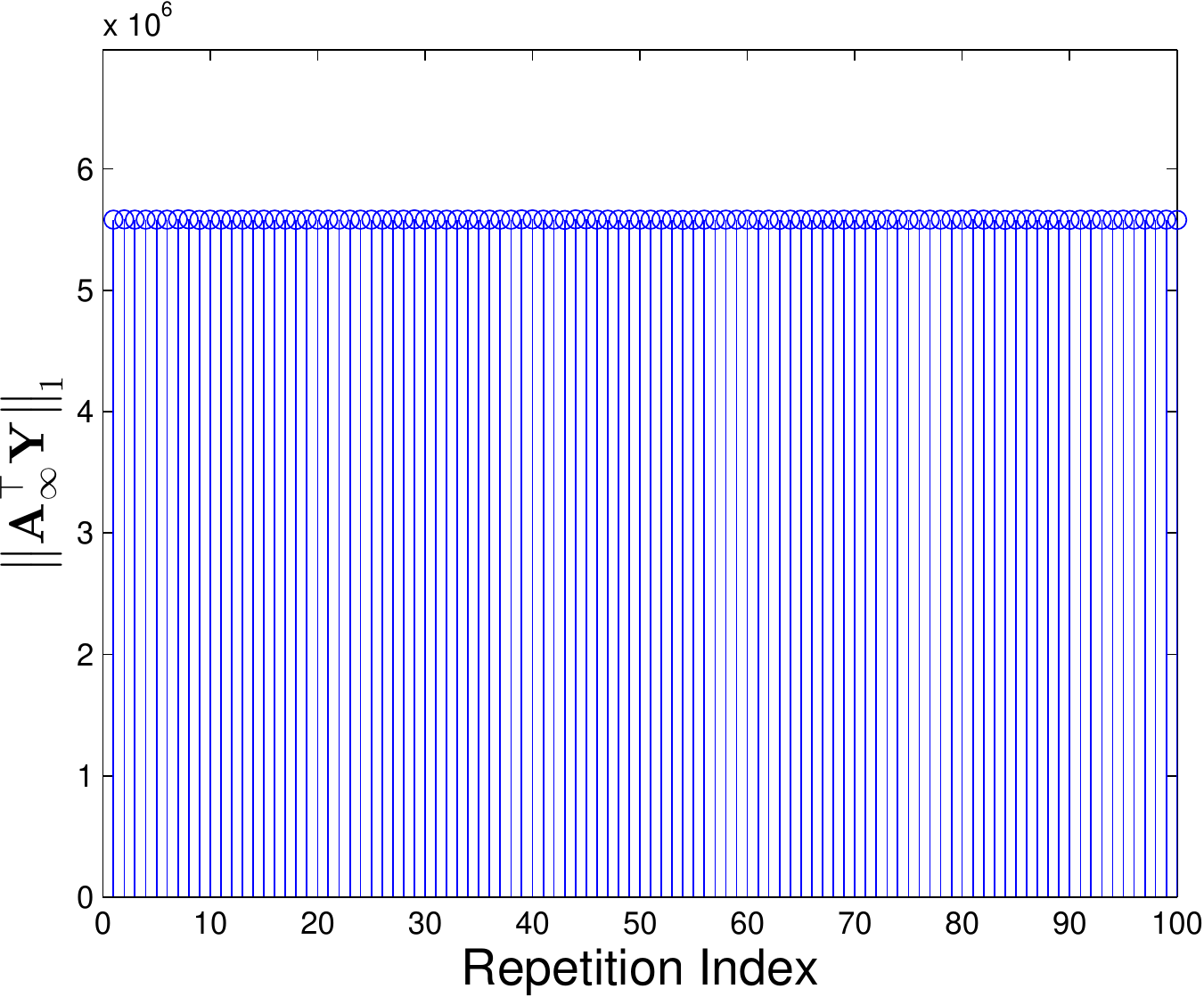}
\end{subfigure}
\caption{\textbf{Alternating direction method for~\eqref{eq:p_l1p} on uncompressed real images seems to always produce the same solution!} \textbf{Top}: Each image is $512 \times 512$ in resolution and encoded in the uncompressed \texttt{pgm} format (uncompressed images to prevent possible bias towards standard bases used for compression, such as DCT or wavelet bases). Each image is evenly divided into $8 \times 8$ non-overlapping image patches ($4096$ in total), and these patches are all vectorized and then stacked as columns of the data matrix $\mb Y$. \textbf{Bottom}: Given each $\mb Y$, we solve~\eqref{eq:p_l1p} $100$ times with independent and randomized (uniform over the orthogonal group) initialization $\mb A_0$. \js{Let $\mb A_\infty$ denote the value of $\mb A$ at convergence (we set the maximally allowable number of ADM iterations to be $10^4$ and $\lambda = 2$). } The plots show the values of $\norm{\mb A^*_{\infty} \mb Y}{1}$ across the independent repetitions. They are virtually the same and the relative differences are less than $10^{-3}$! }
\label{fig:odl_examples}
\end{figure}

We provide empirical evidence in support of a positive answer to the above question. Specifically, we learn orthogonal bases (orthobases) for real images patches. Orthobases are of interest because typical hand-designed dictionaries such as discrete cosine (DCT) and wavelet bases are orthogonal, and orthobases seem competitive in performance for applications such as image denoising, as compared to overcomplete dictionaries~\cite{bao2013fast}\footnote{See Section~\ref{sec:intro_rec} for more detailed discussions of this point. \cite{lesage2005learning} also gave motivations and algorithms for learning (union of) orthobases as dictionaries. }. 

We divide a given greyscale image into $8 \times 8$ non-overlapping patches, which are converted into $64$-dimensional vectors and stacked column-wise into a data matrix $\mb Y$. Specializing~\eqref{eq:dl_concept} to this setting, we obtain the optimization problem: 
\begin{align} \label{eq:p_l1p}
\mini_{\mb A \in \R^{n \times n}, \mb X \in \R^{n \times p}}\; \lambda \norm{\mb X}{1} + \frac{1}{2}\norm{\mb A \mb X - \mb Y}{F}^2, \; \st \; \mb A \in O_n, 
\end{align}
where $O_n$ is the set of order $n$ orthogonal matrices, i.e., order-$n$ orthogonal group. To derive a concrete algorithm for~\eqref{eq:p_l1p}, one can deploy the alternating direction method (ADM)\footnote{This method is also called alternating minimization or (block) coordinate descent method. see, e.g., ~\cite{bertsekas1989parallel, tseng2001convergence} for classic results and~\cite{attouch2010proximal,bolte2014proximal} for several interesting recent developments. }, i.e., alternately minimizing the objective function with respect to (w.r.t.) one variable while fixing the other. The iteration sequence actually takes very simple form: for $k = 1, 2, 3, \dots$, 
\begin{align*}
\mb X_k = \mc S_{\lambda}\brac{\mb A^*_{k-1} \mb Y}, \qquad \mb A_k = \mb U \mb V^* \;\text{for} \; \mb U \mb D \mb V^* = \mathtt{SVD}\paren{\mb Y \mb X_k^*}
\end{align*}
where $\mc S_{\lambda}\brac{\cdot}$ denotes the well-known soft-thresholding operator acting elementwise on matrices, i.e.,  $\mc S_{\lambda}\brac{x} \doteq \sign\paren{x} \max\paren{\abs{x} - \lambda, 0}$ for any scalar $x$. 

Fig.~\ref{fig:odl_examples} shows what we obtained using the simple ADM algorithm, with \emph{independent and randomized initializations}: 
\begin{quote}
\centering
\emph{The algorithm seems to always produce the same optimal value, regardless of the initialization.}
\end{quote} 
This observation \js{is consistent with the possibility that the heuristic ADM algorithm may} \emph{always converge to a global minimizer}! \footnote{Technically, the convergence to global solutions is surprising because even convergence of ADM to critical points is not guaranteed in general, see, e.g., \cite{attouch2010proximal,bolte2014proximal} and references therein.} Equally surprising is that the phenomenon has been observed on real images\footnote{Actually the same phenomenon is also observed for simulated data when the coefficient matrix obeys the Bernoulli-Gaussian model, which is defined later. The result on real images supports that previously claimed empirical successes over two decades may be non-incidental. }. One may imagine only random data typically have ``favorable'' structures; in fact, almost all existing theories for DL pertain only to random data~\cite{spielman2012exact,agarwal2013learning,arora2013new,agarwal2013exact,arora2014more,arora2015simple}. 

\subsection{Dictionary Recovery and Our Results} \label{sec:intro_rec}
In this paper, we take a step towards explaining the surprising effectiveness of nonconvex optimization heuristics for DL. We focus on the \emph{dictionary recovery} (DR) setting: given a data matrix $\mb Y$ generated as $\mb Y = \mb A_0 \mb X_0$, where $\mb A_0 \in \mc A \subseteq \R^{n \times m}$ and $\mb X_0 \in \R^{m \times p}$ is ``reasonably sparse'', try to recover $\mb A_0$ and $\mb X_0$. Here recovery means to return any pair $\paren{\mb A_0 \mb \Pi \mb \Sigma, \mb \Sigma^{-1} \mb \Pi^* \mb X_0}$, where $\mb \Pi$ is a permutation matrix and $\mb \Sigma$ is a nonsingular diagonal matrix, i.e., recovering up to sign, scale, and permutation.

To define a reasonably simple and structured problem, we make the following assumptions: 
\begin{itemize}
\item The target dictionary $\mb A_0$ is complete, i.e., square and invertible ($m = n$). In particular, this class includes orthogonal dictionaries. Admittedly overcomplete dictionaries tend to be more powerful for modeling and to allow sparser representations. Nevertheless, most classic hand-designed dictionaries in common use are orthogonal. Orthobases are competitive in performance for certain tasks such as image denoising~\cite{bao2013fast}, and admit faster algorithms for learning and encoding. \footnote{Empirically, there is no systematic evidence supporting that overcomplete dictionaries are strictly necessary for good performance in all published applications (though~\cite{olshausen1997sparse} argues for the necessity from a neuroscience perspective). Some of the ideas and tools developed here for complete dictionaries may also apply to certain classes of structured overcomplete dictionaries, such as tight frames. See Section~\ref{sec:discuss} for relevant discussion. }  
\item The coefficient matrix $\mb X_0$ follows the Bernoulli-Gaussian (BG) model with rate $\theta$: $\brac{X_0}_{ij} = \Omega_{ij} V_{ij}$, with $\Omega_{ij} \sim \mathrm{Ber}\paren{\theta}$ and $V_{ij} \sim \mc N\paren{0, 1}$, where all the different random variables are jointly independent. We write compactly $\mb X_0 \sim_{i.i.d.} \mathrm{BG}\paren{\theta}$. \js{This BG model, or the Bernoulli-Subgaussian model as used in~\cite{spielman2012exact}, is a reasonable first model for generic sparse coefficients: the Bernoulli process enables explicit control on the (hard) sparsity level, and the (sub)-Gaussian process seems plausible for modeling variations in magnitudes. Real signals may admit encoding coefficients with additional or different characteristics. We will focus on generic sparse encoding coefficients as a first step towards theoretical understanding. }
\end{itemize}
In this paper, we provide a nonconvex formulation for the DR problem, and characterize the geometric structure of the formulation that allows development of efficient algorithms for optimization. In the companion paper~\cite{sun2015complete_b}, we derive an efficient algorithm taking advantage of the structure, and describe a complete algorithmic pipeline for efficient recovery. Together, we prove the following result: 
\begin{theorem}[Informal statement of our results, a detailed version included in the companion paper~\cite{sun2015complete_b}]
For any $\theta \in \paren{0, 1/3}$, given $\mb Y = \mb A_0 \mb X_0$ with $\mb A_0$ a complete dictionary and $\mb X_0 \sim_{i.i.d.} \mathrm{BG}\paren{\theta}$, there is a polynomial-time algorithm that recovers (up to sign, scale, and permutation) $\mb A_0$ and $\mb X_0$ with high probability (at least $1-O(p^{-6})$) whenever $p \ge p_{\star}\paren{n, 1/\theta, \kappa\paren{\mb A_0}, 1/\mu}$ for a fixed polynomial $p_\star\paren{\cdot}$, where $\kappa\paren{\mb A_0}$ is the condition number of $\mb A_0$ and $\mu$ is a parameter that can be set as $cn^{-5/4}$ for a constant $c > 0$. 
\end{theorem}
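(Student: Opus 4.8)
The plan is to reduce dictionary recovery to a sequence of well-structured nonconvex programs over the sphere, characterize their global geometry (the contribution of the present paper), and then optimize them with a saddle-escaping second-order method (the companion paper), before rounding and deflating to exact recovery.

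First I would exploit completeness of $\mb A_0$. Since $\mb A_0$ is invertible, $\mathrm{row}(\mb Y) = \mathrm{row}(\mb X_0)$, so recovering $\mb A_0$ up to sign, scale, and permutation is equivalent to finding the $n$ sparsest directions in $\mathrm{row}(\mb Y)$, which are exactly (signed, scaled) rows of $\mb X_0$. To strip the dependence on $\kappa(\mb A_0)$ from the landscape, I would precondition: form $\ol{\mb Y} = \paren{\tfrac{1}{\theta p}\mb Y\mb Y^*}^{-1/2}\mb Y$ and, via matrix concentration for sums of independent rank-one terms, show $\tfrac{1}{\theta p}\mb Y\mb Y^* \approx \mb A_0\mb A_0^*$ once $p$ is a large enough polynomial in $n$ and $1/\theta$, so that $\ol{\mb Y} = \mb U\mb X_0 + \mb E$ for some orthogonal $\mb U$ and a small residual $\mb E$. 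This reduces the general complete case to the (approximately) orthogonal case $\mb A_0 = \mb I$ up to controllable perturbations.

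Next comes the heart of the matter, the geometry. I would minimize $f(\mb q) = \tfrac{1}{p}\sum_{k=1}^{p} h_\mu(\mb q^*\ol{\mb y}_k)$ over $\mb q \in \Sp^{n-1}$, where $h_\mu(z) = \mu\log\cosh(z/\mu)$ is a smooth surrogate for $\abs{\cdot}$; intuitively $\mb q$ wants to align with a row of $\mb X_0$ so that $\mb q^*\ol{\mb Y}$ becomes sparse. I would then (i) compute the population objective $\expect{h_\mu(\mb q^*\mb x)}$ for $\mb x \sim \mathrm{BG}(\theta)$ together with its Riemannian gradient and Hessian, using sub-Gaussian moment estimates, and verify that over one symmetry-reduced spherical cap the sphere partitions into three regions $\mc R_{\mathtt I},\mc R_{\mathtt{II}},\mc R_{\mathtt{III}}$ on which, respectively, the gradient has large norm, the Hessian has a strictly negative eigenvalue along an explicit direction, and $f$ is strongly geodesically convex in a small neighborhood of each target signed basis vector; and (ii) show the finite-sample $f$, its gradient, and its Hessian concentrate \emph{uniformly} on $\Sp^{n-1}$ around the population quantities — through an $\eps$-net on the sphere, Bernstein-type tail bounds for the Gaussian-weighted sums $\sum_k h_\mu(\mb q^*\mb x_k)$ and their derivatives, and Lipschitz continuity in $\mb q$ — so that the benign three-region structure (and the absence of spurious minimizers) survives with probability at least $1-O(p^{-6})$ when $p \ge p_\star\paren{n, 1/\theta, \kappa(\mb A_0), 1/\mu}$. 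The same estimates must also absorb the preconditioning residual $\mb E$.

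Finally I would hand this landscape to the companion paper's Riemannian trust-region algorithm: because every point is a descent point, a negative-curvature point, or lies in a strongly convex basin around a target, trust-region iterates drive an arbitrary initialization to some $\mb q_\infty$ within $O(\mu)$ of a signed standard basis vector in polynomially many steps; rounding $\mb q_\infty$ (e.g., a linear program selecting the sparsest vector in $\mathrm{row}(\mb Y)$ near $\mb q_\infty^*\ol{\mb Y}$) recovers one row of $\mb X_0$ exactly, and deflating and repeating $n$ times, then solving a linear system, recovers all of $\mb X_0$ and hence $\mb A_0$. I expect step (ii) to be the main obstacle: the population computation is already delicate because $h_\mu$ is nonlinear and the inputs are only sub-Gaussian, and upgrading pointwise concentration to a uniform guarantee strong enough to preserve negative curvature and strong convexity \emph{everywhere} on the cap — while tracking the $\mu$-dependence, which dictates how small $\mu$, and hence how large $p$, must be — is where essentially all of the effort concentrates.
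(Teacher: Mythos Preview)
Your proposal is correct and follows essentially the same approach as the paper: reduce to finding sparse vectors in $\mathrm{row}(\mb Y)$, precondition to the (approximately) orthogonal case, characterize the three-region landscape (strong convexity near targets, large gradient, negative curvature) via population computation plus uniform concentration through nets and Lipschitz bounds, and hand off to the companion paper's Riemannian trust-region method with rounding and deflation. The only minor technical difference is that the paper carries out the geometric analysis in a Euclidean reparametrization $\mb q(\mb w) = (\mb w,\sqrt{1-\norm{\mb w}{}^2})$ over a ball $\Gamma \subset \R^{n-1}$ rather than directly with Riemannian gradients and Hessians on the sphere; this buys simpler calculus at the cost of having to patch together $2n$ charts, but is otherwise equivalent to what you describe.
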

Obviously, even if $\mb X_0$ is known, one needs $p \ge n$ to make the identification problem well posed. Under our particular probabilistic model, a simple coupon collection argument implies that one needs $p \ge \Omega\paren{\tfrac{1}{\theta}\log n}$ to ensure all atoms in $\mb A_0$ are observed with high probability (w.h.p.). Ensuring that an efficient algorithm exists may demand more. Our result implies when $p$ is polynomial in $n$, $1/\theta$ and $\kappa(\mb A_0)$, recovery with an efficient algorithm is possible. 

The parameter $\theta$ controls the sparsity level of $\mb X_0$. Intuitively, the recovery problem is easy for small $\theta$ and becomes harder for large $\theta$.\footnote{Indeed, when $\theta$ is small enough such that columns of $\mb X_0$ are predominately $1$-sparse, one directly observes scaled versions of the atoms (i.e., columns of $\mb X_0$); when $\mb X_0$ is fully dense corresponding to $\theta = 1$, recovery is never possible as one can easily find another complete $\mb A_0'$ and fully dense $\mb X_0'$ such that $\mb Y = \mb A_0' \mb X_0'$ with $\mb A_0'$ not equivalent to $\mb A_0$. 
} It is perhaps surprising that an efficient algorithm can succeed up to constant $\theta$, i.e., linear sparsity in $\mb X_0$. Compared to the case when $\mb A_0$ is known, there is only at most a constant gap in the sparsity level one can deal with. 

For DL, our result gives the first efficient algorithm that provably recovers complete $\mb A_0$ and $\mb X_0$ when $\mb X_0$ has $O(n)$ nonzeros per column under appropriate probability model. Section~\ref{sec:lit_review} provides detailed comparison of our result with other recent recovery results for complete and overcomplete dictionaries. 


\subsection{Main Ingredients and Innovations}
In this section we describe three main ingredients that we use to obtain the stated result. 

\subsubsection{\textbf{A Nonconvex Formulation}}
Since $\mb Y = \mb A_0 \mb X_0$ and $\mb A_0$ is complete, $\mathrm{row}\paren{\mb Y} = \mathrm{row}\paren{\mb X_0}$ ($\mathrm{row}\paren{\cdot}$ denotes the row space of a matrix) and hence rows of $\mb X_0$ are sparse vectors in the known (linear) subspace $\mathrm{row}\paren{\mb Y}$. We can use this fact to first recover the rows of $\mb X_0$, and subsequently recover $\mb A_0$ by solving a system of linear equations. In fact, for $\mb X_0 \sim_{i.i.d.} \mathrm{BG}\paren{\theta}$, rows of $\mb X_0$ are the $n$ \emph{sparsest} vectors (directions) in $\mathrm{row}\paren{\mb Y}$ w.h.p. whenever $p \ge \Omega\paren{n\log n}$~\cite{spielman2012exact}. \js{Thus, recovering rows of $\mb X_0$ is equivalent to finding the sparsest vectors/directions (due to the scale ambiguity) in $\mathrm{row}(\mb Y)$. Since any vector in $\mathrm{row}(\mb Y)$ can be written as $\mb q^* \mb Y$ for a certain $\mb q$, one might try to solve}
\begin{align} \label{eq:l0_sparse_vec}
\mini\; \norm{\mb q^* \mb Y}{0}\; \; \st\;\; \js{\mb q^* \mb Y \neq \mb 0}
\end{align}
\js{to find the sparsest vector in $\mathrm{row}(\mb Y)$. Once the sparsest one is found, one then appropriately reduces the subspace $\mathrm{row}(\mb Y)$ by one dimension, and solves an analogous version of~\eqref{eq:l0_sparse_vec} to find the second sparsest vector. The process is continued recursively until all sparse vectors are obtained.  
The above idea of reducing the original recovery problem into finding sparsest vectors in a known subspace first appeared in~\cite{spielman2012exact}.} 

The objective is discontinuous, and the domain is an open set. In particular, the homogeneous constraint is unconventional and tricky to deal with. Since the recovery is up to scale, one can remove the homogeneity by fixing the scale of $\mb q$. Known relaxations~\cite{spielman2012exact, demanet2014scaling} fix the scale by setting $\norm{\mb q^* \mb Y}{\infty} = 1$ \js{and use $\norm{\cdot}{1}$ as a surrogate to $\norm{\cdot}{0}$}, where $\norm{\cdot}{\infty}$ is the elementwise $\ell^{\infty}$ norm, \js{leading to the optimization problem\begin{align} \label{eq:LP_form}
\mini\; \norm{\mb q^* \mb Y}{1} \; \; \st \; \; \norm{\mb q^* \mb Y}{\infty} = 1. 
\end{align} 
}
\js{The constraint means at least one coordinate of $\mb q^* \mb Y$ has unit magnitude\footnote{The sign ambiguity is tolerable here. }. } Thus, ~\eqref{eq:LP_form} reduces to a sequence of convex (linear) programs. \cite{spielman2012exact} has shown that (see also~\cite{demanet2014scaling}) solving~\eqref{eq:LP_form} recovers $\paren{\mb A_0, \mb X_0}$ for very sparse $\mb X_0$, but the idea provably breaks down when $\theta$ is slightly above $O(1/\sqrt{n})$, or equivalently when each column of $\mb X_0$ has more than $O\paren{\sqrt{n}}$ nonzeros. 

Inspired by our previous image experiment, we work with a \emph{nonconvex} alternative\footnote{A similar formulation has been proposed in~\cite{zibulevsky2001blind} in the context of blind source separation; see also~\cite{qu2014finding}. }:
\begin{align} \label{eq:main_l2}
\mini\;f(\mb q; \widehat{\mb Y}) \doteq \frac{1}{p} \sum_{k=1}^p h_{\mu}\paren{\mb q^* \widehat{\mb y}_k}, \; \st \; \norm{\mb q}{} = 1, 
\end{align}
where $\widehat{\mb Y} \in \R^{n\times p}$ is a proxy for $\mb Y$ (i.e., after appropriate processing), $k$ indexes columns of $\widehat{\mb Y}$, and $\norm{\cdot}{}$ is the usual $\ell^2$ norm for vectors. Here $h_{\mu}\paren{\cdot}$ is chosen to be a convex smooth approximation to $\abs{\cdot}$, namely,  
\begin{align} \label{eq:logexp}
h_{\mu}\paren{z} = \mu \log\paren{\frac{\exp\paren{z/\mu} + \exp\paren{-z/\mu}}{2}} = \mu \log \cosh(z/\mu), 
\end{align} 
which is infinitely differentiable and $\mu$ controls the smoothing level.\footnote{In fact, there is nothing special about this choice and we believe that any valid smooth (twice continuously differentiable) approximation to $\abs{\cdot}$ would work and yield qualitatively similar results. We also have some preliminary results showing the latter geometric picture remains the same for certain nonsmooth functions, such as a modified version of the Huber function, though the analysis involves handling a different set of technical subtleties. The algorithm also needs additional modifications.}
\begin{figure}[!htbp]
\centering
\includegraphics[width = 0.35\textwidth]{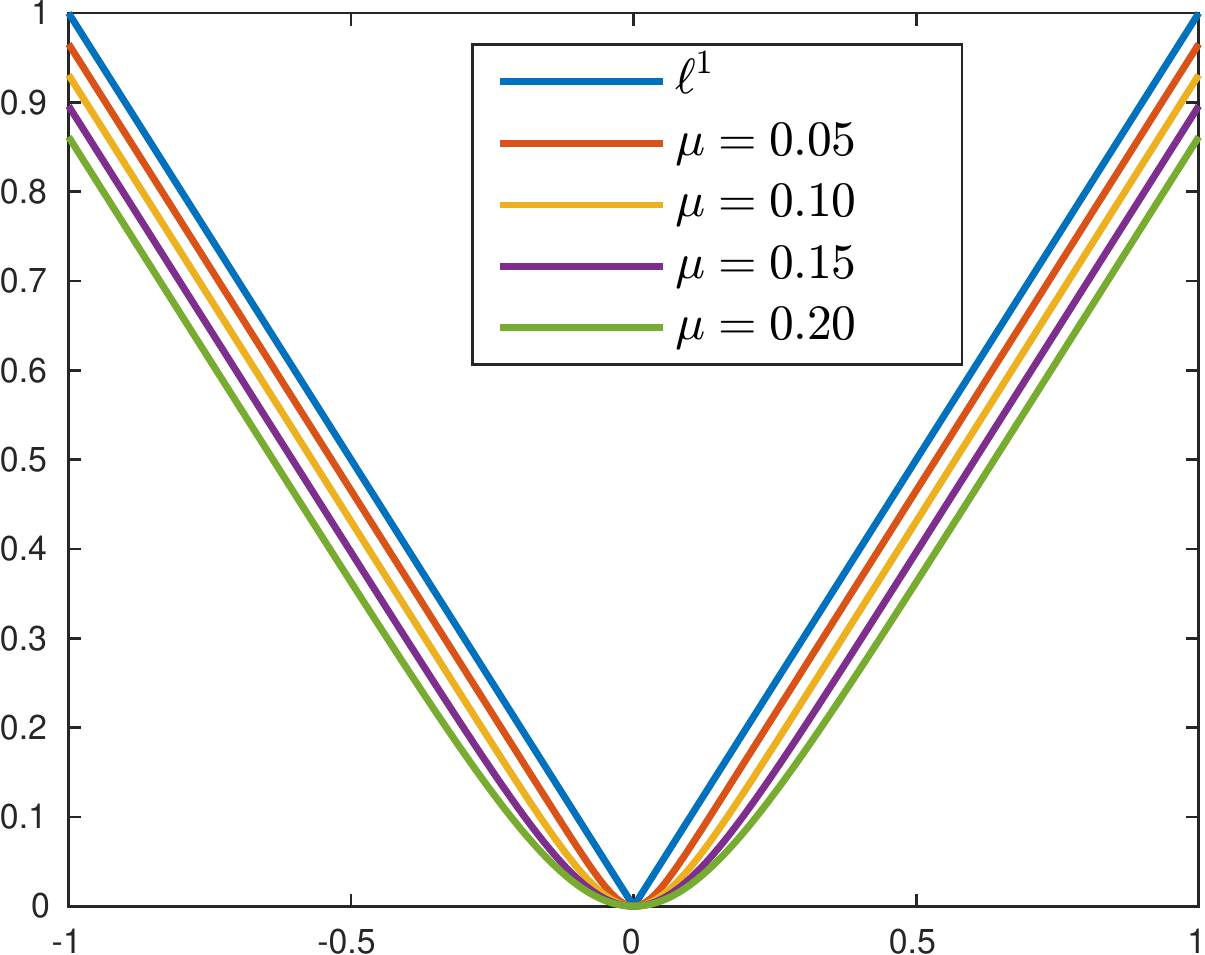}
\caption{The smooth $\ell^1$ surrogate defined in~\eqref{eq:logexp} vs. the $\ell^1$ function, for varying values of $\mu$. The surrogate approximates the $\ell^1$ function more closely when $\mu$ gets smaller.  }
\label{fig:hmu_plot}
\end{figure}
 \js{An illustration of the $h_{\mu}(\cdot)$ function vs. the $\ell^1$ function is provided in Fig.~\ref{fig:hmu_plot}. } The spherical constraint is nonconvex. Hence, a-priori, it is unclear whether \eqref{eq:main_l2} admits efficient algorithms that attain global optima. Surprisingly, simple descent algorithms for \eqref{eq:main_l2} exhibit very striking behavior: on many practical numerical examples\footnote{... not restricted to the model we assume here for $\mb A_0$ and $\mb X_0$. }, they appear to produce global solutions. Our next section will uncover interesting geometrical structures underlying the phenomenon. 

\subsubsection{\textbf{A Glimpse into High-dimensional Function Landscape}} \label{sec:overview_geometry}
\begin{figure}[t]
\centering
\includegraphics[width=0.3\textwidth]{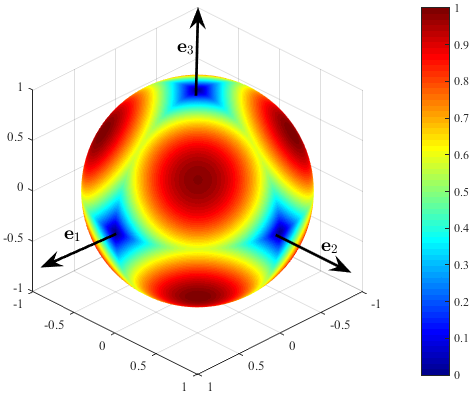}
\hspace{0.1in}
\includegraphics[width=0.3\textwidth]{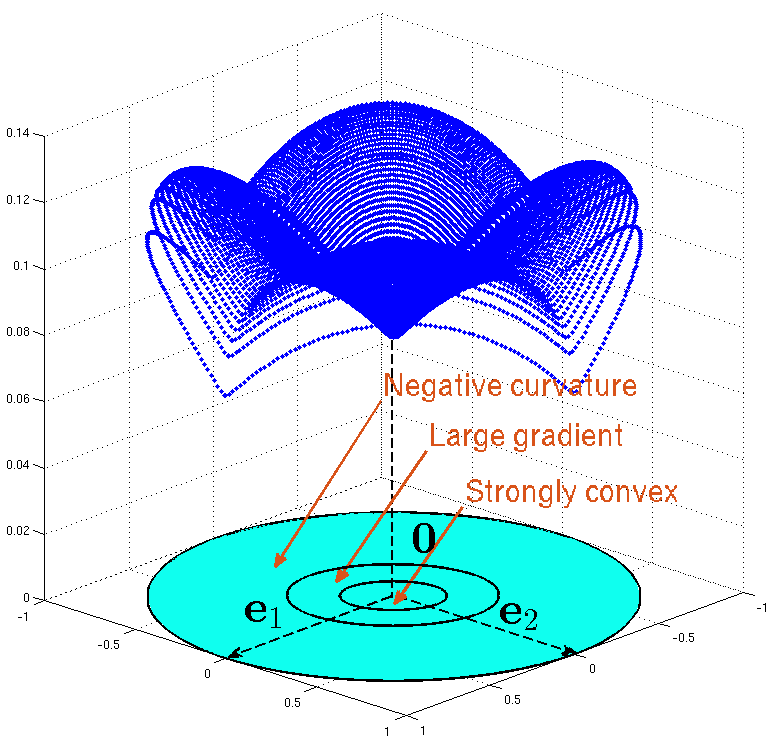} 
\caption{\textbf{Why is dictionary learning over $\bb S^{n-1}$ tractable?} \js{Assume the target dictionary $\mb A_0 = \mb I$. \textbf{Left:} Large sample objective function $\bb E_{\mb X_0}\brac{f\paren{\mb q}}$. The only local minimizers are the signed basis vectors $\pm \mb e_i$. \textbf{Right:} A visualization of the function as a height above the equatorial section $\mb e_3^\perp$, i.e., $\mathrm{span}\{\mb e_1, \mb e_2\} \cap \bb B^3$. The derived function is obtained by assigning values of points on the upper hemisphere to their corresponding projections on the equatorial section $\mb e_3^\perp$. The minimizers for the derived function are $\mb 0, \pm \mb e_1, \pm \mb e_2$. Around $\mb 0$ in $\mb e_3^\perp$, the function exhibits a small region of strong convexity, a region of large gradient, and finally a region in which the direction away from $\mb 0$ is a direction of negative curvature.}}
\label{fig:large-sample-sphere}
\end{figure}
For the moment, \js{suppose $\mb A_0 = \mb I$ and take $\widehat{\mb Y} = \mb Y = \mb A_0 \mb X_0 = \mb X_0$ in~\eqref{eq:main_l2}. Fig.~\ref{fig:large-sample-sphere} (left) plots $\bb E_{\mb X_0}\brac{f\paren{\mb q; \mb X_0}}$ over $\mb q \in \bb S^2$ ($n=3$). Remarkably, $\bb E_{\mb X_0}\brac{f\paren{\mb q; \mb X_0}}$ has no spurious local minimizers. In fact, every local minimizer $\wh{\mb q}$ is one of the signed standard basis vectors, i.e., $\pm \mb e_i$'s where $i \in \{1, 2, 3\}$. Hence, $\wh{\mb q}^* \mb Y$ reproduces a certain row of $\mb X_0$, and all minimizers reproduce all rows of $\mb X_0$. }

\js{Let $\mb e_3^\perp$ be the equatorial section that is orthogonal to $\mb e_3$, i.e., $\mb e_3^\perp \doteq \mathrm{span}(\mb e_1, \mb e_2) \cap \bb B^3$. } To better illustrate the above point, we project the upper hemisphere above $\mb e_3^\perp$ onto $\mb e_3^\perp$. The projection is bijective and we equivalently define a reparameterization $g: \mb e_3^\perp \mapsto \R$ of $f$. Fig.~\ref{fig:large-sample-sphere} (right) plots the graph of $g$. Obviously the only local minimizers are $\mb 0, \pm \mb e_1, \pm \mb e_2$, and they are also global minimizers. Moreover, the apparent nonconvex landscape has interesting structures around $\mb 0$: when moving away from $\mb 0$, one sees successively a strongly convex region, a strong gradient region, and a region where at each point one can always find a direction of negative curvature. This geometry implies that at any nonoptimal point, there is always at least one direction of descent. Thus, any algorithm that can take advantage of the descent directions will likely converge to a global minimizer, irrespective of initialization. 

Two challenges stand out when implementing this idea. For geometry, one has to show similar structure exists for general complete $\mb A_0$, in high dimensions ($n \ge 3$), when the number of observations $p$ is finite (vs.\ the expectation in the experiment). For algorithms, we need to be able to take advantage of this structure without knowing $\mb A_0$ ahead of time. In Section~\ref{sec:overview_alg}, we describe a Riemannian trust region method which addresses the latter challenge. 

\paragraph{Geometry for orthogonal $\mb A_0$.} In this case, we take $\widehat{\mb Y} = \mb Y = \mb A_0 \mb X_0$. Since $f\paren{\mb q; \mb A_0 \mb X_0} \allowbreak = f\paren{\mb A_0^* \mb q; \mb X_0}$, the landscape of $f\paren{\mb q; \mb A_0 \mb X_0}$ is simply a rotated version of that of $f\paren{\mb q; \mb X_0}$, i.e., when $\mb A_0 = \mb I$. Hence we will focus on the case when $\mb A_0 = \mb I$. Among the $2n$ symmetric sections of $\bb S^{n-1}$ centered around the signed basis vectors $\pm \mb e_1, \dots, \pm \mb e_n$, we work with the symmetric section around $\mb e_n$ as an exemplar. \js{An illustration of the symmetric sections and the exemplar we choose to work with on $\bb S^2$ is provided in Fig.~\ref{fig:symsec_plot}.} The result will carry over to all sections with the same argument; together this provides a complete characterization of the function $f\paren{\mb q; \mb X_0}$ over $\bb S^{n-1}$.    

\begin{figure}[!htbp]
\centering
\includegraphics[width = 0.6\textwidth]{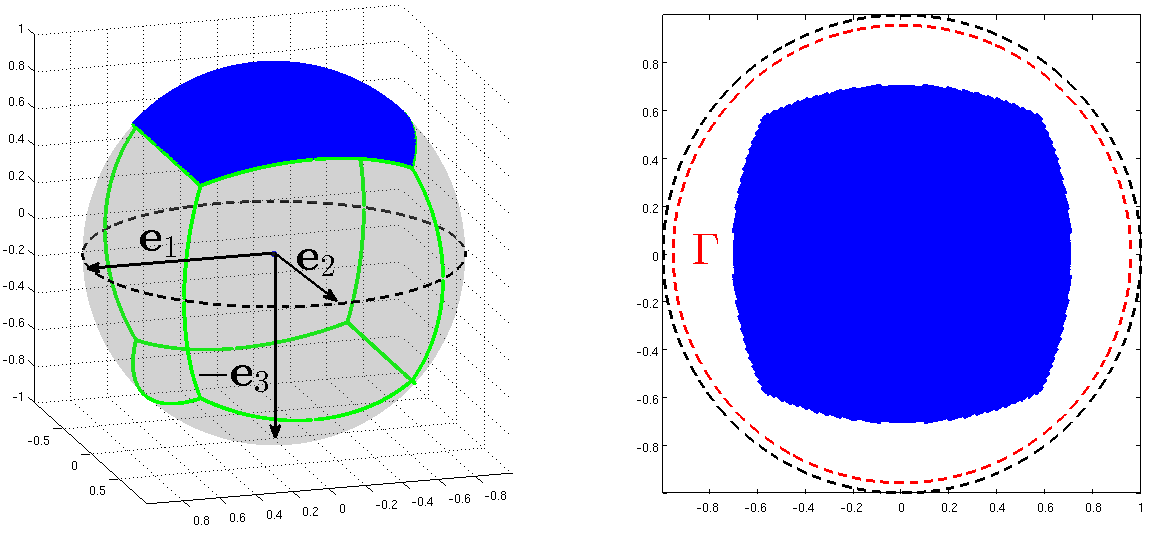}
\caption{\js{Illustration of the six symmetric sections on $\bb S^2$ and the exemplar we work with. \textbf{Left:} The six symmetric sections on $\bb S^2$, as divided by the green curves. The signed basis vectors, $\pm \mb e_i$'s, are centers of these sections. We choose to work with the exemplar that is centered around $\mb e_3$ that is shaded in blue. \textbf{Right:} Projection of the upper hemisphere onto the equatorial section $\mb e_3^\perp$. The blue region is projection of the exemplar under study. The larger region enclosed by the red circle is the $\Gamma$ set on which we characterize the reparametrized function $g$. }}
\label{fig:symsec_plot}
\end{figure}

\js{To study the function on this exemplar region,} we again invoke the projection trick described above, this time onto the equatorial section $\mb e_n^\perp$. This can be formally captured by the reparameterization mapping: 
\begin{align}
\mb q\paren{\mb w} = \paren{\mb w, \sqrt{1-\norm{\mb w}{}^2}}, \; \mb w \in \bb B^{n-1}, 
\end{align}
where $\mb w$ is the new variable and $\bb B^{n-1}$ is the unit ball in $\R^{n-1}$. We first study the composition $g\paren{\mb w; \mb X_0} \doteq f\paren{\mb q\paren{\mb w}; \mb X_0}$ over the set
\begin{align}
\Gamma \doteq \Brac{\mb w: \norm{\mb w}{} < \sqrt{\tfrac{4n-1}{4n}}} \subsetneq \bb B^{n-1} .
\end{align}
It can be verified the exemplar we chose to work with is strictly contained in this set\footnote{Indeed, if $\innerprod{\mb q}{\mb e_n} \ge \abs{\innerprod{\mb q}{\mb e_i}}$ for all $i \neq n$, $1 - \norm{\mb w}{}^2 = q_n^2 \ge 1/n$, implying $\norm{\mb w}{}^2 \le \tfrac{n-1}{n} < \tfrac{4n-1}{4n}$. The reason we have defined an open set instead of a closed (compact) one is to avoid potential trivial local minimizers located on the boundary. \js{We study behavior of $g$ over this slightly larger set $\Gamma$, instead of just the projection of the chosen symmetric section, to conveniently deal with the boundary effect: if we choose to work with just projection of the chosen symmetric section, there would be considerable technical subtleties at the boundaries when we call the union argument to cover the whole sphere. } }. \js{This is illustrated for the case $n=3$ in Fig.~\ref{fig:symsec_plot} (right). }

Our analysis characterizes the properties of $g\paren{\mb w; \mb X_0}$ by studying three quantities 
\begin{align*}
\nabla^2 g\paren{\mb w; \mb X_0}, \quad \frac{\mb w^* \nabla g\paren{\mb w; \mb X_0}}{\norm{\mb w}{}}, \quad \frac{\mb w^* \nabla^2 g\paren{\mb w; \mb X_0} \mb w}{\norm{\mb w}{}^2}
\end{align*}
respectively over three consecutive regions moving away from the origin, corresponding to the three regions in Fig.~\ref{fig:large-sample-sphere} (right). In particular, through typical expectation-concentration style arguments, we show that there exists a positive constant $c$ such that 
\begin{align} \label{eq:intro_geo_ineq}
\nabla^2 g\paren{\mb w; \mb X_0} \succeq \frac{1}{\mu} c\theta \mb I,  \quad \frac{\mb w^* \nabla g\paren{\mb w; \mb X_0}}{\norm{\mb w}{}} \ge c \theta, \quad \frac{\mb w^* \nabla^2 g\paren{\mb w; \mb X_0} \mb w}{\norm{\mb w}{}^2} \le -c\theta
\end{align}
over the respective regions w.h.p., confirming our low-dimensional observations described above. In particular, the favorable structure we observed for $n = 3$ persists in high dimensions, w.h.p., even when $p$ is large \emph{yet finite}, for the case $\mb A_0$ is orthogonal. Moreover, the local minimizer of $g\paren{\mb w; \mb X_0}$ over $\Gamma$ is very close to $\mb 0$, within a distance of $O\paren{\mu}$\footnote{\js{When $p \to \infty$, the local minimizer is exactly $\mb 0$; deviation from $\mb 0$ that we described is due to finite-sample perturbation. The deviation distance depends both the $h_{\mu}(\cdot)$ and $p$; see Theorem~\ref{thm:geometry_orth} for example.}  }. 

\paragraph{Geometry for complete $\mb A_0$.} For general complete dictionaries $\mb A_0$, we hope that the function $f$  retains the nice geometric structure discussed above. We can ensure this by ``preconditioning'' $\mb Y$ such that the output looks as if being generated from a certain orthogonal matrix, possibly plus a small perturbation. We can then argue that the perturbation does not significantly affect qualitative properties of the objective landscape. Write
\begin{align} \label{eq:precond_eq}
\overline{\mb Y} = \paren{\tfrac{1}{p\theta} \mb Y \mb Y^*}^{-1/2} \mb Y. 
\end{align}
Note that for $\mb X_0 \sim_{i.i.d.} \mathrm{BG}\paren{\theta}$, $\expect{\mb X_0 \mb X_0^*}/\paren{p\theta} = \mb I$. Thus, one expects $\tfrac{1}{p\theta} \mb Y \mb Y^* = \tfrac{1}{p\theta} \mb A_0 \mb X_0 \mb X_0^* \mb A_0^*$ to behave roughly like $\mb A_0 \mb A_0^*$ and hence $\overline{\mb Y}$ to behave like 
\begin{align}
\paren{\mb A_0 \mb A_0^*}^{-1/2} \mb A_0 \mb X_0 
& = \paren{\mb U \mb \Sigma \mb V^* \mb V \mb \Sigma \mb U^*}^{-1/2} \mb U \mb \Sigma \mb V^* \mb X_0 \nonumber\\
& = \mb U \mb \Sigma^{-1} \mb U^* \mb U \mb \Sigma \mb V^* \mb X_0 
\nonumber \\
& = \mb U \mb V^* \mb X_0
\end{align}
where $\mathtt{SVD}(\mb A_0) = \mb U \mb \Sigma \mb V^*$. It is easy to see $\mb U \mb V^*$ is an orthogonal matrix. Hence the preconditioning scheme we have introduced is technically sound.  

Our analysis shows that $\overline{\mb Y}$ can be written as 
\begin{align}
\overline{\mb Y} = \mb U \mb V^* \mb X_0 + \mb \Xi \mb X_0, 
\end{align}
where $\mb \Xi$ is a matrix with a small magnitude. Simple perturbation argument shows that the constant $c$ in~\eqref{eq:intro_geo_ineq} is at most shrunk to $c/2$ for all $\mb w$ when $p$ is sufficiently large. Thus, the qualitative aspects of the geometry have not been changed by the perturbation. 

\subsubsection{\textbf{A Second-order Algorithm on Manifold: Riemannian Trust-Region Method}} \label{sec:overview_alg}
We do not know $\mb A_0$ ahead of time, so our algorithm needs to take advantage of the structure described above without knowledge of $\mb A_0$. Intuitively, this seems possible as the descent direction in the $\mb w$ space appears to also be a local descent direction for $f$ over the sphere. Another issue is that although the optimization problem has no spurious local minimizers, it does have many saddle points with indefinite Hessian, which we call \emph{ridable saddles} \footnote{See~\cite{sun2015nonconvex} and~\cite{ge2015escaping}. } (Fig.~\ref{fig:large-sample-sphere}). We can use second-order information to guarantee to escape from such saddle points. In the companion paper~\cite{sun2015complete_b}, we derive an algorithm based on the Riemannian trust region method (TRM)~\cite{absil2007trust, absil2009} for this purpose. There are other algorithmic possibilities; see, e.g., ~\cite{goldfarb1980curvilinear, ge2015escaping}. 

We provide here only the basic intuition why a local minimizer can be retrieved by the second-order trust-region method. Consider an unconstrained optimization problem 
\begin{align*}
\min_{\mb x \in \R^n} \phi\paren{\mb x}. 
\end{align*}
Typical (second-order) TRM proceeds by successively forming a second-order approximation to $\phi$ at the current iterate, 
\begin{align} \label{eqn:trm_quad_approx}
\widehat{\phi}(\mb \delta; \mb x^{(r-1)}) \doteq \phi(\mb x^{(r-1)}) + \nabla^* \phi(\mb x^{(r-1)}) \mb \delta   + \tfrac{1}{2} \mb \delta^* \mb Q(\mb x^{(r-1)}) \mb \delta, 
\end{align}
where $\mb Q(\mb x^{(r-1)})$ is a proxy for the Hessian matrix $\nabla^2 \phi(\mb x^{(r-1)})$, which encodes the second-order geometry. The next movement direction is determined by seeking a minimum of $\widehat{\phi}(\mb \delta; \mb x^{(r-1)})$ over a small region, normally a norm ball $\|\mb \delta\|_p \le \Delta$, called the trust region, inducing the well-studied trust-region subproblem that can efficiently solved: 
\begin{align}
\mb \delta^{(r)} \doteq \mathop{\arg\min}_{\mb \delta \in \R^n, \norm{\mb \delta}{p} \le \Delta} \widehat{\phi}(\mb \delta; \mb x^{(r-1)}), 
\end{align}
where $\Delta$ is called the trust-region radius that controls how far the movement can be made. If we take $\mb Q(\mb x^{(r-1)}) = \nabla^2 \phi(\mb x^{(r-1)})$ for all $r$, then whenever the gradient is nonvanishing or the Hessian is indefinite, we expect to decrease the objective function by a concrete amount provided $\|\mb \delta\|$ is sufficiently small. Since the domain is compact, the iterate sequence ultimately moves into the strongly convex region, where the trust-region algorithm behaves like a typical Newton algorithm. All these are generalized to our objective over the sphere and made rigorous in the companion paper~\cite{sun2015complete_b}.

\subsection{Prior Arts and Connections} \label{sec:lit_review}
It is far too ambitious to include here a comprehensive review of the exciting developments of DL algorithms and applications after the pioneer work~\cite{olshausen1996emergence}. We refer the reader to Chapter 12 - 15 of the book~\cite{elad2010sparse} and the survey paper~\cite{mairal2014sparse} for summaries of relevant developments in image analysis and visual recognition. In the following, we focus on reviewing recent developments on the theoretical side of dictionary learning, and draw connections to problems and techniques that are relevant to the current work. 

\paragraph{Theoretical Dictionary Learning} 
The theoretical study of DL in the recovery setting started only very recently. \cite{aharon2006uniqueness} was the first to provide an algorithmic procedure to correctly extract the generating dictionary. The algorithm requires exponentially many samples and has exponential running time; see also~\cite{hillar2011can}. Subsequent work~\cite{gribonval2010dictionary, geng2011local, schnass2014local, schnass2014identifiability,schnass2015convergence} studied when the target dictionary is a local optimizer of natural recovery criteria. These meticulous analyses show that polynomially many samples are sufficient to ensure local correctness under natural assumptions. However, these results do not imply that one can design efficient algorithms to obtain the desired local optimizer and hence the dictionary. 

\cite{spielman2012exact} initiated the on-going research effort to provide efficient algorithms that globally solve DR. They showed that one can recover a complete dictionary $\mb A_0$ from $\mb Y = \mb A_0 \mb X_0$ by solving a certain sequence of linear programs, when $\mb X_0$ is a sparse random matrix (\js{under the Bernoulli-Subgaussian model}) with $O(\sqrt{n})$ nonzeros per column \js{(and the method provably breaks down when $\mb X_0$ contains slightly more than $\Omega(\sqrt{n})$ nonzeros per column)}. \cite{agarwal2013learning, agarwal2013exact} and~\cite{arora2013new, arora2015simple} gave efficient algorithms that provably recover overcomplete ($m \ge n$), incoherent dictionaries, based on a combination of \{clustering or spectral initialization\} and local refinement. These algorithms again succeed when $\mb X_0$ has $\widetilde{O}(\sqrt{n})$ \footnote{The $\widetilde{O}$ suppresses some logarithm factors.} nonzeros per column. Recent work \cite{barak2014dictionary} provided the first polynomial-time algorithm that provably recovers most ``nice'' overcomplete dictionaries when $\mb X_0$ has $O(n^{1-\delta})$ nonzeros per column for any constant $\delta \in (0, 1)$. However, the proposed algorithm runs in super-polynomial (quasipolynomial) time when the sparsity level goes up to $O(n)$. Similarly, \cite{arora2014more} also proposed a super-polynomial  time algorithm that guarantees recovery with (almost) $O\paren{n}$ nonzeros per column. \js{Detailed models for those methods dealing with overcomplete dictionaries are differ from one another; nevertheless, they all assume each column of $\mb X_0$ has bounded sparsity levels, and the nonzero coefficients have certain sub-Gaussian magnitudes\footnote{Thus, one may anticipate that the performances of those methods do not change much qualitatively, if the BG model for the coefficients had been assumed. }.} By comparison, we give the first \emph{polynomial-time} algorithm that provably recovers complete dictionary $\mb A_0$ when $\mb X_0$ has $O\paren{n}$ nonzeros per column, \js{under the BG model.} 

Aside from efficient recovery, other theoretical work on DL includes results on identifiability~\cite{aharon2006uniqueness, hillar2011can, wu2015local}, generalization bounds~\cite{maurer2010dimensional, Vainsencher:2011, Mehta13, gribonval2013sample}, and noise stability~\cite{GribonvalJB14}. 

\paragraph{Finding Sparse Vectors in a Linear Subspace} 
We have followed~\cite{spielman2012exact} and cast the core problem as finding the sparsest vectors in a given linear subspace, which is also of independent interest. Under a planted sparse model\footnote{... where one sparse vector embedded in an otherwise random subspace.}, \cite{demanet2014scaling} showed that solving a sequence of linear programs similar to~\cite{spielman2012exact} can recover sparse vectors with sparsity up to $O\paren{p/\sqrt{n}}$, sublinear in the vector dimension. \cite{qu2014finding} improved the recovery limit to $O\paren{p}$ by solving a nonconvex sphere-constrained problem similar to~\eqref{eq:main_l2}\footnote{The only difference is that they chose to work with the Huber function as a proxy of the $\norm{\cdot}{1}$ function. } via an ADM algorithm. The idea of seeking rows of $\mb X_0$ sequentially by solving the above core problem sees precursors in~\cite{zibulevsky2001blind} for blind source separation, and \cite{gottlieb2010matrix} for matrix sparsification. \cite{zibulevsky2001blind} also proposed a nonconvex optimization similar to~\eqref{eq:main_l2} here and that employed in~\cite{qu2014finding}. 

\paragraph{Nonconvex Optimization Problems} For other nonconvex optimization problems of recovery of structured signals\footnote{This is a body of recent work studying nonconvex recovery up to statistical precision, including, e.g., \cite{loh2011high,loh2013regularized,wang2014nonconvex,balakrishnan2014statistical,wang2014high,loh2014support,loh2015statistical,sun2015provable}. }, 
including low-rank matrix completion/recovery~\cite{keshavan2010matrix, jain2013low, hardt2014understanding, hardt2014fast, netrapalli2014non, jain2014fast, sun2014guaranteed, zheng2015convergent, tu2015low, chen2015fast}, phase retreival~\cite{netrapalli2013phase, candes2015phase, chen2015solving, white2015local}, tensor recovery~\cite{jain2014provable, anandkumar2014guaranteed, anandkumar2014analyzing, anandkumar2015tensor}, mixed regression~\cite{yi2013alternating, lee2013near}, structured element pursuit~\cite{qu2014finding}, and recovery of simultaneously structured signals~\cite{lee2013near}, numerical linear algebra and optimization~\cite{jain2015computing, bhojanapalli2015dropping}, the initialization plus local refinement strategy adopted in theoretical DL~\cite{agarwal2013learning, agarwal2013exact, arora2013new, arora2015simple, arora2014more} is also crucial: nearness to the target solution enables exploiting the local property of the optimizing objective to ensure that the local refinement succeeds.\footnote{The powerful framework~\cite{attouch2010proximal,bolte2014proximal} to establish local convergence of ADM algorithms to critical points applies to DL/DR also, see, e.g., \cite{bao2014l0, bao2014convergent, bao2014convergence}. However, these results do not guarantee to produce global optima. } By comparison, we provide a complete characterization of the global geometry, which admits efficient algorithms without any special initialization.


\paragraph{Independent Component Analysis (ICA) and Other Matrix Factorization Problems} 
\begin{figure}[!htbp]
\centering  
\begin{subfigure}[t]{0.3\textwidth}
\centering
\includegraphics[width = 1\linewidth]{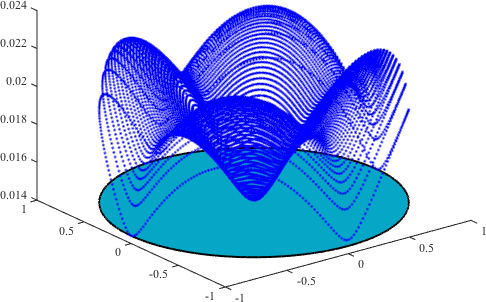} 
\caption{\footnotesize Correlated Gaussian, $\theta = 0.1$}
\label{subfig:gaussian-corr-0.1}
\end{subfigure}
\begin{subfigure}[t]{0.3\textwidth}
\centering
\includegraphics[width = 1\linewidth]{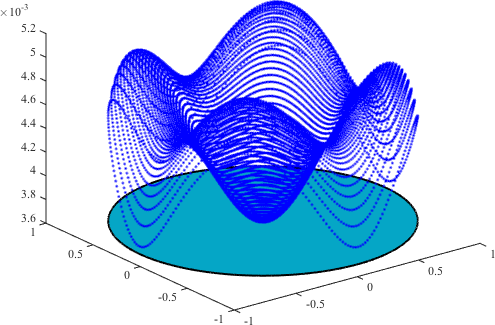} 
\caption{\footnotesize Correlated Uniform, $\theta = 0.1$}
\label{subfig:uniform-corr-0.1}
\end{subfigure} 
\begin{subfigure}[t]{0.3\textwidth}
\centering
\includegraphics[width = 1\linewidth]{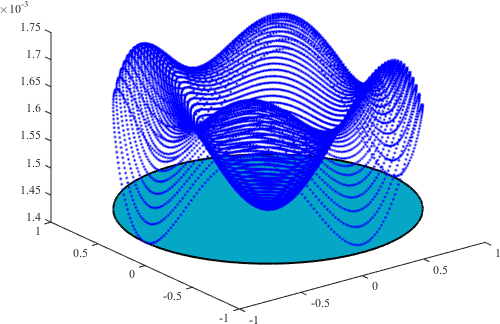}
\caption{\footnotesize Independent Uniform, $\theta = 0.1$}
\label{subfig:uniform-ind-0.1}
\end{subfigure} \\
\begin{subfigure}[t]{0.3\textwidth}
\centering
\includegraphics[width = 1\linewidth]{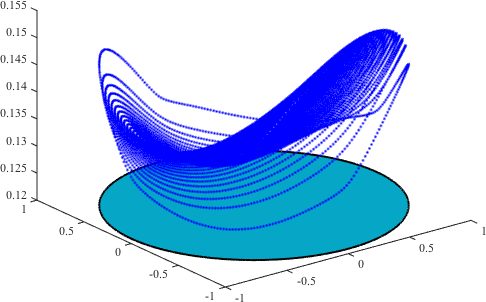} 
\caption{\footnotesize Correlated Gaussian, $\theta = 0.9$}
\label{subfig:gaussian-corr-0.9}
\end{subfigure}
\begin{subfigure}[t]{0.3\textwidth}
\centering
\includegraphics[width = 1\linewidth]{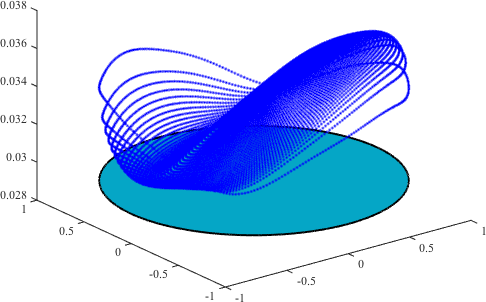}
\caption{\footnotesize Correlated Uniform, $\theta = 0.9$}
\label{subfig:uniform-corr-0.9}
\end{subfigure}
\begin{subfigure}[t]{0.3\textwidth}
\centering
\includegraphics[width = 1\linewidth]{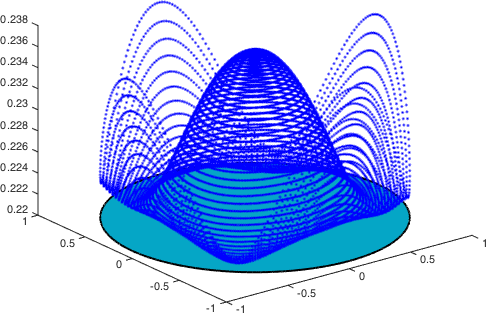}
\caption{\footnotesize Independent Uniform, $\theta = 1$}
\label{subfig:uniform-ind-0.9}
\end{subfigure}
\caption{\textbf{Asymptotic function landscapes when rows of $\mb X_0$ are not independent.} W.l.o.g., we again assume $\mb A_0 = \mb I$. In (a) and (d), $\mb X_0 = \mb \Omega \odot \mb V$, with $\mb \Omega \sim_{i.i.d.} \mathrm{Ber}(\theta)$ and columns of $\mb X_0$ i.i.d. Gaussian vectors obeying $\mb v_i \sim \mc N(\mb 0, \mb \Sigma^2)$ for symmetric $\mb \Sigma$ with $1$'s on the diagonal and i.i.d. off-diagonal entries distributed as $\mc N(0, \sqrt{2}/20)$. Similarly, in (b) and (e), $\mb X_0 = \mb \Omega \odot \mb W$, with $\mb \Omega \sim_{i.i.d.} \mathrm{Ber}(\theta)$ and columns of $\mb X_0$ i.i.d. vectors generated as $\mb w_i = \mb \Sigma \mb u^i$ with $\mb u_i \sim_{i.i.d.} \mathrm{Uniform}[-0.5,0.5]$. For comparison, in (c) and (f), $\mb X_0 = \mb \Omega \odot \mb W$ with $\mb \Omega \sim_{i.i.d.} \mathrm{Ber}(\theta)$ and $\mb W \sim_{i.i.d.} \mathrm{Uniform}[-0.5,0.5]$. Here $\odot$ denote the elementwise product, and the objective function is still based on the $\log\cosh$ function as in~\eqref{eq:main_l2}. }
\label{fig:DL-dependent}
\end{figure}
DL can also be considered in the general framework of matrix factorization problems,  which encompass the classic principal component analysis (PCA), ICA, and clustering, and more recent problems such as nonnegative matrix factorization (NMF), multi-layer neural nets (deep learning architectures). Most of these problems are NP-hard. Identifying tractable cases of practical interest and providing provable efficient algorithms are subject of on-going research endeavors; see, e.g., recent progresses on NMF~\cite{arora2012computing}, and learning deep neural nets~\cite{arora2013provable,sedghi2014provable,neyshabur2013sparse,livni2014computational}. 

ICA factors a data matrix $\mb Y$ as $\mb Y = \mb A \mb X$ such that $\mb A$ is square and rows of $\mb X$ achieve maximal statistical independence~\cite{hyvarinen2000independent,hyvarinen2001}. In theoretical study of the recovery problem, it is often assumed that rows of $\mb X_0$ are (weakly) independent (see, e.g., ~\cite{comon1994independent, frieze1996learning, arora2012provable}). Our i.i.d.\ probability model on $\mb X_0$ implies rows of $\mb X_0$ are independent, aligning our problem perfectly with the ICA problem. More interestingly, the $\log\cosh$ objective we analyze here was proposed as a general-purpose \emph{contrast function} in ICA that has not been thoroughly analyzed~\cite{hyvarinen99fast}.  Algorithms and analysis with another popular contrast function, the fourth-order cumulants, however, indeed overlap with ours considerably~\cite{frieze1996learning, arora2012provable}\footnote{Nevertheless, the objective functions are apparently different. Moreover, we have provided a complete geometric characterization of the objective, in contrast to~\cite{frieze1996learning, arora2012provable}. We believe the geometric characterization could not only provide insight to the algorithm, but also help improve the algorithm in terms of stability and also finding all components. }. While this interesting connection potentially helps port our analysis to ICA, it is a fundamental question to ask what is playing a more vital role for DR, sparsity or independence. 

Fig.~\ref{fig:DL-dependent} helps shed some light in this direction, where we again plot the asymptotic objective landscape with the natural reparameterization as in Section~\ref{sec:overview_geometry}. From the left and central panels, it is evident that even without independence, $\mb X_0$ with sparse columns induces the familiar geometric structures we saw in Fig.~\ref{fig:large-sample-sphere}; such structures are broken when the sparsity level becomes large. We believe all our later analyses can be generalized to the correlated cases we experimented with. On the other hand, from the right panel\footnote{We have not showed the results on the BG model here, as it seems the structure persists even when $\theta$ approaches $1$. We suspect the ``phase transition'' of the landscape occurs at different points for different distributions and Gaussian is the outlying case where the transition occurs at $1$. }, it seems that with independence, the function landscape undergoes a transition, as sparsity level grows: target solution goes from minimizers of the objective to the maximizers of the objective. Without adequate knowledge of the true sparsity, it is unclear whether one would like to minimize or maximize the objective.\footnote{For solving the ICA problem, this suggests the $\log \cosh$ contrast function, that works well empirically~\cite{hyvarinen99fast}, may not work for all distributions (rotation-invariant Gaussian excluded of course), at least when one does not process the data (say perform certain whitening or scaling).} This suggests that sparsity, instead of independence, makes our current algorithm for DR work. 

\paragraph{Nonconvex Problems with Similar Geometric Structure} Besides ICA discussed above, it turns out that a handful of other practical problems arising in signal processing and machine learning induce the ``no spurious minimizers, all saddles are second-order'' structure under natural setting, including the eigenvalue problem, generalized phase retrieval~\cite{sun2016geometric}, orthogonal tensor decomposition~\cite{ge2015escaping}, low-rank matrix recovery/completion~\cite{bhojanapalli2016global,ge2016matrix}, noisy phase synchronization and community detection~\cite{boumal2016non,boumal2016nonconvex,bandeira2016low}, linear neural nets learning~\cite{baldi1989neural,kawaguchi2016deep,soudry2016no}. \cite{sun2015nonconvex} gave a review of these problems, and discussed how the methodology developed in this and the companion paper~\cite{sun2015complete_b} can be generalized to solve those problems. 

\subsection{Notations, and Reproducible Research}
We use bold capital and small letters such as $\mb X$ and $\mb x$ to denote matrices and vectors, respectively. Small letters are reserved for scalars. Several specific mathematical objects we will frequently work with: $O_k$ for the orthogonal group of order $k$, $\bb S^{n-1}$ for the unit sphere in $\R^n$, $\bb B^n$ for the unit ball in $\R^n$, and $[m] \doteq \set{1, \dots, m}$ for positive integers $m$. We use $\paren{\cdot}^*$ for matrix transposition, causing no confusion as we will work entirely on the real field. We use superscript to index rows of a matrix, such as $\mb x^i$ for the $i$-th row of the matrix $\mb X$,  and subscript to index columns, such as $\mb x_j$. All vectors are defaulted to column vectors. So the $i$-th row of $\mb X$ as a row vector will be written as $\paren{\mb x^i}^*$. For norms, $\norm{\cdot}{}$ is the usual $\ell^2$ norm for a vector and the operator norm (i.e., $\ell^2 \to \ell^2$) for a matrix; all other norms will be indexed by subscript, for example the Frobenius norm $\norm{\cdot}{F}$ for matrices and the element-wise max-norm $\norm{\cdot}{\infty}$. We use $\mb x \sim \mc L$ to mean that the random variable $\mb x$ is distributed according to the law $\mc L$. Let $\mc N$ denote the Gaussian law. Then $\mb x \sim \mc N\paren{\mb 0, \mb I}$ means that $\mb x$ is a standard Gaussian vector. Similarly, we use $\mb x \sim_{i.i.d.} \mc L$ to mean elements of $\mb x$ are independently and identically distributed according to the law $\mc L$. So the fact $\mb x \sim \mc N\paren{\mb 0, \mb I}$ is equivalent to that $\mb x \sim_{i.i.d.} \mc N\paren{0, 1}$. One particular distribution of interest for this paper is the Bernoulli-Gaussian with rate $\theta$: $Z \sim B \cdot G$, with $G \sim \mc N\paren{0, 1}$ and $B \sim \mathrm{Ber}\paren{\theta}$. We also write this compactly as $Z \sim \mathrm{BG}\paren{\theta}$. We reserve indexed $C$ and $c$ for absolute constants when stating and proving technical results. The scopes of such constants are local unless otherwise noted. We use standard notations for most other cases, with exceptions clarified locally. 

The codes to reproduce all the figures and experimental results are available online: 
\begin{quote}
\centering
\url{https://github.com/sunju/dl_focm} . 
\end{quote}

\section{The High-dimensional Function Landscape} \label{sec:geometry}
To characterize the function landscape of $f\paren{\mb q; \mb X_0}$ over $\bb S^{n-1}$, we mostly work with the function 
\begin{align}\label{eqn:function-g}
g\paren{\mb w} \doteq f\paren{\mb q\paren{\mb w}; \mb X_0} = \frac{1}{p} \sum_{k=1}^p h_{\mu}\paren{\mb q\paren{\mb w}^* \paren{\mb x_0}_k}, 
\end{align}
induced by the reparametrization
\begin{align}
\mb q\paren{\mb w} = \paren{\mb w, \sqrt{1-\norm{\mb w}{}^2}},  \quad \mb w \in \bb B^{n-1}. 
\end{align}
In particular, we focus our attention to the smaller set 
\begin{align}
\Gamma = \set{\mb w: \norm{\mb w}{} < \sqrt{\frac{4n-1}{4n}}} \subsetneq \bb B^{n-1},  
\end{align}
because $\mb q\paren{\Gamma}$ contains all points $\mb q \in \bb S^{n-1}$ with $n \in \mathop{\arg\max}_{i \in \pm [n]} \mb q^* \mb e_i$ and we can similarly characterize other parts of $f$ on $\bb S^{n-1}$ using projection onto other equatorial sections. Note that over $\Gamma$, $q_n = \sqrt{1-\norm{\mb w}{}^2} \ge 1/(2\sqrt{n})$. 

\subsection{Main Geometric Theorems} 
\begin{theorem}[High-dimensional landscape - orthogonal dictionary]\label{thm:geometry_orth}
Suppose $\mb A_0 = \mb I$ and hence $\mb Y = \mb A_0 \mb X_0 = \mb X_0$. There exist positive constants $c_\star$ and $C$, such that for any $\theta \in (0,1/2)$ and $\mu < c_a\min\Brac{\theta n^{-1}, n^{-5/4}}$, whenever 
\begin{align}
p \ge \frac{C}{\mu^2 \theta^2} n^3 \log \frac{n}{\mu \theta},
\end{align}
the following hold simultaneously with probability at least $1 - c_b p^{-6}$: 
\begin{align}
\nabla^2 g(\mb w; \mb X_0) &\succeq \frac{c_\star \theta}{\mu} \mb I \quad &\forall \, \mb w \quad \text{s.t.}& \quad \norm{\mb w}{} \le \frac{\mu}{4\sqrt{2}},  \label{eqn:hess-zero-uni-orth} \\
\frac{\mb w^* \nabla g(\mb w; \mb X_0)}{\norm{\mb w}{}} &\ge c_\star \theta \quad &\forall \, \mb w \quad \text{s.t.}& \quad \frac{\mu}{4\sqrt{2}} \le \norm{\mb w}{} \le \frac{1}{20 \sqrt{5}} \label{eqn:grad-uni-orth} \\
\frac{\mb w^* \nabla^2 g(\mb w; \mb X_0) \mb w}{\norm{\mb w}{}^2} &\le - c_\star \theta \quad &\forall \, \mb w \quad \text{s.t.}& \quad \frac{1}{20 \sqrt{5}} \le \norm{\mb w}{} \le \sqrt{\frac{4n-1}{4n}},   \label{eqn:curvature-uni-orth}
\end{align}
{\em and} the function $g(\mb w; \mb X_0)$ has exactly one local minimizer $\mb w_\star$ over the open set $\Gamma \doteq \set{ \mb w: \norm{\mb w}{} < \sqrt{ \tfrac{4n-1}{4n} } }$, which satisfies
\begin{equation}
\norm{\mb w_\star - \mb 0 }{} \;\le\; \min\Brac{\frac{c_c\mu}{\theta} \sqrt{\frac{n \log p}{p}}, \frac{\mu}{16}}.    
\end{equation}
Here $c_a$ through $c_c$ are all positive constants. 
\end{theorem}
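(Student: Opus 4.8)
\emph{Proof strategy.} The plan is to analyze the population objective $G(\mb w) \doteq \expect{g(\mb w; \mb X_0)}$ first, establish the three inequalities \eqref{eqn:hess-zero-uni-orth}--\eqref{eqn:curvature-uni-orth} for $G$ with an absolute constant $2c_\star$ in place of $c_\star$, and then transfer them to the empirical $g$ by a uniform concentration argument over $\mb w \in \Gamma$, which costs at most a factor of two; the statement about the local minimizer is then a soft consequence of the three inequalities together with concentration of $\nabla g$ at $\mb 0$. The starting point is the chain rule through $\mb q(\mb w)$. Writing $(\mb x_0)_k = (\bar{\mb x}_k, x_k) \in \R^{n-1}\times\R$ and using $h_\mu'(z) = \tanh(z/\mu)$, $h_\mu''(z) = \mu^{-1}\mathrm{sech}^2(z/\mu)$, one gets $\nabla g(\mb w) = \frac1p\sum_k h_\mu'(\mb q^*(\mb x_0)_k)\paren{\bar{\mb x}_k - (x_k/q_n)\mb w}$ and a two-term formula for $\nabla^2 g$ --- a rank-one ``$h_\mu''$'' part built from $\bar{\mb x}_k - (x_k/q_n)\mb w$, plus an ``$h_\mu'$'' part $-h_\mu'(\mb q^*(\mb x_0)_k)\,x_k(\mb I/q_n + \mb w\mb w^*/q_n^3)$ carrying the curvature of the spherical cap. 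On $\Gamma$ we have $q_n \ge 1/(2\sqrt n)$, which controls all the $1/q_n$ factors.

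The heart of the argument is the population estimate. For each of $\nabla^2 G$, $\mb w^*\nabla G/\norm{\mb w}{}$, and $\mb w^*\nabla^2 G\,\mb w/\norm{\mb w}{}^2$ I would reduce the expectation to low-dimensional integrals against the $\mathrm{BG}(\theta)$ law by conditioning on which coordinates of $(\mb x_0)_k$ are nonzero and splitting $\mb q^*(\mb x_0)_k$ accordingly. The analytic fact driving everything is that $h_\mu''$ is an approximate identity of width $\Theta(\mu)$ and height $\Theta(1/\mu)$: near $\mb w = \mb 0$ one has $q_n\approx 1$, the argument lies within $O(\mu)$ of $x_k$, the $h_\mu''$ part contributes $\Theta(\theta/\mu)\mb I$, and the $h_\mu'$ part is lower order since $\expect{\tanh(x_{kn}/\mu)\,x_{kn}}$ is tiny for small $\mu$ --- this yields \eqref{eqn:hess-zero-uni-orth} at the population level. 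In the intermediate band one shows the radial derivative of $G$ is bounded below by $\Omega(\theta)$, and in the outer band one exhibits $\Omega(\theta)$ negative radial curvature, essentially because shifting $\mb w$ toward the equator spreads the coefficient vector over more coordinates and raises the value of the $\log\cosh$ objective relative to the pole. Each step needs $\mu$ small enough that smoothing does not erase the structure, which is the source of the hypothesis $\mu < c_a\min\set{\theta n^{-1}, n^{-5/4}}$. I expect this block of one- and two-dimensional integral estimates, together with tracking the exact dependence on $n$, to be the main technical obstacle, closely followed by the $\mu^{-1}$-scaled Hessian concentration below.

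Transferring to $g$ is by uniform concentration over $\Gamma$. Since $\abs{h_\mu'}\le 1$ and $\norm{h_\mu''}{\infty}\le 1/\mu$ and the entries of $(\mb x_0)_k$ are sub-Gaussian, each summand of $\nabla g$ is bounded-times-sub-Gaussian, and each summand of $\nabla^2 g$ is at worst $1/\mu$ times a sub-exponential quadratic form; the worst-case size of the $x_k/q_n$ factors on $\Gamma$ is what injects the powers of $n$. A Bernstein-type bound at a fixed $\mb w$, a Lipschitz-in-$\mb w$ estimate for $\nabla g$ and $\nabla^2 g$, and an $\eps$-net over $\Gamma \subset \bb B^{n-1}$ give the uniform deviation bounds $\sup_{\mb w\in\Gamma}\norm{\nabla g - \nabla G}{}$ and $\sup_{\mb w\in\Gamma}\norm{\nabla^2 g - \nabla^2 G}{}$ on the scale needed ($\ll c_\star\theta$ and $\ll c_\star\theta/\mu$ respectively) as soon as $p \ge \tfrac{C}{\mu^2\theta^2}n^3\log\tfrac{n}{\mu\theta}$; taking $c_\star$ to be, say, a quarter of the population constant then gives \eqref{eqn:hess-zero-uni-orth}--\eqref{eqn:curvature-uni-orth} with probability at least $1-c_b p^{-6}$.

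For the last clause, \eqref{eqn:grad-uni-orth}--\eqref{eqn:curvature-uni-orth} show that at every $\mb w$ with $\norm{\mb w}{}\ge \mu/(4\sqrt2)$ either $\nabla g(\mb w)\neq\mb 0$ or $\nabla^2 g(\mb w)$ has a strictly negative eigenvalue, so no local minimizer lies outside the ball $B \doteq \set{\mb w : \norm{\mb w}{}\le \mu/(4\sqrt2)}$. On $B$, \eqref{eqn:hess-zero-uni-orth} makes $g$ strongly convex, so $g$ has a unique minimizer over the closed ball $B$; since the radial derivative of $g$ is strictly positive on $\partial B$ by \eqref{eqn:grad-uni-orth} (whose range includes $\norm{\mb w}{} = \mu/(4\sqrt2)$), that minimizer is interior, hence a critical point $\mb w_\star$ --- the unique critical point of $g$ in $B$ and thus the unique local minimizer of $g$ over $\Gamma$. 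Finally, strong convexity along $[\mb 0,\mb w_\star]$ gives $\norm{\nabla g(\mb 0)}{} = \norm{\nabla g(\mb 0) - \nabla g(\mb w_\star)}{} \ge (c_\star\theta/\mu)\norm{\mb w_\star}{}$, while $\nabla g(\mb 0) = \frac1p\sum_k h_\mu'(x_{kn})\bar{\mb x}_k$ is a zero-mean average of bounded-times-sub-Gaussian vectors and hence $\norm{\nabla g(\mb 0)}{}\lesssim \sqrt{n\log p/p}$ w.h.p.; combining yields $\norm{\mb w_\star}{}\le \tfrac{c_c\mu}{\theta}\sqrt{n\log p/p}$, and since $p$ is as large as assumed this quantity is in turn $\le \mu/16$, giving the stated minimum.
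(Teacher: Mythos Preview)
Your proposal is correct and follows the same route as the paper: population estimates for each of the three quantities (the paper's Propositions~2.5--2.7), then pointwise Bernstein concentration plus Lipschitz-in-$\mb w$ bounds plus an $\eps$-net (Propositions~2.8--2.13), and finally a strong-convexity-at-the-origin argument together with a vector Bernstein bound on $\nabla g(\mb 0)$ for the unique minimizer. One refinement in the paper worth adopting: rather than concentrating the full $\nabla g$ and $\nabla^2 g$ uniformly over all of $\Gamma$, it concentrates only the \emph{scalar} quantities $\mb w^*\nabla g/\norm{\mb w}{}$ and $\mb w^*\nabla^2 g\,\mb w/\norm{\mb w}{}^2$ on their respective annuli (and the full Hessian only on the small ball $R_1$), which prevents the $1/q_n^3$ blow-up near $\partial\Gamma$ from contaminating the sample-complexity constants; also, your stated reason that $\expect{\tanh(x_{kn}/\mu)\,x_{kn}}$ is ``tiny for small $\mu$'' is not correct (it tends to $\theta\sqrt{2/\pi}$), though the conclusion stands since this $O(\theta)$ term is still dominated by the $\Theta(\theta/\mu)$ contribution from the $h_\mu''$ part.
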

Here $\mb q\paren{\mb 0} = \mb e_n$, which exactly recovers the last row of $\mb X_0$, $(\mb x_0)^n$. Though the unique local minimizer $\mb w_\star$ may not be $\mb 0$, it is very near to $\mb 0$. Hence the resulting $\mb q\paren{\mb w_\star}$ produces a close approximation to $(\mb x_0)^n$. Note that $\mb q\paren{\Gamma}$ (strictly) contains all points $\mb q \in \bb S^{n-1}$ such that $n = \mathop{\arg \max}_{i \in \pm [n]} \mb q^* \mb e_i$. We can characterize the graph of the function $f\paren{\mb q; \mb X_0}$ in the vicinity of other signed basis vector $\pm \mb e_i$ simply by changing the equatorial section $\mb e_n^\perp$ to $\mb e_i^\perp$. Doing this $2n$ times (and multiplying the failure probability in Theorem~\ref{thm:geometry_orth} by $2n$), we obtain a characterization of $f\paren{\mb q; \mb X_0}$ over the entirety of $\bb S^{n-1}$.\footnote{In fact, it is possible to pull the very detailed geometry captured in~\eqref{eqn:hess-zero-uni-orth} through~\eqref{eqn:curvature-uni-orth} back to the sphere (i.e., the $\mb q$ space) also; analysis of the Riemannian trust-region algorithm later does part of these. We will stick to this simple global version here. }
 The result is captured by the next corollary. 

\begin{corollary} \label{cor:geometry_orth}
Suppose $\mb A_0 = \mb I$ and hence $\mb Y = \mb A_0 \mb X_0 = \mb X_0$. There exist positive constant $C$, such that for any $\theta \in (0,1/2)$ and $\mu < c_a\min\Brac{\theta n^{-1}, n^{-5/4}}$, whenever $
p \ge \frac{C}{\mu^2 \theta^2} n^3 \log \frac{n}{\mu \theta} $, with probability at least $1 - c_b p^{-5}$, the function $f\paren{\mb q; \mb X_0}$ has exactly $2n$ local minimizers over the sphere $\bb S^{n-1}$. In particular, there is a bijective map between these minimizers and signed basis vectors $\set{\pm \mb e_i}_i$, such that the corresponding local minimizer $\mb q_\star$ and $\mb b \in \set{\pm \mb e_i}_i$ satisfy 
\begin{align}
\norm{\mb q_\star - \mb b}{} \le \sqrt{2}\min\Brac{\frac{c_c\mu}{\theta} \sqrt{\frac{n \log p}{p}}, \frac{\mu}{16}}. 
\end{align} 
Here $c_a$ to $c_c$ are positive constants.
\end{corollary}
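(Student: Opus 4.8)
\textit{Proof proposal.} The plan is to deduce the corollary from Theorem~\ref{thm:geometry_orth} by a symmetrization-plus-covering argument, introducing no new estimates. First I would record that the reparametrization $\mb q(\mb w) = \paren{\mb w, \sqrt{1-\norm{\mb w}{}^2}}$ restricts to a smooth diffeomorphism from $\Gamma$ onto the open subset $U_n^{+} \doteq \set{\mb q \in \bb S^{n-1}: q_n > 1/(2\sqrt n)}$ of the sphere, since $\norm{\mb w}{}^2 < (4n-1)/(4n)$ is equivalent to $q_n = \sqrt{1-\norm{\mb w}{}^2} > 1/(2\sqrt n)$ (and $1-\norm{\mb w}{}^2$ is bounded away from $0$ on $\Gamma$, so the map is smooth with smooth inverse $\mb q \mapsto (q_1,\dots,q_{n-1})$). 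Replacing $\mb e_n$ by an arbitrary signed basis vector gives, for each of the $2n$ sign/index choices $(i,\pm)$, an analogous chart $\mb q^{(i,\pm)}: \Gamma \to U_i^{\pm} \doteq \set{\mb q: \pm q_i > 1/(2\sqrt n)}$. Because the law $\mathrm{BG}(\theta)$ of $\mb X_0$ is invariant under permuting and negating rows and $h_\mu$ is even, Theorem~\ref{thm:geometry_orth} applies verbatim in each of these $2n$ charts, each conclusion failing with probability at most $c_b p^{-6}$. A union bound over the $2n$ charts, using $2n \le p$ (implied by $p \ge \tfrac{C}{\mu^2\theta^2}n^3\log\tfrac{n}{\mu\theta}$), produces an event of probability at least $1 - 2n c_b p^{-6} \ge 1 - c_b p^{-5}$ on which all $2n$ conclusions hold simultaneously.

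On this event I would argue as follows. The $2n$ sets $U_i^{\pm}$ form an open cover of $\bb S^{n-1}$: any unit vector $\mb q$ satisfies $\max_i q_i^2 \ge 1/n$, hence $\max_i \abs{q_i} \ge 1/\sqrt n > 1/(2\sqrt n)$. Since the objective in the $(i,\pm)$ chart is $f\paren{\mb q; \mb X_0}$ precomposed with a diffeomorphism onto the open set $U_i^{\pm}$, and since $U_i^{\pm}$ is open in $\bb S^{n-1}$, a point lying in $U_i^{\pm}$ is a local minimizer of $f$ over $\bb S^{n-1}$ if and only if its chart-preimage is a local minimizer of the corresponding $g$ over $\Gamma$. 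Theorem~\ref{thm:geometry_orth} states that each chart has exactly one such local minimizer $\mb w_\star^{(i,\pm)} \in \Gamma$, with $\norm{\mb w_\star^{(i,\pm)}}{} \le \min\set{\tfrac{c_c\mu}{\theta}\sqrt{n\log p/p}, \tfrac{\mu}{16}}$. Consequently the set of local minimizers of $f$ over $\bb S^{n-1}$ is exactly $\set{\mb q^{(i,\pm)}(\mb w_\star^{(i,\pm)})}$.

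It remains to count these points and bound their distances to the signed basis vectors. A direct computation gives $\norm{\mb q(\mb w) - \mb q(\mb 0)}{}^2 = 2 - 2\sqrt{1-\norm{\mb w}{}^2} \le 2\norm{\mb w}{}^2$, so each candidate minimizer satisfies $\norm{\mb q^{(i,\pm)}(\mb w_\star^{(i,\pm)}) - (\pm\mb e_i)}{} \le \sqrt 2 \min\set{\tfrac{c_c\mu}{\theta}\sqrt{n\log p/p}, \tfrac{\mu}{16}}$, which is the claimed bound. Since this distance is at most $\sqrt2\,\mu/16$ and $\mu < c_a n^{-5/4} \ll n^{-1/2}$, the point $\mb q^{(i,\pm)}(\mb w_\star^{(i,\pm)})$ has every coordinate other than its $i$-th of magnitude at most $\sqrt2\mu/16 < 1/(2\sqrt n)$; hence it lies in $U_i^{\pm}$ and in no other chart. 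Thus the $2n$ minimizers are pairwise distinct, giving a bijection with $\set{\pm\mb e_i}$ respecting the distance bound, which completes the proof.

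The argument is entirely ``soft'', so there is no genuine analytic obstacle; the only steps requiring care are (i) confirming that Theorem~\ref{thm:geometry_orth} transfers to all $2n$ charts — which relies only on the row-exchangeability and sign-symmetry of the $\mathrm{BG}(\theta)$ model and on $h_\mu$ being even — and (ii) verifying that the charts $U_i^{\pm}$ are open and cover $\bb S^{n-1}$, so that ``local minimizer over a chart'' and ``local minimizer over the sphere'' agree for points inside a chart, and the $2n$ chart-minimizers do not collide. Both are elementary.
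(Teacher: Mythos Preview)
Your proposal is correct and follows essentially the same approach as the paper: apply Theorem~\ref{thm:geometry_orth} in each of the $2n$ symmetric charts via the sign/permutation symmetry of the $\mathrm{BG}(\theta)$ model, union-bound the failure probabilities (picking up the factor $2n \le p$), use the distance calculation $\norm{\mb q(\mb w)-\mb q(\mb 0)}{}^2 = 2-2\sqrt{1-\norm{\mb w}{}^2}\le 2\norm{\mb w}{}^2$, and then argue that the $2n$ resulting minimizers are distinct (not in the chart overlaps) because each is $O(\mu)$-close to its own signed basis vector while $\mu \ll n^{-1/2}$. The only cosmetic difference is that the paper transfers the local-minimizer property via a Lipschitz/contradiction argument rather than invoking the diffeomorphism directly, but the content is the same.
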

\begin{proof}
By Theorem~\ref{thm:geometry_orth}, over $\mb q\paren{\Gamma}$, $\mb q\paren{\mb w_\star}$ is the unique local minimizer. Suppose not. Then there exist $\mb q' \in \mb q\paren{\Gamma}$ with $\mb q' \neq \mb q\paren{\mb w_\star}$ and $\eps > 0$, such that $f\paren{\mb q'; \mb X_0} \le f\paren{\mb q; \mb X_0}$ for all $\mb q \in \mb q\paren{\Gamma}$ satisfying $\norm{\mb q' - \mb q}{} < \eps$. Since the mapping $\mb w \mapsto \mb q\paren{\mb w}$ is $2\sqrt{n}$-Lipschitz (Lemma~\ref{lem:lip-h-mu}), $g\paren{\mb w\paren{\mb q'}; \mb X_0} \le g\paren{\mb w\paren{\mb q}; \mb X_0}$ for all $\mb w \in \Gamma$ satisfying $\norm{\mb w\paren{\mb q'} - \mb w\paren{\mb q}}{} < \eps/\paren{2\sqrt{n}}$, implying $\mb w\paren{\mb q'}$ is a local minimizer different from $\mb w_\star$, a contradiction. Let $\norm{\mb w_\star - \mb 0}{} = \eta$. Straightforward calculation shows 
\begin{align*}
\norm{\mb q\paren{\mb w_\star} - \mb e_n}{}^2 = (1-\sqrt{1-\eta^2})^2 + \eta^2 = 2 - 2\sqrt{1-\eta^2} \le 2\eta^2. 
\end{align*}
Repeating the argument $2n$ times in the vicinity of other signed basis vectors $\pm \mb e_i$ gives $2n$ local minimizers of $f$. Indeed, the $2n$ symmetric sections cover the sphere with certain overlaps. 
\js{We claim that none of the $2n$ local minimizers lies in the overlapped regions. This is due to the nearness of these local minimizers to standard basis vectors. To see this, w.l.o.g., suppose $\mb q$, which is the local minimizer next to $\mb e_n$, is in the overlapped region determined by $e_n$ and $e_i$ for some $i \ne n$. This implies that 
\begin{align*}
\norm{[w_1(\mb q), \dots, w_{i-1}(\mb q), w_{i+1}(\mb q), \dots, q_n]}{}^2 < \frac{4n-1}{4n} 
\end{align*}  
by the definition of our symmetric sections. On the other hand, we know 
\begin{align*}
\norm{[w_1(\mb q), \dots, w_{i-1}(\mb q), w_{i+1}(\mb q), \dots, q_n]}{}^2 \ge q_n^2 = 1 - \eta^2. 
\end{align*}
Thus, so long as $1 - \eta^2 \ge \tfrac{4n-1}{4n}$, or $\eta \le 1/(2\sqrt{n})$, a contradiction arises. Since $\eta \in O(\mu)$ and $\mu \le O(n^{-5/4})$ by our assumption, our claim is confirmed. 
}
There are no extra local minimizers, as any extra local minimizer must be contained in at least one of the $2n$ symmetric sections, making two different local minimizers in one section, contradicting the uniqueness result we obtained above. 
\end{proof}

Though the $2n$ isolated local minimizers may have different objective values, they are equally good in the sense each of them helps produce a close approximation to a certain row of $\mb X_0$. As discussed in Section~\ref{sec:overview_geometry}, for cases $\mb A_0$ is an orthobasis other than $\mb I$, the landscape of $f\paren{\mb q; \mb Y}$ is simply a rotated version of the one we characterized above. 

\begin{theorem}[High-dimensional landscape - complete dictionary]\label{thm:geometry_comp}
Suppose $\mb A_0$ is complete with its condition number $\kappa\paren{\mb A_0}$. There exist positive constants $c_\star$ (particularly, the same constant as in Theorem~\ref{thm:geometry_orth}) and $C$, such that for any $\theta \in (0,1/2)$ and $\mu < c_a\min\Brac{\theta n^{-1}, n^{-5/4}}$, when 
\begin{align}
p \ge \frac{C}{c_\star^2 \theta^2} \max\set{\frac{n^4}{\mu^4}, \frac{n^5}{\mu^2}} \kappa^8\paren{\mb A_0} \log^4\paren{\frac{\kappa\paren{\mb A_0} n}{\mu \theta}}
\end{align}
and $\overline{\mb Y} \doteq \sqrt{p\theta}\paren{\mb Y \mb Y^*}^{-1/2} \mb Y$, $\mb U \mb \Sigma \mb V^* = \mathtt{SVD}\paren{\mb A_0}$, the following hold simultaneously with probability at least $1 - c_bp^{-6}$: 
\begin{align}
\nabla^2 g(\mb w; \mb V \mb U^* \overline{\mb Y}) &\succeq \frac{c_\star \theta}{2\mu} \mb I \quad &\forall \, \mb w \quad \text{s.t.}& \quad \norm{\mb w}{} \le \frac{\mu}{4\sqrt{2}},  \label{eqn:hess-zero-uni-comp} \\
\frac{\mb w^* \nabla g(\mb w; \mb V \mb U^* \overline{\mb Y})}{\norm{\mb w}{}} &\ge \frac{1}{2}c_\star \theta \quad &\forall \, \mb w \quad \text{s.t.}& \quad \frac{\mu}{4\sqrt{2}} \le \norm{\mb w}{} \le \frac{1}{20 \sqrt{5}} \label{eqn:grad-uni-comp} \\
\frac{\mb w^* \nabla^2 g(\mb w; \mb V \mb U^* \overline{\mb Y}) \mb w}{\norm{\mb w}{}^2} &\le -\frac{1}{2} c_\star \theta \quad &\forall \, \mb w \quad \text{s.t.}& \quad \frac{1}{20 \sqrt{5}} \le \norm{\mb w}{} \le \sqrt{\frac{4n-1}{4n}},   \label{eqn:curvature-uni-comp}
\end{align}
{\em and} the function $g(\mb w; \mb V \mb U^* \overline{\mb Y})$ has exactly one local minimizer $\mb w_\star$ over the open set $\Gamma \doteq \set{ \mb w: \norm{\mb w}{} < \sqrt{ \tfrac{4n-1}{4n} } }$, which satisfies
\begin{equation}
\norm{\mb w_\star - \mb 0 }{} \;\le\; \mu /7.    
\end{equation}
Here $c_a, a_b$ are both positive constants.
\end{theorem}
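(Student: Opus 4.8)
The plan is to deduce Theorem~\ref{thm:geometry_comp} from Theorem~\ref{thm:geometry_orth} by a perturbation argument: after the orthogonal change of coordinates $\mb V\mb U^*$, the preconditioned data behaves like the orthogonal-case data $\mb X_0$ plus a small \emph{multiplicative} error, and this error shifts $\nabla g$ and $\nabla^2 g$ by no more than half of the margins established in Theorem~\ref{thm:geometry_orth}. Concretely, the goal of the first half of the argument is to write $\mb V\mb U^*\overline{\mb Y} = (\mb I + \mb E)\mb X_0$ with $\norm{\mb E}{}$ provably tiny on a high-probability event, and the second half propagates that bound to the gradient and Hessian of $g$ and then extracts the unique near-origin minimizer.

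\emph{Reduction to a perturbed orthogonal problem.} Let $\mb M \doteq \tfrac{1}{p\theta}\mb X_0\mb X_0^*$. Since the columns $(\mb x_0)_k(\mb x_0)_k^*$ are i.i.d.\ with mean $\theta\mb I$ and sub-exponential tails, a truncated matrix Bernstein bound gives $\norm{\mb M - \mb I}{} \le \eps$ (with $\eps$ of order $\sqrt{n\log p/p}$ up to constants) on an event of probability at least $1 - cp^{-6}$; on the same event I also record $\norm{\tfrac1p\mb X_0\mb X_0^*}{} \le 2\theta$ and $\tfrac1p\sum_k \norm{(\mb x_0)_k}{}^2 \le 2\theta n$. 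Writing $\overline{\mb Y} = (\mb A_0\mb M\mb A_0^*)^{-1/2}\mb A_0\mb X_0$ and using the identity $(\mb A_0\mb A_0^*)^{-1/2}\mb A_0 = \mb U\mb V^*$ from the excerpt, I bound $(\mb A_0\mb M\mb A_0^*)^{-1/2} - (\mb A_0\mb A_0^*)^{-1/2}$ via Lipschitzness of the matrix map $\mb P\mapsto\mb P^{-1/2}$ over a spectral interval determined by the singular values of $\mb A_0$ and the bound on $\norm{\mb M-\mb I}{}$. This yields $\overline{\mb Y} = \mb U\mb V^*\mb X_0 + \mb \Xi\mb X_0$ with $\norm{\mb \Xi}{} \le C\,\kappa^{\,O(1)}(\mb A_0)\,\eps$, where the powers of $\kappa(\mb A_0)$ come from $\norm{(\mb A_0\mb A_0^*)^{-1/2}}{}$, $\norm{\mb A_0}{}$ and the Lipschitz constant over that interval; hence $\mb V\mb U^*\overline{\mb Y} = (\mb I + \mb E)\mb X_0$ with $\mb E \doteq \mb V\mb U^*\mb \Xi$ and $\norm{\mb E}{} = \norm{\mb \Xi}{}$.

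\emph{Perturbation of the landscape, and the minimizer.} I then compare $g(\mb w; (\mb I+\mb E)\mb X_0)$ with $g(\mb w; \mb X_0)$ termwise. Using $\abs{h_\mu'}\le 1$, $\abs{h_\mu''}\le 1/\mu$, $\abs{h_\mu'''}\le 2/\mu^2$, the bounds $\norm{\nabla\mb q(\mb w)}{}\le 2\sqrt n$ and $\norm{\nabla^2 q_n(\mb w)}{}\lesssim n^{3/2}$ on $\Gamma$, and Cauchy--Schwarz against the moment bounds recorded above, one shows that $\sup_{\mb w\in\Gamma}\norm{\nabla g(\mb w;(\mb I+\mb E)\mb X_0)-\nabla g(\mb w;\mb X_0)}{}$ and $\sup_{\mb w\in\Gamma}\norm{\nabla^2 g(\mb w;(\mb I+\mb E)\mb X_0)-\nabla^2 g(\mb w;\mb X_0)}{}$ are each at most a fixed power of $n$ and $1/\mu$ times $\theta\norm{\mb E}{}$. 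Choosing $\eps$ small enough — equivalently, taking $p$ as large as in the statement — so that these quantities fall below $\tfrac{c_\star\theta}{2\mu}$ and $\tfrac{c_\star\theta}{2}$ in the appropriate regions, and invoking Theorem~\ref{thm:geometry_orth} for $\mb X_0$ (legitimate since the assumed $p$ dominates the threshold there), the triangle inequality gives \eqref{eqn:hess-zero-uni-comp}--\eqref{eqn:curvature-uni-comp} with the halved constants. These three inequalities force any critical point of $g(\cdot;\mb V\mb U^*\overline{\mb Y})$ in $\Gamma$ into the strongly convex ball $\norm{\mb w}{}\le\mu/(4\sqrt2)$, where strong convexity makes it unique; its distance to $\mb 0$ is at most $\norm{\nabla g(\mb 0;\mb V\mb U^*\overline{\mb Y})}{}$ divided by $\tfrac{c_\star\theta}{2\mu}$, and bounding $\norm{\nabla g(\mb 0;\mb X_0)}{}$ by finite-sample concentration (its expectation vanishes by the axis symmetry of the BG model) plus the $\norm{\mb E}{}$-perturbation yields $\norm{\mb w_\star}{}\le\mu/7$.

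\emph{Main obstacle.} The delicate part is the two-stage propagation of errors so that the final requirement on $\eps$ matches the stated sample complexity exactly. The raw concentration error $\eps\asymp\sqrt{n\log p/p}$ must pass through the inverse-square-root perturbation (which introduces the $\kappa(\mb A_0)$ dependence and, after the compositions involved, the eighth power), and then through the gradient/Hessian comparison (which introduces the powers of $n$ and $1/\mu$ coming from $\nabla\mb q$, $\nabla^2 q_n$ and $h_\mu'''$); it is the balancing of the gradient-level and Hessian-level requirements on $\norm{\mb E}{}$ that produces the $\max\{n^4/\mu^4,\,n^5/\mu^2\}$ factor. Getting these exponents right quantitatively, rather than merely qualitatively, while keeping the total failure probability at $O(p^{-6})$, is the real work; everything else is routine Taylor expansion combined with the already-established orthogonal-case theorem.
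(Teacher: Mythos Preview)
Your proposal is correct and follows essentially the same route as the paper: write $\mb V\mb U^*\overline{\mb Y} = (\mb I+\mb E)\mb X_0$ via an inverse-square-root perturbation bound (the paper's Lemma~\ref{lem:pert_key_mag}, giving $\norm{\mb \Xi}{}\le 20\kappa^4(\mb A_0)\sqrt{\theta n\log p/p}$), then bound the uniform change in $\nabla g$ and $\nabla^2 g$ over $\Gamma$ (the paper's Lemma~\ref{lem:pert_key_grad_hess}, which controls the data via the high-probability event $\norm{\mb X_0}{\infty}\le 4\sqrt{\log(np)}$ rather than your Cauchy--Schwarz/second-moment route, but to the same effect), and finally invoke Theorem~\ref{thm:geometry_orth} together with the strong-convexity argument near $\mb 0$. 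One small slip in wording: the three inequalities do not force every \emph{critical point} into the convex ball---saddle points can and do live in the negative-curvature annulus---only every \emph{local minimizer}; your subsequent argument in fact only uses the latter, so the conclusion is unaffected.
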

\begin{corollary}  \label{cor:geometry_comp}
Suppose $\mb A_0$ is complete with its condition number $\kappa\paren{\mb A_0}$. There exist positive constants $c_\star$ (particularly, the same constant as in Theorem~\ref{thm:geometry_orth}) and $C$, such that for any $\theta \in (0,1/2)$ and $\mu < c_a\min\Brac{\theta n^{-1}, n^{-5/4}}$, when $p \ge \frac{C}{c_\star^2 \theta^2} \max\set{\frac{n^4}{\mu^4}, \frac{n^5}{\mu^2}} \kappa^8\paren{\mb A_0} \allowbreak \log^4\paren{\frac{\kappa\paren{\mb A_0} n}{\mu\theta}}$ and $\overline{\mb Y} \doteq \sqrt{p\theta}\paren{\mb Y \mb Y^*}^{-1/2} \mb Y$, $\mb U \mb \Sigma \mb V^* = \mathtt{SVD}\paren{\mb A_0}$, with probability at least $1 - c_b p^{-5}$, the function $f\paren{\mb q; \mb V \mb U^* \overline{\mb Y}}$ has exactly $2n$ local minimizers over the sphere $\bb S^{n-1}$. In particular, there is a bijective map between these minimizers and signed basis vectors $\set{\pm \mb e_i}_i$, such that the corresponding local minimizer $\mb q_\star$ and $\mb b \in \set{\pm \mb e_i}_i$ satisfy 
\begin{align}
\norm{\mb q_\star - \mb b}{} \le \sqrt{2}\mu/7. 
\end{align} 
Here $c_a, c_b$ are both positive constants.
\end{corollary}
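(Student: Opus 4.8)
The plan is to reproduce the argument used for Corollary~\ref{cor:geometry_orth} essentially verbatim, with Theorem~\ref{thm:geometry_comp} substituted for Theorem~\ref{thm:geometry_orth}. Theorem~\ref{thm:geometry_comp} already hands us, over the section $\Gamma$, the same three-region description of $g(\mb w; \mb V\mb U^*\overline{\mb Y})$ (positive-definite Hessian near $\mb 0$, radially increasing gradient in the intermediate shell, negative radial curvature in the outer shell, now with constant $c_\star\theta/2$) together with the crucial fact that $g(\mb w; \mb V\mb U^*\overline{\mb Y})$ has \emph{exactly one} local minimizer $\mb w_\star \in \Gamma$, satisfying $\norm{\mb w_\star}{} \le \mu/7$. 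So all the analytic heavy lifting — in particular the preconditioning/perturbation analysis that separates the complete case from the orthogonal one — is already done; what remains for the corollary is purely the transfer-to-the-sphere and union-over-sections bookkeeping.

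First I would push the uniqueness statement from the $\mb w$-coordinates back to $\bb S^{n-1}$. The map $\mb q(\mb w) = (\mb w, \sqrt{1-\norm{\mb w}{}^2})$ is a bijection from $\Gamma$ onto $\mb q(\Gamma)$ and, together with its inverse, is Lipschitz with the constants recorded in Lemma~\ref{lem:lip-h-mu}; so the same contradiction argument as in the proof of Corollary~\ref{cor:geometry_orth} shows $\mb q(\mb w_\star)$ is the unique local minimizer of $f(\mb q; \mb V\mb U^*\overline{\mb Y})$ over $\mb q(\Gamma)$. The elementary identity $\norm{\mb q(\mb w_\star) - \mb e_n}{}^2 = 2 - 2\sqrt{1-\norm{\mb w_\star}{}^2} \le 2\norm{\mb w_\star}{}^2$ then yields $\norm{\mb q(\mb w_\star) - \mb e_n}{} \le \sqrt{2}\,\mu/7$.

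Next I would cover the whole sphere. In the orthogonal case one re-uses the $\mb e_n$-section result on every symmetric section because the law of $\mb X_0$ is invariant under signed row permutations; here I would instead simply re-run Theorem~\ref{thm:geometry_comp} with the equatorial section $\mb e_i^\perp$ in place of $\mb e_n^\perp$ for each $i \in [n]$ and each sign — the theorem's proof privileges no particular coordinate, and $\mb V\mb U^*\overline{\mb Y}$ has $p\theta$-scaled orthonormal rows regardless. A union bound over the $2n$ signed sections degrades the failure probability from $c_b p^{-6}$ to $c_b p^{-5}$ (using $2n \le p$, which holds under the stated sample complexity), producing $2n$ local minimizers, each within $\sqrt{2}\,\mu/7$ of the corresponding $\pm\mb e_i$. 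To finish I would exclude spurious minimizers exactly as in Corollary~\ref{cor:geometry_orth}: a minimizer lying in the overlap of the sections around $\mb e_n$ and $\mb e_i$ would force $1 - \norm{\mb w_\star}{}^2 = q_n^2 < \tfrac{4n-1}{4n}$, i.e. $\norm{\mb w_\star}{} > 1/(2\sqrt{n})$, contradicting $\norm{\mb w_\star}{} \le \mu/7$ together with $\mu \le O(n^{-5/4})$; and any additional local minimizer on $\bb S^{n-1}$ would sit inside one of the $2n$ sections, contradicting within-section uniqueness. Collecting these gives the bijection with $\set{\pm\mb e_i}$ and the bound $\norm{\mb q_\star - \mb b}{} \le \sqrt{2}\,\mu/7$.

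I do not expect any genuine obstacle in this corollary: every step is routine. The one place that warrants a sentence of care is the coverage step — checking that Theorem~\ref{thm:geometry_comp} is legitimately invokable with an arbitrary equatorial section so that all $2n$ sections (and their overlaps) are accounted for — but this is immediate from the coordinate-agnostic form of that theorem's hypotheses and conclusion, plus the $2n$-fold union bound noted above.
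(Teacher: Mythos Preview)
Your proposal is correct and follows exactly the route the paper intends: the paper itself omits the proof, stating it is ``almost identical to that of Corollary~\ref{cor:geometry_orth},'' and your write-up is precisely that transcription with Theorem~\ref{thm:geometry_comp} in place of Theorem~\ref{thm:geometry_orth} and the bound $\mu/7$ in place of the orthogonal-case bound. The transfer-to-sphere via Lipschitzness of $\mb w\mapsto\mb q(\mb w)$, the $2n$-fold union bound (yielding the $p^{-5}$ failure probability), the overlap exclusion using $\norm{\mb w_\star}{}\le\mu/7\ll 1/(2\sqrt{n})$, and the no-extra-minimizers argument all match the proof of Corollary~\ref{cor:geometry_orth} verbatim.
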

We omit the proof to Corollary~\ref{cor:geometry_comp} as it is almost identical to that of corollary~\ref{cor:geometry_orth}. From the above theorems, it is clear that for any saddle point in the $\mb w$ space, the Hessian has at least one negative eigenvalue with an associated eigenvector $\mb w/\|\mb w\|$. Now the question is whether all saddle points of $f$ on $\bb S^{n-1}$ have analogous properties, since as alluded to in Section~\ref{sec:overview_alg}, we need to perform actual optimization in the $\mb q$ space. This is indeed true, but we will only argue informally in the companion paper~\cite{sun2015complete_b}. The arguments need to be put in the language of Riemannian geometry, and we can switch back and forth between $\mb q$ and $\mb w$ spaces in our algorithm analysis without stating this fact.

\subsection{Useful Technical Lemmas and Proof Ideas for Orthogonal Dictionaries} \label{sec:geo_results_orth}
Proving Theorem~\ref{thm:geometry_orth} is conceptually straightforward: one shows that the expectation of each quantity of interest has the claimed property, and then proves that each quantity concentrates uniformly about its expectation. The detailed calculations are nontrivial. 

Note that  
\begin{align*}
\bb E_{\mb X_0}\brac{g\paren{\mb q; \mb X_0}} = \bb E_{\mb x \sim_{i.i.d.} \mathrm{BG}\paren{\theta}}\brac{h_{\mu}\paren{\mb q\paren{\mb w}^* \mb x}}. 
\end{align*}
The next three propositions show that in the expected function landscape, we see successively strongly convex region, large gradient region, and negative directional curvature region when moving away from zero, as depicted in Fig.~\ref{fig:large-sample-sphere} and sketched in Section~\ref{sec:overview_geometry}. 

\begin{proposition} \label{prop:geometry_asymp_curvature}
For any $\theta \in (0, 1/2)$, if $\mu \le c\min\Brac{\theta n^{-1}, n^{-5/4}}$, it holds for all $\mb w$ with $1/\paren{20\sqrt{5}} \le \norm{\mb w}{} \le \sqrt{(4n-1)/(4n)}$ that  
\begin{align*}
\frac{\mb w^* \nabla^2_{\mb w} \expect{h_{\mu}\left(\mb q^*\paren{\mb w} \mb x\right)} \mb w}{\norm{\mb w}{}^2}  \le -\frac{\theta }{2\sqrt{2\pi}}. 
\end{align*}
Here $c > 0$ is a constant. 
\end{proposition}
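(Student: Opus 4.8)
The plan is to reduce the radial directional curvature to a one‑dimensional statement about a Gaussian‑smoothed copy of $h_\mu$ and then bound it by elementary calculus.

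\textbf{Reduction.} Write $g_0(\mb w) \doteq \expect{h_\mu(\mb q(\mb w)^*\mb x)}$. Fix $\mb w$ in the stated annulus, set $r = \norm{\mb w}{}$ and $\mb v = \mb w/r \in \bb S^{n-2}$; then the quantity to bound equals $\phi''(r)$ for $\phi(t) \doteq g_0(t\mb v)$, where $\mb q(t\mb v) = \paren{t\mb v,\, \sqrt{1-t^2}}$. Conditioning on the Bernoulli support $S \doteq \supp(\mb x) \subseteq [n]$, the scalar $\mb q(t\mb v)^*\mb x \mid S$ is centered Gaussian with variance $\norm{\mb q(t\mb v)_S}{}^2$, so
\begin{align*}
\phi(t) = \expect{H_\mu(\rho_S(t))}, \qquad H_\mu(s) \doteq \bb E_{Z\sim\mc N(0,1)}\brac{h_\mu(sZ)}, \qquad \rho_S(t) \doteq \norm{\mb q(t\mb v)_S}{}.
\end{align*}
This is a finite sum over $S$, each term smooth in $t$ on the relevant range, so I differentiate termwise. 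One checks $\rho_S(t)^2 = 1 - \alpha_S t^2$ when $n \in S$, with $\alpha_S \doteq 1 - \norm{\mb v_{S\cap[n-1]}}{}^2 \in [0,1]$, and $\rho_S(t)^2 = \beta_S t^2$ when $n \notin S$, with $\beta_S \doteq \norm{\mb v_S}{}^2 \in [0,1]$. Using $\rho_S'' = -\alpha_S\rho_S^{-3}$ in the first case and $\rho_S'' = 0$ in the second, the chain rule gives, at $t = r$,
\begin{align*}
\phi''(r) = \expect{\indicator{n\in S}\paren{H_\mu''(\rho_S)\tfrac{\alpha_S^2 r^2}{\rho_S^2} - H_\mu'(\rho_S)\tfrac{\alpha_S}{\rho_S^3}}} + \expect{\indicator{n\notin S}\, \beta_S\, H_\mu''\paren{r\sqrt{\beta_S}}}.
\end{align*}
Since $H_\mu', H_\mu'' \ge 0$ everywhere, the second expectation is nonnegative and the only negative contribution is $-H_\mu'(\rho_S)\alpha_S\rho_S^{-3}$, present exactly when $n\in S$ but $S$ misses part of $\supp(\mb v)$.

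\textbf{One‑dimensional bounds and pointwise estimates.} I would next record two elementary Gaussian‑integral facts: for a universal $C_0$,
\begin{align*}
0 \le \sqrt{\tfrac2\pi} - H_\mu'(s) \le C_0\,(\mu/s)^2, \qquad 0 \le H_\mu''(s) \le C_0\min\Brac{\mu^{-1},\; \mu^2 s^{-3}} \qquad (s > 0).
\end{align*}
On $\Gamma$, $\rho_S(r)^2 \ge 1 - r^2 \ge 1/(4n)$, hence $\rho_S \ge 1/(2\sqrt n)$; with $\mu \le c\,n^{-5/4}$ this makes $\mu/\rho_S \le 2c\,n^{-3/4}$ and $\mu^2\rho_S^{-3} \le 8c^2 n^{-1}$ as small as wanted for $c$ small, so $H_\mu'(\rho_S) \ge \tfrac78\sqrt{2/\pi}$ and $H_\mu''(\rho_S) \le \tfrac18\sqrt{2/\pi}$. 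Since $\tfrac{\alpha_S^2 r^2}{\rho_S^2} = \tfrac{\alpha_S}{\rho_S^3}(\alpha_S r^2\rho_S)$ with $\alpha_S r^2\rho_S = (1-\rho_S^2)\rho_S \le 1$, and $\rho_S \le 1$, the $n\in S$ summand is at most $\tfrac{\alpha_S}{\rho_S^3}\paren{H_\mu''(\rho_S) - H_\mu'(\rho_S)} \le \alpha_S\paren{H_\mu''(\rho_S) - H_\mu'(\rho_S)} \le -\tfrac34\sqrt{2/\pi}\,\alpha_S = -\tfrac{3}{2\sqrt{2\pi}}\alpha_S$. For the $n\notin S$ summand, splitting on $r\sqrt{\beta_S} \lessgtr \mu$ and using the two halves of the $H_\mu''$ bound gives $\beta_S H_\mu''(r\sqrt{\beta_S}) \le C_0\mu/r^2 \le C_1\mu$ uniformly in $\beta_S$, with $C_1 \doteq C_0(20\sqrt5)^2$ — this is where $r \ge 1/(20\sqrt5)$ enters.

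\textbf{Conclusion.} Because $\indicator{n\in S}$ is independent of $(\indicator{i\in S})_{i<n}$ and $\norm{\mb v}{}=1$, one gets $\expect{\indicator{n\in S}\alpha_S} = \theta(1-\theta)$. Combining with the pointwise bounds, and using $\theta < 1/2$ (so $1-\theta \ge 1/2$) and $\mu \le c\theta n^{-1} \le c\theta$,
\begin{align*}
\phi''(r) \le -\tfrac{3}{2\sqrt{2\pi}}\,\theta(1-\theta) + C_1\mu \le -\tfrac{3\theta}{4\sqrt{2\pi}} + C_1 c\,\theta \le -\tfrac{\theta}{2\sqrt{2\pi}},
\end{align*}
the last step valid once $c \le 1/(4\sqrt{2\pi}\,C_1)$. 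As $\phi''(r) = \mb w^*\nabla^2_{\mb w}g_0(\mb w)\mb w/\norm{\mb w}{}^2$, this is the assertion, after fixing $c$ to be the minimum of the finitely many upper bounds imposed above.

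\textbf{Where the difficulty lies.} The crux is the middle paragraph: controlling $H_\mu''$ (and $H_\mu'$) uniformly down to $s = \rho_S$ possibly as small as $\Theta(n^{-1/2})$, and confirming that the negative $-H_\mu'(\rho_S)\alpha_S\rho_S^{-3}$ term dominates the competing positive $H_\mu''(\rho_S)\alpha_S^2 r^2\rho_S^{-2}$ term there. That is precisely the role of the hypothesis $\mu \le c\,n^{-5/4}$ (it forces $\mu^2\rho_S^{-3} \lesssim n^{-1}$); the companion hypothesis $\mu \le c\theta n^{-1}$ is used only to absorb the nonnegative $n\notin S$ contribution. Everything else is bookkeeping with absolute constants, with the $\theta(1-\theta)$ leading term providing the (roughly twofold) slack needed against $\tfrac{\theta}{2\sqrt{2\pi}}$.
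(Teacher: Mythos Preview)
Your argument is correct and is genuinely different from the paper's. The paper expands $\mb w^*\nabla^2_{\mb w}\expect{h_\mu(\mb q^*\mb x)}\mb w$ directly via the explicit formula~\eqref{eqn:lse-hessian}, splits it into a ``$1-\tanh^2$'' piece $(\mathcal A)$ and a ``$\tanh\cdot x_n$'' piece $(\mathcal B)$, and then bounds Gaussian expectations of the form $\expect{(1-\tanh^2(\tfrac{X+Y}{\mu}))\,X^2\mathbb 1_{X+Y>0}}$ and $\expect{\tanh(\tfrac{X+Y}{\mu})X}$ by a polynomial approximation of $t\mapsto 1/(1+t)^2$ (Lemma~\ref{lem:neg_curvature_norm_bound} and the three estimates in Lemmas~\ref{lem:neg_curvature_tanh_square_1}--\ref{lem:neg_curvature_tanh_square_2}); Harris' inequality is used at the end to decouple $\norm{\mb w_{\mathcal J^c}}{}^2$ from $\norm{\mb q_{\mathcal I}}{}^{-3}$. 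Your reduction $\phi''(r)=\expect{H_\mu''(\rho_S)(\rho_S')^2+H_\mu'(\rho_S)\rho_S''}$ bypasses all of this: by recognizing that, conditional on $S$, the summand depends only on the scalar $\rho_S=\norm{\mb q_S}{}$, you turn the problem into one-variable calculus on $H_\mu(s)=\mathbb E_Z h_\mu(sZ)$, whose derivatives you control with nothing more than $1-\tanh u\le 2e^{-2u}$ and $1-\tanh^2 u\le 4e^{-2|u|}$. No polynomial approximation and no association inequality are needed (the decoupling is for free since $\mathbb 1_{n\in S}$ is independent of $\alpha_S$). What the paper's heavier machinery buys is slightly sharper intermediate constants and error terms that track the dependence on $\norm{\mb q_{\mathcal I}}{}$ more precisely; what your approach buys is a much shorter proof and a cleaner structural explanation of where the negativity comes from. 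Your argument actually only needs $\mu\lesssim n^{-3/4}$ for the $n\in S$ estimates; the stated hypothesis $\mu\le c\,n^{-5/4}$ is stronger than you use there, but that only means you have slack.
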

\begin{proof}
See Page~\pageref{sec:proof_geo_asym_curvature} under Section~\ref{sec:proof_geo_asym_curvature}. 
\end{proof}

\begin{proposition} \label{prop:geometry_asymp_gradient}
For any $\theta \in (0, 1/2)$, if $\mu \le 9/50$, it holds for all $\mb w$ with $\mu/(4\sqrt{2}) \le \norm{\mb w}{} \le 1/(20\sqrt{5})$ that 
\begin{align*}
\frac{\mb w^* \nabla_{\mb w} \expect{h_\mu (\mb q^*\paren{\mb w} \mb x)}}{\norm{\mb w}{}} \ge  \frac{\theta}{20\sqrt{2\pi}}. 
\end{align*}
\end{proposition}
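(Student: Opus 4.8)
The proposition asserts a property of the expectation $\bar g(\mathbf w)\doteq\mathbb E_{\mathbf x\sim_{i.i.d.}\mathrm{BG}(\theta)}[h_\mu(\mathbf q(\mathbf w)^*\mathbf x)]$ alone, so no concentration is involved: the plan is an exact evaluation of the radial derivative of $\bar g$ followed by elementary estimates. Write $\mathbf q=\mathbf q(\mathbf w)=(\mathbf w,q_n)$ with $q_n=\sqrt{1-\|\mathbf w\|^2}$, and recall $h_\mu'(z)=\tanh(z/\mu)$, $h_\mu''(z)=\mu^{-1}\operatorname{sech}^2(z/\mu)$. Differentiating under the expectation (legitimate since $|h_\mu'|\le 1$ and $q_n\ge 1/(2\sqrt n)$ on $\Gamma$) and pairing with $\mathbf w$, one gets $\mathbf w^*\nabla_{\mathbf w}\bar g(\mathbf w)=\sum_{i<n}w_i\,\mathbb E[h_\mu'(\mathbf q^*\mathbf x)\,x_i]-\tfrac{\|\mathbf w\|^2}{q_n}\mathbb E[h_\mu'(\mathbf q^*\mathbf x)\,x_n]$. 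For each coordinate $j$ I would write $x_j=B_jG_j$ with $B_j\sim\mathrm{Ber}(\theta)$, $G_j\sim\mathcal N(0,1)$; the $B_j=0$ contribution vanishes, and Stein's identity in $G_j$ conditional on the remaining coordinates gives $\mathbb E[h_\mu'(\mathbf q^*\mathbf x)\,x_j]=\theta\,q_j\,\mathbb E[h_\mu''(\mathbf q^*\mathbf x)\mid B_j{=}1]$. Using $q_i=w_i$ for $i<n$ and $\sum_{i<n}w_i^2=\|\mathbf w\|^2$, this regroups into the clean identity $\mathbf w^*\nabla_{\mathbf w}\bar g(\mathbf w)=\theta\sum_{i<n}w_i^2\,\Delta_i$, where $\Delta_i\doteq\mathbb E[h_\mu''(\mathbf q^*\mathbf x)\mid B_i{=}1]-\mathbb E[h_\mu''(\mathbf q^*\mathbf x)\mid B_n{=}1]$.

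The next step is to show each $\Delta_i>0$ with a usable lower bound. Conditioning the first term further on $B_n$ and the second on $B_i$, the ``both Bernoullis $=1$'' branches are identically distributed, so they cancel and $\Delta_i=(1-\theta)\big(\mathbb E[h_\mu''(w_iG+R)]-\mathbb E[h_\mu''(q_nG+R)]\big)$ with $R=\sum_{k\ne i,n}q_kx_k$ and $G\sim\mathcal N(0,1)$ independent. Conditioning once more on the supports of the entries of $R$ makes $R$ centered Gaussian, so both terms become values of the profile $\psi(v)\doteq\mathbb E_{N\sim\mathcal N(0,v)}[h_\mu''(N)]$, at arguments $w_i^2+\sigma_R^2\le\|\mathbf w\|^2\le\tfrac1{2000}$ and $q_n^2+\sigma_R^2\ge q_n^2=1-\|\mathbf w\|^2>\tfrac12$ respectively. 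It therefore suffices that $\psi$ be nonincreasing — the crux. Two Stein steps give $\psi'(v)=\tfrac12\mathbb E_{N\sim\mathcal N(0,v)}[h_\mu''''(N)]$, and a direct computation gives $h_\mu''''(z)=\tfrac{2}{\mu^3}\operatorname{sech}^2(z/\mu)\big(3\tanh^2(z/\mu)-1\big)$, which is negative on the central interval where $3\tanh^2(z/\mu)<1$, positive outside it, and satisfies $\int_{\mathbb R}h_\mu''''(z)\,dz=0$. Since the density of $N$ is symmetric and nonincreasing in $|z|$, it dominates its value $\rho_\star$ at the sign-change point on that central interval and is dominated by $\rho_\star$ outside, so $\mathbb E[h_\mu''''(N)]\le\rho_\star\int_{\mathbb R}h_\mu''''(z)\,dz=0$; hence $\psi'\le 0$. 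Consequently $\Delta_i\ge(1-\theta)\big(\psi(\|\mathbf w\|^2)-\psi(q_n^2)\big)$ for every $i$, and $\mathbf w^*\nabla_{\mathbf w}\bar g(\mathbf w)\ge\theta(1-\theta)\|\mathbf w\|^2\big(\psi(\|\mathbf w\|^2)-\psi(q_n^2)\big)$.

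It remains to estimate $\psi$ at its two arguments. One Stein step gives $\psi(v)=v^{-1}\mathbb E_{N\sim\mathcal N(0,v)}[|N|\tanh(|N|/\mu)]\le v^{-1}\mathbb E|N|=\sqrt{2/(\pi v)}$, so with $q_n^2\ge 1999/2000$ we get $\psi(q_n^2)<\tfrac45$. For the small argument $v=\|\mathbf w\|^2$ I would split on whether $\|\mathbf w\|\le\mu$ or $\|\mathbf w\|>\mu$: in the first case use $\operatorname{sech}^2(t)\ge 1-t^2$ to get $\psi(\|\mathbf w\|^2)\ge\mu^{-1}\big(1-\|\mathbf w\|^2/\mu^2\big)$; in the second use $|t|\tanh|t|\ge|t|-1/e$ to get $\psi(\|\mathbf w\|^2)\ge\sqrt{2/(\pi\|\mathbf w\|^2)}-\mu/(e\|\mathbf w\|^2)$. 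Since $\mu/(4\sqrt2)\le\|\mathbf w\|\le 1/(20\sqrt5)$, both cases yield $\psi(\|\mathbf w\|^2)\ge c_0/\|\mathbf w\|$ for an explicit $c_0>1/6$, whence $\mathbf w^*\nabla_{\mathbf w}\bar g(\mathbf w)/\|\mathbf w\|\ge\theta(1-\theta)\big(c_0-\tfrac45\|\mathbf w\|\big)\ge\theta/(20\sqrt{2\pi})$ with a comfortable margin (using $\theta<1/2$); the hypothesis $\mu\le 9/50$ is only needed to keep these constants tidy. The one genuinely delicate point is this small-argument estimate in the crossover regime $\|\mathbf w\|\asymp\mu$ — equivalently, showing $a\,\mathbb E[\operatorname{sech}^2(aG)]$ is bounded below uniformly for $a=\|\mathbf w\|/\mu\ge 1/(4\sqrt2)$ — which must be carried out with enough slack that the target constant $1/(20\sqrt{2\pi})$ survives; everything else is routine bookkeeping.
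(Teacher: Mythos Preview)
Your approach is correct and genuinely different from the paper's. The paper keeps the decomposition $\mb w^*\nabla\bar g=(\mc A)-\tfrac{\|\mb w\|^2}{q_n}(\mc B)$ with $(\mc A)=\expect{\tanh(\mb q^*\mb x/\mu)\,\mb w^*\overline{\mb x}}$ and $(\mc B)=\expect{\tanh(\mb q^*\mb x/\mu)\,x_n}$, bounds $(\mc B)$ crudely by $\theta\sqrt{2/\pi}$ via $|\tanh|\le 1$, and for $(\mc A)$ conditions on whether $n$ is in the support: the in-support branch is $\ge 0$ by Harris' association inequality (Lemma~\ref{lemma:harris_ineq}), while the out-of-support branch is handled by the one-dimensional estimate $\expect{\tanh(Y/\mu)Y}\ge \tfrac{2\sigma_Y^2}{\mu\sqrt{2\pi}}\bigl(\sqrt{1+\|\mb w\|^2/\mu^2}-\|\mb w\|/\mu\bigr)$, derived from $\tanh z\ge\tfrac{1}{2}(1-e^{-2z})$ and a Gaussian integral. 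Your route instead applies Stein's identity coordinatewise to collapse everything into the single identity $\mb w^*\nabla\bar g=\theta\sum_{i<n}w_i^2\Delta_i$, then isolates the structural content as monotonicity of $\psi(v)=\mathbb E_{N\sim\mc N(0,v)}[h_\mu''(N)]$, which you prove cleanly from the sign pattern of $h_\mu''''$ against a unimodal density. This is more conceptual: it explains \emph{why} the radial derivative is positive (the ``spikiness'' $\psi$ decays in variance, and the $i$-th coordinate contributes variance $\le\|\mb w\|^2$ while the $n$-th contributes variance $\ge q_n^2$), whereas the paper's Harris step and the crude bound on $(\mc B)$ feel more ad hoc. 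On the other hand, the paper's lower bound on $(\mc A)$ is a single closed-form estimate that works uniformly across the whole range, while your endpoint estimates on $\psi$ require a case split.

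One caution on that split: as written, your Case~1 bound $\psi(\|\mb w\|^2)\ge\mu^{-1}(1-\|\mb w\|^2/\mu^2)$ gives zero at the boundary $\|\mb w\|=\mu$, so the two cases do not separately yield $\|\mb w\|\,\psi(\|\mb w\|^2)\ge c_0$ with $c_0>1/6$. You need the maximum of the two bounds. Writing $a=\|\mb w\|/\mu\ge 1/(4\sqrt2)$, the two give $\|\mb w\|\,\psi\ge a(1-a^2)$ and $\|\mb w\|\,\psi\ge \sqrt{2/\pi}-1/(ea)$ respectively; the first is smallest at the left endpoint $a=1/(4\sqrt2)$ with value $31/(128\sqrt2)\approx 0.171$, the second handles $a\gtrsim 0.75$ with values $\ge 0.3$, and in the overlap both exceed $0.3$, so $\max\{\cdot\}>1/6$ does hold with the margin you claim. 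You flagged this crossover as the delicate point, which is accurate---just be explicit that you take the better of the two bounds pointwise in $a$ rather than splitting at $a=1$.
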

\begin{proof}
See Page~\pageref{sec:proof_geo_asym_gradient} under Section~\ref{sec:proof_geo_asym_gradient}. 
\end{proof}

\begin{proposition} \label{prop:geometry_asymp_strong_convexity}
For any $\theta \in (0, 1/2)$, if $\mu \le 1/(20\sqrt{n})$, it holds for all $\mb w$ with $\norm{\mb w}{} \le \mu/(4\sqrt{2})$ that 
\begin{align*}
\nabla^2_{\mb w} \bb E[h_{\mu}\paren{\mb q^*\paren{\mb w}\mb x}]  \succeq \frac{\theta}{5\sqrt{2\pi}\mu} \mb I. 
\end{align*}
\end{proposition}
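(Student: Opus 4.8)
The plan is to compute the Hessian exactly by the chain rule and then isolate the source of positive curvature. Write $\mb x = \paren{\wt{\mb x}, x_n}$ with $\wt{\mb x} \in \R^{n-1}$ and $q_n \doteq \sqrt{1-\norm{\mb w}{}^2}$, so that $z \doteq \mb q\paren{\mb w}^* \mb x = \mb w^* \wt{\mb x} + q_n x_n$, with $\nabla_{\mb w} z = \wt{\mb x} - \paren{x_n/q_n}\mb w$ and $\nabla^2_{\mb w} z = -x_n\paren{\tfrac{1}{q_n}\mb I + \tfrac{\mb w\mb w^*}{q_n^3}}$. Differentiating under the expectation (legitimate since $h_\mu'\paren{t} = \tanh\paren{t/\mu}$ and $h_\mu''\paren{t} = \tfrac1\mu\paren{1-\tanh^2\paren{t/\mu}}$ are bounded and $\mb x$ has all moments) gives
\begin{align*}
\nabla^2_{\mb w}\expect{h_\mu\paren{z}} = \expect{h_\mu''\paren{z}\,\paren{\nabla_{\mb w} z}\paren{\nabla_{\mb w} z}^*} + \expect{h_\mu'\paren{z}\,\nabla^2_{\mb w} z}.
\end{align*}
The first term is positive semidefinite; the strategy is to extract an $\Omega\paren{\theta/\mu}$ floor from it using only the event that the $n$-th coordinate of $\mb x$ is inactive, and to show that everything else costs at most $O\paren{\theta}$ in operator norm.

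Condition on the Bernoulli variable $\Omega_n$ attached to $x_n$. On $\set{\Omega_n = 0}$ we have $x_n = 0$, hence $\nabla^2_{\mb w} z = \mb 0$, $\nabla_{\mb w} z = \wt{\mb x}$, and $\wt{\mb x}$ has its unconditional law; so this event contributes exactly $\paren{1-\theta}\expect{h_\mu''\paren{\mb w^*\wt{\mb x}}\wt{\mb x}\wt{\mb x}^*}$. Using the quadratic minorant $h_\mu''\paren{t}\ge \tfrac1\mu\paren{1-t^2/\mu^2}$, the identity $\expect{\wt{\mb x}\wt{\mb x}^*} = \theta\mb I$, and the bound $\expect{\paren{\mb w^*\wt{\mb x}}^2\wt{\mb x}\wt{\mb x}^*}\preceq 6\theta\norm{\mb w}{}^2\mb I$ — which follows from the Cauchy--Schwarz estimate $\expect{\paren{\mb w^*\wt{\mb x}}^2\paren{\mb v^*\wt{\mb x}}^2}\le\sqrt{\expect{\paren{\mb w^*\wt{\mb x}}^4}\expect{\paren{\mb v^*\wt{\mb x}}^4}}$ together with the BG fourth-moment identity $\expect{\paren{\mb u^*\wt{\mb x}}^4}\le 6\theta\norm{\mb u}{}^4$ — this event contributes at least $\tfrac{1-\theta}{\mu}\paren{\theta - 6\theta\norm{\mb w}{}^2/\mu^2}\mb I$. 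Since $\norm{\mb w}{}^2\le\mu^2/32$ and $\theta<1/2$, this is at least $\tfrac{13\theta}{32\mu}\mb I$, i.e. a clean $c_\star'\theta/\mu$ floor with $c_\star'$ comfortably above $\tfrac{1}{5\sqrt{2\pi}}$.

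On $\set{\Omega_n = 1}$ (probability $\theta$) the $h_\mu''$-term stays positive semidefinite, and the only possibly negative piece is $\theta\expect{h_\mu'\paren{z}\nabla^2_{\mb w} z \mid \Omega_n = 1}$, whose operator norm is bounded by $\theta\expect{\abs{x_n}\paren{q_n^{-1} + \norm{\mb w}{}^2 q_n^{-3}}\mid \Omega_n = 1}$ since $\abs{h_\mu'}\le 1$. On the ball $\norm{\mb w}{}\le\mu/(4\sqrt2)\le 1/20$ the factor $q_n$ is within $O\paren{\mu^2}$ of $1$ and $\expect{\abs{x_n}\mid\Omega_n=1}=\sqrt{2/\pi}$, so this event costs at most $c\theta\mb I$ for a small absolute constant $c$ (essentially $\sqrt{2/\pi}$). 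Adding the two contributions, $\nabla^2_{\mb w}\expect{h_\mu\paren{z}}\succeq\paren{\tfrac{c_\star'\theta}{\mu} - c\theta}\mb I$; since $\mu\le 1/(20\sqrt n)\le 1/20$, the residual $\paren{c_\star'-\tfrac{1}{5\sqrt{2\pi}}}/\mu$ dominates $c$, which yields the claimed $\nabla^2_{\mb w}\expect{h_\mu\paren{z}}\succeq\tfrac{\theta}{5\sqrt{2\pi}\mu}\mb I$.

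The main obstacle is the second step: because $h_\mu''$ is concentrated in a window of width $O\paren{\mu}$ about the origin, extracting a $\theta/\mu$ floor requires $\mb w^*\wt{\mb x}$ to fall inside that window with high probability, which is exactly what forces the radius to be of the same order $O\paren{\mu}$; and the fourth-moment bound has to be sharp enough that the deficit it creates, together with the $O\paren{\theta}$ loss from the active-coordinate event, stays under the (generous) slack. The remaining pieces — the BG fourth moments and the $q_n\approx 1$ estimates on the small ball — are routine bookkeeping, and no concentration argument is needed here since the statement concerns the expected objective only.
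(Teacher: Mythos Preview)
Your proof is correct and, in the key step, more elementary than the paper's. Both arguments condition on $\Omega_n$ and use the event $\{\Omega_n=0\}$ to extract the positive floor, then crudely bound the $\Omega_n=1$ contribution using $|h_\mu'|\le 1$ and $\expect{|v_n|}=\sqrt{2/\pi}$. The difference is in how the floor is obtained.

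The paper expands $(\nabla_{\mb w} z)(\nabla_{\mb w} z)^*$ into the four blocks $(\mc A)$--$(\mc D)$ and treats each separately. For the main piece $(\mc A)=\tfrac{1}{\mu}\expect{(1-\tanh^2(\mb w^*\overline{\mb x}/\mu))\overline{\mb x}\,\overline{\mb x}^*}$ it argues via a block-diagonal decomposition according to the support of $\mb w$, projects onto $\widetilde{\mb w}_{\mc S}$ and its complement, and uses explicit Gaussian integrals from Lemma~\ref{lem:aux_asymp_proof_a} together with the Type~III $\Phi^c$ estimate. You instead keep the full PSD term $h_\mu''(z)(\nabla_{\mb w} z)(\nabla_{\mb w} z)^*$ intact, restrict to $\{\Omega_n=0\}$ where it becomes $h_\mu''(\mb w^*\wt{\mb x})\wt{\mb x}\wt{\mb x}^*$, and apply the quadratic minorant $1-\tanh^2(s)\ge 1-s^2$. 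This reduces everything to the BG fourth moment $\expect{(\mb w^*\wt{\mb x})^2\wt{\mb x}\wt{\mb x}^*}\preceq C\theta\norm{\mb w}{}^2\mb I$, which you get by Cauchy--Schwarz and $\expect{(\mb u^*\wt{\mb x})^4}\le 3\theta\norm{\mb u}{}^4$ (your constant $6$ is safe but not sharp). The numerics then close comfortably: $\tfrac{13}{32\mu}-\sqrt{2/\pi}\,q_n^{-3}\ge \tfrac{1}{5\sqrt{2\pi}\mu}$ since $\mu\le 1/20$.

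What you gain is a shorter argument with no Gaussian tail estimates and no support/projection decomposition; what the paper gains is a slightly sharper constant in the floor for $(\mc A)$ (its $\tfrac{1}{\sqrt{2\pi}}(2-\tfrac{3}{4}\sqrt{2})\approx 0.375$ versus your effective $\tfrac{13}{32}\approx 0.406$ after the $(1-\theta)$ factor are actually comparable here). For the purpose of this proposition your route is preferable.
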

\begin{proof}
See Page~\pageref{sec:proof_geo_asym_strcvx} under Section~\ref{sec:proof_geo_asym_strcvx}. 
\end{proof}

To prove that the above hold qualitatively for finite $p$, i.e., the function $g\paren{\mb w; \mb X_0}$, we will need first prove that for a fixed $\mb w$ each of the quantity of interest concentrates about their expectation w.h.p., and the function is nice enough (Lipschitz) such that we can extend the results to all $\mb w$ via a discretization argument. The next three propositions provide the desired pointwise concentration results.  

\begin{proposition} \label{prop:concentration-gradient}
For every $\mb w \in \Gamma$, it holds that for any $t > 0$, 
\begin{align*}
\bb P\brac{\abs{\frac{\mb w^*\nabla g(\mb w; \mb X_0)
}{\norm{\mb w}{}}-\expect{\frac{\mb w^*\nabla g(\mb w; \mb X_0)
}{\norm{\mb w}{}}}}\geq t} \leq 2\exp\paren{-\frac{pt^2}{8n+4\sqrt{n}t}}.
\end{align*}
\end{proposition}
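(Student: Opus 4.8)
The plan is to write the normalized directional derivative as an average of $p$ i.i.d.\ scalar random variables and then invoke a Bernstein-type inequality; the only real work is bounding the per-sample moments with the right constants, and since the claim is pointwise in $\mb w$ no discretization is needed.

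\emph{Step 1 (reduction to an i.i.d.\ sum).} Writing $\mb q\paren{\mb w} = \paren{\mb w, q_n}$ with $q_n \doteq \sqrt{1-\norm{\mb w}{}^2}$ and splitting each sample $\paren{\mb x_0}_k = \paren{\ol{\paren{\mb x_0}}_k,\, \paren{x_0}_{nk}}$ into its first $n-1$ coordinates and its last, a direct computation gives $\nabla_{\mb w} h_\mu\paren{\mb q\paren{\mb w}^* \mb x} = h_\mu'\paren{\mb q\paren{\mb w}^*\mb x}\paren{\ol{\mb x} - \paren{x_n/q_n}\mb w}$, hence
\begin{align*}
\frac{\mb w^* \nabla g\paren{\mb w; \mb X_0}}{\norm{\mb w}{}} = \frac{1}{p}\sum_{k=1}^p Z_k, \qquad Z_k \doteq h_\mu'\paren{\mb q\paren{\mb w}^* \paren{\mb x_0}_k}\, \mb c\paren{\mb w}^* \paren{\mb x_0}_k,
\end{align*}
where $\mb c\paren{\mb w} \doteq \paren{\mb w/\norm{\mb w}{},\, -\norm{\mb w}{}/q_n} \in \R^n$ satisfies $\norm{\mb c\paren{\mb w}}{}^2 = 1 + \norm{\mb w}{}^2/q_n^2 = 1/\paren{1-\norm{\mb w}{}^2} = 1/q_n^2$. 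The $Z_k$ are i.i.d.\ because the columns $\paren{\mb x_0}_k$ are.

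\emph{Step 2 (per-sample control).} Over $\Gamma$ we have $q_n > 1/\paren{2\sqrt n}$, so $\norm{\mb c\paren{\mb w}}{} < 2\sqrt n$. Because $h_\mu'\paren{z} = \tanh\paren{z/\mu} \in \brac{-1,1}$, we get the pointwise bound $\abs{Z_k} \le \abs{\mb c\paren{\mb w}^* \paren{\mb x_0}_k}$. Since $\paren{\mb x_0}_k$ has i.i.d.\ $\mathrm{BG}\paren{\theta}$ entries — whose absolute moments are dominated by those of $\mc N\paren{0,1}$, and which are sub-Gaussian with parameter at most $1$ — the scalar $\mb c\paren{\mb w}^*\paren{\mb x_0}_k$ is zero-mean, sub-Gaussian with parameter $O\paren{\norm{\mb c\paren{\mb w}}{}} = O\paren{\sqrt n}$, and has $\expect{\paren{\mb c\paren{\mb w}^*\paren{\mb x_0}_k}^2} = \theta\norm{\mb c\paren{\mb w}}{}^2 < 2n$. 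Dominating $Z_k$ by this variable and then centering, $Z_k - \expect{Z_k}$ obeys a Bernstein moment bound $\expect{\abs{Z_k - \expect{Z_k}}^m} \le \tfrac{m!}{2}\paren{4n}\paren{2\sqrt n}^{m-2}$ for every integer $m \ge 2$: for $m=2$ this is just $\mathrm{Var}\paren{Z_k} \le \expect{\paren{\mb c\paren{\mb w}^*\paren{\mb x_0}_k}^2} < 2n \le 4n$ with room to spare, and for $m \ge 3$ it follows from the sub-Gaussian tail of $\mb c\paren{\mb w}^*\paren{\mb x_0}_k$ after tracking constants through the centering step.

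\emph{Step 3 (Bernstein) and main obstacle.} Applying the classical Bernstein inequality to $\sum_{k=1}^p\paren{Z_k - \expect{Z_k}}$ with per-term variance proxy $v = 4n$ and scale $R = 2\sqrt n$, and letting $t$ be the deviation of the average (so $pt$ is the deviation of the sum), gives
\begin{align*}
\prob{\abs{\frac{1}{p}\sum_{k=1}^p Z_k - \expect{\frac{1}{p}\sum_{k=1}^p Z_k}} \ge t} \le 2\exp\paren{-\frac{\paren{pt}^2}{2\paren{4np + 2\sqrt n\, pt}}} = 2\exp\paren{-\frac{pt^2}{8n + 4\sqrt n\, t}},
\end{align*}
which is the claim since $\tfrac1p\sum_k Z_k = \mb w^*\nabla g\paren{\mb w;\mb X_0}/\norm{\mb w}{}$ identically. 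The conceptual structure here is routine; the delicate point is Step 2 — pinning down the per-sample parameters $\paren{4n, 2\sqrt n}$ so that Bernstein reproduces \emph{exactly} $8n + 4\sqrt n\, t$ in the exponent. Everything hinges on $\abs{h_\mu'}\le 1$ collapsing $\abs{Z_k}$ onto $\abs{\mb c\paren{\mb w}^*\paren{\mb x_0}_k}$, a linear functional of an i.i.d.\ $\mathrm{BG}\paren{\theta}$ vector with coefficient norm $1/q_n < 2\sqrt n$ on $\Gamma$; the factor $\norm{\mb w}{}/q_n$ is precisely what forces the $\sqrt n$ scale, which is why the cutoff $\norm{\mb w}{} < \sqrt{\paren{4n-1}/\paren{4n}}$ defining $\Gamma$ enters the bound. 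Carefully combining the sub-Gaussian moments of $\mb c\paren{\mb w}^*\paren{\mb x_0}_k$ with the centering (via symmetrization or a direct binomial expansion) to land on clean constants is the one calculation that needs care.
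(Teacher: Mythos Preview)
Your proposal is correct and matches the paper's proof almost line-for-line: write the quantity as $\tfrac{1}{p}\sum_k X_k$, use $\abs{\tanh}\le 1$ to bound $\abs{X_k}\le \abs{\mb c(\mb w)^*(\mb x_0)_k}$ with $\norm{\mb c(\mb w)}{}=1/q_n<2\sqrt n$ on $\Gamma$, bound the moments of this linear form by Gaussian moments, and apply moment-control Bernstein with $\sigma^2=4n$, $R=2\sqrt n$.

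The one place where the paper is cleaner than your sketch is Step~2: the paper bounds the \emph{raw} moments $\expect{\abs{X_k}^m}\le \bb E_{Z\sim\mc N(0,1/q_n^2)}\brac{\abs Z^m}=(m-1)!!/q_n^m\le \tfrac{m!}{2}(4n)(2\sqrt n)^{m-2}$, using the comparison lemma that $\mathrm{BG}$ absolute moments are dominated by Gaussian ones (Lemma~\ref{lem:U-moments-bound}) together with $(m-1)!!\le m!/2$, and then invokes the Bernstein form (Lemma~\ref{lem:mc_bernstein_scalar}) that takes raw moments and outputs a centered deviation bound. This sidesteps the ``centering step'' you flag as delicate: there is no need to pass from $\expect{\abs{Z_k}^m}$ to $\expect{\abs{Z_k-\expect{Z_k}}^m}$, which would otherwise cost a factor $2^m$ and spoil the exact constants $8n+4\sqrt n\,t$. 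So your instinct about where the care lies is right, but the resolution is simply to use the raw-moment Bernstein and skip centering altogether.
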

\begin{proof}
See Page~\pageref{proof:pt_cn_gradient} under Section~\ref{proof:cn_point}. 
\end{proof}

\begin{proposition}\label{prop:concentration-hessian-negative}
Suppose $0 <\mu \leq 1/\sqrt{n}$. For every $\mb w \in \Gamma$, it holds that for any $t > 0$, 
\begin{align*}
\bb P \brac{\abs{\frac{\mb w^*\nabla^2 g(\mb w; \mb X_0)\mb w}{\norm{\mb w}{}^2} - \bb E\brac{\frac{\mb w^*\nabla^2 g(\mb w; \mb X_0)\mb w}{\norm{\mb w}{}^2}}}\ge t}\leq 4\exp\paren{- \frac{p\mu^2t^2}{512n^2+32n\mu t}}.
\end{align*}
\end{proposition}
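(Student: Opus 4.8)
The plan is to use that, for a fixed $\mb w$, the radial curvature $\mb w^*\nabla^2 g(\mb w;\mb X_0)\mb w/\norm{\mb w}{}^2$ is an empirical average of $p$ i.i.d.\ terms, and then to apply a Bernstein-type bound once the sub-exponential size of a single term is under control. First I would make the summand explicit via the chain rule. Writing $\mb x=(\mb x_{[n-1]},x_n)$ for the split into the first $n-1$ coordinates and the last, $q_n=\sqrt{1-\norm{\mb w}{}^2}$, and $z=\mb q(\mb w)^*\mb x=\mb w^*\mb x_{[n-1]}+q_n x_n$, one gets from $\nabla_{\mb w}q_n=-\mb w/q_n$ that $\mb w^*\nabla_{\mb w}z=\mb w^*\mb x_{[n-1]}-(x_n/q_n)\norm{\mb w}{}^2$ and $\mb w^*\nabla^2_{\mb w}z\,\mb w=-x_n\norm{\mb w}{}^2/q_n^3$ (using $q_n^2+\norm{\mb w}{}^2=1$). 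Since $h_\mu'(t)=\tanh(t/\mu)$ and $h_\mu''(t)=\mu^{-1}\mathrm{sech}^2(t/\mu)$, the second-order chain rule gives $\mb w^*\nabla^2_{\mb w}[h_\mu(z)]\mb w/\norm{\mb w}{}^2=Z(\mb w;\mb x)$ with
\[
Z(\mb w;\mb x)\;=\;h_\mu''(z)\,\zeta^2\;-\;h_\mu'(z)\,\frac{x_n}{q_n^3},\qquad \zeta\;=\;\frac{\mb w^*\mb x_{[n-1]}}{\norm{\mb w}{}}-\frac{\norm{\mb w}{}\,x_n}{q_n},
\]
so that $\mb w^*\nabla^2 g(\mb w;\mb X_0)\mb w/\norm{\mb w}{}^2=\tfrac1p\sum_{k=1}^p Z(\mb w;(\mb x_0)_k)$, a normalized sum of i.i.d.\ copies of $Z(\mb w;\mb x)$ with $\mb x\sim_{i.i.d.}\mathrm{BG}(\theta)$.

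Second, I would bound $\norm{Z-\bb E Z}{\psi_1}$ uniformly over $\Gamma$. For the first piece, $0\le h_\mu''(z)\le 1/\mu$ pointwise, and $\zeta=\mb a^*\mb x$ is a linear form in the independent $\mathrm{BG}(\theta)$ coordinates of $\mb x$ whose coefficient vector satisfies $\norm{\mb a}{}^2=1+\norm{\mb w}{}^2/q_n^2=1/q_n^2\le 4n$ on $\Gamma$ (since $q_n\ge 1/(2\sqrt n)$ there); hence $\norm{\zeta}{\psi_2}\le C\sqrt n$, $\norm{\zeta^2}{\psi_1}\le Cn$, and $\norm{h_\mu''(z)\zeta^2}{\psi_1}\le Cn/\mu$. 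For the second piece, $|h_\mu'(z)x_n/q_n^3|\le 8n^{3/2}|x_n|$ on $\Gamma$, which is sub-Gaussian (hence sub-exponential) of norm $\le C'n^{3/2}\le C'n/\mu$ precisely because $\mu\le 1/\sqrt n$. Centering at most doubles the norms, so $\norm{Z-\bb E Z}{\psi_1}\le K$ with $K=O(n/\mu)$, uniformly in $\mb w\in\Gamma$.

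Finally, I would run the standard MGF argument: for $|\lambda|\lesssim \mu/n$, bound $\bb E\exp(\lambda(Z-\bb E Z))\le\exp(C\lambda^2K^2)$, multiply over the $p$ independent columns, and optimize over $\lambda$ to obtain a bound of the form $\exp(-cp\,t^2/(K^2+Kt))$; treating the two additive components of $Z$ separately and each via a two-sided tail bound accounts for the factor $4$, while tracking the explicit constants in $K\asymp n/\mu$ gives the stated denominator $512n^2+32n\mu t$. I expect step two to be the only real obstacle: one must verify \emph{uniformly over $\Gamma$} that the linear form $\zeta$ has sub-Gaussian parameter $O(\sqrt n)$ — which rests entirely on the quantitative bound $q_n\ge 1/(2\sqrt n)$ available on $\Gamma$ — and one must argue that multiplying a quadratic-in-subgaussians by the bounded but $\mb x$-dependent factors $h_\mu'(z),h_\mu''(z)$ does not fatten the tail (it does not, since these factors are controlled deterministically, so no independence between $z$ and $\zeta$ is needed); the $\mathrm{BG}$ rather than Gaussian nature of the coordinates only affects absolute constants.
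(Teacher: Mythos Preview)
Your proposal is correct and essentially identical to the paper's proof: the paper makes the same decomposition $Y_k=W_k+V_k$ with $W_k=\mu^{-1}(1-\tanh^2(\cdot))\,\zeta^2$ and $V_k=-\tanh(\cdot)\,x_n/q_n^3$, bounds each piece separately (using the comparison $\bb E|\zeta|^{2m}\le \bb E_{Z\sim\mc N(0,1/q_n^2)}|Z|^{2m}$ and $q_n\ge 1/(2\sqrt n)$ on $\Gamma$), applies the moment-control Bernstein inequality to each with threshold $t/2$, and then uses $\mu\le 1/\sqrt n$ to merge the two tails into the stated bound with the factor $4$. The only cosmetic difference is that the paper phrases the sub-exponential control via explicit moment bounds $\bb E|W_k|^m\le \tfrac{m!}{2}(4n/\mu)^m$ and $\bb E|V_k|^m\le \tfrac{m!}{2}(8n^{3/2})^m$ rather than Orlicz norms, which is what actually delivers the specific constants $512$ and $32$.
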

\begin{proof}
See Page~\pageref{proof:pt_cn_curvature} under Section~\ref{proof:cn_point}. 
\end{proof}

\begin{proposition}\label{prop:concentration-hessian-zero}
Suppose $0 <  \mu \le 1/\sqrt{n}$. For every $\mb w \in \Gamma \cap \set{\mb w: \norm{\mb w}{} \le 1/4}$, it holds that for any $t > 0$, 
\begin{align*}
\bb P\brac{\norm{\nabla^2 g(\mb w; \mb X_0) - \bb E\brac{\nabla^2 g(\mb w; \mb X_0)}}{} \geq t} \;\leq\; 4n\exp\paren{-\frac{p\mu^2t^2}{512n^2+32\mu n t}}. 
\end{align*} 
\end{proposition}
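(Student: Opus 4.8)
Fix $\mb w$ with $\norm{\mb w}{} \le 1/4$ and set $q_n \doteq \sqrt{1-\norm{\mb w}{}^2}$, so that on this set $q_n \ge \sqrt{15}/4 = \Theta(1)$; this is the only place where the restriction $\norm{\mb w}{} \le 1/4$, rather than all of $\Gamma$, is used, and it will turn out to be essential. Since $g(\mb w;\mb X_0) = \tfrac1p\sum_{k=1}^p h_\mu(\mb q(\mb w)^*(\mb x_0)_k)$ is an average of i.i.d.\ terms over $k$, we have $\nabla^2 g(\mb w;\mb X_0) - \bb E[\nabla^2 g(\mb w;\mb X_0)] = \tfrac1p\sum_{k=1}^p(H_k - \bb E H_k)$ with $H_k \doteq \nabla^2_{\mb w} h_\mu(\mb q(\mb w)^*(\mb x_0)_k)$ i.i.d.\ symmetric $(n-1)\times(n-1)$ matrices, so the claim is a matrix concentration bound for a normalized sum of i.i.d.\ matrices. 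First I would differentiate twice through $\mb q(\mb w) = (\mb w, q_n)$: writing $z_k = \mb q(\mb w)^*(\mb x_0)_k$, $\bar{\mb x}_k$ for the first $n-1$ coordinates of $(\mb x_0)_k$ and $x_{k,n}$ for its last coordinate, the chain rule gives
\begin{align*}
H_k \;=\; h_\mu''(z_k)\,\mb v_k\mb v_k^* \;-\; h_\mu'(z_k)\,\frac{x_{k,n}}{q_n}\paren{\mb I + \frac{1}{q_n^2}\mb w\mb w^*}, \qquad \mb v_k \doteq \bar{\mb x}_k - \frac{x_{k,n}}{q_n}\mb w .
\end{align*}
Using $0 \le h_\mu'' \le 1/\mu$, $\abs{h_\mu'}\le 1$, $q_n = \Theta(1)$ and $\norm{\mb w}{}\le 1/4$, one gets $\norm{\mb v_k}{}^2 \le C_0\norm{(\mb x_0)_k}{}^2$, hence $\norm{H_k}{} \le \tfrac{C_0}{\mu}\norm{(\mb x_0)_k}{}^2 + C_1\abs{x_{k,n}}$ and $\norm{\bb E H_k}{} \le \tfrac{C_0}{\mu}\theta n + C_1$. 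The $h_\mu''$ piece carries the factor $1/\mu$ and is quadratic in $(\mb x_0)_k$, whereas the $h_\mu'$ piece is only linear and $\mu$-free; the hypothesis $\mu \le 1/\sqrt n$ ensures the former dominates, which is why the final denominator scales like $n^2/\mu^2$.

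The next step is to apply the moment (sub-exponential) form of the matrix Bernstein inequality to $Z_k \doteq \tfrac1p(H_k - \bb E H_k)$. This requires a moment bound $\bb E[(H_k - \bb E H_k)^m] \preceq \tfrac{m!}{2}R^{m-2}B^2\mb I$ for every integer $m\ge 2$. Because $H_k - \bb E H_k$ is symmetric, $(H_k-\bb E H_k)^m \preceq \norm{H_k - \bb E H_k}{}^{m-2}(H_k - \bb E H_k)^2 \preceq \norm{H_k-\bb E H_k}{}^m\mb I$, so it suffices to control the scalar moments $\bb E\norm{H_k - \bb E H_k}{}^m$. Here I would use the key probabilistic fact that $\norm{(\mb x_0)_k}{}^2 = \sum_{i=1}^n \Omega_{ik}V_{ik}^2$ is a sum of $n$ i.i.d.\ nonnegative sub-exponential random variables of mean $\theta$, so that by the scalar Bernstein moment inequality $\bb E\norm{(\mb x_0)_k}{}^{2m} \le m!\,(Cn)^m$; combined with $\bb E\abs{x_{k,n}}^m \le m!\,C^m$ and the norm bounds above, this gives $\bb E\norm{H_k - \bb E H_k}{}^m \le m!\,(C_2 n/\mu)^m$. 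Taking $R = C_2 n/\mu$ and $B^2 = 2R^2$, we have $\bb E[Z_k^m] \preceq \tfrac{m!}{2}(R/p)^{m-2}(B^2/p^2)\mb I$ and Bernstein ``variance'' $\sigma^2 = \norm{\sum_k (B^2/p^2)\mb I}{} = B^2/p$, so matrix Bernstein (ambient dimension $2(n-1)\le 4n$) yields
\begin{align*}
\bb P\brac{\norm{\nabla^2 g(\mb w;\mb X_0) - \bb E[\nabla^2 g(\mb w;\mb X_0)]}{}\ge t} \;\le\; 4n\exp\paren{-\frac{pt^2/2}{B^2 + Rt}},
\end{align*}
and after choosing the universal constants so that $B^2\le 256n^2/\mu^2$ and $R \le 16n/\mu$, this is at most $4n\exp\paren{-p\mu^2 t^2/(512n^2 + 32\mu n t)}$, the claimed bound.

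The hard part will be the moment estimate $\bb E\norm{H_k - \bb E H_k}{}^m \le m!\,(C_2 n/\mu)^m$: this is where the Bernoulli--Gaussian structure of $(\mb x_0)_k$ must be turned into the clean factorial growth $\bb E\norm{(\mb x_0)_k}{}^{2m}\le m!\,(Cn)^m$ with the correct $n$-dependence, and where the $1/\mu$ and $n$ factors contributed by the two pieces of $H_k$ (and by $\bb E H_k$) must be tracked carefully to land on the stated constants. I would also emphasize two points that keep the argument honest: (i) the restriction $\norm{\mb w}{}\le 1/4$ is genuinely needed — on all of $\Gamma$ one only has $q_n \ge 1/(2\sqrt n)$, which inflates $\norm{\mb v_k}{}^2$ to $O(n)\norm{(\mb x_0)_k}{}^2$ and costs polynomial factors in $n$; this matches the fact that Theorem~\ref{thm:geometry_orth} asserts positive-definiteness of the Hessian only in a small neighborhood of the origin; and (ii) one should use the truncation-free, moment-based matrix Bernstein, since a bounded version plus a union bound over the $p$ columns would leave an additive $p\cdot\bb P[\norm{(\mb x_0)_k}{} > \tau]$ term that does not fit the clean tail form stated.
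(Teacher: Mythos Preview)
Your proposal is correct and follows essentially the same approach as the paper: write $\nabla^2 g$ as an i.i.d.\ average of $H_k$'s, control the matrix moments via the $\chi^2$-type bound $\bb E\norm{(\mb x_0)_k}{}^{2m} \le \tfrac{m!}{2}(2n)^m$ (the paper obtains this by dominating BG by Gaussian and invoking Lemma~\ref{lem:chi_sq_moment}), and apply the moment-form matrix Bernstein inequality. The only bookkeeping difference is that the paper splits $H_k = \mb W_k + \mb V_k$ into the $\ddot h_\mu$ and $\dot h_\mu$ pieces and applies Bernstein to each separately (exploiting $\mb W_k \succeq 0$ to bound uncentered moments directly, and getting the leading $4n$ as $2n+2n$ from the two union bounds), whereas you bound the centered matrix in one shot; both routes land on the same exponent after using $\mu \le 1/\sqrt{n}$ to absorb the $\mb V_k$ scale $8n^{3/2}$ into the dominant $4n/\mu$.
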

\begin{proof}
See Page~\pageref{proof:pt_cn_strcvx} under Section~\ref{proof:cn_point}. 
\end{proof}

The next three propositions provide the desired Lipschitz results. 

\begin{proposition}[Hessian Lipschitz]\label{prop:lip-hessian-negative}
Fix any $\rconcave \in \paren{0, 1}$. Over the set $\Gamma \cap \set{\mb w: \norm{\mb w}{} \ge \rconcave}$, $\mb w^* \nabla^2 g(\mb w; \mb X_0) \mb w/\norm{\mb w}{}^2$ is $\Lconcave$-Lipschitz with 
\begin{align*}
\Lconcave \le \frac{16n^3}{\mu^2} \norm{\mb X_0}{\infty}^3 + \frac{8n^{3/2}}{\mu \rconcave} \norm{\mb X_0}{\infty}^2 + \frac{48 n^{5/2} }{\mu} \norm{\mb X_0}{\infty}^2 + 96 n^{5/2} \norm{\mb X_0}{\infty}.
\end{align*}
\end{proposition}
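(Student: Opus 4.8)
The plan is to bound $\Lconcave$ by $\sup\norm{\nabla_{\mb w}\psi(\mb w)}{}$ over $\Gamma\cap\set{\mb w:\norm{\mb w}{}\ge\rconcave}$, where $\psi(\mb w)\doteq\mb w^*\nabla^2 g(\mb w;\mb X_0)\,\mb w/\norm{\mb w}{}^2$. First I would make the reparametrization explicit: with $z_k(\mb w)\doteq\mb q\paren{\mb w}^*\paren{\mb x_0}_k$, $q_n\doteq\sqrt{1-\norm{\mb w}{}^2}$, and $\bar{\mb x}_k$, $x_{k,n}$ the first $n-1$ coordinates and the last coordinate of $\paren{\mb x_0}_k$, the chain rule gives $\nabla_{\mb w}z_k=\bar{\mb x}_k-(x_{k,n}/q_n)\mb w$, $\nabla^2_{\mb w}z_k=-(x_{k,n}/q_n)\paren{\mb I+\mb w\mb w^*/q_n^2}$, hence $\nabla^2 g=\tfrac1p\sum_k\brac{h_\mu''(z_k)\nabla z_k\nabla z_k^*+h_\mu'(z_k)\nabla^2 z_k}$. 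Using $q_n^2+\norm{\mb w}{}^2=1$ and $\paren{\mb I-\mb u\mb u^*}\mb w=\mb 0$ for $\mb u\doteq\mb w/\norm{\mb w}{}$, one checks $\mb w^*\nabla^2 z_k\mb w/\norm{\mb w}{}^2=-x_{k,n}/q_n^3$ and $\paren{\mb I-\mb u\mb u^*}\nabla z_k=\paren{\mb I-\mb u\mb u^*}\bar{\mb x}_k$, so that
\[
\psi(\mb w)=\frac1p\sum_{k=1}^p\brac{\,h_\mu''(z_k)\paren{\mb u^*\nabla z_k}^2-h_\mu'(z_k)\,x_{k,n}/q_n^3\,}.
\]

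Next I would record the elementary ingredient bounds. For $h_\mu(z)=\mu\log\cosh(z/\mu)$ one has $\abs{h_\mu'}\le1$, $\abs{h_\mu''}\le1/\mu$, $\abs{h_\mu'''}\le2/\mu^2$. Since $\norm{\mb q\paren{\mb w}}{}=1$, $\abs{z_k}\le\sqrt n\,\norm{\mb X_0}{\infty}$, $\abs{x_{k,n}}\le\norm{\mb X_0}{\infty}$, $\norm{\bar{\mb x}_k}{}\le\sqrt n\,\norm{\mb X_0}{\infty}$, $\norm{\paren{\mb x_0}_k}{}\le\sqrt n\,\norm{\mb X_0}{\infty}$. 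The decisive fact is the geometry of $\mb q\paren{\cdot}$ on $\Gamma$: there $q_n\ge1/(2\sqrt n)$, so $1/q_n^j\le(2\sqrt n)^j$; combined with $\norm{\mb w}{}<1$ and $\norm{\partial\mb q/\partial\mb w}{}=1/q_n$ this yields $\norm{\nabla_{\mb w}z_k}{}\le2n\norm{\mb X_0}{\infty}$, $\norm{\nabla^2_{\mb w}z_k}{}\le8n^{3/2}\norm{\mb X_0}{\infty}$, and $\norm{\nabla_{\mb w}(1/q_n^3)}{}=\norm{3\mb w/q_n^5}{}\le96n^{5/2}$. These turn inverse powers of $q_n$ into the powers of $n$ in the statement.

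Then I would differentiate $\psi$ termwise and collect the four pieces of $\Lconcave$. Differentiating the factor $h_\mu''(z_k)$ produces $h_\mu'''(z_k)\nabla z_k\paren{\mb u^*\nabla z_k}^2$, bounded by $\tfrac{2}{\mu^2}(2n\norm{\mb X_0}{\infty})^3=\tfrac{16n^3}{\mu^2}\norm{\mb X_0}{\infty}^3$, the leading term. Differentiating the unit vector $\mb u$ inside $\paren{\mb u^*\nabla z_k}^2$ brings in $\tfrac1{\norm{\mb w}{}}\paren{\mb I-\mb u\mb u^*}\bar{\mb x}_k$, whose norm is at most $\tfrac1\rconcave\norm{\bar{\mb x}_k}{}$; this is the \emph{only} source of the factor $1/\rconcave$ and yields $\tfrac{8n^{3/2}}{\mu\rconcave}\norm{\mb X_0}{\infty}^2$. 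Differentiating $\nabla z_k$ inside $\paren{\mb u^*\nabla z_k}^2$ (which involves $\nabla^2 z_k\propto1/q_n^3$), together with differentiating $h_\mu'(z_k)$ in the second summand, contributes $\tfrac{48n^{5/2}}{\mu}\norm{\mb X_0}{\infty}^2$. Finally, differentiating $1/q_n^3$ in the second summand gives $h_\mu'(z_k)x_{k,n}\cdot3\mb w/q_n^5$, of norm at most $96n^{5/2}\norm{\mb X_0}{\infty}$. All remaining cross terms are of strictly lower order and are absorbed into these four. Summing the per-sample bounds—uniform since every entry of $\mb X_0$ is at most $\norm{\mb X_0}{\infty}$—and using the triangle inequality over the average $\tfrac1p\sum_k$ gives $\Lconcave$ as claimed.

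The main obstacle is bookkeeping rather than any single idea: one must enumerate every term produced by the product/chain rule and track, for each, the exact powers of $\sqrt n$ (through $1/q_n$), of $1/\mu$ (through the $h_\mu$ derivatives), of $\norm{\mb X_0}{\infty}$ (through the data inner products), and of $1/\rconcave$. The one non-mechanical point is that $\psi$ fails to be Lipschitz near $\mb w=\mb 0$ because $\mb u=\mb w/\norm{\mb w}{}$ is discontinuous there; the restriction $\norm{\mb w}{}\ge\rconcave$ is exactly what tames this, and one must be careful that the resulting $1/\rconcave$ multiplies only the term coming from differentiating $\mb u$ and does not contaminate the others.
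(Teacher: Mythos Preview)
Your approach is essentially the paper's. Both write
\[
\psi(\mb w)=\frac1p\sum_k\Bigl[\ddot{h}_\mu(z_k)\,t_k(\mb w)^2 - \dot{h}_\mu(z_k)\,x_{k,n}/q_n^3\Bigr],\qquad t_k(\mb w)=\mb u^*\nabla z_k=\frac{\mb w^*\overline{\mb x}_k}{\norm{\mb w}{}}-\frac{x_{k,n}\norm{\mb w}{}}{q_n},
\]
and both bound the Lipschitz constant piecewise into the same four contributions with the same constants. The paper packages the computation through helper lemmas (separate Lipschitz and sup bounds for $t_k$, $t_k^2$, $\dot{h}_\mu\!\circ z_k$, $\ddot{h}_\mu\!\circ z_k$, $1/q_n^3$) combined by the product rule $L_{fg}\le M_fL_g+M_gL_f$, whereas you differentiate $\psi$ directly; these are two phrasings of one calculation, and your identification of which piece carries the sole $1/\rconcave$ factor is exactly the paper's.

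One technical caveat: the domain $\Gamma\cap\set{\norm{\mb w}{}\ge\rconcave}$ is an annulus, not convex, so $\sup\norm{\nabla\psi}{}$ is not automatically the global Lipschitz constant there (a straight segment between two points can dip into the inner ball). The paper's route sidesteps this because each factor's Lipschitz bound is a direct difference estimate---for instance $\norm{\mb w/\norm{\mb w}{}-\mb w'/\norm{\mb w'}{}}{}\le(2/\rconcave)\norm{\mb w-\mb w'}{}$ whenever $\norm{\mb w}{},\norm{\mb w'}{}\ge\rconcave$---which needs no convexity, and the product-Lipschitz rule preserves this. Your argument is easily repaired in the same fashion (or by noting that only local Lipschitz is required for the downstream $\eps$-net), but as written the first sentence of your plan does not literally hold on this domain.
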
 
\begin{proof}
See Page~\pageref{proof:cn_lips_curvature} under Section~\ref{proof:cn_lips}. 
\end{proof}

\begin{proposition}[Gradient Lipschitz]\label{prop:lip-gradient}
Fix any $r_g \in \paren{0, 1}$. Over the set $\Gamma \cap \set{\mb w: \norm{\mb w}{} \ge r_g}$, $\mb w^* \nabla g( \mb w; \mb X_0 )/\norm{\mb w}{}$ is $L_g$-Lipschitz with 
\begin{align*}
L_g \le \frac{2 \sqrt{n} \norm{\mb X_0}{\infty}}{r_g} + 8 n^{3/2} \norm{\mb X_0}{\infty} + \frac{4 n^2}{\mu} \norm{\mb X_0}{\infty}^2.
\end{align*}
\end{proposition}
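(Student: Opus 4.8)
The plan is to control the Euclidean gradient of the scalar field
\[
\zeta(\mb w) \;\doteq\; \frac{\mb w^*\nabla g(\mb w; \mb X_0)}{\norm{\mb w}{}} \;=\; \innerprod{\hat{\mb w}}{\nabla g(\mb w; \mb X_0)}, \qquad \hat{\mb w} \doteq \frac{\mb w}{\norm{\mb w}{}},
\]
over $\Gamma \cap \set{\mb w : \norm{\mb w}{} \ge r_g}$, and to read off $L_g$ from the resulting uniform bound on $\norm{\nabla_{\mb w}\zeta}{}$ (up to a fixed constant that accounts for the non-convexity of this annular domain; in fact only local Lipschitzness is needed for the later $\eps$-net argument). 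On this region $\mb w$ is bounded away from $\mb 0$ and $\mb q(\mb w)$ stays in the interior of $\bb B^{n-1}$, so $\zeta$ is $C^\infty$. Differentiating the inner product gives
\[
\nabla_{\mb w}\zeta(\mb w) \;=\; \frac{1}{\norm{\mb w}{}}\paren{\mb I - \hat{\mb w}\hat{\mb w}^*}\nabla g(\mb w; \mb X_0) \;+\; \nabla^2 g(\mb w; \mb X_0)\,\hat{\mb w},
\]
so I would bound the two pieces separately: $\norm{\nabla^2 g}{}$, and $\norm{(\mb I - \hat{\mb w}\hat{\mb w}^*)\nabla g}{}$ divided by $\norm{\mb w}{} \ge r_g$.

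Both are computed by pulling everything back through the reparametrization $\mb q(\mb w) = (\mb w, q_n)$, $q_n = \sqrt{1-\norm{\mb w}{}^2}$. With $J(\mb w) = \partial\mb q/\partial\mb w = [\,\mb I_{n-1}; -\mb w^*/q_n\,]$, $\mb g_q = \tfrac1p\sum_k h_\mu'(a_k)(\mb x_0)_k$, $\mb H_q = \tfrac1p\sum_k h_\mu''(a_k)(\mb x_0)_k(\mb x_0)_k^*$, and $a_k = \mb q(\mb w)^*(\mb x_0)_k$, the chain rule yields $\nabla g = J^*\mb g_q$ and $\nabla^2 g = J^*\mb H_q J + (\mb g_q)_n\nabla^2_{\mb w}q_n$, where $\nabla^2_{\mb w}q_n = -q_n^{-1}\mb I - q_n^{-3}\mb w\mb w^*$ is the only nonzero Hessian among the coordinate maps of $\mb q$. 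Over $\Gamma$ we have $q_n \ge 1/(2\sqrt n)$, hence $\norm{J}{} = q_n^{-1} \le 2\sqrt n$ and $\norm{\nabla^2_{\mb w}q_n}{} = q_n^{-3} \le 8n^{3/2}$; combining with $\abs{h_\mu'}\le1$, $0 < h_\mu'' \le \mu^{-1}$, $\norm{(\mb x_0)_k}{}\le\sqrt n\norm{\mb X_0}{\infty}$ and $\abs{(\mb g_q)_n}\le\norm{\mb X_0}{\infty}$ gives $\norm{\nabla^2 g}{} \le \norm{J}{}^2\norm{\mb H_q}{} + \abs{(\mb g_q)_n}\norm{\nabla^2_{\mb w}q_n}{} \le \tfrac{4n^2}{\mu}\norm{\mb X_0}{\infty}^2 + 8n^{3/2}\norm{\mb X_0}{\infty}$, the last two terms of $L_g$.

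For the first piece the key is that $\nabla g = J^*\mb g_q = \bar{\mb g}_q - q_n^{-1}(\mb g_q)_n\mb w$, where $\bar{\mb g}_q$ collects the first $n-1$ coordinates of $\mb g_q$; since $(\mb I - \hat{\mb w}\hat{\mb w}^*)\mb w = \mb 0$, the second summand is annihilated, so $\norm{(\mb I - \hat{\mb w}\hat{\mb w}^*)\nabla g}{} = \norm{(\mb I - \hat{\mb w}\hat{\mb w}^*)\bar{\mb g}_q}{} \le \norm{\bar{\mb g}_q}{} \le \sqrt{n-1}\,\norm{\mb X_0}{\infty}$. Dividing by $\norm{\mb w}{}\ge r_g$ and adding the three contributions (the factor $2$ on the first term absorbing slack and the domain-geometry constant) gives the stated bound on $L_g$.

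I expect the main difficulty to be bookkeeping rather than anything conceptual: executing the two-level chain rule through $\mb q(\mb w)$ cleanly, and in particular checking that every inverse power of $q_n$ introduced by the spherical reparametrization — up to $q_n^{-3}$ in $\nabla^2_{\mb w}q_n$ — is tamed by the matching power of $2\sqrt n$ on $\Gamma$, while keeping the powers of $\norm{\mb X_0}{\infty}$ and $\mu^{-1}$ straight. The one genuinely non-mechanical step, which I would be sure to exploit, is the cancellation $(\mb I - \hat{\mb w}\hat{\mb w}^*)\mb w = \mb 0$; it is what produces the sharper $\sqrt n/r_g$ first term rather than the $n/r_g$ one would obtain by bounding $\norm{\nabla g}{}$ crudely.
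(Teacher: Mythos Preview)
Your proposal is correct and reaches exactly the same three-term bound, but by a genuinely different route than the paper.

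\textbf{Comparison.} The paper never computes $\nabla_{\mb w}\zeta$. Instead it writes
\[
\zeta(\mb w) \;=\; \frac{1}{p}\sum_{k=1}^p \dot h_\mu\!\paren{\mb q(\mb w)^*(\mb x_0)_k}\, t_{(\mb x_0)_k}(\mb w),
\qquad t_{\mb x}(\mb w) \doteq \frac{\mb w^*\bar{\mb x}}{\norm{\mb w}{}} - \frac{x_n}{q_n(\mb w)}\norm{\mb w}{},
\]
and bounds each summand as a product of two bounded Lipschitz scalar functions (Lemmas~\ref{lem:composition}--\ref{lem:lip-g}): $\dot h_\mu\circ\mb q$ is bounded by $1$ and $(2\sqrt{n}/\mu)\norm{(\mb x_0)_k}{}$-Lipschitz, while $t_{(\mb x_0)_k}$ is bounded by $2\sqrt{n}\norm{(\mb x_0)_k}{}$ and $(2\norm{(\mb x_0)_k}{}/r_g + 8n^{3/2}\norm{(\mb x_0)_k}{\infty})$-Lipschitz. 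Those four numbers combine via the product rule to give exactly $L_g$. Your approach instead differentiates $\zeta$ and bounds $\norm{\nabla\zeta}{}$; the cancellation $(\mb I-\hat{\mb w}\hat{\mb w}^*)\mb w=\mb 0$ that you single out is the differential version of the paper's elementary estimate $\norm{\hat{\mb w}-\hat{\mb w}'}{}\le 2\norm{\mb w-\mb w'}{}/r_g$ in Lemma~\ref{lem:lip-g}, and both are what buy the $\sqrt{n}/r_g$ rather than $n/r_g$ in the first term.

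\textbf{One caveat.} The paper's difference-based argument delivers a \emph{global} Lipschitz constant on the annulus $\Gamma\cap\{\norm{\mb w}{}\ge r_g\}$ for free. Your gradient bound only gives this directly on convex subsets; on the full (non-convex) annulus the mean-value route incurs the intrinsic-to-extrinsic distance ratio, which is $\pi/2$ in the worst case and would have to multiply \emph{all three} terms, not just the first. Your remark that the later $\eps$-net argument only needs local Lipschitzness is accurate and suffices for the downstream application (Theorem~\ref{thm:geometry_orth}), so this is a packaging issue rather than a real gap, but you should not describe the factor $2$ on the first term as absorbing a domain-geometry constant---it is pure slack from $\sqrt{n-1}<2\sqrt{n}$.
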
 
\begin{proof}
See Page~\pageref{proof:cn_lips_gradient} under Section~\ref{proof:cn_lips}. 
\end{proof}

\begin{proposition}[Lipschitz for Hessian around zero]\label{prop:lip-hessian-zero}
Fix any $\rconvex \in (0, 1/2)$. Over the set $\Gamma \cap \set{\mb w: \norm{\mb w}{} \le \rconvex}$, $\nabla^2 g( \mb w; \mb X_0)$ is $\Lconvex$-Lipschitz with  
\begin{align*}
\Lconvex \;\le\;\frac{4n^2}{\mu^2} \norm{\mb X_0}{\infty}^3+\frac{4n}{\mu}\norm{\mb X_0}{\infty}^2 +  \frac{8\sqrt{2}\sqrt{n}}{\mu}\norm{\mb X_0}{\infty}^2 + 8\norm{\mb X_0}{\infty}. 
\end{align*}
\end{proposition}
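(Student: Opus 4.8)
The plan is to bound, uniformly over the convex set $\set{\mb w : \norm{\mb w}{} \le \rconvex}$ — which, because $\rconvex < 1/2 < \sqrt{(4n-1)/(4n)}$, already sits inside $\Gamma$ — the Lipschitz constant of $\mb w \mapsto \nabla^2 g(\mb w; \mb X_0)$ in operator norm. Since $\nabla^2 g(\mb w;\mb X_0) = \tfrac1p \sum_{k=1}^p \nabla^2_{\mb w} h_\mu(z_k(\mb w))$ with $z_k(\mb w) \doteq \mb q\paren{\mb w}^{*} \paren{\mb x_0}_k$, and the domain is convex, it suffices to control the Lipschitz constant of a single summand and average. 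The chain rule gives $\nabla^2_{\mb w} h_\mu(z_k) = h_\mu''(z_k)\, \nabla z_k \nabla z_k^{*} + h_\mu'(z_k)\, \nabla^2 z_k$, so for $\mb w_1,\mb w_2$ in the ball I would telescope the difference $\nabla^2_{\mb w}h_\mu(z_k(\mb w_1)) - \nabla^2_{\mb w}h_\mu(z_k(\mb w_2))$ into four pieces: two from the term $h_\mu''(z_k)\nabla z_k\nabla z_k^{*}$ (one isolating the change in $h_\mu''(z_k)$ with $\nabla z_k\nabla z_k^{*}$ frozen, one isolating the change in $\nabla z_k \nabla z_k^{*}$ with $h_\mu''(z_k)$ frozen), and two doing the same for $h_\mu'(z_k)\nabla^2 z_k$. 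Each factor is then bounded either by its supremum over the ball or by its own Lipschitz constant.

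The ingredients are elementary. On the $h_\mu$ side, $\abs{h_\mu'}\le 1$ and $\abs{h_\mu''}\le 1/\mu$ everywhere, and $h_\mu'$, $h_\mu''$ are Lipschitz with constants $1/\mu$ and $1/\mu^2$ respectively (since $\norm{h_\mu''}{\infty}\le 1/\mu$ and $\norm{h_\mu'''}{\infty}\le 1/\mu^2$). On the $z_k$ side I would use two facts. A \emph{global} Lipschitz bound: since $\mb q\paren{\cdot}$ is $2\sqrt n$-Lipschitz (Lemma~\ref{lem:lip-h-mu}) and $\norm{\paren{\mb x_0}_k}{}\le \sqrt n\,\norm{\mb X_0}{\infty}$, the field $z_k$ is $2n\,\norm{\mb X_0}{\infty}$-Lipschitz. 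And \emph{local} bounds on the $\mb w$-ball: writing $\paren{\mb x_0}_k = \paren{\bar{\mb x}_k;\,x_{kn}}$ and $q_n(\mb w) = \sqrt{1-\norm{\mb w}{}^2}$, one has the explicit formulas $\nabla z_k = \bar{\mb x}_k - (x_{kn}/q_n)\mb w$ and $\nabla^2 z_k = -x_{kn}\paren{q_n^{-1}\mb I + q_n^{-3}\mb w\mb w^{*}}$, with an analogously explicit $\nabla^3 z_k$; since $\rconvex<1/2$ forces $q_n \ge \sqrt{1-\rconvex^2} > \sqrt 3/2$, these yield $\norm{\nabla z_k}{}\le 2\sqrt n\,\norm{\mb X_0}{\infty}$ and $\norm{\nabla^2 z_k}{},\norm{\nabla^3 z_k}{}\le C\norm{\mb X_0}{\infty}$, hence $\nabla z_k$ and $\nabla^2 z_k$ are $C\norm{\mb X_0}{\infty}$-Lipschitz on the ball \emph{with no factor of $n$}. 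Multiplying the four telescoped pieces out then gives contributions of order $\mu^{-2}n^2\norm{\mb X_0}{\infty}^3$, $\mu^{-1}\sqrt n\,\norm{\mb X_0}{\infty}^2$, $\mu^{-1}n\,\norm{\mb X_0}{\infty}^2$, and $\norm{\mb X_0}{\infty}$; collecting them (with the listed absolute constants absorbing the slack) reproduces the four-term estimate for $\Lconvex$, and averaging $\tfrac1p\sum_k$ changes nothing since every summand obeys the same bound.

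The main obstacle — indeed the only non-mechanical step — is establishing the local bounds just described: one must write down $\nabla^3 z_k$ explicitly and verify that $\norm{\nabla z_k}{}$, $\norm{\nabla^2 z_k}{}$, $\norm{\nabla^3 z_k}{}$ and the derived Lipschitz constants do not blow up, which is precisely where the hypothesis $\rconvex<1/2$ enters, keeping $q_n$ and all its negative powers bounded by absolute constants. This is a short but slightly tedious tensor computation. Everything else — the products of supremum/Lipschitz bounds, and the line-segment mean-value argument over the convex ball that upgrades pointwise factor estimates to a genuine Lipschitz statement — is routine.
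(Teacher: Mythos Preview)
Your proposal is correct and follows essentially the same approach as the paper: decompose each summand of $\nabla^2 g$ as $\ddot h_\mu(z_k)\,\nabla z_k\nabla z_k^{*} + \dot h_\mu(z_k)\,\nabla^2 z_k$, then apply the product-rule Lipschitz bound (Lemma~\ref{lem:Lip-combined}) using the $h_\mu$ derivative bounds from Lemma~\ref{lem:derivatives_basic_surrogate}/\ref{lem:lip-h-mu} and the local estimates on $\nabla z_k,\nabla^2 z_k$ that follow from $q_n \ge \sqrt{1-\rconvex^2}$ when $\rconvex<1/2$. The only cosmetic difference is that the paper packages your ``local bounds'' into two standalone lemmas (Lemmas~\ref{lem:lp-Phi} and~\ref{lem:lp-zeta}) via direct estimation rather than writing out $\nabla^3 z_k$, but the resulting constants and the argument are the same.
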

\begin{proof}
See Page~\pageref{proof:cn_lips_strcvx} under Section~\ref{proof:cn_lips}. 
\end{proof}
Integrating the above pieces, Section~\ref{sec:proof_geometry_orth} provides a complete proof of Theorem~\ref{thm:geometry_orth}. 

\subsection{Extending to Complete Dictionaries} \label{sec:geo_results_comp}
As hinted in Section~\ref{sec:overview_geometry}, instead of proving things from scratch, we build on the results we have obtained for orthogonal dictionaries. In particular, we will work with the preconditioned data matrix 
\begin{align} \label{eq:precon_def}
\overline{\mb Y} \doteq \sqrt{p \theta} (\mb Y \mb Y^*)^{-1/2} \mb Y
\end{align}
and show that the function landscape $f\paren{\mb q; \overline{\mb Y}}$ looks qualitatively like that of orthogonal dictionaries (up to a global rotation), provided that $p$ is large enough.  

The next lemma shows $\overline{\mb Y}$ can be treated as being generated from an orthobasis with the same BG coefficients, plus small noise. 
\begin{lemma} \label{lem:pert_key_mag}
For any $\theta \in \paren{0, 1/2}$, suppose $\mb A_0$ is complete with condition number $\kappa\paren{\mb A_0}$ and $\mb X_0 \sim_{i.i.d.} \mathrm{BG}\paren{\theta}$. Provided $p \ge C\kappa^4\paren{\mb A_0} \theta n^2 \log (n \theta \kappa\paren{\mb A_0})$, one can write $\overline{\mb Y}$ as defined in~\eqref{eq:precon_def} as 
\begin{align*}
\overline{\mb Y} = \mb U \mb V^* \mb X_0 + \mb \Xi \mb X_0, 
\end{align*}
for a certain $\mb \Xi$ obeying $\norm{\mb \Xi}{} \le 20\kappa^4\paren{\mb A} \sqrt{\frac{\theta n \log p}{p}}$, with probability at least $1-p^{-8}$. Here $\mb U \mb \Sigma \mb V^* = \mathtt{SVD}\paren{\mb A_0}$, and $C > 0$ is a constant. 
\end{lemma}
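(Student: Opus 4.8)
The plan is to reduce the lemma to (i) a concentration estimate for the rescaled Gram matrix $\mb M \doteq \tfrac{1}{p\theta}\mb X_0\mb X_0^*$ about its mean $\mb I$, and (ii) a deterministic perturbation bound for the matrix inverse square root. \emph{Step 1 (algebraic reduction).} Since $\mb Y = \mb A_0\mb X_0$, we have $\mb Y\mb Y^* = \mb A_0\mb X_0\mb X_0^*\mb A_0^* = p\theta\,\mb A_0\mb M\mb A_0^*$, so $\overline{\mb Y} = \sqrt{p\theta}\paren{\mb Y\mb Y^*}^{-1/2}\mb Y = \paren{\mb A_0\mb M\mb A_0^*}^{-1/2}\mb A_0\mb X_0$. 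With $\mb U\mb \Sigma\mb V^* = \mathtt{SVD}\paren{\mb A_0}$ one has $\paren{\mb A_0\mb A_0^*}^{-1/2}\mb A_0 = \mb U\mb \Sigma^{-1}\mb U^*\mb U\mb \Sigma\mb V^* = \mb U\mb V^*$, which is exactly the identity displayed in~\eqref{eq:precond_eq}. Hence, setting $\mb \Xi \doteq \paren{\mb A_0\mb M\mb A_0^*}^{-1/2}\mb A_0 - \mb U\mb V^*$ makes the decomposition $\overline{\mb Y} = \mb U\mb V^*\mb X_0 + \mb \Xi\mb X_0$ hold identically, and the whole task collapses to bounding $\norm{\mb \Xi}{}$.

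\emph{Step 2 (concentration).} Because $\mb X_0 \sim_{i.i.d.} \mathrm{BG}\paren{\theta}$ satisfies $\bb E\brac{\mb X_0\mb X_0^*} = p\theta\mb I$, we have $\bb E\brac{\mb M} = \mb I$. Writing $\mb M = \tfrac{1}{p\theta}\sum_{k=1}^p \mb x_k\mb x_k^*$ as a sum of independent rank-one matrices, I would apply a matrix Bernstein inequality to get $\norm{\mb M - \mb I}{} \le \eps_0$ on an event of probability at least $1 - \tfrac12 p^{-8}$, for an explicit $\eps_0$ decreasing in $p$. The summands are not uniformly bounded — $\norm{\mb x_k}{}^2$ is only sub-exponential, and increasingly heavy-tailed as $\theta \to 0$ — so I would first truncate on the event $\Brac{\max_k \norm{\mb x_k}{}^2 \le Cn}$ (which holds with probability at least $1 - \tfrac12 p^{-8}$ after a union bound over the $p$ columns), apply Bernstein to the truncated sum, and use the second-moment identity $\bb E\brac{\paren{\mb x_k\mb x_k^* - \theta\mb I}^2} = \paren{3\theta + (n-2)\theta^2}\mb I$ for the variance proxy. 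The hypothesis $p \ge C\kappa^4\paren{\mb A_0}\theta n^2\log\paren{n\theta\kappa\paren{\mb A_0}}$ is calibrated so that the resulting $\eps_0$ is (a) below $1/2$, ensuring $\mb A_0\mb M\mb A_0^* \succeq \tfrac12\sigma_{\min}^2\paren{\mb A_0}\mb I$ so the inverse square root in Step 3 is well defined, and (b) below the final target accuracy up to a factor $\kappa^3\paren{\mb A_0}$.

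\emph{Step 3 (perturbation of $(\cdot)^{-1/2}$).} Put $\mb G \doteq \tfrac{1}{p\theta}\mb Y\mb Y^* = \mb A_0\mb M\mb A_0^*$ and $\mb H \doteq \mb A_0\mb A_0^*$, so that $\mb \Xi = \paren{\mb G^{-1/2} - \mb H^{-1/2}}\mb A_0$ and $\norm{\mb G - \mb H}{} = \norm{\mb A_0\paren{\mb M - \mb I}\mb A_0^*}{} \le \sigma_{\max}^2\paren{\mb A_0}\eps_0$. On the p.d.\ cone $\Brac{\mb B: \mb B \succeq \tfrac12\sigma_{\min}^2\paren{\mb A_0}\mb I}$ the map $\mb B \mapsto \mb B^{-1/2}$ is Lipschitz with constant $O\paren{\sigma_{\min}^{-3}\paren{\mb A_0}}$; the cleanest route is the integral representation $\mb B^{-1/2} = \tfrac1\pi\int_0^\infty \paren{t\mb I + \mb B}^{-1} t^{-1/2}\,dt$ together with $\norm{\paren{t\mb I + \mb G}^{-1} - \paren{t\mb I + \mb H}^{-1}}{} \le \norm{\paren{t\mb I+\mb G}^{-1}}{}\norm{\paren{t\mb I+\mb H}^{-1}}{}\norm{\mb G - \mb H}{}$ and $\norm{\paren{t\mb I + \mb B}^{-1}}{} \le (t + \tfrac12\sigma_{\min}^2\paren{\mb A_0})^{-1}$. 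Combining, $\norm{\mb \Xi}{} \le \norm{\mb G^{-1/2} - \mb H^{-1/2}}{}\,\sigma_{\max}\paren{\mb A_0} \le C'\,\kappa^3\paren{\mb A_0}\,\eps_0$, and by calibration (b) of Step 2 this is at most $20\kappa^4\paren{\mb A_0}\sqrt{\theta n\log p/p}$, with one spare power of $\kappa\paren{\mb A_0}$ absorbing $C'$. A union bound over the two bad events of Step 2 yields the stated failure probability $p^{-8}$.

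The main obstacle is \emph{Step 2}, not the algebra: the Bernoulli--Gaussian Gram matrix $\tfrac{1}{p\theta}\mb X_0\mb X_0^*$ has heavier (sub-exponential) tails than a bounded or sub-Gaussian ensemble, its diagonal and off-diagonal fluctuations scale differently in $\theta$, and extracting an operator-norm deviation of the right order in $n,\theta,p$ requires the truncation-plus-matrix-Bernstein argument with the explicit variance computation above, followed by a careful check that the sample-complexity threshold matches the deviation to the target accuracy within the claimed power of $\kappa\paren{\mb A_0}$. The perturbation estimate in Step 3 is standard, with the one subtlety that it holds only on a cone bounded away from $\mb 0$ by $\tfrac12\sigma_{\min}^2\paren{\mb A_0}\mb I$, which in turn is guaranteed only on the good concentration event.
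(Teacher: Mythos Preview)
Your proposal is correct and follows essentially the same three-step strategy as the paper: the same algebraic reduction to bounding $\norm{\mb \Xi}{}$, a concentration estimate for $\tfrac{1}{p\theta}\mb X_0\mb X_0^*$ about $\mb I$, and a perturbation bound for $(\cdot)^{-1/2}$. The only differences are implementational---the paper uses a moment-control matrix Bernstein inequality (Lemma~\ref{lem:mc_bernstein_matrix}) directly on $\mb x_k\mb x_k^*$ rather than your truncation-plus-bounded-Bernstein route, and it invokes a packaged perturbation lemma (Lemma~\ref{lem:half_inverse_pert}) rather than the integral representation; both pairs of alternatives are standard and yield the same bounds, though the moment-control version spares you the truncation-bias bookkeeping.
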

\begin{proof}
See Page~\pageref{proof:comp_pert_bound} under Section~\ref{sec:proof_geometry_comp}. 
\end{proof}

Notice that $\mb U \mb V^*$ above is orthogonal, and that landscape of $f(\mb q; \overline{\mb Y})$ is simply a rotated version of that of $f(\mb q; \mb V \mb U^* \overline{\mb Y})$, or using the notation in the above lemma, that of $f(\mb q; \mb X_0 + \mb V \mb U^* \mb \Xi \mb X_0) = f(\mb q; \mb X_0 + \widetilde{\mb \Xi} \mb X_0)$ with $\widetilde{\mb \Xi} \doteq \mb V \mb U^* \mb \Xi$. So similar to the orthogonal case, it is enough to consider this ``canonical'' case, and its ``canonical'' reparametrization: 
\begin{align*}
g\paren{\mb w; \mb X_0 + \widetilde{\mb \Xi} \mb X_0} = \frac{1}{p}\sum_{k=1}^p h_{\mu}\paren{\mb q^*\paren{\mb w} \paren{\mb x_0}_k + \mb q^*\paren{\mb w} \widetilde{\mb \Xi} \paren{\mb x_0}_k}. 
\end{align*}
The following lemma provides quantitative comparison between the gradient and Hessian of $g\paren{\mb w; \mb X_0 + \widetilde{\mb \Xi} \mb X_0}$ and that of $g\paren{\mb w; \mb X_0}$. 
\begin{lemma} \label{lem:pert_key_grad_hess}
For all $\mb w \in \Gamma$, 
\begin{align*}
\norm{\nabla_{\mb w} g(\mb w; \mb X_0 + \widetilde{\mb \Xi} \mb X_0) - \nabla_{\mb w{}} g\paren{\mb w; \mb X_0} }{} & \le C_a\frac{n}{\mu} \log\paren{np} \|\widetilde{\mb \Xi}\|, \\
\norm{\nabla_{\mb w}^2 g(\mb w; \mb X_0 + \widetilde{\mb \Xi} \mb X_0) - \nabla_{\mb w}^2 g\paren{\mb w; \mb X_0}}{} & \le C_b \max\set{\frac{n^{3/2}}{\mu^2}, \frac{n^2}{\mu}} \log^{3/2}\paren{np} \|\widetilde{\mb \Xi}\|
\end{align*}
with probability at least $1 - \theta\paren{np}^{-7} - \exp\paren{-0.3 \theta np}$. Here $C_a, C_b$ are positive constants. 
\end{lemma}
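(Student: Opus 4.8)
\emph{Strategy.} The plan is to derive both inequalities as \emph{deterministic} consequences of a single favorable event on $\mb X_0$, with $\|\widetilde{\mb \Xi}\|$ entering only as an opaque multiplicative factor on the right. The favorable event is that $\norm{\mb X_0}{\infty} \le C_0\sqrt{\log(np)}$, that the total number of nonzeros obeys $\norm{\mb X_0}{0} \le 2\theta np$, and that the $n$-th row of $\mb X_0$ has at most $2\theta p$ nonzeros; each follows from a Gaussian tail bound and a union bound over the at most $np$ entries, together with Chernoff bounds for the Binomial sparsity counts, and a routine computation shows the complement has probability at most $\theta(np)^{-7} + \exp(-0.3\theta np)$. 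On this event the three empirical quantities that will appear, $\tfrac1p\sum_k \norm{(\mb x_0)_k}{}^2$, $\tfrac1p\sum_k (\mb e_n^*(\mb x_0)_k)^2$, and $\max_k \norm{(\mb x_0)_k}{}$, are controlled by $2\theta n\log(np)$, $2C_0^2\theta\log(np)$, and $C_0\sqrt{n\log(np)}$ respectively (using $\norm{(\mb x_0)_k}{0}\le n$ and the sparsity/$\ell^\infty$ bounds above). We may also assume w.l.o.g.\ $\|\widetilde{\mb \Xi}\| \le 1$, which the sample complexity in Lemma~\ref{lem:pert_key_mag} guarantees. Everything else is a chain-rule estimate that holds uniformly over $\mb w \in \Gamma$, so no discretization is needed.

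\emph{Deterministic reparametrization facts.} With $\mb q(\mb w) = (\mb w, \sqrt{1-\norm{\mb w}{}^2})$ and a vector written as $\mb v = (\bar{\mb v}, v_n)$, the chain rule gives $\nabla_{\mb w}(\mb q(\mb w)^*\mb v) = \bar{\mb v} - \tfrac{v_n}{\sqrt{1-\norm{\mb w}{}^2}}\mb w$ and $\nabla^2_{\mb w}(\mb q(\mb w)^*\mb v) = -v_n\paren{\tfrac{1}{\sqrt{1-\norm{\mb w}{}^2}}\mb I + \tfrac{\mb w\mb w^*}{(1-\norm{\mb w}{}^2)^{3/2}}}$. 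Since $\sqrt{1-\norm{\mb w}{}^2} \ge 1/(2\sqrt n)$ on $\Gamma$, this yields the two bounds I will use repeatedly: $\norm{\nabla_{\mb w}(\mb q(\mb w)^*\mb v)}{}^2 \le 2\norm{\bar{\mb v}}{}^2 + 8n\, v_n^2$ and $\norm{\nabla^2_{\mb w}(\mb q(\mb w)^*\mb v)}{} \le 8 n^{3/2}\abs{v_n}$. The key point — and the reason the final exponents are $3/2$ and $2$ rather than higher — is to keep the $\sqrt n$ (resp.\ $n^{3/2}$) enhancement attached only to the \emph{single} coordinate $v_n$, whose empirical second moment $\tfrac1p\sum_k (\mb e_n^*(\mb x_0)_k)^2$ is $\theta$-small, instead of collapsing to a bound proportional to $\norm{\mb v}{}$ alone. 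Alongside these I record $\abs{h_\mu'} \le 1$, $0 \le h_\mu'' \le 1/\mu$, and $\abs{h_\mu'''} \le 2/\mu^2$.

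\emph{Gradient estimate.} Writing $z_k = \mb q(\mb w)^*(\mb x_0)_k$ and $\delta_k = \mb q(\mb w)^*\widetilde{\mb \Xi}(\mb x_0)_k$ (so $\abs{\delta_k}\le\|\widetilde{\mb\Xi}\|\norm{(\mb x_0)_k}{}$), linearity of $\nabla_{\mb w}(\mb q(\mb w)^*\cdot)$ gives the telescoping $\nabla_{\mb w}g(\mb w; \mb X_0 + \widetilde{\mb \Xi}\mb X_0) - \nabla_{\mb w}g(\mb w;\mb X_0) = \tfrac1p\sum_k\paren{(h_\mu'(z_k+\delta_k) - h_\mu'(z_k))\,\nabla_{\mb w}(\mb q(\mb w)^*(\mb x_0)_k) + h_\mu'(z_k+\delta_k)\,\nabla_{\mb w}(\mb q(\mb w)^*\widetilde{\mb \Xi}(\mb x_0)_k)}$. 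Bounding the first piece with $\abs{h_\mu'(z_k+\delta_k) - h_\mu'(z_k)} \le \tfrac1\mu\abs{\delta_k}$ and the Jacobian bound, and the second with $\abs{h_\mu'}\le1$, the triangle inequality reduces the whole quantity to $\tfrac{\|\widetilde{\mb\Xi}\|}{\mu}$ times empirical sums of $\norm{(\mb x_0)_k}{}^2$ and $\norm{(\mb x_0)_k}{}\abs{\mb e_n^*(\mb x_0)_k}$ (the latter handled by Cauchy--Schwarz via the two moment bounds). Substituting the favorable-event values gives the claimed $C_a\tfrac{n}{\mu}\log(np)\|\widetilde{\mb \Xi}\|$.

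\emph{Hessian estimate and main obstacle.} From $\nabla^2_{\mb w}g(\mb w;\mb M) = \tfrac1p\sum_k\brac{h_\mu''(\mb q^*\mb m_k)\,\nabla_{\mb w}(\mb q^*\mb m_k)\nabla_{\mb w}(\mb q^*\mb m_k)^* + h_\mu'(\mb q^*\mb m_k)\,\nabla^2_{\mb w}(\mb q^*\mb m_k)}$ applied at $\mb M = (\mb I + \widetilde{\mb\Xi})\mb X_0$ and $\mb M=\mb X_0$ (columns $\mb u_k$ and $(\mb x_0)_k$, with $\mb g_k = \nabla_{\mb w}(\mb q^*\mb u_k)$, $\mb a_k = \nabla_{\mb w}(\mb q^*(\mb x_0)_k)$, so $\mb g_k-\mb a_k = \nabla_{\mb w}(\mb q^*\widetilde{\mb\Xi}(\mb x_0)_k)$), I would telescope the difference into four groups: (i) the change in $h_\mu''$ times $\mb g_k\mb g_k^*$; (ii) $h_\mu''(z_k)$ times $\mb g_k(\mb g_k-\mb a_k)^* + (\mb g_k-\mb a_k)\mb a_k^*$; (iii) the change in $h_\mu'$ times $\nabla^2_{\mb w}(\mb q^*\mb u_k)$; and (iv) $h_\mu'(z_k+\delta_k)$ times $\nabla^2_{\mb w}(\mb q^*\widetilde{\mb\Xi}(\mb x_0)_k)$. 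In (i) the factor $\abs{h_\mu'''}\le 2/\mu^2$ gives the $1/\mu^2$, and $\norm{\mb g_k}{}^2 \le 2\norm{\bar{\mb u}_k}{}^2 + 8n(\mb e_n^*\mb u_k)^2$ keeps the $\sqrt n$ on $\mb e_n^*\mb u_k$; the term reduces to $\tfrac1{\mu^2}\|\widetilde{\mb\Xi}\|$ times $\tfrac1p\sum_k\norm{(\mb x_0)_k}{}^3 \le \max_k\norm{(\mb x_0)_k}{}\cdot\tfrac1p\sum_k\norm{(\mb x_0)_k}{}^2$ and $n\cdot\tfrac1p\sum_k\norm{(\mb x_0)_k}{}(\mb e_n^*(\mb x_0)_k)^2$, yielding the $\tfrac{n^{3/2}}{\mu^2}\,\mathrm{polylog}(np)\,\|\widetilde{\mb\Xi}\|$ contribution. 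In (ii)--(iv) the scalar factor is $1/\mu$ or $1$, and the $n^{3/2}$-scale of $\norm{\nabla^2_{\mb w}(\mb q^*\cdot)}{}$ together with the $\sqrt n$-scale of $\norm{\mb g_k-\mb a_k}{}$ produce at most $\tfrac{n^2}{\mu}\,\mathrm{polylog}(np)\,\|\widetilde{\mb\Xi}\|$, again after Cauchy--Schwarz on the $\norm{(\mb x_0)_k}{}\abs{\mb e_n^*(\mb x_0)_k}$-type sums. Collecting the four groups and bounding $\mathrm{polylog}$ by $\log^{3/2}(np)$ gives $C_b\max\{n^{3/2}/\mu^2, n^2/\mu\}\log^{3/2}(np)\|\widetilde{\mb\Xi}\|$. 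I expect the Hessian bookkeeping to be the main obstacle: there are many cross terms, and one must be disciplined to never replace $\norm{\nabla_{\mb w}(\mb q^*\mb v)}{}$ or $\norm{\nabla^2_{\mb w}(\mb q^*\mb v)}{}$ by a bound proportional to $\norm{\mb v}{}$, since each such collapse loses a $\sqrt n$ and would inflate $3/2$ to $5/2$. A mild secondary point is that $\widetilde{\mb\Xi}$ itself depends on $\mb X_0$ through the preconditioner; the operator-norm, term-by-term bounds sidestep this because on the favorable event they hold uniformly in $\mb w$ with $\|\widetilde{\mb\Xi}\|$ appearing only as a multiplicative constant on the right-hand side.
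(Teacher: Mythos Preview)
Your telescoping decomposition of the gradient and Hessian differences is correct and matches the paper's proof. The main difference is in how you control the Jacobian terms: you keep the $n$-th coordinate separate via $\norm{\nabla_{\mb w}(\mb q^*\mb v)}{}^2 \le 2\norm{\bar{\mb v}}{}^2 + 8n\, v_n^2$ and then invoke empirical moment bounds that require additional Bernoulli (sparsity) events, whereas the paper uses the single cleaner bound $\norm{\nabla_{\mb w}\mb q(\mb w)\,\mb z}{} \le 3\sqrt{n}\,\norm{\mb z}{\infty}$ together with $\norm{\tfrac{1}{q_n}\mb I + \tfrac{1}{q_n^3}\mb w\mb w^*}{} \le 8n^{3/2}$. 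With those, every column quantity is controlled by $\norm{\mb X_0}{\infty}$ alone, and the only probabilistic ingredient is the event $\{1 \le \norm{\mb X_0}{\infty} \le 4\sqrt{\log(np)}\}$ from Lemma~\ref{lem:X-infinty-tail-bound}, which is exactly where the failure probability $\theta(np)^{-7}+\exp(-0.3\theta np)$ comes from. Your worry about losing a $\sqrt n$ by bounding $\norm{\nabla_{\mb w}(\mb q^*\mb v)}{} \lesssim \sqrt n\,\norm{\mb v}{}$ is well-founded, but the $\ell^\infty$ bound sidesteps it just as effectively as coordinate separation, with less bookkeeping.

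One minor gap to flag: your favorable event includes ``the $n$-th row of $\mb X_0$ has at most $2\theta p$ nonzeros,'' whose Chernoff failure probability is $\exp(-c\theta p)$, not $\exp(-c\theta np)$; this does not fit under the stated $\exp(-0.3\theta np)$ for $n\ge 2$. Since the lemma's bounds carry no $\theta$ factor on the right, you don't actually need the $\theta$-savings from either sparsity event---dropping them and using only $\norm{\mb X_0}{\infty}$ (via $\norm{(\mb x_0)_k}{}\le \sqrt n\,\norm{\mb X_0}{\infty}$, etc.) recovers the stated probability exactly. Also note, as the paper does implicitly, that a mild smallness assumption on $\|\widetilde{\mb \Xi}\|$ (the paper uses $\|\widetilde{\mb \Xi}\|\le 1/(2n)$) is needed to absorb the one $\|\widetilde{\mb \Xi}\|^2$ term arising from $\norm{\mb g_k}{}^2$ into the linear bound.
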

\begin{proof}
See Page~\pageref{proof:comp_pert_bound2} under Section~\ref{sec:proof_geometry_comp}. 
\end{proof}
Combining the above two lemmas, it is easy to see when $p$ is large enough, $\|\widetilde{\mb \Xi}\| = \norm{\mb \Xi}{}$ is then small enough (Lemma~\ref{lem:pert_key_mag}), and hence changes to the gradient and Hessian caused by the perturbation are small. This gives the results presented in Theorem~\ref{thm:geometry_comp}; see Section~\ref{sec:proof_geometry_comp} for a detailed proof. In particular, for the $p$ chosen in Theorem~\ref{thm:geometry_comp}, it holds that 
\begin{align} \label{eq:pert_upper_bound}
\|\widetilde{\mb \Xi}\| \le c c_\star \theta \paren{\max\set{\frac{n^{3/2}}{\mu^2}, \frac{n^2}{\mu}} \log^{3/2}\paren{np}}^{-1}
\end{align}
for a certain constant $c$ which can be made arbitrarily small by making the constant $C$ in $p$ large.

\section{Discussion} \label{sec:discuss}
The dependency of $p$ on $n$ and other parameters could be suboptimal due to several factors: (1) The $\ell^1$ proxy. Derivatives of the $\log\cosh$ function we adopted entail the $\tanh$ function, which is not amenable to effective approximation and affects the sample complexity; (2) Space of geometric characterization. It seems working directly on the sphere (i.e., in the $\mb q$ space) could simplify and possibly improve certain parts of the analysis; (3) Dealing with the complete case. Treating the complete case directly, rather than using (pessimistic) bounds to treat it as a perturbation of the orthogonal case, is very likely to improve the sample complexity. Particularly, general linear transforms may change the space significantly, such that preconditioning and comparing to the orthogonal transforms may not be the most efficient way to proceed. 

It is possible to extend the current analysis to other dictionary settings. Our geometric structures (and algorithms) allow plug-and-play noise analysis. Nevertheless, we believe a more stable way of dealing with noise is to directly extract the whole dictionary, i.e., to consider geometry and optimization (and perturbation) over the orthogonal group. This will require additional nontrivial technical work, but likely feasible thanks to the relatively complete knowledge of the orthogonal group~\cite{edelman1998geometry, absil2009}. A substantial leap forward would be to extend the methodology to recovery of \emph{structured} overcomplete dictionaries, such as tight frames. Though there is no natural elimination of one variable, one can consider the marginalization of the objective function w.r.t. the coefficients and work with implicit functions. \footnote{This recent work~\cite{arora2015simple} on overcomplete DR has used a similar idea. The marginalization taken there is near to the global optimum of one variable, where the function is well-behaved. Studying the global properties of the marginalization may introduce additional challenges.} For the coefficient model, as we alluded to in Section~\ref{sec:lit_review}, our analysis and results likely can be carried through to coefficients with statistical dependence and physical constraints. 

The connection to ICA we discussed in Section~\ref{sec:lit_review} suggests our geometric characterization and algorithms can be modified for the ICA problem. This likely will provide new theoretical insights and computational schemes to ICA. In the surge of theoretical understanding of nonconvex heuristics~\cite{keshavan2010matrix, jain2013low, hardt2014understanding, hardt2014fast, netrapalli2014non, jain2014fast, netrapalli2013phase, candes2015phase, jain2014provable, anandkumar2014guaranteed, yi2013alternating, lee2013near, qu2014finding, lee2013near, agarwal2013learning, agarwal2013exact, arora2013new, arora2015simple, arora2014more}, the initialization plus local refinement strategy mostly differs from practice, whereby random initializations seem to work well, and the analytic techniques developed in that line are mostly fragmented and highly specialized. The analytic and algorithmic framework we developed here holds promise to providing a coherent account of these problems, see~\cite{sun2015nonconvex}. In particular, we have intentionally separated the geometric characterization and algorithm development, hoping to making both parts modular. It is interesting to see how far we can streamline the geometric characterization. Moreover, the separation allows development of more provable and practical algorithms, say in the direction of~\cite{ge2015escaping}. 

\section{Proofs of Technical Results} \label{sec:proof_geometry}
In this section, we provide complete proofs for technical results stated in Section~\ref{sec:geometry}. Before that, let us introduce some convenient notations and common results. Since we deal with BG random variables and random vectors, it is often convenient to write such vector explicitly as $\mb x = \brac{\Omega_1 v_1, \dots, \Omega_n v_n} = \bm \Omega \odot \mb v$, where $\Omega_1, \dots, \Omega_n$ are i.i.d.\ Bernoulli and $v_1, \dots, v_n$ are i.i.d.\ standard normal. For a particular realization of such random vector, we will denote the support as $\mc I \subset [n]$. Due to the particular coordinate map in use, we will often refer to subset $\mc J \doteq \mc I \setminus\Brac{n}$ and the random vectors $\overline{\mb x} \doteq \brac{\Omega_1 v_1, \dots, \Omega_{n-1} v_{n-1}}$ and $\overline{\mb v} \doteq \brac{v_1, \dots, v_{n-1}}$ in $\R^{n-1}$. \js{Naturally, $x_n$ and $q_n(\mb w)$ denote the last coordinates in $\mb x$ and $\mb q$, respectively. Hence, by our notation, $\mb q^*(\mb w) \mb x = \mb w^* \ol{\mb x} + q_n(\mb w) x_n$. By Lemma \ref{lem:derivatives_basic_surrogate} and chain rules, the following are immediate:} 
\begin{align}
\nabla_{\mb w} h_{\mu}\paren{\mb q^*\paren{\mb w} \mb x} & = \tanh\paren{\frac{\mb q^*\paren{\mb w} \mb x}{\mu}} \paren{\overline{\mb x} - \frac{x_n}{q_n\paren{\mb w}} \mb w},\label{eqn:lse-gradient} \\
\nabla^2_{\mb w} h_{\mu}\paren{\mb q^*\paren{\mb w} \mb x} & = \frac{1}{\mu} \brac{1-\tanh^2\paren{\frac{\mb q^*\paren{\mb w} \mb x}{\mu}}} \paren{\overline{\mb x} - \frac{x_n}{q_n\paren{\mb w}} \mb w} \paren{\overline{\mb x} - \frac{x_n}{q_n\paren{\mb w}} \mb w}^* \nonumber \\
& \qquad - x_n \tanh\paren{\frac{\mb q^*\paren{\mb w} \mb x}{\mu}} \paren{\frac{1}{q_n\paren{\mb w}} \mb I + \frac{1}{q_n^3\paren{\mb w}} \mb w \mb w^*}. \label{eqn:lse-hessian}
\end{align}

\subsection{Proofs for Section~\ref{sec:geo_results_orth}}
\subsubsection{Proof of Proposition~\ref{prop:geometry_asymp_curvature}} \label{sec:proof_geo_asym_curvature}
The proof involves some delicate analysis, particularly polynomial approximation of the function $f\paren{t} = 1/\paren{1+t}^2$ over $t \in \left[0, 1\right]$. This is naturally induced by the $1-\tanh^2\paren{\cdot}$ function. The next lemma characterizes one polynomial approximation of $f\paren{t}$. 

\begin{lemma} \label{lem:neg_curvature_norm_bound} 
Consider $f(t) = 1/(1+t)^2$ for $t \in \brac{0,1}$. For every $T > 1$, there is a sequence $b_0, b_1, \dots$, with $\norm{\mb b}{\ell^1} = T < \infty$, such that the polynomial $p(t) = \sum_{k=0}^\infty b_k t^k$ satisfies 
\begin{align*}
\norm{f-p}{L^1[0,1]} \;\le\; \frac{1}{2\sqrt{T}}, \quad \norm{f-p}{L^\infty[0,1]} \;\le\; \frac{1}{\sqrt{T}}. 
\end{align*}
In particular, one can choose $b_k = (-1)^k(k+1)\beta^k$ with $\beta = 1-1/\sqrt{T} < 1$ such that 
\begin{align*}
p\paren{t} = \frac{1}{\paren{1+\beta t}^2} = \sum_{k=0}^\infty (-1)^k(k+1)\beta^k t^k. 
\end{align*}
Moreover, such sequence satisfies $0<\sum_{k=0}^\infty \frac{b_k}{(1+k)^3}<\sum_{k=0}^\infty \frac{\abs{b_k}}{(1+k)^3}<2$. 
\end{lemma}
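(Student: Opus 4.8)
The plan is to verify directly that the explicit choice $b_k = (-1)^k(k+1)\beta^k$ with $\beta = 1-1/\sqrt T$ does everything asked, then read off the two norm bounds from closed-form expressions. First I would record the elementary generating-function identity $\sum_{k\ge 0}(-1)^k(k+1)x^k = (1+x)^{-2}$, valid for $|x|<1$. Since $T>1$ forces $\beta\in(0,1)$, we have $|\beta t|<1$ for all $t\in[0,1]$, so substituting $x=\beta t$ is legitimate and gives $p(t) = (1+\beta t)^{-2} = \sum_{k\ge0}(-1)^k(k+1)\beta^k t^k$, as claimed. The coefficient norm is then a geometric-type sum, $\norm{\mb b}{\ell^1} = \sum_{k\ge0}(k+1)\beta^k = (1-\beta)^{-2} = T$, which pins down the normalization.

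Next I would obtain a clean formula for the error. Because $0<\beta<1$, we have $1\le 1+\beta t\le 1+t$ on $[0,1]$, hence $f\le p$ pointwise and $\abs{f-p}=p-f$. Factoring the difference of squares in the numerator,
\[
p(t)-f(t) = \frac{1}{(1+\beta t)^2}-\frac{1}{(1+t)^2} = \frac{(1-\beta)\,t\,\bigl(2+(1+\beta)t\bigr)}{(1+\beta t)^2(1+t)^2}.
\]
For the $L^\infty$ bound it suffices to show the factor multiplying $(1-\beta)$ is at most $1$ on $[0,1]$: using $2+(1+\beta)t \le 2(1+t)$ and $(1+\beta t)^2\ge 1$ reduces the claim to $2t/(1+t)\le 1$, which holds precisely because $t\le 1$; this yields $\norm{f-p}{L^\infty[0,1]}\le 1-\beta = 1/\sqrt T$. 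For the $L^1$ bound I would just integrate the two rational functions: $\int_0^1 (1+\beta t)^{-2}\,dt = (1+\beta)^{-1}$ and $\int_0^1(1+t)^{-2}\,dt = 1/2$, so $\norm{f-p}{L^1[0,1]} = \tfrac{1}{1+\beta}-\tfrac12 = \tfrac{1-\beta}{2(1+\beta)}\le \tfrac{1-\beta}{2} = \tfrac{1}{2\sqrt T}$, where the last inequality uses $1+\beta>1$.

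Finally, for the weighted-coefficient estimate I would note that $\tfrac{b_k}{(1+k)^3} = \tfrac{(-1)^k\beta^k}{(k+1)^2}$. The series $\sum_k \tfrac{(-1)^k\beta^k}{(k+1)^2}$ is alternating with strictly decreasing positive terms (the ratio of consecutive magnitudes is $\beta\,(k+1)^2/(k+2)^2<1$), so its sum is trapped between consecutive partial sums; in particular it exceeds $1-\beta/4>0$ and is strictly below $\sum_k \tfrac{\beta^k}{(k+1)^2} = \sum_k \tfrac{\abs{b_k}}{(1+k)^3}$ (the odd-index terms flip sign), and the latter is bounded above by $\sum_{k\ge0}(k+1)^{-2} = \pi^2/6 < 2$.

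None of the steps is genuinely hard; the only place that needs a bit of care is the $L^\infty$ estimate, where one must avoid over-crude bounding of the denominator and instead exploit the $t/(1+t)$ cancellation, so that the constant comes out as exactly $1/\sqrt T$ and not something larger. Everything else is a geometric-series summation and two elementary integrals.
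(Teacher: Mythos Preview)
Your proof is correct and follows essentially the same route as the paper: both verify the generating-function identity, compute $\norm{\mb b}{\ell^1}=(1-\beta)^{-2}$, integrate $p$ and $f$ exactly for the $L^1$ bound, factor $(1+t)^2-(1+\beta t)^2$ for the $L^\infty$ bound, and use the alternating-series structure for the weighted-coefficient estimate. The only cosmetic differences are that you supply a clean inequality chain for the $L^\infty$ step where the paper just asserts the maximum, and you cap the absolute weighted sum by $\pi^2/6<2$ whereas the paper telescopes $\sum_{k\ge 1}1/(k(k+1))$ to reach the same conclusion.
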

\begin{proof}
See page~\pageref{proof:poly_approx_tanh} under Section~\ref{sec:app_aux_results}. 
\end{proof}

\begin{lemma}\label{lem:neg_curvature_tanh_square_1} 
Let $X\sim \mc N\paren{0,\sigma_X^2}$ and $Y\sim \mc N\paren{0,\sigma_Y^2}$ be independent. We have
\js{
\begin{multline*}
\bb E\brac{\paren{1-\tanh^2\paren{\frac{X+Y}{\mu}} } X^2 \indicator{X+Y>0} } \le
 \frac{1}{\sqrt{2\pi}} \frac{\mu \sigma_X^2 \sigma_Y^2}{\paren{\sigma_X^2 + \sigma_Y^2}^{3/2}} + \frac{3}{4\sqrt{2\pi}} \frac{\sigma_X^2 \mu^3}{\paren{\sigma_X^2 + \sigma_Y^2}^{5/2}} \paren{3\mu^2 + 4 \sigma_X^2}. 
\end{multline*}
}
\end{lemma}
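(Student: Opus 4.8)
The plan is to collapse the two–dimensional Gaussian expectation into one–dimensional integrals and then bound the latter with a simple exponential envelope for $1-\tanh^2$. First set $S \doteq X+Y \sim \mc N(0,\sigma_S^2)$ with $\sigma_S^2 \doteq \sigma_X^2+\sigma_Y^2$, and decompose $X$ orthogonally against $S$: writing $\alpha \doteq \sigma_X^2/\sigma_S^2$ and $Z \doteq X - \alpha S$, a direct covariance computation gives $Z \perp S$ with $Z \sim \mc N(0,\tau^2)$, $\tau^2 \doteq \sigma_X^2\sigma_Y^2/\sigma_S^2$. Expanding $X^2 = \alpha^2 S^2 + 2\alpha SZ + Z^2$ and taking the conditional expectation over $Z$ given $S$ — which annihilates the cross term and replaces $Z^2$ by $\tau^2$ — yields
\begin{align*}
\bb E\brac{\paren{1-\tanh^2\paren{\tfrac{X+Y}{\mu}}}X^2\indicator{X+Y>0}} = \alpha^2\, \bb E\brac{\paren{1-\tanh^2\paren{\tfrac{S}{\mu}}}S^2\indicator{S>0}} + \tau^2\, \bb E\brac{\paren{1-\tanh^2\paren{\tfrac{S}{\mu}}}\indicator{S>0}}.
\end{align*}
So it suffices to estimate $\bb E[(1-\tanh^2(S/\mu))\indicator{S>0}]$ and $\bb E[(1-\tanh^2(S/\mu))S^2\indicator{S>0}]$ for a single mean-zero Gaussian $S$.

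For the first quantity I would bound the density of $S$ by its peak value $1/(\sqrt{2\pi}\,\sigma_S)$ on $(0,\infty)$ and use the elementary antiderivative identity $\int_0^\infty (1-\tanh^2(s/\mu))\,ds = \mu$, giving $\bb E[(1-\tanh^2(S/\mu))\indicator{S>0}] \le \mu/(\sqrt{2\pi}\,\sigma_S)$; multiplying by $\tau^2 = \sigma_X^2\sigma_Y^2/\sigma_S^2$ reproduces exactly the first term of the claimed bound. For the second quantity the $S^2$ weight forces a little more care: here I would invoke the one-sided exponential envelope $1-\tanh^2(z) \le 4e^{-2z}$ for $z \ge 0$ (immediate from $\cosh z \ge \tfrac12 e^{z}$), so that $\bb E[(1-\tanh^2(S/\mu))S^2\indicator{S>0}] \le 4\,\bb E[e^{-2S/\mu}S^2\indicator{S>0}]$. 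The remaining integral is a one-sided Gaussian integral against an exponential, which one evaluates either crudely (pull out the density peak and use $\int_0^\infty e^{-2s/\mu}s^2\,ds = \mu^3/4$) or sharply (complete the square in $-2s/\mu - s^2/(2\sigma_S^2)$ and keep the leading polynomial-in-$\mu$ corrections in the resulting tail integral). Substituting $\alpha^2 = \sigma_X^4/\sigma_S^4$ then gives a bound of the form $(\sigma_X^2/\sigma_S^5)\cdot\mu^3\cdot(\text{polynomial in }\mu, \sigma_X^2)$, which is at most the stated $\tfrac{3}{4\sqrt{2\pi}}\tfrac{\sigma_X^2\mu^3}{\sigma_S^5}(3\mu^2+4\sigma_X^2)$; summing the two pieces finishes the proof.

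The one genuinely delicate point is that the indicator $\indicator{S>0}$ cannot be symmetrized away or simply dropped: discarding it would expose the exponential envelope on the half-line where it blows up, and once combined with the tilting factor $e^{-2S/\mu}$ the $S^2$-weighted contribution is not merely half of the two-sided integral. Keeping the conditioning-on-$S$ representation is exactly what makes both the event $\{S>0\}$ and the $\tanh$ depend only on the single variable $S$, so that every intermediate integral is a routine one-sided one-dimensional Gaussian integral. I expect the bulk of the effort (and the reason the paper flags such calculations as delicate) to be the bookkeeping of the completion of the square and the lower-order terms, so that the constants line up with the slightly loose but convenient form $3\mu^2 + 4\sigma_X^2$ that is convenient for the downstream arguments.
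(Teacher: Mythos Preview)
Your proof is correct and takes a genuinely different route from the paper. The paper never decomposes $X$ against $S=X+Y$; instead it writes $1-\tanh^2((X+Y)/\mu)=4Z/(1+Z)^2$ with $Z=\exp(-2(X+Y)/\mu)$, upper-bounds $1/(1+Z)^2$ by the power series $p_\beta(Z)=\sum_{k\ge 0}(-1)^k(k+1)\beta^k Z^k$ of Lemma~\ref{lem:neg_curvature_norm_bound}, evaluates each $\bb E[Z^{k+1}X^2\indicator{X+Y>0}]$ in closed form via Lemma~\ref{lem:aux_asymp_proof_a}, applies the Type~I $\Phi^c$ estimates alternately to even and odd $k$, and finally sends $\beta\to 1$. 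Your regression $X=\alpha S+Z$ collapses everything to two one-dimensional integrals in $S$; the $\tau^2$ piece matches the first term of the claim exactly via the antiderivative $\int_0^\infty(1-\tanh^2(s/\mu))\,ds=\mu$, and the crude version of the $\alpha^2$ piece already gives $\sigma_X^4\mu^3/(\sqrt{2\pi}\,\sigma_S^5)$, which undercuts the stated second term by a factor of~$3$. What the paper's heavier machinery buys is reusability: the same polynomial-approximation template is rerun for the $XY$ estimate (Lemma~\ref{lem:neg_curvature_tanh_square_3}) and, crucially, for the \emph{lower} bound in Lemma~\ref{lem:neg_curvature_tanh_square_2}, where a density-peak argument in your style would not yield a lower bound. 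One small quibble: your closing remark that the $S^2$-weighted half-line integral ``is not merely half of the two-sided integral'' is off --- $1-\tanh^2$ is even, so it is exactly half; the real reason to keep $\indicator{S>0}$ is just that the envelope $1-\tanh^2(z)\le 4e^{-2z}$ is one-sided, which you already noted.
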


\begin{proof}
For $X + Y \ge 0$, let $Z = \exp\paren{-2(X+Y)/\mu}  \in [0,1]$, then 
\begin{align*}
1-\tanh^2\paren{\frac{X + Y}{\mu}} = \frac{4Z}{\paren{1+Z}^2}.
\end{align*} 
First fix any $T > 1$. By Lemma~\ref{lem:neg_curvature_norm_bound}, we choose the polynomial $p_{\beta}\paren{Z} = \frac{1}{\paren{1+\beta Z}^2}$ with $\beta = 1 - 1/\sqrt{T}$ to upper bound $f\paren{Z} = \frac{1}{\paren{1+Z}^2}$. So we have 
\begin{align*}
\bb E\brac{\paren{1-\tanh^2\paren{\frac{X+Y}{\mu}} } X^2 \indicator{X+Y>0} } 
& = 4\bb E\brac{Zf\paren{Z} X^2 \indicator{X+Y>0} } \nonumber \\
& \le 4\bb E\brac{Zp_{\beta}\paren{Z} X^2 \indicator{X+Y>0} } \nonumber \\
& = 4\sum_{k=0}^\infty \Brac{b_k\bb E\brac{Z^{k+1}X^2\indicator{X+Y>0} }  }, 
\end{align*}
where $b_k = (-1)^k(k+1)\beta^k$, and the exchange of infinite summation and expectation above is justified due to 
\begin{align*}
\sum_{k=0}^\infty \abs{b_k}\bb E\brac{Z^{k+1}X^2\indicator{X+Y>0} } \le \sum_{k=0}^\infty \abs{b_k}\bb E\brac{X^2\indicator{X+Y>0} } \le \sigma_X^2  \sum_{k=0}^\infty \abs{b_k} < \infty
\end{align*} 
and the dominated convergence theorem (see, e.g., theorem 2.24 and 2.25 of~\cite{folland1999real}). By Lemma~\ref{lem:aux_asymp_proof_a}, we have 
\begin{align*}
& \sum_{k=0}^\infty \Brac{b_k\bb E\brac{Z^{k+1}X^2\indicator{X+Y>0} }  } \\
=\; & \sum_{k=0}^\infty \paren{-\beta}^k \paren{k+1} \left[\paren{\sigma_X^2+\frac{4\paren{k+1}^2}{\mu^2}\sigma_X^4}\exp\paren{\frac{2\paren{k+1}^2}{\mu^2}\paren{\sigma_X^2 + \sigma_Y^2}}\Phi^c\paren{\frac{2\paren{k+1}}{\mu}\sqrt{\sigma_X^2 + \sigma_Y^2}} \right. \\
& \qquad \left. - \frac{2\paren{k+1}}{\mu} \frac{\sigma_X^4}{\sqrt{2\pi}\sqrt{\sigma_X^2 + \sigma_Y^2}}\right] \\
\le \; & \frac{1}{\sqrt{2\pi}}\sum_{k=0}^\infty \paren{-\beta}^k \paren{k+1}\brac{\frac{\sigma_X^2 \mu}{2\paren{k+1}\sqrt{\sigma_X^2 + \sigma_Y^2}} - \frac{\sigma_X^2 \mu^3}{8\paren{k+1}^3 \paren{\sigma_X^2 + \sigma_Y^2}^{3/2}} - \frac{\mu \sigma_X^4}{2\paren{k+1}\paren{\sigma_X^2 + \sigma_Y^2}^{3/2}}} \nonumber \\
& \qquad + \frac{3}{\sqrt{2\pi}}\sum_{k=0}^\infty \beta^k \paren{k+1} \paren{\sigma_X^2+\frac{4\paren{k+1}^2}{\mu^2}\sigma_X^4} \frac{\mu^5}{32\paren{k+1}^5 \paren{\sigma_X^2 + \sigma_Y^2}^{5/2}}, 
\end{align*}
where we have applied Type I upper and lower bounds for $\Phi^c\paren{\cdot}$ to even $k$ and odd $k$ respectively and rearranged the terms to obtain the last line. Using the following estimates (see Lemma~\ref{lem:neg_curvature_norm_bound})
\begin{align*}
\sum_{k=0}^\infty \paren{-\beta}^k = \frac{1}{1+\beta}, \quad \sum_{k=0}^\infty \frac{b_k}{\paren{k+1}^3} \ge 0, \quad  \sum_{k=0}^\infty \frac{\abs{b_k}}{\paren{k+1}^5} \le \sum_{k=0}^\infty \frac{\abs{b_k}}{\paren{k+1}^3} \le 2, 
\end{align*}
we obtain 
\begin{align*}
\sum_{k=0}^\infty \Brac{b_k\bb E\brac{Z^{k+1}X^2\indicator{X+Y>0} } } 
 \le 
 \frac{1}{2\sqrt{2\pi}} \frac{\mu \sigma_X^2 \sigma_Y^2}{\paren{\sigma_X^2 + \sigma_Y^2}^{3/2}} \frac{1}{1+\beta} + \frac{3}{16\sqrt{2\pi}} \frac{\sigma_X^2 \mu^3}{\paren{\sigma_X^2 + \sigma_Y^2}^{5/2}} \paren{3\mu^2 + 4 \sigma_X^2}. 
\end{align*}
\js{Since the above holds for any $T > 1$, we obtain the claimed result by letting $T \to \infty$ such that $\beta \to 1$. }
\end{proof}

\js{
\begin{lemma}\label{lem:neg_curvature_tanh_square_3} 
Let $X\sim \mc N\paren{0,\sigma_X^2}$ and $Y\sim \mc N\paren{0,\sigma_Y^2}$ be independent. We have
\begin{align*}
\bb E\brac{\paren{1-\tanh^2\paren{\frac{X+Y}{\mu}} } XY \indicator{X+Y>0} } \ge -\frac{1}{\sqrt{2\pi}} \frac{\mu \sigma_X^2 \sigma_Y^2}{\paren{\sigma_X^2 + \sigma_Y^2}^{3/2}} - \frac{3}{\sqrt{2\pi}} \frac{\sigma_X^2 \sigma_Y^2 \mu^3}{\paren{\sigma_X^2 + \sigma_Y^2}^{5/2}}.
\end{align*}
\end{lemma}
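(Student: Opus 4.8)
The plan is to collapse the bivariate expectation to a one-dimensional Gaussian integral by conditioning on $S \doteq X+Y$, which is natural here because $1-\tanh^2\paren{(X+Y)/\mu}$ and $\indicator{X+Y>0}$ depend only on $S$. Writing $\sigma_S^2 \doteq \sigma_X^2+\sigma_Y^2$ and $\tau^2 \doteq \sigma_X^2\sigma_Y^2/\sigma_S^2$, joint Gaussianity of $(X,S)$ gives $X\mid\{S=s\}\sim\mc N\paren{\tfrac{\sigma_X^2}{\sigma_S^2}s,\;\tau^2}$; substituting $Y=s-X$ and computing the first two conditional moments yields the key identity
\[
\bb E\brac{XY \mid S = s} \;=\; \tau^2\paren{\frac{s^2}{\sigma_S^2} - 1}.
\]
By the tower rule the target expectation therefore equals $\tau^2\,\bb E\brac{\paren{1-\tanh^2(S/\mu)}\paren{S^2/\sigma_S^2 - 1}\indicator{S>0}}$ with $S\sim\mc N\paren{0,\sigma_S^2}$.

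Next I would lower-bound this quantity. The $S^2$ contribution is nonnegative (since $1-\tanh^2\ge 0$) and may simply be discarded; for the remaining term, the symmetry $S\leftrightarrow -S$ gives $\bb E\brac{(1-\tanh^2(S/\mu))\indicator{S>0}} = \tfrac12\bb E\brac{\mathrm{sech}^2(S/\mu)}$, and after the change of variables $s=\mu u$, bounding the Gaussian density factor by its value at $0$ and using $\int_{-\infty}^\infty \mathrm{sech}^2(u)\,du = 2$ shows this is at most $\mu/(\sqrt{2\pi}\,\sigma_S)$. Hence the target expectation is at least $-\tau^2\mu/(\sqrt{2\pi}\,\sigma_S) = -\mu\sigma_X^2\sigma_Y^2/(\sqrt{2\pi}(\sigma_X^2+\sigma_Y^2)^{3/2})$, which is in fact stronger than the claimed bound — the extra $\mu^3$ term in the statement is slack kept only to mirror the form of Lemma~\ref{lem:neg_curvature_tanh_square_1} — so the lemma follows a fortiori.

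If instead one prefers to stay parallel to the proof of Lemma~\ref{lem:neg_curvature_tanh_square_1}, the alternative is: restrict to $\{X+Y>0\}$, set $Z=\exp(-2(X+Y)/\mu)\in[0,1]$ so $1-\tanh^2((X+Y)/\mu)=4Zf(Z)$ with $f(Z)=1/(1+Z)^2$, expand $f$ through the polynomials $p_\beta$ of Lemma~\ref{lem:neg_curvature_norm_bound}, and evaluate the moments $\bb E\brac{Z^{k+1}XY\indicator{X+Y>0}}$ via an analogue of Lemma~\ref{lem:aux_asymp_proof_a}; letting $\beta\to 1$ and rearranging the resulting alternating series with the Type~I/II bounds on $\Phi^c$ reproduces the stated two-term bound (the coefficient $3$ and the $\mu^3$ being artifacts of the tail estimates). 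Either way the computation is short; the only place demanding care is the conditional-moment identity above (or, in the series route, applying the $\Phi^c$ bounds in the correct direction for even versus odd $k$, exactly the subtlety already handled in Lemma~\ref{lem:neg_curvature_tanh_square_1}), and I do not anticipate any genuine obstacle beyond these routine checks.
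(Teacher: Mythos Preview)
Your proof is correct and takes a genuinely different route from the paper's. The paper follows exactly the template you sketch in your final paragraph: it writes $1-\tanh^2((X+Y)/\mu)=4Zf(Z)$ with $Z=\exp(-2(X+Y)/\mu)$, approximates $f$ by the polynomial $p_\beta$ from Lemma~\ref{lem:neg_curvature_norm_bound}, evaluates the resulting moments $\bb E\brac{Z^{k+1}XY\indicator{X+Y>0}}$ via the closed form in Lemma~\ref{lem:aux_asymp_proof_a} (equation~\eqref{eqn:lem-aux-aymp-proof-a-5}), applies the Type~I bounds on $\Phi^c$ with alternating direction for even and odd $k$, controls the approximation error $\bb E\brac{(p_\beta(Z)-f(Z))ZXY\indicator{X+Y>0}}$ by $\norm{p-f}{L^\infty[0,1]}\cdot\expect{\abs{X}}\expect{\abs{Y}}$, and finally lets $T\to\infty$. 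This yields exactly the two-term bound in the statement, with the $\mu^3$ term emerging from the Type~I tail estimates.

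Your conditioning argument is considerably more elementary: the identity $\bb E\brac{XY\mid S=s}=\tau^2(s^2/\sigma_S^2-1)$ collapses the problem to a single Gaussian integral, the $S^2$ contribution is discarded for free, and the crude bound $\text{density} \le 1/(\sqrt{2\pi}\sigma_S)$ together with $\int\mathrm{sech}^2=2$ finishes it. You get the sharper bound $-\mu\sigma_X^2\sigma_Y^2/\bigl(\sqrt{2\pi}(\sigma_X^2+\sigma_Y^2)^{3/2}\bigr)$ without the $\mu^3$ slack. The paper's series machinery is reused from Lemmas~\ref{lem:neg_curvature_tanh_square_1} and~\ref{lem:neg_curvature_tanh_square_2} and keeps all three proofs uniform, but for this particular lemma your approach is both shorter and tighter.
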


\begin{proof}
For $X + Y > 0$, let $z = \exp\paren{-2(X + Y)/\mu } \in [0,1]$, then 
\begin{align*}
1-\tanh^2\paren{\frac{X+Y}{\mu}} = \frac{4Z}{\paren{1+Z}^2}.\
\end{align*} 
First fix any $T > 1$. By Lemma~\ref{lem:neg_curvature_norm_bound}, we choose the polynomial $p_{\beta}\paren{Z} = \frac{1}{\paren{1+\beta Z}^2}$ with $\beta = 1 - 1/\sqrt{T}$ to upper bound $f\paren{Z} = \frac{1}{\paren{1+Z}^2}$. So we have 
\begin{align*}
& \bb E\brac{\paren{1-\tanh^2\paren{\frac{X+Y}{\mu}} } XY \indicator{X+Y>0} } \\
=\; & 4\bb E\brac{Zf\paren{Z} XY \indicator{X+Y>0} } \nonumber \\
=\; & 4\bb E\brac{Zp_{\beta}\paren{Z} XY \indicator{X+Y>0} } - 4\bb E\brac{Z\paren{p_{\beta}\paren{Z} -f(Z)} XY \indicator{X+Y>0} }.  
\end{align*}
Now the first term can be rewritten as 
\begin{align*}
4\bb E\brac{Zp_{\beta}\paren{Z} XY \indicator{X+Y>0} } = 4\sum_{k=0}^\infty \Brac{b_k\bb E\brac{Z^{k+1}X^2\indicator{X+Y>0} }  },
\end{align*}
where $b_k = (-1)^k(k+1)\beta^k$, and exchange of infinite summation and expectation is justified, due to 
\begin{align*}
\sum_{k=0}^\infty \abs{b_k}\bb E\brac{Z^{k+1}XY\indicator{X+Y>0} } \le \sum_{k=0}^\infty \abs{b_k}\bb E\brac{\abs{XY}\indicator{X+Y>0} } \le \max\set{\sigma_X^2, \sigma_Y^2 } \sum_{k=0}^\infty \abs{b_k} < \infty
\end{align*} 
and the dominated convergence theorem (see, e.g., theorem 2.24 and 2.25 of~\cite{folland1999real}). By Lemma~\ref{lem:aux_asymp_proof_a}, we have 
\begin{align*}
& \sum_{k=0}^\infty \Brac{b_k\bb E\brac{Z^{k+1}XY\indicator{X+Y>0} }  } \\
=\; & \sum_{k=0}^\infty \paren{-\beta}^k \paren{k+1} \left[\frac{4\paren{k+1}^2}{\mu^2}\sigma_X^2 \sigma_Y^2 \exp\paren{\frac{2\paren{k+1}^2}{\mu^2}\paren{\sigma_X^2 + \sigma_Y^2}}\Phi^c\paren{\frac{2\paren{k+1}}{\mu}\sqrt{\sigma_X^2 + \sigma_Y^2}} \right. \\
& \qquad \left. - \frac{2\paren{k+1}}{\mu} \frac{\sigma_X^2 \sigma_Y^2}{\sqrt{2\pi}\sqrt{\sigma_X^2 + \sigma_Y^2}}\right] \\
\ge \; & -\frac{1}{\sqrt{2\pi}}\sum_{k=0}^\infty \paren{-\beta}^k \paren{k+1} \frac{\mu \sigma_X^2 \sigma_Y^2}{2(k+1)\paren{\sigma_X^2 + \sigma_Y^2}^{3/2}} -\frac{3}{8\sqrt{2\pi}}\sum_{k=0}^\infty \beta^k \paren{k+1} \frac{\mu^3 \sigma_X^2 \sigma_Y^2}{\paren{k+1}^3 \paren{\sigma_X^2 + \sigma_Y^2}^{5/2}}, 
\end{align*}
where we have applied Type I lower and upper bounds for $\Phi^c\paren{\cdot}$ to even $k$ and odd $k$ respectively and rearranged the terms to obtain the last line. Using the following estimates (see Lemma~\ref{lem:neg_curvature_norm_bound})
\begin{align*}
\sum_{k=0}^\infty \paren{-\beta}^k = \frac{1}{1+\beta}, \quad \sum_{k=0}^\infty \frac{\abs{b_k}}{\paren{k+1}^3} \le 2, 
\end{align*}
we obtain 
\begin{align*}
\sum_{k=0}^\infty \Brac{b_k\bb E\brac{Z^{k+1}XY\indicator{X+Y>0} } } 
 \ge 
 -\frac{1}{2\sqrt{2\pi}} \frac{\mu \sigma_X^2 \sigma_Y^2}{\paren{\sigma_X^2 + \sigma_Y^2}^{3/2}} \frac{1}{1+\beta} - \frac{3}{4\sqrt{2\pi}} \frac{\sigma_X^2 \sigma_Y^2 \mu^3}{\paren{\sigma_X^2 + \sigma_Y^2}^{5/2}}.  
\end{align*}
For the second term, by Lemma \ref{lem:neg_curvature_norm_bound}, we have
\begin{align*}
\bb E\brac{\paren{p_{\beta}(Z)-f(Z)} Z XY\indicator{X+Y>0}} 
& \le \norm{p-f}{L^\infty[0,1]} \bb E\brac{Z\abs{XY}\indicator{X+Y>0}} \\
& \le \norm{p-f}{L^\infty[0,1]} \expect{\abs{X}} \expect{\abs{Y}} \quad (\text{$Z \le 1$, $X, Y$ independent})\\
& \le \frac{2}{\pi\sqrt{T}} \sigma_X \sigma_Y. 
\end{align*}
Thus, 
\begin{align*}
\bb E\brac{\paren{1-\tanh^2\paren{\frac{X+Y}{\mu}} } XY \indicator{X+Y>0} } \ge -\frac{2}{\sqrt{2\pi}} \frac{\mu \sigma_X^2 \sigma_Y^2}{\paren{\sigma_X^2 + \sigma_Y^2}^{3/2}} \frac{1}{1+\beta} - \frac{3}{\sqrt{2\pi}} \frac{\sigma_X^2 \sigma_Y^2 \mu^3}{\paren{\sigma_X^2 + \sigma_Y^2}^{5/2}} - \frac{8}{\pi\sqrt{T}} \sigma_X \sigma_Y. 
\end{align*}
Since the above bound holds for any $T > 1$, we obtain the claimed result by letting $T \to \infty$ such that $\beta \to 1$ and $1/\sqrt{T} \to 0$. 
\end{proof}
}

\begin{lemma} \label{lem:neg_curvature_tanh_square_2}
Let $X\sim \mc N\paren{0,\sigma_X^2}$ and $Y\sim \mc N\paren{0,\sigma_Y^2}$ be independent. We have 
\js{
\begin{align*}
\bb E\left[ \tanh\left( \frac{X + Y }{\mu} \right) X \right] \ge 
\sqrt{\frac{2}{\pi}}\frac{\sigma_X^2}{\sqrt{\sigma_X^2 + \sigma_Y^2}} - \frac{2\sigma_X^2\mu^2}{\sqrt{2\pi} \paren{\sigma_X^2 + \sigma_Y^2}^{3/2}} - \frac{3\sigma_X^2\mu^4}{2\sqrt{2\pi}\paren{\sigma_X^2 + \sigma_Y^2}^{5/2}}. 
\end{align*}}
\end{lemma}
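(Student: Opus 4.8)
The plan is to collapse the two–dimensional integral onto a single Gaussian coordinate and then reduce everything to tail integrals of the standard normal. Since $X$ and $Y$ are independent centered Gaussians, $S = X+Y \sim \mc N\paren{0,\sigma_X^2+\sigma_Y^2}$ and $\expect{X\mid S} = \tfrac{\sigma_X^2}{\sigma_X^2+\sigma_Y^2}\,S$; hence, by the tower property, $\expect{X\phi(S)} = \tfrac{\sigma_X^2}{\sigma_X^2+\sigma_Y^2}\,\expect{S\,\phi(S)}$ for any measurable $\phi$ with finite moments. Applying this with $\phi(s)=\tanh(s/\mu)$ reduces the claim to a lower bound on $\expect{S\tanh(S/\mu)}$ for a single centered Gaussian $S$ of variance $\sigma_S^2 = \sigma_X^2+\sigma_Y^2$, which is then scaled by $\sigma_X^2/\sigma_S^2$.

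Next I would use the oddness of $\tanh$: since $s\tanh(s/\mu) = \abs{s}\tanh(\abs{s}/\mu)\ge 0$, we get $\expect{S\tanh(S/\mu)} = 2\,\expect{S\tanh(S/\mu)\indicator{S>0}}$, and on $\set{S>0}$ the elementary one–sided estimate $\tanh(t)\ge 1-2e^{-2t}$ (immediate from $\tanh t = 1 - \tfrac{2e^{-2t}}{1+e^{-2t}}$ and $1+e^{-2t}\ge 1$) yields
\[
\expect{S\tanh(S/\mu)} \;\ge\; 2\,\expect{S\indicator{S>0}} - 4\,\expect{S\,e^{-2S/\mu}\indicator{S>0}}.
\]
The first term is the half–normal first moment, $\expect{S\indicator{S>0}} = \sigma_S/\sqrt{2\pi}$, which after the scaling by $\sigma_X^2/\sigma_S^2$ produces exactly the main term $\sqrt{2/\pi}\,\sigma_X^2/\sqrt{\sigma_X^2+\sigma_Y^2}$.

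It remains to bound the correction $\expect{S e^{-2S/\mu}\indicator{S>0}}$ from above; crucially, everything here lives on the positive half–line, so there are no sign subtleties. Completing the square in the exponent and substituting $u = \paren{s+2\sigma_S^2/\mu}/\sigma_S$, this expectation equals $\tfrac{\sigma_S}{\sqrt{2\pi}} - \tfrac{2\sigma_S^2}{\mu}\,e^{2\sigma_S^2/\mu^2}\,\Phi^c\paren{2\sigma_S/\mu}$. Inserting the Type I (Mills–ratio) lower bound $\Phi^c(x)\ge \tfrac{1}{\sqrt{2\pi}}\paren{x^{-1}-x^{-3}}e^{-x^2/2}$ (trivial for $x\le 1$, useful otherwise) at $x = 2\sigma_S/\mu$ cancels the $e^{2\sigma_S^2/\mu^2}$ blow–up and leaves $\expect{S e^{-2S/\mu}\indicator{S>0}}\le \mu^2/\paren{4\sqrt{2\pi}\,\sigma_S}$. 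Substituting back gives $\expect{S\tanh(S/\mu)}\ge \sqrt{2/\pi}\,\sigma_S - \mu^2/\paren{\sqrt{2\pi}\,\sigma_S}$, and scaling by $\sigma_X^2/\sigma_S^2$ produces a lower bound at least as large as the asserted one (our remainder is the single term $-\,\sigma_X^2\mu^2/\paren{\sqrt{2\pi}\,\sigma_S^3}$, which dominates the more generous $O(\mu^2)+O(\mu^4)$ remainder in the statement).

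The computation is short, so I expect the only delicate point to be extracting an $O(\mu^2)$ remainder rather than a weaker $O(\mu)$ one: a lazy bound such as $s\,e^{-2s/\mu}\le \mu/(2e)$ on $\set{s>0}$ only gives an $O(\mu)$ error, so the exact evaluation of $\expect{S e^{-2S/\mu}\indicator{S>0}}$ via completion of the square, paired with the right Gaussian–tail estimate, is essential. If one prefers to stay closer to the style of the neighboring lemmas, the same quantities can instead be assembled from the closed forms recorded in Lemma~\ref{lem:aux_asymp_proof_a} (using the analogue of $\expect{Z^{k+1}X\indicator{X+Y>0}}$ with $Z = e^{-2(X+Y)/\mu}$) together with the polynomial device of Lemma~\ref{lem:neg_curvature_norm_bound} to control the $\tfrac{z}{1+z}$ expansion, paralleling the proofs of Lemmas~\ref{lem:neg_curvature_tanh_square_1} and~\ref{lem:neg_curvature_tanh_square_3}; the conditioning route above has the advantage of avoiding the sign bookkeeping that a direct expansion of $\tanh$ against the signed variable $X$ would require.
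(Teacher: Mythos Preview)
Your argument is correct and in fact yields a tighter lower bound than the one stated (your remainder $-\sigma_X^2\mu^2/\bigl(\sqrt{2\pi}\,(\sigma_X^2+\sigma_Y^2)^{3/2}\bigr)$ dominates the two negative terms in the lemma). The conditioning step $\expect{X\tanh(S/\mu)}=\tfrac{\sigma_X^2}{\sigma_S^2}\,\expect{S\tanh(S/\mu)}$ is equivalent, via Stein's identity applied to $S$, to the paper's starting reduction $\expect{X\tanh(S/\mu)}=\tfrac{\sigma_X^2}{\mu}\,\expect{1-\tanh^2(S/\mu)}$, so the two approaches fork only in how the resulting one-dimensional Gaussian expectation is estimated.

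The paper does not use the one–line bound $\tanh t\ge 1-2e^{-2t}$; instead it writes $1-\tanh^2(S/\mu)=4Zf(Z)$ with $Z=e^{-2S/\mu}$ and $f(z)=1/(1+z)^2$, approximates $f$ from above by the polynomial $p_\beta(z)=1/(1+\beta z)^2$ of Lemma~\ref{lem:neg_curvature_norm_bound}, evaluates each monomial term via the closed forms in Lemma~\ref{lem:aux_asymp_proof_a}, applies the Type~I bounds on $\Phi^c$ term by term, and finally sends $\beta\to 1$. Your route is shorter, avoids the polynomial machinery and the limiting argument, and incurs only a single Mills-ratio correction; the paper's route is more circuitous here but has the advantage of recycling exactly the same device that is genuinely needed for Lemmas~\ref{lem:neg_curvature_tanh_square_1} and~\ref{lem:neg_curvature_tanh_square_3}, where the integrand contains an extra factor of $X^2$ or $XY$ against $1-\tanh^2$ and no comparably clean one-sided bound on $\tanh$ suffices.
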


\begin{proof}
By Lemma \ref{lem:aux_asymp_proof_a}, we know 
\begin{align*}
\bb E \brac{\tanh\paren{\frac{X+Y}{\mu}} X  } \;=\; \frac{\sigma_X^2}{\mu } \bb E\brac{1- \tanh^2\paren{\frac{X+Y}{\mu}} }
\end{align*}
Similar to the proof of the above lemma, for $X + Y > 0$, let $Z \doteq \exp\paren{-2\frac{X + Y}{\mu}}$ and $f\paren{Z} \doteq \frac{1}{\paren{1+Z}^2}$. First fix any $T > 1$. We will use $4zp_{\beta}\paren{Z} = \frac{4Z}{\paren{1+\beta Z}^2}$ to approximate the $1-\tanh^2\paren{\frac{X + Y}{\mu}} = 4Zf\paren{Z}$ function from above, where again $\beta = 1 - 1/\sqrt{T}$. So we obtain 
\begin{align*}
\bb E\brac{1- \tanh^2\paren{\frac{X+Y}{\mu}} } 
& = 8\expect{f\paren{Z} Z \indicator{X + Y > 0}} \nonumber \\
& = 8\expect{p_{\beta}\paren{Z} Z \indicator{X + Y > 0}} - 8\expect{\paren{p_{\beta}\paren{Z} - f\paren{Z} }Z \indicator{X + Y > 0}}. 
\end{align*}
Now for the first term, we have 
\begin{align*}
\expect{p_{\beta}\paren{Z} Z \indicator{X + Y > 0}} = \sum_{k=0}^\infty b_k \bb E\brac{ Z^{k+1} \indicator{X+Y>0} }, 
\end{align*}
justified as $\sum_{k=0}^\infty \abs{b_k} \bb E\brac{ Z^{k+1} \indicator{X+Y>0} } \le \sum_{k=0}^\infty \abs{b_k} < \infty$ making the dominated convergence theorem (see, e.g., theorem 2.24 and 2.25 of~\cite{folland1999real}) applicable. To proceed, from Lemma~\ref{lem:aux_asymp_proof_a}, we obtain
\begin{align*}
& \sum_{k=0}^\infty  b_k \bb E\brac{ Z^{k+1} \indicator{X+Y>0} } \nonumber \\
=\; & \sum_{k=0}^\infty \paren{-\beta}^k \paren{k+1} \exp\paren{\frac{2}{\mu^2}\paren{k+1}^2 \paren{\sigma_X^2 + \sigma_Y^2}} \Phi^c\paren{\frac{2}{\mu}\paren{k+1} \sqrt{\sigma_X^2 + \sigma_Y^2}} \\
\ge\; & \frac{1}{\sqrt{2\pi}} \sum_{k = 0}^\infty \paren{-\beta}^k \paren{k+1}\paren{\frac{\mu}{2\paren{k+1}\sqrt{\sigma_X^2 + \sigma_Y^2}} - \frac{\mu^3}{8\paren{k+1}^3\paren{\sigma_X^2 + \sigma_Y^2}^{3/2}}} \\
& \qquad - \frac{3}{\sqrt{2\pi}} \sum_{k=0}^\infty \beta^k \paren{k+1} \frac{\mu^5}{32\paren{k+1}^5\paren{\sigma_X^2 + \sigma_Y^2}^{5/2}}, 
\end{align*}
where we have applied Type I upper and lower bounds for $\Phi^c\paren{\cdot}$ to odd $k$ and even $k$ respectively and rearranged the terms to obtain the last line. Using the following estimates (see Lemma~\ref{lem:neg_curvature_norm_bound})
\begin{align*}
\sum_{k=0}^\infty \paren{-\beta}^k = \frac{1}{1+\beta}, \quad  0 \le \sum_{k=0}^\infty \frac{b_k}{\paren{k+1}^3} \le \sum_{k=0}^\infty \frac{\abs{b_k}}{\paren{k+1}^5} \le \sum_{k=0}^\infty \frac{\abs{b_k}}{\paren{k+1}^3} \le 2, 
\end{align*}
we obtain 
\begin{align*}
\sum_{k=0}^\infty  b_k \bb E\brac{ Z^{k+1} \indicator{X+Y>0} } \ge 
\frac{\mu}{2\sqrt{2\pi} \sqrt{\sigma_X^2 + \sigma_Y^2}} \frac{1}{1+\beta} - \frac{\mu^3}{4\sqrt{2\pi}\paren{\sigma_X^2 + \sigma_Y^2}^{3/2}} - \frac{3\mu^5}{16\sqrt{2\pi}\paren{\sigma_X^2 + \sigma_Y^2}^{5/2}}. 
\end{align*}

For the second term, by Lemma \ref{lem:aux_asymp_proof_a} and Lemma \ref{lem:neg_curvature_norm_bound}, we have
\begin{align*}
\bb E\brac{\paren{p_{\beta}(Z)-f(Z)} Z\indicator{X+Y>0}} \le \norm{p-f}{L^\infty[0,1]} \bb E\brac{Z\indicator{X+Y>0}}\le \frac{\mu}{2\sqrt{2\pi T} \sqrt{\sigma_X^2+\sigma_Y^2}}, 
\end{align*}
where we have also used Type I upper bound for $\Phi^c\paren{\cdot}$. Combining the above estimates, we get
\begin{multline*}
\bb E\brac{\tanh\paren{\frac{X+Y}{\mu}} X} \ge 
\frac{4\sigma_X^2}{\sqrt{2\pi}\sqrt{\sigma_X^2 + \sigma_Y^2}} \paren{\frac{1}{1+\beta} - \frac{1}{\sqrt{T}}} - \frac{2\sigma_X^2\mu^2}{\sqrt{2\pi} \paren{\sigma_X^2 + \sigma_Y^2}^{3/2}} - \frac{3\sigma_X^2\mu^4}{2\sqrt{2\pi}\paren{\sigma_X^2 + \sigma_Y^2}^{5/2}}. 
\end{multline*}
\js{Since the above holds for any $T > 1$, we obtain the claimed result by letting $T \to \infty$, such that $\beta \to 1$ and $1/\sqrt{T} \to 0$. }
\end{proof}

\begin{proof}{(\textbf{of Proposition~\ref{prop:geometry_asymp_curvature}})}
For any $i \in [n-1]$, we have 
\begin{align*}
\int_0^1 \int_{\mb x} \abs{\frac{\partial}{\partial w_i} h_{\mu}\paren{\mb q^*\paren{\mb w} \mb x}} \mu\paren{d\mb x}\; d w_i \le \int_0^1 \int_{\mb x} \paren{\abs{x_i} + \abs{x_n} \frac{1}{q_n\paren{\mb w}}} \mu\paren{d\mb x}\; d w_i < \infty. 
\end{align*}
Hence by Lemma~\ref{lemma:exchange_diff_int} we obtain $\frac{\partial }{\partial w_i}\expect{h_{\mu}\paren{\mb q^*\paren{\mb w} \mb x}} = \expect{\frac{\partial }{\partial w_i} h_{\mu}\paren{\mb q^*\paren{\mb w} \mb x}}$. Moreover for any $j \in [n-1]$, 
\begin{multline*}
\int_0^1 \int_{\mb x} \abs{\frac{\partial^2}{\partial w_j \partial w_i} h_{\mu}\paren{\mb q^*\paren{\mb w} \mb x}} \mu\paren{d\mb x}\; d w_j \le \\
\int_0^1 \int_{\mb x} \brac{\frac{1}{\mu}\paren{\abs{x_i} + \frac{\abs{x_n}}{q_n\paren{\mb w}}}\paren{\abs{x_j} + \frac{\abs{x_n}}{q_n\paren{\mb w}}} + \abs{x_n}\paren{\frac{1}{q_n\paren{\mb w}} + \frac{1}{q_n^3\paren{\mb w}}}} \mu\paren{d\mb x}\; d w_i < \infty. 
\end{multline*}
Invoking Lemma~\ref{lemma:exchange_diff_int} again we obtain 
\begin{align*}
\frac{\partial^2}{\partial w_j \partial w_i} \expect{h_{\mu}\paren{\mb q^*\paren{\mb w} \mb x}} = \frac{\partial}{\partial w_j}\expect{\frac{\partial }{\partial w_i} h_{\mu}\paren{\mb q^*\paren{\mb w} \mb x}} = \expect{\frac{\partial^2}{\partial w_j \partial w_i} h_{\mu}\paren{\mb q^*\paren{\mb w} \mb x}}. 
\end{align*}
The above holds for any pair of $i, j \in [n-1]$, so it follows that 
\begin{align*}
\nabla^2_{\mb w}\expect{h_{\mu}\paren{\mb q^*\paren{\mb w} \mb x}} = \expect{\nabla^2_{\mb w} h_{\mu}\paren{\mb q^*\paren{\mb w} \mb x}}. 
\end{align*}
Hence it is easy to see that 
\begin{align*}
& \mb w^* \nabla^2_{\mb w} \expect{h_{\mu}\left(\mb q^*\paren{\mb w} \mb x\right)} \mb w  \nonumber \\
=\; &  \underbrace{\frac{1}{\mu}\expect{\left(1-\tanh^2\left(\frac{\mb q^*\paren{\mb w} \mb x}{\mu}\right)\right) \left(\mb w^* \overline{\mb x} - \frac{x_n}{q_n\paren{\mb w}}\norm{\mb w}{}^2\right)^2}}_{(\mc A)} - \underbrace{\expect{\tanh\left(\frac{\mb q^*\paren{\mb w} \mb x}{\mu}\right) \frac{x_n}{q_n^3\paren{\mb w}}\norm{\mb w}{}^2}}_{(\mc B)}. 
\end{align*}
\js{
\begin{enumerate}
\item \textbf{An upper bound for $(\mc A)$.}  
When $x_n$ is not in support set of $\mb x$, the term reduces to 
\begin{align*}
& \frac{2}{\mu}\expect{\left(1-\tanh^2\left(\frac{\mb w^*\overline{\mb x}}{\mu}\right)\right) \paren{\mb w^* \overline{\mb x}}^2 \indicator{\mb w^* \overline{\mb x} > 0}} \\
\le\; & \frac{8}{\mu}\expect{\exp\paren{-2\frac{\mb w^* \overline{\mb x}}{\mu}} \paren{\mb w^* \overline{\mb x}}^2 \indicator{\mb w^* \overline{\mb x} > 0}} \quad (\text{Lemma~\ref{lem:derivatives_basic_surrogate}})\\
\le\; & 8 \exp(-2)\mu, 
\end{align*}
where to obtain the last line we used that $t \mapsto \exp(-2t/\mu)t^2$ for $t > 0$ is maximized at $\mu$. 

When $x_n$ is in the support set, we expand the square term inside the expectation and obtain 
\begin{align*}
(\mc A)_{x_n \ne 0} & = \frac{2}{\mu} \bb E_{\mc J}\bb E_{\mb v}\brac{\left(1-\tanh^2\left(\frac{\mb w^*_{\mc J}\overline{\mb v} + q_n\paren{\mb w} v_n}{\mu}\right)\right) \left(\mb w^*_{\mc J} \overline{\mb v}\right)^2 \indicator{\mb w^*_{\mc J}\overline{\mb v} + q_n\paren{\mb w} v_n > 0}} \\
& \qquad + \frac{2}{\mu} \frac{\norm{\mb w}{}^4}{q_n^4\paren{\mb w}} \bb E_{\mc J} \bb E_{\mb v}\brac{\left(1-\tanh^2\left(\frac{\mb w^*_{\mc J}\overline{\mb v} + q_n\paren{\mb w} v_n}{\mu}\right)\right) \left(q_n\paren{\mb w} v_n\right)^2 \indicator{\mb w^*_{\mc J}\overline{\mb v} + q_n\paren{\mb w} v_n > 0}} \\
& \qquad -  \frac{4}{\mu} \frac{\norm{\mb w}{}^2}{q_n^2\paren{\mb w}} \bb E_{\mc J} \bb E_{\mb v}\brac{\left(1-\tanh^2\left(\frac{\mb w^*_{\mc J}\overline{\mb v} + q_n\paren{\mb w} v_n}{\mu}\right)\right) \left(\mb w^*_{\mc J} \overline{\mb v}\right) \left(q_n\paren{\mb w} v_n\right) \indicator{\mb w^*_{\mc J}\overline{\mb v} + q_n\paren{\mb w} v_n > 0}}\\
& =  \frac{2}{\mu} \bb E_{\mc J}\bb E_{X, Y}\brac{\left(1-\tanh^2\left(\frac{X + Y}{\mu}\right)\right) Y^2 \indicator{X + Y > 0}} \\
& \qquad + \frac{2}{\mu} \frac{\norm{\mb w}{}^4}{q_n^4\paren{\mb w}} \bb E_{\mc J} \bb E_{X, Y}\brac{\left(1-\tanh^2\left(\frac{X + Y}{\mu}\right)\right) X^2 \indicator{X + Y > 0}} \\
& \qquad - \frac{4}{\mu} \frac{\norm{\mb w}{}^2}{q_n^2\paren{\mb w}} \bb E_{\mc J} \bb E_{X, Y}\brac{\left(1-\tanh^2\left(\frac{X + Y}{\mu}\right)\right) XY \indicator{X + Y > 0}}, 
\end{align*}
where conditioned on each support set $\mc J$, we let $X \doteq q_n\paren{\mb w} v_n \sim \mc N\paren{0, q_n^2\paren{\mb w}}$ and $Y \doteq \mb w_{\mc J}^* \overline{\mb v} \sim \mc N\paren{0, \norm{\mb w_{\mc J}}{}^2}$. An upper bound for the above is obtained by calling the estimates in Lemma~\ref{lem:neg_curvature_tanh_square_1} and Lemma~\ref{lem:neg_curvature_tanh_square_3}: 
\begin{align*}
(\mc A)_{x_n \ne 0} 
& \le \frac{2}{\mu} \bb E_{\mc J}\brac{\frac{1}{\sqrt{2\pi}} \frac{\mu \norm{\mb w_{\mc J}}{}^2 q_n^2\paren{\mb w}}{\norm{\mb q_{\mc I}}{}^3} + \frac{3}{4\sqrt{2\pi}} \frac{\norm{\mb w_{\mc J}}{}^2 \mu^3}{\norm{\mb q_{\mc I}}{}^5} \paren{3\mu^2 + 4 \norm{\mb w_{\mc J}}{}^2}} \\
& \qquad + \frac{2}{\mu} \frac{\norm{\mb w}{}^4}{q_n^4\paren{\mb w}} \bb E_{\mc J} \brac{\frac{1}{\sqrt{2\pi}} \frac{\mu \norm{\mb w_{\mc J}}{}^2 q_n^2\paren{\mb w}}{\norm{\mb q_{\mc I}}{}^3} + \frac{3}{4\sqrt{2\pi}} \frac{q_n^2\paren{\mb w} \mu^3}{\norm{\mb q_{\mc I}}{}^5} \paren{3\mu^2 + 4 q_n^2\paren{\mb w}} } \\
& \qquad + \frac{4}{\mu} \frac{\norm{\mb w}{}^2}{q_n^2\paren{\mb w}} \bb E_{\mc J} \brac{\frac{1}{\sqrt{2\pi}} \frac{\mu \norm{\mb w_{\mc J}}{}^2 q_n^2\paren{\mb w}}{\norm{\mb q_{\mc I}}{}^3} + \frac{3}{\sqrt{2\pi}} \frac{\norm{\mb w_{\mc J}}{}^2 q_n^2(\mb w)\mu^3}{\norm{\mb q_{\mc I}}{}^5}} \\
& \le \sqrt{\frac{2}{\pi}} \bb E_{\mc J} \brac{\frac{\norm{\mb w_{\mc J}}{}^2 q_n^4(\mb w) + \norm{\mb w_{\mc J}}{}^2 \norm{\mb w}{}^4 + 2\norm{\mb w_{\mc J}}{}^2 \norm{\mb w}{}^2 q_n^2(\mb w)}{q_n^2(\mb w) \norm{\mb q_{\mc I}}{}^3}} \\
& \qquad + \frac{12\mu^2}{\sqrt{2\pi} q_n^5(\mb w)} +  \frac{9\mu^4}{2\sqrt{2\pi} q_n^5(\mb w)} \\
& \le \frac{1}{q_n^2(\mb w)}\sqrt{\frac{2}{\pi}} \bb E_{\mc J} \brac{ \frac{\norm{\mb w_{\mc J}}{}^2}{\norm{\mb q_{\mc I}}{}^3}} + \frac{12\mu^2}{\sqrt{2\pi} q_n^5(\mb w)} +  \frac{9\mu^4}{2\sqrt{2\pi} q_n^5(\mb w)}, 
\end{align*}
where we have used $\mu < q_n\paren{\mb w} \le \norm{\mb q_{\mc I}}{}$ and $\norm{\mb w_{\mc J}}{} \le \norm{\mb q_{\mc I}}{}$ and $\norm{\mb w}{} \le 1$ and $\theta \in \paren{0, 1/2}$ to simplify the intermediate quantities to obtain the last line. 

Thus, we obtain that 
\begin{align}
(\mc A) 
& \le \paren{1-\theta} \cdot 8\exp(-2)\mu + \frac{\theta}{q_n^2(\mb w)}\sqrt{\frac{2}{\pi}} \bb E_{\mc J} \brac{ \frac{\norm{\mb w_{\mc J}}{}^2}{\norm{\mb q_{\mc I}}{}^3}} + \frac{12\theta\mu^2}{\sqrt{2\pi} q_n^5(\mb w)} +  \frac{9\theta \mu^4}{2\sqrt{2\pi} q_n^5(\mb w)} \nonumber \\
& \le \frac{\theta}{q_n^2(\mb w)}\sqrt{\frac{2}{\pi}} \bb E_{\mc J} \brac{ \frac{\norm{\mb w_{\mc J}}{}^2}{\norm{\mb q_{\mc I}}{}^3}} + 2\mu + \frac{12\theta \mu^2}{\sqrt{2\pi}}\paren{\frac{1}{q_n^3(\mb w)} + \frac{1}{q_n^5(\mb w)}}. 
\end{align}

\item \textbf{A lower bound for $(\mc B)$.} Similarly, we obtain 
\begin{align*}
& \expect{\tanh\left(\frac{\mb q^*\paren{\mb w} \mb x}{\mu}\right) \frac{x_n}{q_n^3\paren{\mb w}}\norm{\mb w}{}^2} \nonumber \\
=\; & \frac{\norm{\mb w}{}^2\theta}{q_n^4\paren{\mb w}} \bb E_{\mc J} \bb E_{\mb v}\brac{\tanh\paren{\frac{\mb w^*_{\mc J} \overline{\mb v} + q_n\paren{\mb w} v_n}{\mu}} v_n q_n\paren{\mb w}} \nonumber \\
\ge\; & \frac{\norm{\mb w}{}^2\theta}{q_n^4\paren{\mb w}} \bb E_{\mc J} \brac{\sqrt{\frac{2}{\pi}}\frac{q_n^2\paren{\mb w}}{\norm{\mb q_{\mc I}}{}} - \sqrt{\frac{2}{\pi}}\frac{q_n^2\paren{\mb w}\mu^2}{\norm{\mb q_{\mc I}}{}^3} - \frac{3q_n^2\paren{\mb w}\mu^4}{2\sqrt{2\pi}\norm{\mb q_{\mc I}}{}^5}} \quad (\text{Lemma~\ref{lem:neg_curvature_tanh_square_2}})\\
\ge\; & \sqrt{\frac{2}{\pi}} \frac{\theta}{q_n^2\paren{\mb w}} \bb E_{\mc J}\brac{\frac{\norm{\mb w}{}^2}{\norm{\mb q_{\mc I}}{}}} - \frac{4\theta \mu^2}{\sqrt{2\pi}q_n^5(\mb w)}. 
\end{align*}
\end{enumerate}

Collecting the above estimates, we obtain 
\begin{align}
& \mb w^* \nabla^2_{\mb w}\expect{h_{\mu}\paren{\mb q^*\paren{\mb w} \mb x}} \mb w \nonumber \\
\le \; & \sqrt{\frac{2}{\pi}} \frac{\theta}{q_n^2\paren{\mb w}} \bb E_{\mc J}\brac{\frac{\norm{\mb w_{\mc J}}{}^2}{\norm{\mb q_{\mc I}}{}^3} - \frac{\norm{\mb w}{}^2\paren{\norm{\mb w_{\mc J}}{}^2 + q_n^2\paren{\mb w}}}{\norm{\mb q_{\mc I}}{}^3}}+ 2\mu + \frac{4\theta \mu^2}{\sqrt{2\pi}}\paren{\frac{3}{q_n^3(\mb w)} + \frac{4}{q_n^5(\mb w)}} \nonumber \\
=\; &  -\sqrt{\frac{2}{\pi}}\theta \expect{\frac{\norm{\mb w_{\mc J^c}}{}^2}{\norm{\mb q_{\mc I}}{}^3}} + 2\mu + \frac{4\theta \mu^2}{\sqrt{2\pi}}\paren{\frac{3}{q_n^3(\mb w)} + \frac{4}{q_n^5(\mb w)}} \nonumber \\
\le \; & -\sqrt{\frac{2}{\pi}}\theta \paren{1-\theta} \norm{\mb w}{}^2 \expect{\frac{1}{\norm{\mb q_{\mc I}}{}^3}} + 2\mu + \frac{4\theta \mu^2}{\sqrt{2\pi}}\paren{\frac{3}{q_n^3(\mb w)} + \frac{4}{q_n^5(\mb w)}}, \label{eq:neg_curvature_final_form}
\end{align}
where to obtain the last line we have invoked the association inequality in Lemma~\ref{lemma:harris_ineq}, as both $\norm{\mb w_{\mc J^c}}{}^2$ and $1/\norm{\mb q_{\mc I}}{}^3$ both coordinatewise nonincreasing w.r.t. the index set. Substituting the upper bound for $\mu$ into~\eqref{eq:neg_curvature_final_form} and noting $q_n\paren{\mb w} \ge 1/(2\sqrt{n})$ (implied by the assumption $\norm{\mb w}{} \le \sqrt{(4n-1)/(4n)}$), we obtain the claimed result. 
}
\end{proof}

\subsubsection{Proof of Proposition~\ref{prop:geometry_asymp_gradient}} \label{sec:proof_geo_asym_gradient}
\begin{proof}
By similar consideration as proof of the above proposition, the following is justified:  
\begin{align*}
\nabla_{\mb w}\expect{h_{\mu}\paren{\mb q^*\paren{\mb w} \mb x}} = \expect{\nabla_{\mb w} h_{\mu}\paren{\mb q^*\paren{\mb w} \mb x}}. 
\end{align*}
Now consider 
\begin{align}
\mb w^* \nabla \expect{h_\mu (\mb q^*\paren{\mb w} \mb x)} 
& = \nabla \expect{\mb w^* h_\mu (\mb q^*\paren{\mb w} \mb x)} \nonumber \\
& = \underbrace{\expect{\tanh\paren{\frac{\mb q^*\paren{\mb w} \mb x}{\mu}} \paren{\mb w^* \bar{\mb x} }}}_{(\mc A)} - \frac{\norm{\mb w}{}^2}{q_n}\underbrace{\expect{\tanh\paren{\frac{\mb q^*\paren{\mb w} \mb x}{\mu}} x_n}}_{(\mc B)}. \label{eqn:proof_grad_1}
\end{align}

\begin{enumerate}
\item \textbf{A lower bound for $(\mc A)$.} We have 
\begin{align*}
& \expect{\tanh\paren{\frac{\mb q^*\paren{\mb w} \mb x}{\mu}} \paren{\mb w^* \overline{\mb x} }} \\
=\; & \theta \bb E_{\mc J} \brac{ \bb E_{\mb v} \brac{ \tanh\paren{\frac{\mb w_{\mc J}^* \overline{\mb v} + q_n\paren{\mb w} v_n}{\mu} } \paren{\mb w_{\mc J}^* \overline{\mb v} }}}  + (1-\theta) \bb E_{\mc J} \brac{ \bb E_{\mb v} \brac{ \tanh\paren{\frac{\mb w_{\mc J}^* \overline{\mb v}}{\mu}} \paren{\mb w_{\mc J}^* \overline{\mb v} }}} \\
=\; &\theta \bb E_{\mc J}\brac{\bb E_{X, Y}\brac{ \tanh\paren{\frac{X+Y}{\mu}}Y } }\;+\; (1-\theta) \bb E_{\mc J}\brac{\bb E_{Y}\brac{\tanh\paren{\frac{Y}{\mu}}Y} }, 
\end{align*}
where $X \doteq q_n\paren{\mb w} v_n \sim \mc N\paren{0, q_n^2\paren{\mb w}}$ and $Y \doteq \mb w^*_{\mc J} \overline{\mb v} \sim \mc N\paren{0, \norm{\mb w_{\mc J}}{}^2}$. Now by Lemma \ref{lemma:harris_ineq} we obtain 
\begin{align*}
\bb E \brac{ \tanh\paren{\frac{X+Y}{\mu} } Y} \ge \bb E\brac{\tanh\paren{\frac{X+Y}{\mu}}} \bb E\brac{Y} = 0, 
\end{align*} 
as $\tanh\paren{\frac{X+Y}{\mu} } $ and $X$ are both coordinatewise nondecreasing function of $X$ and $Y$. Using $\tanh\paren{z} \geq \paren{1-\exp\paren{-2z}}/2$ for all $z \ge 0$ and integral results in Lemma~\ref{lem:aux_asymp_proof_a}, we obtain
\begin{align*}
\bb E\brac{\tanh\paren{\frac{Y}{\mu}}Y} 
& = 2\bb E\brac{\tanh\paren{\frac{Y}{\mu}}Y \indicator{Y > 0}} \\
& \ge \bb E\brac{\paren{1-\exp\paren{-\frac{2Y}{\mu}} }Y\indicator{Y>0} } \\
& = \frac{2\sigma_Y^2}{\mu}\exp\paren{\frac{2\sigma_Y^2}{\mu^2}} \Phi^c\paren{\frac{2\sigma_Y}{\mu}}  \\
& \ge \frac{2\sigma_Y^2}{\mu \sqrt{2\pi}} \paren{\sqrt{1+\frac{\sigma_Y^2}{\mu^2}} - \frac{\sigma_Y}{\mu}} \quad (\text{Type III lower bound for $\Phi^c(\cdot)$, Lemma~\ref{lem:gaussian_tail_est}})\\
& \ge \frac{2\sigma_Y^2}{\mu\sqrt{2\pi}} \paren{ \sqrt{1+\frac{\norm{\mb w}{}^2}{\mu^2}} - \frac{\norm{\mb w}{}}{\mu}}. \quad (\text{$t \mapsto \sqrt{1+t^2}-t$ decreasing over $t > 0$})
\end{align*}
Collecting the above estimates, we have 
\begin{align}
\expect{\tanh\paren{\frac{\mb q^*\paren{\mb w} \mb x}{\mu}} \paren{\mb w^* \overline{\mb x} }} 
& \ge \paren{1- \theta} \bb E_{\mc J}\brac{\frac{2\norm{\mb w_{\mc J}}{}^2}{\mu\sqrt{2\pi}} \paren{ \sqrt{1+\frac{\norm{\mb w}{2}^2}{\mu^2}} - \frac{\norm{\mb w}{}}{\mu}}} \nonumber \\
& \ge  \paren{1- \theta} \bb E_{\mc J}\brac{\frac{2\norm{\mb w_{\mc J}}{}^2}{\mu\sqrt{2\pi}} \frac{\mu}{10\norm{\mb w}{}}} \nonumber \\
& \ge \frac{\theta\paren{1-\theta}\norm{\mb w}{}}{5\sqrt{2\pi}}, \label{eqn:proof_grad_2}
\end{align}
where at the second line we have used the assumption that $\norm{\mb w}{} \ge \mu/(4\sqrt{2})$ and also the fact that $\sqrt{1+x^2} \ge x + \frac{1}{10x}$ for $x \ge 1/(4\sqrt{2})$. 

\item \textbf{An upper bound for $(\mc B)$. } We have
\begin{align}
\bb E\brac{\tanh\paren{\frac{\mb q^*\paren{\mb w} \mb x}{\mu}} x_n} 
\le \theta \bb E\brac{\abs{\tanh\paren{\frac{\mb q^*\paren{\mb w} \mb x}{\mu}}} \abs{v_n}} 
\le \theta \sqrt{\frac{2}{\pi}}, 
\label{eqn:proof_grad_3}
\end{align}
as $\tanh\paren{\cdot}$ is bounded by one in magnitude
\end{enumerate}

 Plugging the results of~\eqref{eqn:proof_grad_2} and~\eqref{eqn:proof_grad_3} into~\eqref{eqn:proof_grad_1} and noticing that $q_n\paren{\mb w}^2 + \norm{\mb w}{}^2 = 1$ we obtain 
\begin{align*}
\mb w^* \nabla \expect{h_\mu (\mb q^*\paren{\mb w} \mb x)} \;\ge\; 
\frac{\theta\norm{\mb w}{}}{\sqrt{2\pi}}\brac{ \frac{1-\theta}{5} -\frac{2\norm{\mb w}{}}{\sqrt{1-\norm{\mb w}{}^2}}} \ge \frac{\theta\paren{1-\theta}\norm{\mb w}{}}{10\sqrt{2\pi}}, 
\end{align*}
where we have used $\frac{2\norm{\mb w}{}}{\sqrt{1-\norm{\mb w}{}^2}} \le \frac{1}{10}\paren{1-\theta}$ when $\norm{\mb w}{} \le 1/(20\sqrt{5})$ and $\theta \le 1/2$, completing the proof. 
\end{proof}

\subsubsection{Proof of Proposition~\ref{prop:geometry_asymp_strong_convexity}} \label{sec:proof_geo_asym_strcvx}

\begin{proof}
By consideration similar to proof of Proposition~\ref{prop:geometry_asymp_curvature}, we can exchange the Hessian and expectation, i.e., 
\begin{align*}
\nabla^2_{\mb w} \expect{h_{\mu}\paren{\mb q^*\paren{\mb w} \mb x}} = \expect{\nabla^2_{\mb w} h_{\mu}\paren{\mb q^*\paren{\mb w} \mb x}}. 
\end{align*}
We are interested in the expected Hessian matrix
\begin{align*}
\nabla^2_{\mb w} \bb E\brac{h_{\mu}\paren{\mb q^*\paren{\mb w}\mb x}} 
&=\frac{1}{\mu}\bb E\brac{\paren{1-\tanh^2\paren{\frac{\mb q^*\paren{\mb w} x}{\mu}} } \paren{\overline{\mb x} - \frac{x_n}{q_n\paren{\mb w}}\mb w }\paren{\overline{\mb x} - \frac{x_n}{q_n\paren{\mb w}}\mb w }^*  } \nonumber \\
&- \bb E\brac{\tanh\paren{\frac{\mb q^*\paren{\mb w}\mb x}{\mu}} \paren{\frac{x_n}{q_n\paren{\mb w}} \mb I +\frac{x_n}{q_n^3\paren{\mb w}} \mb w\mb w^*}}
\end{align*}
in the region that $0 \le \norm{\mb w}{}\le \mu/(4\sqrt{2})$. 

When $\mb w=\mb 0$, by Lemma \ref{lem:aux_asymp_proof_a}, we have 
\begin{align*}
& \left. \bb E\brac{\nabla^2_{\mb w} h_{\mu}\paren{\mb q^*\paren{\mb w}\mb x} }\right|_{\mb w=0}\\ 
=\; & \frac{1}{\mu}\bb E\brac{\paren{1-\tanh^2\paren{\frac{x_n}{\mu}}}\overline{\mb x}\; \overline{\mb x}^* } - \bb E\brac{\tanh\paren{\frac{x_n}{\mu}}x_n}\mb I \\
=\; & \frac{\theta(1-\theta)}{\mu}\mb I + \frac{\theta^2}{\mu} \bb E_{v_n}\brac{1- \tanh^2\paren{\frac{v_n}{\mu}}}\mb I - \frac{\theta}{\mu} \bb E_{v_n}\brac{1-\tanh^2\paren{\frac{v_n}{\mu}}}\mb I  \\
=\; & \frac{\theta(1-\theta)}{\mu} \bb E_{v_n}\brac{\tanh^2\paren{\frac{q_n\paren{\mb w}v_n}{\mu}}} \mb I. 
\end{align*}
Simple calculation based on Lemma~\ref{lem:aux_asymp_proof_a} shows 
\begin{align*}
\bb E_{v_n}\brac{\tanh^2\paren{\frac{v_n}{\mu}}} \ge 2\paren{1-4\exp\paren{\frac{2}{\mu^2}} \Phi^c\paren{\frac{2}{\mu}}} \ge 2\paren{1-\frac{2}{\sqrt{2\pi}} \mu}. 
\end{align*}
Invoking the assumptions $\mu \le 1/(20 \sqrt{n}) \le 1/20$ and $\theta < 1/2$, we obtain 
\begin{align*}
\left. \bb E\brac{\nabla^2_{\mb w} h_{\mu}\paren{\mb q^*\paren{\mb w}\mb x} }\right|_{\mb w=0} \succeq \frac{\theta\paren{1-\theta}}{\mu} \paren{2 - \frac{4}{\sqrt{2\pi}}\mu} \mb I 
\succeq \frac{\theta}{\mu} \paren{1-\frac{1}{10\sqrt{2\pi}}} \mb I.  
\end{align*}

When $0 <\norm{\mb w}{}\le \mu/(4\sqrt{2})$, we aim to derive a semidefinite lower bound for 
\begin{align}
& \bb E\brac{\nabla^2_{\mb w} h_{\mu}\paren{\mb q^*\paren{\mb w}\mb x}} \nonumber \\
=\; &  \underbrace{\frac{1}{\mu}\bb E\brac{\paren{1-\tanh^2\paren{\frac{\mb q^*\paren{\mb w}\mb x}{\mu}}}\overline{\mb x}\; \overline{\mb x}^*}}_{(\mc A)} - \underbrace{\frac{1}{q_n^2\paren{\mb w}}\bb E\brac{\tanh\paren{\frac{\mb q^*\paren{\mb w}\mb x}{\mu}}q_n\paren{\mb w}x_n}\mb I}_{(\mc B)}\nonumber \\
&-\underbrace{\frac{1}{\mu q_n^2\paren{\mb w}}\bb E\brac{\paren{1-\tanh^2\paren{\frac{\mb q^*\paren{\mb w}\mb x}{\mu}}}q_n\paren{\mb w}x_n\paren{\mb w \overline{\mb x}^*+\overline{\mb x}\mb w^*}}}_{(\mc C)} \nonumber \\
& + \underbrace{\frac{1}{q_n^4\paren{\mb w}}\Brac{ \frac{1}{\mu}\bb E\brac{\paren{1-\tanh^2\paren{\frac{\mb q^*\paren{\mb w}\mb x}{\mu}}}(q_n\paren{\mb w}x_n)^2 } -\bb E\brac{\tanh\paren{\frac{\mb q^*\paren{\mb w}\mb x}{\mu}}q_n\paren{\mb w}x_n}  }\mb w\mb w^*}_{(\mc D)}  \label{eqn:thm-str-cvx-1}. 
\end{align}
We will first provide bounds for $(\mc C)$ and $(\mc D)$, which are relatively simple. Then we will bound $(\mc A)$ and $(\mc B)$, which are slightly more tricky.  

\begin{enumerate}
\item \textbf{An upper bound for $(\mc C)$.} We have 
\begin{align*}
(\mc C) 
& \le \frac{1}{\mu q_n^2\paren{\mb w}}\norm{ \bb E\brac{\paren{1-\tanh^2\paren{\frac{\mb q^*\paren{\mb w}\mb x}{\mu}}}q_n\paren{\mb w}x_n\paren{\mb w \overline{\mb x}^*+\overline{\mb x}\mb w^*}} }{} \nonumber \\
& \le  \frac{2}{\mu q_n^2\paren{\mb w}}\norm{ \bb E\brac{\paren{1-\tanh^2\paren{\frac{\mb q^*\paren{\mb w}\mb x}{\mu}}}q_n\paren{\mb w}x_n \bar{\mb x} } \mb w^*}{} \nonumber \\
& \le  \frac{2}{\mu q_n^2\paren{\mb w}}\norm{ \bb E\brac{\paren{1-\tanh^2\paren{\frac{\mb q^*\paren{\mb w}\mb x}{\mu}}}q_n\paren{\mb w}x_n \overline{\mb x} } }{} \norm{\mb w}{} \nonumber \\
& \le \frac{2}{\mu q_n^2\paren{\mb w}} \bb E \norm{ \paren{1-\tanh^2\paren{\frac{\mb q^*\paren{\mb w}\mb x}{\mu}}}q_n\paren{\mb w}x_n \overline{\mb x} }{} \norm{\mb w}{} \quad \text{(Jensen's inequality)}\\
& \le  \frac{2}{\mu q_n\paren{\mb w}}\theta^2 \bb E\brac{\abs{v_n}} \expect{\norm{\overline{\mb v}}{}} \norm{\mb w}{}  \quad \text{($1-\tanh^2(\cdot) \le 1$, $x_n$ and $\ol{\mb x}$ independent)}\nonumber \\
& \le  \frac{4\theta^2}{\pi\mu q_n\paren{\mb w}} \sqrt{n}\norm{\mb w}{} \le \frac{\theta}{\mu} \frac{4\theta \sqrt{n}\norm{\mb w}{}}{\pi \sqrt{1-\norm{\mb w}{}^2}} \le \frac{\theta}{\mu}\frac{1}{40\pi}, 
\end{align*}
where to obtain the final bound we have invoked the assumptions: $\norm{\mb w}{} \le \mu/(4\sqrt{2})$, $\mu \le 1/(20\sqrt{n})$, and $\theta \le 1/2$.

\item \textbf{A lower bound for $(\mc D)$.} \js{We directly drop the first expectation which is positive, and derive an upper for the second expectation term as:
\begin{align*}
\bb E\brac{ \tanh\paren{\frac{\mb q^*\paren{\mb w}\mb x}{\mu}}q_nx_n} \le q_n(\mb w) \theta \expect{\abs{v_n}} = \sqrt{\frac{2}{\pi}} \theta q_n(\mb w). 
\end{align*}
Thus, 
\begin{align*}
(\mc D) \succeq - \frac{1}{q_n^4(\mb w)} \sqrt{\frac{2}{\pi}} \theta q_n(\mb w) \mb w \mb w^* \succeq - \frac{\theta}{q_n^3(\mb w)} \sqrt{\frac{2}{\pi}} \norm{\mb w}{}^2 \mb I \succeq -\frac{\theta}{32000\mu} \sqrt{\frac{2}{\pi}} \mb I,  
\end{align*}
}where we have again used $\norm{\mb w}{} \le \mu/(4\sqrt{2})$, $\mu \le 1/(20\sqrt{n})$, and $q_n\paren{\mb w} \ge 1/(2\sqrt{n})$ to obtain the final bound.

\item \textbf{An upper bound for $(\mc B)$.} \js{Similar to the way we bound $(\mc D)$, 
\begin{align*}
\frac{1}{q_n^2(\mb w)}\bb E\brac{ \tanh\paren{\frac{\mb q^*\paren{\mb w}\mb x}{\mu}}q_nx_n} \le \frac{1}{q_n(\mb w)}\sqrt{\frac{2}{\pi}} \theta \le \frac{\theta}{10\mu}\sqrt{\frac{2}{\pi}}. 
\end{align*}
} 

\item \textbf{A lower bound for $(\mc A)$.} First note that 
\begin{align*}
(\mc A) \succeq \frac{1-\theta}{\mu} \bb E_{\overline{\mb x}}\brac{\paren{1-\tanh^2\paren{\frac{\mb w^* \overline{\mb x}}{\mu}}}\overline{\mb x}\; \overline{\mb x}^*}. 
\end{align*}
Thus, we set out to lower bound the expectation as 
\begin{align*}
\bb E_{\overline{\mb x}}\brac{\paren{1-\tanh^2\paren{\frac{\mb w^* \overline{\mb x}}{\mu}}}\overline{\mb x}\; \overline{\mb x}^*} \succeq \theta \beta \mb I
\end{align*}
for some scalar $\beta \in (0, 1)$, as $\bb E_{\ol{\mb x}} \left[\ol{\mb x} \ol{\mb x}^*\right] = \theta \mb I$. Suppose $\mb w$ has $k \in [n-1]$ nonzeros, w.l.o.g., further assume the first $k$ elements of $\mb w$ are these nonzeros. It is easy to see the expectation above has a block diagonal structure $\diag\paren{\bm \Sigma; \alpha \theta \mb I_{n-1-k}}$, where 
\begin{align*}
\alpha \doteq \bb E_{\overline{\mb x}}\brac{\paren{1-\tanh^2\paren{\frac{\mb w^* \overline{\mb x}}{\mu}}}}. 
\end{align*}
So in order to derive the $\theta \beta \mb I$ lower bound as desired, it is sufficient to show $\bm \Sigma \succeq \theta \beta \mb I$ and $\beta \le \alpha$, i.e., letting $\widetilde{\mb w} \in \R^k$ be the subvector of nonzero elements, 
\begin{align*}
\bb E_{\widetilde{\mb x} \sim_{i.i.d.} \mathrm{BG}\paren{\theta}}\brac{\paren{1-\tanh^2\paren{\frac{\widetilde{\mb w}^* \widetilde{\mb x}}{\mu}}}\widetilde{\mb x}\; \widetilde{\mb x}^*} \succeq \theta \beta \mb I, 
\end{align*} 
which is equivalent to that for all $\mb z \in \R^k$ such that $\norm{\mb z}{} = 1$, 
\begin{align*}
\bb E_{\widetilde{\mb x} \sim_{i.i.d.} \mathrm{BG}\paren{\theta}}\brac{\paren{1-\tanh^2\paren{\frac{\widetilde{\mb w}^* \widetilde{\mb x}}{\mu}}} \paren{\widetilde{\mb x}^* \mb z}^2} \ge \theta \beta. 
\end{align*}
It is then sufficient to show that for any nontrivial support set $\mc S \subset [k]$ and any vector $\mb z \in \R^k$ such that $\supp\paren{\mb z} = \mc S$ with $\norm{\mb z}{} = 1$, 
\begin{align*}
\bb E_{\widetilde{\mb v} \sim_{i.i.d.} \mc N\paren{0, 1}}\brac{\paren{1-\tanh^2\paren{\frac{\widetilde{\mb w}^*_{\mc S} \widetilde{\mb v}}{\mu}}} \paren{\widetilde{\mb v}^* \mb z}^2} \ge \beta. 
\end{align*}
To see the above implication, suppose the latter claimed holds, then for any $\mb z$ with unit norm, 
\begin{align*}
& \bb E_{\widetilde{\mb x} \sim_{i.i.d.} \mathrm{BG}\paren{\theta}}\brac{\paren{1-\tanh^2\paren{\frac{\widetilde{\mb w}^* \widetilde{\mb x}}{\mu}}} \paren{\widetilde{\mb x}^* \mb z}^2}  \\
=\; & \sum_{s = 1}^k \theta^s \paren{1-\theta}^{k-s} \sum_{\mc S \in \binom{[k]}{s}} \bb E_{\widetilde{\mb v} \sim_{i.i.d.} \mc N\paren{0, 1}}\brac{\paren{1-\tanh^2\paren{\frac{\widetilde{\mb w}^*_{\mc S} \widetilde{\mb v}}{\mu}}} \paren{\widetilde{\mb v}^* \mb z_{\mc S}}^2} \\
\ge\; & \sum_{s = 1}^k \theta^s \paren{1-\theta}^{k-s} \sum_{\mc S \in \binom{[k]}{s}} \beta \norm{\mb z_{\mc S}}{}^2 = \beta \bb E_{\mc S}\brac{\norm{\mb z_{\mc S}}{}^2} = \theta \beta. 
\end{align*}
Now for any fixed support set $\mc S \subset [k]$, $\mb z = \mc P_{\widetilde{\mb w}_{\mc S}} \mb z + \paren{\mb I - \mc P_{\widetilde{\mb w}_{\mc S}}} \mb z$. So we have
\begin{align*}
& \bb E_{\widetilde{\mb v} \sim_{i.i.d.} \mc N\paren{0, 1}}\brac{\paren{1-\tanh^2\paren{\frac{\widetilde{\mb w}^*_{\mc S} \widetilde{\mb v}}{\mu}}} \paren{\widetilde{\mb v}^* \mb z}^2} \\
=\; & \bb E_{\widetilde{\mb v}}\brac{\paren{1-\tanh^2\paren{\frac{\widetilde{\mb w}^*_{\mc S} \widetilde{\mb v}}{\mu}}} \paren{\widetilde{\mb v}^* \mc P_{\widetilde{\mb w}_{\mc S}} \mb z}^2} \nonumber \\
& \quad + \bb E_{\widetilde{\mb v}}\brac{\paren{1-\tanh^2\paren{\frac{\widetilde{\mb w}^*_{\mc S} \widetilde{\mb v}}{\mu}}} \paren{\widetilde{\mb v}^* \paren{\mb I - \mc P_{\widetilde{\mb w}_{\mc S}}} \mb z}^2} \quad (\text{cross-term vanishes due to independence})\\
=\; & \frac{\paren{{\widetilde{\mb w}_{\mc S}}^* \mb z}^2}{\norm{\mb w_{\mc S}}{}^4} \bb E_{\widetilde{\mb v}}\brac{\paren{1-\tanh^2\paren{\frac{\widetilde{\mb w}^*_{\mc S} \widetilde{\mb v}}{\mu}}} \paren{\widetilde{\mb v}^* \widetilde{\mb w}_{\mc S}}^2} \nonumber \\
& \quad + \bb E_{\widetilde{\mb v}}\brac{\paren{1-\tanh^2\paren{\frac{\widetilde{\mb w}^*_{\mc S} \widetilde{\mb v}}{\mu}}}} \bb E_{\widetilde{\mb v}}\brac{\paren{\widetilde{\mb v}^* \paren{\mb I - \mc P_{\widetilde{\mb w}_{\mc S}}} \mb z}^2} \quad (\text{exp. factorizes due to independence})\\
\ge\; & 2\frac{\paren{{\widetilde{\mb w}_{\mc S}}^* \mb z}^2}{\norm{\mb w_{\mc S}}{}^4} \bb E_{\widetilde{\mb v}}\brac{\exp\paren{-\frac{2\widetilde{\mb w}^*_{\mc S} \widetilde{\mb v}}{\mu}} \paren{\widetilde{\mb v}^* \widetilde{\mb w}_{\mc S}}^2 \indicator{\widetilde{\mb v}^* \widetilde{\mb w}_{\mc S} > 0}} 
+ 2\bb E_{\widetilde{\mb v}}\brac{\exp\paren{-\frac{2\widetilde{\mb w}^*_{\mc S} \widetilde{\mb v}}{\mu}} \indicator{\widetilde{\mb w}^*_{\mc S} \widetilde{\mb v} > 0}} \norm{\paren{\mb I - \mc P_{\widetilde{\mb w}_{\mc S}}} \mb z}{}^2. 
\end{align*}
Using expectation result from Lemma~\ref{lem:aux_asymp_proof_a}, the above lower bound is further bounded as: 
\begin{align*}
& \bb E_{\widetilde{\mb v} \sim_{i.i.d.} \mc N\paren{0, 1}}\brac{\paren{1-\tanh^2\paren{\frac{\widetilde{\mb w}^*_{\mc S} \widetilde{\mb v}}{\mu}}} \paren{\widetilde{\mb v}^* \mb z}^2} \nonumber \\
\ge\; & 2\frac{\paren{{\widetilde{\mb w}_{\mc S}}^* \mb z}^2}{\norm{\mb w_{\mc S}}{}^4} \brac{\paren{\norm{\widetilde{\mb w}_{\mc S}}{}^2 + \frac{4}{\mu^2} \norm{\widetilde{\mb w}_{\mc S}}{}^4} \exp\paren{\frac{2\norm{\widetilde{\mb w}_{\mc S}}{}^2}{\mu^2}} \Phi^c\paren{\frac{2\norm{\widetilde{\mb w}_{\mc S}}{}}{\mu}} - \frac{2\norm{\widetilde{\mb w}_{\mc S}}{}^3}{\mu\sqrt{2\pi}}} \\ 
& \quad + 2 \exp\paren{\frac{2\norm{\widetilde{\mb w}_{\mc S}}{}^2}{\mu^2}} \Phi^c\paren{\frac{2\norm{\widetilde{\mb w}_{\mc S}}{}}{\mu}}\norm{\paren{\mb I - \mc P_{\widetilde{\mb w}_{\mc S}}} \mb z}{}^2 \\
\ge\; & \underbrace{\paren{2\frac{\paren{{\widetilde{\mb w}_{\mc S}}^* \mb z}^2}{\norm{\mb w_{\mc S}}{}^2} + 2 \norm{\paren{\mb I - \mc P_{\widetilde{\mb w}_{\mc S}}} \mb z}{}^2}}_{ =2\norm{\mb z}{}^2 = 2}    \exp\paren{\frac{2\norm{\widetilde{\mb w}_{\mc S}}{}^2}{\mu^2}} \Phi^c\paren{\frac{2\norm{\widetilde{\mb w}_{\mc S}}{}}{\mu}} - \frac{4\paren{{\widetilde{\mb w}_{\mc S}}^* \mb z}^2}{\mu\sqrt{2\pi}\norm{\widetilde{\mb w}_{\mc S}}{}} \\
\ge\; & \frac{1}{\sqrt{2\pi}}\paren{\sqrt{4+\frac{4\norm{\widetilde{\mb w}_{\mc S}}{}^2}{\mu^2}}- \frac{2\norm{\widetilde{\mb w}_{\mc S}}{}}{\mu}} - \frac{4\paren{{\widetilde{\mb w}_{\mc S}}^* \mb z}^2}{\mu\sqrt{2\pi}\norm{\widetilde{\mb w}_{\mc S}}{}} \quad (\text{Type III lower bound for $\Phi^c(\cdot)$})\nonumber \\
\ge\; & \frac{1}{\sqrt{2\pi}} \paren{\sqrt{4+\frac{4\norm{\mb w}{}^2}{\mu^2}}- \frac{2\norm{\mb w}{}}{\mu}} - \frac{4\norm{\mb w}{}}{\mu\sqrt{2\pi}} \quad (\text{$t \mapsto \sqrt{4 + t^2} - t$ nonincreasing and Cauchy-Schwarz}) \\
\ge\; & \frac{1}{\sqrt{2\pi}} \paren{2 - \frac{3}{4} \sqrt{2}}, 
\end{align*}
where to obtain the last line we have used $\norm{\mb w}{} \le \mu/(4\sqrt{2})$. On the other hand, we similarly obtain 
\begin{align*}
\alpha = \bb E_{\mc J} \bb E_{Z \sim \mc N(0, \|\mb w_{\mc J}\|^2)} [1-\tanh^2 (Z/\mu)] \ge \frac{2}{\sqrt{2\pi}} \frac{\sqrt{4 \|\mb w\|^2/\mu^2 + 4} - 2\|\mb w\|/\mu}{2} \ge \frac{1}{\sqrt{2\pi}} \paren{2 - \frac{1}{2} \sqrt{2}}. 
\end{align*}
So we can take $\beta = \frac{1}{\sqrt{2\pi}} \paren{2 - \frac{3}{4} \sqrt{2}} < 1$.
\end{enumerate}

Putting together the above estimates for the case $\mb w \neq \mb 0$, we obtain 
\begin{align*}
\expect{\nabla^2_{\mb w} h_{\mu}\paren{\mb q^*\paren{\mb w} \mb x}} \succeq \frac{\theta}{\mu\sqrt{2\pi}} \paren{1-\frac{3}{8}\sqrt{2} - \frac{\sqrt{2\pi}}{40\pi} - \frac{1}{16000} - \frac{1}{5}} \mb I \succeq \frac{1}{5\sqrt{2\pi}} \frac{\theta}{\mu} \mb I. 
\end{align*}
Hence for all $\mb w$, we can take the $\frac{1}{5\sqrt{2\pi}} \frac{\theta}{\mu}$ as the lower bound, completing the proof. 
\end{proof}

 \label{sec:proof_geo_exp}
\subsubsection{Proof of Pointwise Concentration Results} \label{proof:cn_point}
We first establish a useful comparison lemma between random i.i.d. Bernoulli random vectors random i.i.d. normal random vectors. 

\begin{lemma}\label{lem:U-moments-bound}
Suppose $\mb z, \mb z' \in \R^n$ are independent and obey $\mb z \sim_{i.i.d.} \mathrm{BG}\paren{\theta}$ and $\mb z' \sim_{i.i.d.} \mc N\paren{0, 1}$. Then, for any fixed vector $\mb v \in \R^n$, it holds that 
\begin{align*}
\expect{\abs{\mb v^* \mb z}^m} & \le \expect{\abs{\mb v^* \mb z'}^m } = \bb E_{Z \sim \mc N\paren{0, \norm{\mb v}{}^2}}\brac{\abs{Z}^m}, \\
\expect{\norm{\mb z}{}^m} & \le \expect{\norm{\mb z'}{}^m}, 
\end{align*}
for all integers $m \ge 1$. 
\end{lemma}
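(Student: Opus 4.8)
The plan is to condition on the Bernoulli support pattern and reduce both claims to the elementary fact that the $m$-th absolute moment of a centered Gaussian is monotone in its standard deviation. Write $\mb z = \bm\Omega \odot \mb v'$ in the notation of the excerpt, i.e.\ $\mb z = \bm\Omega \odot \mb V$ with $\mb V \sim_{i.i.d.} \mc N(0,1)$ and $\bm\Omega \sim_{i.i.d.} \mathrm{Ber}(\theta)$ jointly independent. Recall that if $G \sim \mc N(0,\sigma^2)$ then $\expect{\abs{G}^m} = c_m \sigma^m$ with $c_m \doteq \bb E_{g \sim \mc N(0,1)}\brac{\abs{g}^m} > 0$, and that $\sigma \mapsto c_m\sigma^m$ is nondecreasing on $[0,\infty)$.

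For the first line: conditioned on $\bm\Omega$, the scalar $\mb v^* \mb z = \sum_i v_i \Omega_i V_i$ is a centered Gaussian with variance $\sigma_{\bm\Omega}^2 \doteq \sum_{i:\Omega_i=1} v_i^2 \le \norm{\mb v}{}^2$. Hence $\expect{\abs{\mb v^* \mb z}^m \,\big|\, \bm\Omega} = c_m \sigma_{\bm\Omega}^m \le c_m \norm{\mb v}{}^m = \bb E_{Z\sim\mc N(0,\norm{\mb v}{}^2)}\brac{\abs{Z}^m}$, and taking expectation over $\bm\Omega$ gives $\expect{\abs{\mb v^* \mb z}^m} \le \bb E_{Z\sim\mc N(0,\norm{\mb v}{}^2)}\brac{\abs{Z}^m}$; the stated equality $\bb E_{Z\sim\mc N(0,\norm{\mb v}{}^2)}\brac{\abs{Z}^m} = \expect{\abs{\mb v^* \mb z'}^m}$ is immediate since $\mb v^*\mb z' \sim \mc N(0,\norm{\mb v}{}^2)$.

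For the second line I would observe that the conclusion involves only the marginal laws of $\mb z$ and of $\mb z'$, so we may choose any coupling; realize $\mb z = \bm\Omega \odot \mb V$ and $\mb z' = \mb V$ on the same space with the \emph{same} $\mb V$. Then $\norm{\mb z}{}^2 = \sum_i \Omega_i V_i^2 \le \sum_i V_i^2 = \norm{\mb z'}{}^2$ surely, so $\norm{\mb z}{}^m \le \norm{\mb z'}{}^m$ surely, and taking expectations yields the claim. There is no real obstacle here; the only point that warrants a sentence of care is that the hypothesis that $\mb z$ and $\mb z'$ are independent is irrelevant to the conclusion (a statement about marginals), which is what licenses the monotone coupling in the second part and the support-conditioning argument in the first.
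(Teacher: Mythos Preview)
Your proof is correct and essentially matches the paper's argument: both condition on the Bernoulli support to reduce the first inequality to monotonicity of Gaussian absolute moments in the standard deviation, and both handle the second inequality via the pointwise bound $\norm{\mb z'_{\mc J}}{}\le\norm{\mb z'}{}$ (you phrase this as a monotone coupling, the paper as conditioning on the support set and summing over $\mc J$). Your remark that the independence of $\mb z$ and $\mb z'$ is irrelevant to the conclusion is a correct observation not made explicit in the paper.
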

\begin{proof}
See page~\pageref{proof:ng_g_comparison} under Section~\ref{sec:app_aux_results}. 
\end{proof}

Now, we are ready to prove Proposition~\ref{prop:concentration-gradient} to Proposition~\ref{prop:concentration-hessian-zero} as follows.

\begin{proof}{\textbf{(of Proposition~\ref{prop:concentration-gradient})} } \label{proof:pt_cn_gradient}
Let 
\begin{align*}
X_k = \frac{\mb w^*}{\norm{\mb w}{2}}\nabla h_\mu\paren{\mb q(\mb w)^*(\mb x_0)_k}, 
\end{align*}
then $\mb w^*\nabla g(\mb w)/\norm{\mb w}{}= \frac{1}{p} \sum_{k=1}^p X_k$. 
For each $X_k, k \in [p]$, from \eqref{eqn:lse-gradient}, we know that 
\begin{align*}
\abs{X_k} = \abs{\tanh\paren{\frac{\mb q(\mb w)^*(\mb x_0)_k}{\mu } }\paren{\frac{\mb w^*\overline{\mb x_0}_k}{\norm{\mb w}{}} - \frac{\norm{\mb w}{2}x_{0k}\paren{n}}{q_n\paren{\mb w}}}} \le \abs{\frac{\mb w^*\overline{\mb x_0}_k}{\norm{\mb w}{}} - \frac{\norm{\mb w}{2}x_{0k}\paren{n}}{q_n\paren{\mb w}}}, 
\end{align*}
as the magnitude of $\tanh\paren{\cdot}$ is bounded by one. Because 
\begin{align*}
\frac{\mb w^*\overline{\mb x_0}_k}{\norm{\mb w}{2}} - \frac{\norm{\mb w}{}x_{0k}\paren{n}}{q_n\paren{\mb w}} = \paren{\frac{\mb w}{\norm{\mb w}{}}, - \frac{\norm{\mb w}{}}{q_n\paren{\mb w}}}^* (\mb x_0)_k\quad \text{and} \quad (\mb x_0)_k \sim_{i.i.d.} \mathrm{BG}\paren{\theta},
\end{align*} 
invoking Lemma~\ref{lem:U-moments-bound}, we obtain for every integer $m \ge 2$ that
\begin{align*}
\expect{\abs{X_k}^m} \le \bb E_{Z \sim \mc N\paren{0, 1/q_n^2\paren{\mb w}}}\brac{\abs{Z}^m} \le \frac{1}{q_n\paren{\mb w}^m} (m-1)!!\; \le\; \frac{m!}{2} \paren{4n} \paren{2\sqrt{n}}^{m-2}, 
\end{align*}
where the Gaussian moment can be looked up in Lemma~\ref{lem:guassian_moment} and we have used that $(m-1)!! \leq m!/2$ and the assumption that $q_n\paren{\mb w} \ge 1/(2\sqrt{n})$ to get the result. Thus, by taking $\sigma^2 = 4n\geq \bb E\brac{X_k^2}$ and $R = 2\sqrt{n}$, and we obtain the claimed result by invoking Lemma~\ref{lem:mc_bernstein_scalar}.
\end{proof}

\begin{proof}{\textbf{(of Proposition~\ref{prop:concentration-hessian-negative})}} \label{proof:pt_cn_curvature}
Let 
\begin{align*}
Y_k = \frac{1}{\norm{\mb w}{}^2}\mb w^*\nabla^2 h_\mu\paren{\mb q(\mb w)^*(\mb x_0)_k}\mb w, 
\end{align*}
then $\mb w^*\nabla^2 g(\mb w)\mb w/\norm{\mb w}{}^2 = \frac{1}{p} \sum_{k=1}^p Y_k$. For each $Y_k$ ($k \in [p]$), from \eqref{eqn:lse-hessian}, we know that
\begin{align*}
Y_k \;&=\; \underbrace{\frac{1}{\mu}\paren{1-\tanh^2\paren{\frac{\mb q(\mb w)^*(\mb x_0)_k}{\mu}}}
\paren{\frac{\mb w^*\overline{\mb x_0}_k}{\norm{\mb w}{}} - \frac{x_{0k}\paren{n}\norm{\mb w}{}}{q_n(\mb w)}}^2}_{\doteq W_k} \underbrace{- \tanh\paren{\frac{\mb q(\mb w)^*(\mb x_0)_k}{\mu}}\frac{x_{0k}\paren{n}}{q_n^3(\mb w)}}_{\doteq V_k}. 
\end{align*}
Then by similar argument as in proof to Proposition~\ref{prop:concentration-gradient}, we have for all integers $m \ge 2$ that 
\begin{align*}
\expect{\abs{W_k}^m} 
& \le \frac{1}{\mu^m} \bb E\brac{\abs{\frac{\mb w^*\overline{\mb x}_k}{\norm{\mb w}{}}- \frac{x_k\paren{n}\norm{\mb w}{}}{q_n(\mb w)}}^{2m}} \le  \frac{1}{\mu^m} \bb E_{Z \sim \mc N\paren{0, 1/q_n^2\paren{\mb w}}}\brac{\abs{Z}^{2m}} \\
& \le \frac{1}{\mu^m}(2m-1)!!(4n)^m \le  \frac{m!}{2} \paren{\frac{4n}{\mu}}^m,  \\
\bb E\brac{\abs{V_k}^m} 
& \le \frac{1}{q_n^{3m}(\mb w)}\bb E\brac{\abs{v_k\paren{n}}^m} \le \paren{2\sqrt{n}}^{3m} (m-1)!! \le \frac{m!}{2} \paren{8n\sqrt{n}}^m, 
\end{align*}
where we have again used the assumption that $q_n\paren{\mb w} \ge 1/(2\sqrt{n})$ to simplify the result. Taking $\sigma_W^2 = 16n^2/\mu^2 \geq \bb E\brac{W_k^2}$, $R_W = 4n/\mu$ and $\sigma_V^2 = 64n^3\geq \bb E\brac{V_k^2}$, $R_V = 8n\sqrt{n}$, and considering $S_W = \frac{1}{p}\sum_{k=1}^p W_k$ and $S_V = \frac{1}{p}\sum_{k=1}^p V_k$, then by Lemma \ref{lem:mc_bernstein_scalar}, we obtain 
\begin{align*}
\bb P\brac{\abs{S_W - \bb E\brac{S_W}}\geq \frac{t}{2} } \;&\leq\; 2\exp\paren{- \frac{p\mu^2t^2}{128n^2 + 16n\mu t}}, \\
\bb P\brac{\abs{S_V - \bb E\brac{S_V}}\geq \frac{t}{2} } \;&\leq\; 2\exp\paren{- \frac{pt^2}{512n^3+32n\sqrt{n} t}}. 
\end{align*}
Combining the above results, we obtain
\begin{align*}
\bb P\brac{\abs{\frac{1}{p} \sum_{k=1}^p Y_k - \expect{Y_k}} \geq t} \;&=\; \bb P\brac{\abs{S_W - \bb E\brac{S_W}+ S_V - \bb E\brac{S_V} }\geq t } \nonumber \\
\; &\leq \; \bb P\brac{\abs{S_W - \bb E\brac{S_W}}\geq \frac{t}{2} }+ \bb P\brac{\abs{S_V - \bb E\brac{S_V}}\geq \frac{t}{2} } \nonumber \\
\;&\leq \; 2\exp\paren{- \frac{p\mu^2t^2}{128n^2+16n\mu t}}+2\exp\paren{- \frac{pt^2}{512n^3+32n\sqrt{n} t}} \nonumber \\
\;&\leq\; 4\exp\paren{- \frac{p\mu^2t^2}{512n^2+32n\mu t}}, 
\end{align*}
provided that $\mu \le 1/\sqrt{n}$, as desired. 
\end{proof}

\begin{proof}{\textbf{(of Proposition \ref{prop:concentration-hessian-zero})}} \label{proof:pt_cn_strcvx}
Let $\mb Z_k =\nabla^2_{\mb w} h_\mu\paren{\mb q(\mb w)^*(\mb x_0)_k}$, then $\nabla^2_{\mb w} g\paren{\mb w} = \frac{1}{p} \sum_{k=1}^p \mb Z_k$. From \eqref{eqn:lse-hessian}, we know that
\begin{align*}
\mb Z_k \;=\; \mb W_k + \mb V_k
\end{align*}
where 
\begin{align*}
\mb W_k \;&=\; \frac{1}{\mu} \paren{1 - \tanh^2\paren{\frac{\mb q(\mb w)^*(\mb x_0)_k}{\mu }}}\paren{\overline{\mb x_0}_k - \frac{x_{0k}\paren{n}\mb w}{q_n(\mb w)}}\paren{\overline{\mb x_0}_k - \frac{x_{0k}\paren{n}\mb w}{q_n(\mb w)}}^* \\
\mb V_k \;&=\; -\tanh\paren{\frac{\mb q(\mb w)^*(\mb x_0)_k}{\mu}}\paren{\frac{x_{0k}\paren{n}}{q_n(\mb w)}\mb I + \frac{x_{0k}\paren{n}\mb w\mb w^*}{q_n^3(\mb w)}}. 
\end{align*}
For $\mb W_k$, we have
\begin{align*}
\mb 0 \preceq \bb E\brac{\mb W_k^m} 
&\preceq \frac{1}{\mu^m} \bb E\brac{ \norm{\overline{\mb x_0}_k - \frac{x_{0k}\paren{n}\mb w}{q_n(\mb w)}}{}^{2m-2} \paren{\overline{\mb x_0}_k - \frac{x_{0k}\paren{n}\mb w}{q_n(\mb w)}}\paren{\overline{\mb x_0}_k - \frac{x_{0k}\paren{n}\mb w}{q_n(\mb w)}}^* } \nonumber \\
& \preceq \frac{1}{\mu^m} \bb E\brac{ \norm{\overline{\mb x_0}_k - \frac{x_{0k}\paren{n}\mb w}{q_n(\mb w)}}{}^{2m}} \mb I \nonumber \\
\;&\preceq\; \frac{2^m}{\mu^m} \bb E\brac{ \paren{\norm{\overline{\mb x_0}_k}{}^2 + \frac{x^2_{0k}\paren{n}\norm{\mb w}{}^2}{q^2_n(\mb w)}}^{m} } \mb I \nonumber \\
\;&\preceq\;\frac{2^m}{\mu^m} \bb E\brac{ \norm{(\mb x_0)_k}{}^{2m} } \mb I
\;\preceq\; \frac{2^m}{\mu^m}\bb E_{Z \sim \chi^2\paren{n}}\brac{Z^m}\mb I,
\end{align*}
where we have used the fact that $\norm{\mb w}{}^2/q_n^2(\mb w) = \norm{\mb w}{}^2/(1-\norm{\mb w}{}^2)\leq 1 $ for $\norm{\mb w}{2}\leq 1/4$ and Lemma~\ref{lem:U-moments-bound} to obtain the last line. By Lemma \ref{lem:chi_sq_moment}, we obtain 
\begin{align*}
\mb 0 \preceq \bb E\brac{\mb W_k^m} \;\preceq\; \paren{\frac{2}{\mu}}^m \frac{m!}{2} \paren{2n}^m \mb I \;=\; \frac{m!}{2}\paren{\frac{4n}{\mu}}^{m} \mb I. 
\end{align*}
Taking $R_W = 4n/\mu$ and $\mb \sigma_W^2 =16n^2/\mu^2 \ge \expect{\mb W_k^2}$, and letting $\mb S_{W} \doteq \frac{1}{p} \sum_{k=1}^p \mb W_k$, by Lemma~\ref{lem:mc_bernstein_matrix}, we obtain 
\begin{align*}
\bb P\brac{\norm{\mb S_W - \bb E\brac{\mb S_W}}{}\geq \frac{t}{2}}\;&\leq\; 2n\exp\paren{-\frac{p\mu^2t^2}{128n^2+16\mu n t}}.
\end{align*}
Similarly, for $\mb V_k$, we have
\begin{align*}
\bb E\brac{\mb V_k^m} \;&\preceq\; \paren{\frac{1}{q_n(\mb w)} + \frac{\norm{\mb w}{}^2}{q_n^3(\mb w)}}^m\bb E\brac{ \abs{x_k\paren{n}}^m }\mb I \nonumber \\
\;&\preceq\;  \paren{8n\sqrt{n}}^m \paren{m-1}!! \mb I \nonumber \\
\;&\preceq\; \frac{m!}{2}\paren{8n\sqrt{n}}^{m} \mb I,  
\end{align*}
where we have used the fact $q_n\paren{\mb w} \ge 1/(2\sqrt{n})$ to simplify the result. Similar argument also shows $-\bb E\brac{\mb V_k^m} \preceq m!\paren{8n\sqrt{n}}^m \mb I /2$. Taking $R_V = 8 n\sqrt{n}$ and $\mb \sigma_V^2 = 64n^3$, and letting $\mb S_V \doteq \frac{1}{p} \sum_{k=1}^p \mb V_k$, again by Lemma \ref{lem:mc_bernstein_matrix}, we obtain 
\begin{align*}
\bb P\brac{\norm{\mb S_V - \bb E\brac{\mb S_V}}{}\geq \frac{t}{2}}\;&\leq\; 2n\exp\paren{- \frac{pt^2}{512n^3+32n\sqrt{n}t}}.
\end{align*}
Combining the above results, we obtain 
\begin{align*}
\bb P\brac{\norm{\frac{1}{p}\sum_{k=1}^p \mb Z_k - \expect{\mb Z_k} }{}\geq t} \;&=\; \bb P\brac{\norm{\mb S_W -\bb E\brac{\mb S_W}+\mb S_V -\bb E\brac{\mb S_V} }{}\geq t} \nonumber \\
\;&\leq \; \bb P\brac{\norm{\mb S_W -\bb E\brac{\mb S_W} }{}\geq \frac{t}{2}} + \bb P\brac{\norm{\mb S_V -\bb E\brac{\mb S_V} }{}\geq \frac{t}{2}} \nonumber \\
\;&\leq \; 2n\exp\paren{-\frac{p\mu^2t^2}{128n^2+16\mu n t}}+ 2n\exp\paren{- \frac{pt^2}{512n^3+32n\sqrt{n}t}} \nonumber \\
\;&\leq \; 4n\exp\paren{-\frac{p\mu^2t^2}{512n^2+32\mu n t}},
\end{align*}
where we have simplified the final result using $\mu \leq 1/\sqrt{n}$.
\end{proof}

\subsubsection{Proof of Lipschitz Results} \label{proof:cn_lips}
We need the following lemmas to prove the Lipschitz results.

\begin{lemma}\label{lem:composition} Suppose that $\varphi_1 : U \to V$ is an $L$-Lipschitz map from a normed space $U$ to a normed space $V$, and that $\varphi_2 : V \to W$ is an $L'$-Lipschitz map from $V$ to a normed space $W$. Then the composition $\varphi_2 \circ \varphi_1 : U \to W$ is $LL'$-Lipschitz. 
\end{lemma}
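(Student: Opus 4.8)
\textbf{Proof proposal for Lemma~\ref{lem:composition}.} The plan is the direct one-line estimate chaining the two Lipschitz bounds. Fix arbitrary points $\mb u_1, \mb u_2 \in U$. First I would apply the Lipschitz property of $\varphi_2$ to the pair of points $\varphi_1(\mb u_1), \varphi_1(\mb u_2) \in V$, which gives
\begin{align*}
\norm{\varphi_2\paren{\varphi_1(\mb u_1)} - \varphi_2\paren{\varphi_1(\mb u_2)}}{W} \le L' \norm{\varphi_1(\mb u_1) - \varphi_1(\mb u_2)}{V}.
\end{align*}
Then I would bound the right-hand side using the Lipschitz property of $\varphi_1$, namely $\norm{\varphi_1(\mb u_1) - \varphi_1(\mb u_2)}{V} \le L \norm{\mb u_1 - \mb u_2}{U}$, and combine the two inequalities to conclude
\begin{align*}
\norm{\paren{\varphi_2 \circ \varphi_1}(\mb u_1) - \paren{\varphi_2 \circ \varphi_1}(\mb u_2)}{W} \le L L' \norm{\mb u_1 - \mb u_2}{U}.
\end{align*}
Since $\mb u_1, \mb u_2$ were arbitrary, this is exactly the assertion that $\varphi_2 \circ \varphi_1$ is $LL'$-Lipschitz.

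There is essentially no obstacle here: the only thing to be careful about is that the norms in the three successive estimates live in three different normed spaces ($W$, $V$, and $U$ respectively), so the notation should make clear which ambient norm is being used at each step; but no nontrivial argument is required. This lemma will be invoked repeatedly in the subsequent Lipschitz estimates (Propositions~\ref{prop:lip-hessian-negative}, \ref{prop:lip-gradient}, and \ref{prop:lip-hessian-zero}) to assemble Lipschitz constants of composite expressions from those of their building blocks.
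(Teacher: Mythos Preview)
Your proof is correct and is exactly the standard argument. The paper in fact states this lemma without proof, treating it as an elementary fact, so your write-up already supplies more detail than the paper does.
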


\begin{lemma}\label{lem:Lip-combined} Fix any $\mc D \subseteq \reals^{n-1}$. Let $g_1, g_2 : \mc D \to \reals$, and assume that $g_1$ is $L_1$-Lipschitz, and $g_2$ is $L_2$-Lipschitz, and that $g_1$ and $g_2$ are bounded over $\mc D$, i.e., $\abs{g_1(\mb x)}\le M_1$ and $\abs{g_2(\mb x)} \le M_2$ for all $x\in \mc D$ with some constants $M_1>0$ and $M_2>0$. Then the function $h(\mb x) = g_1(\mb x) g_2(\mb x)$ is $L$-Lipschitz, with 
\begin{align*}
L \;=\; M_1 L_2 + M_2 L_1. 
\end{align*} 
\end{lemma}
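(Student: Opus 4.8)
The plan is to prove the product-of-Lipschitz bound by the standard ``add and subtract a hybrid term'' device, controlling one factor by its Lipschitz modulus and the other by its uniform bound. Concretely, I would fix arbitrary $\mb x, \mb y \in \mc D$ and write
\begin{align*}
h(\mb x) - h(\mb y) &= g_1(\mb x) g_2(\mb x) - g_1(\mb y) g_2(\mb y) \\
&= g_1(\mb x)\paren{g_2(\mb x) - g_2(\mb y)} + g_2(\mb y)\paren{g_1(\mb x) - g_1(\mb y)}.
\end{align*}
Taking absolute values and applying the triangle inequality, then the uniform bounds $\abs{g_1(\mb x)} \le M_1$, $\abs{g_2(\mb y)} \le M_2$ and the Lipschitz estimates $\abs{g_2(\mb x) - g_2(\mb y)} \le L_2 \norm{\mb x - \mb y}{}$, $\abs{g_1(\mb x) - g_1(\mb y)} \le L_1 \norm{\mb x - \mb y}{}$, yields
\begin{align*}
\abs{h(\mb x) - h(\mb y)} \le M_1 L_2 \norm{\mb x - \mb y}{} + M_2 L_1 \norm{\mb x - \mb y}{} = \paren{M_1 L_2 + M_2 L_1}\norm{\mb x - \mb y}{}.
\end{align*}
Since $\mb x, \mb y$ were arbitrary, this is exactly the claimed $L$-Lipschitz property with $L = M_1 L_2 + M_2 L_1$.

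There is no genuine obstacle here; the only thing worth a moment's care is the choice of which hybrid term to insert (one could equally subtract and add $g_1(\mb y) g_2(\mb x)$, which would pair $M_2$ with $L_2$ and $M_1$ with $L_1$ in the opposite way and give the same final constant), and making sure the norm on $\mc D \subseteq \R^{n-1}$ is the same $\ell^2$ norm used in all the invoking propositions (Propositions~\ref{prop:lip-hessian-negative}--\ref{prop:lip-hessian-zero}), so the constant propagates correctly. This lemma, together with Lemma~\ref{lem:composition}, is then used repeatedly to assemble the Lipschitz constants of the composite quantities $\mb w^*\nabla^2 g(\mb w;\mb X_0)\mb w/\norm{\mb w}{}^2$, $\mb w^*\nabla g(\mb w;\mb X_0)/\norm{\mb w}{}$, and $\nabla^2 g(\mb w;\mb X_0)$ out of the Lipschitz moduli and bounds of their elementary pieces ($\tanh$, $1-\tanh^2$, the coordinate maps, $\norm{\mb w}{}$, $q_n(\mb w)$, and $1/q_n(\mb w)$) on the relevant sub-regions of $\Gamma$.
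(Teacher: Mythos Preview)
Your proof is correct and is exactly the standard add-and-subtract argument one would expect; the paper in fact states this lemma without proof, treating it as elementary. Your write-up fills in the one-line details precisely as intended, and your remarks on how it is used downstream are accurate.
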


\begin{lemma}\label{lem:lip-h-mu}
For every $\mb w, \mb w^\prime \in \Gamma$, and every fixed $\mb x$, we have
\begin{align*}
\abs{\dot{h}_\mu\paren{\mb q(\mb w)^*\mb x} -\dot{h}_\mu\paren{\mb q(\mb w^\prime)^*\mb x} }\;&\le\; \frac{2\sqrt{n}}{\mu}\norm{\mb x}{} \norm{\mb w-\mb w^\prime}{}, \\
\abs{\ddot{h}_\mu\paren{\mb q(\mb w)^*\mb x} -\ddot{h}_\mu\paren{\mb q(\mb w^\prime)^*\mb x} }\;&\le\; \frac{4\sqrt{n}}{\mu^2}\norm{\mb x}{} \norm{\mb w-\mb w^\prime}{}.
\end{align*}
\end{lemma}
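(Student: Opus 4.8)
The plan is to realize each of the maps $\mb w \mapsto \dot h_\mu\paren{\mb q(\mb w)^*\mb x}$ and $\mb w \mapsto \ddot h_\mu\paren{\mb q(\mb w)^*\mb x}$ as a composition of three Lipschitz maps and invoke Lemma~\ref{lem:composition}. Concretely, set $\varphi_1 : \mb w \mapsto \mb q(\mb w)$ on $\Gamma$, $\varphi_2 : \mb y \mapsto \mb y^*\mb x$ on $\R^n$, and $\varphi_3 : z \mapsto \dot h_\mu(z) = \tanh(z/\mu)$ (respectively $z \mapsto \ddot h_\mu(z) = \tfrac1\mu\paren{1-\tanh^2(z/\mu)}$), using the derivative formulas recorded in Lemma~\ref{lem:derivatives_basic_surrogate}. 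Then the two quantities of interest are exactly $\varphi_3\circ\varphi_2\circ\varphi_1$, so it suffices to bound the Lipschitz constant of each factor and multiply.

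The key step is bounding the Lipschitz constant of $\varphi_1$ over $\Gamma$. Since $\mb q(\mb w) = \paren{\mb w, \sqrt{1-\norm{\mb w}{}^2}}$, its Jacobian is $D\mb q(\mb w) = \brac{\mb I_{n-1};\, -\mb w^*/q_n(\mb w)}$ (the $(n-1)$ identity block stacked on a row), whose operator norm equals $\sqrt{1 + \norm{\mb w}{}^2/q_n^2(\mb w)} = 1/q_n(\mb w)$. On $\Gamma$ we have $\norm{\mb w}{}^2 < \tfrac{4n-1}{4n}$, hence $q_n(\mb w) > 1/(2\sqrt n)$ and therefore $\norm{D\mb q(\mb w)}{} < 2\sqrt n$ uniformly. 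Because $\Gamma$ is an open Euclidean ball, hence convex, integrating this derivative bound along the segment joining any $\mb w, \mb w' \in \Gamma$ shows that $\varphi_1$ is $2\sqrt n$-Lipschitz on $\Gamma$.

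For the remaining factors, $\varphi_2$ is $\norm{\mb x}{}$-Lipschitz by Cauchy--Schwarz, while $\varphi_3$ has Lipschitz constant $\sup_z \tfrac1\mu\abs{1-\tanh^2(z/\mu)} \le 1/\mu$ in the first-derivative case, and $\sup_z \tfrac{2}{\mu^2}\abs{\tanh(z/\mu)\paren{1-\tanh^2(z/\mu)}} \le 2/\mu^2$ in the second-derivative case (both from $\abs{\tanh}\le 1$). Multiplying the three Lipschitz constants via Lemma~\ref{lem:composition} yields the claimed bounds $\tfrac{2\sqrt n}{\mu}\norm{\mb x}{}$ and $\tfrac{4\sqrt n}{\mu^2}\norm{\mb x}{}$. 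The only genuine subtlety is securing the Jacobian bound for $\varphi_1$ \emph{uniformly}, which is precisely why the analysis is confined to $\Gamma$ rather than all of $\bb B^{n-1}$ --- near $\partial\bb B^{n-1}$ one has $q_n(\mb w)\to 0$ and no uniform Lipschitz constant exists. I do not anticipate any further obstacle, as the rest is elementary.
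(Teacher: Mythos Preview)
Your proposal is correct and follows essentially the same approach as the paper: decompose into the three maps $\mb w \mapsto \mb q(\mb w)$, $\mb q \mapsto \mb q^*\mb x$, and $z \mapsto \dot h_\mu(z)$ (resp.\ $\ddot h_\mu(z)$), then apply Lemma~\ref{lem:composition} together with the Lipschitz constants from Lemma~\ref{lem:derivatives_basic_surrogate}. The only cosmetic difference is in establishing that $\mb w \mapsto \mb q(\mb w)$ is $2\sqrt n$-Lipschitz on $\Gamma$: you argue via the Jacobian norm $\norm{D\mb q(\mb w)}{} = 1/q_n(\mb w) < 2\sqrt n$ and convexity of $\Gamma$, whereas the paper computes $\norm{\mb q(\mb w) - \mb q(\mb w')}{}$ directly by bounding $\abs{q_n(\mb w) - q_n(\mb w')}$ algebraically; both routes yield the same constant.
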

\begin{proof}
We have 
\begin{align*}
\abs{q_n\paren{\mb w} - q_n\paren{\mb w'}} 
& =  \abs{\sqrt{1-\norm{\mb w}{}^2} - \sqrt{1-\norm{\mb w'}{}^2}}
  = \frac{\norm{\mb w + \mb w'}{} \norm{\mb w - \mb w'}{}}{\sqrt{1-\norm{\mb w}{}^2} + \sqrt{1-\norm{\mb w'}{}^2}} \\
& \le \frac{\max\paren{\norm{\mb w}{}, \norm{\mb w'}{}}}{\min\paren{q_n\paren{\mb w}, q_n\paren{\mb w'}}} \norm{\mb w - \mb w'}{}. 
\end{align*}
Hence it holds that 
\begin{align*}
\norm{\mb q\paren{\mb w} - \mb q\paren{\mb w'}}{}^2 
& = \norm{\mb w - \mb w'}{}^2 + \abs{q_n\paren{\mb w} - q_n\paren{\mb w'}}^2 \le \paren{1+\frac{\max\paren{\norm{\mb w}{}^2, \norm{\mb w'}{}^2}}{\min\paren{q_n^2\paren{\mb w}, q_n^2\paren{\mb w'}}}} \norm{\mb w - \mb w'}{}^2 \\
& = \frac{1}{\min\paren{q_n^2\paren{\mb w}, q_n^2\paren{\mb w'}}} \norm{\mb w - \mb w'}{}^2 \le 4n \norm{\mb w - \mb w'}{}^2, 
\end{align*}
where we have used the fact $q_n\paren{\mb w} \ge 1/(2\sqrt{n})$ to get the final result. Hence the mapping $\mb w \mapsto \mb q(\mb w)$ is $2\sqrt{n}$-Lipschitz over $\Gamma$. Moreover it is easy to see $\mb q \mapsto \mb q^*\mb x$ is $\norm{\mb x}{2}$-Lipschitz. By Lemma \ref{lem:derivatives_basic_surrogate} and the composition rule in Lemma \ref{lem:composition}, we obtain the desired claims.
\end{proof}

\begin{lemma}\label{lem:lip-g} 
For any fixed $\mb x$, consider the function
\begin{align*}
t_{\mb x}(\mb w) \doteq \frac{\mb w^*\overline{\mb x}}{\norm{\mb w}{}} - \frac{x_n}{q_n(\mb w)}\norm{\mb w}{}
\end{align*}
defined over $\mb w \in \Gamma$. Then, for all $\mb w, \mb w'$ in $\Gamma$ such that $\norm{\mb w}{} \ge r$ and $\norm{\mb w'}{} \ge r$ for any constant $r \in \paren{0, 1}$, it holds that 
\begin{align*}
\abs{ t_{\mb x}(\mb w)- t_{\mb x}(\mb w^\prime)}\;&\le \; 2\paren{\frac{\norm{\mb x}{}}{r}+ 4n^{3/2} \norm{\mb x}{\infty}} \norm{\mb w-\mb w^\prime}{} , \\
 \abs{ t_{\mb x}(\mb w) }\; &\le \; 2\sqrt{n}\norm{\mb x}{}, \\
\abs{ t^2_{\mb x}(\mb w)- t^2_{\mb x}(\mb w^\prime)}\;&\le \;8\sqrt{n} \norm{\mb x}{} \paren{\frac{\norm{\mb x}{}}{r}+ 4n^{3/2} \norm{\mb x}{\infty}} \norm{\mb w-\mb w^\prime}{},  \\
 \abs{ t^2_{\mb x}(\mb w) }\; &\le \;  4n\norm{\mb x}{}^2. 
\end{align*}
\end{lemma}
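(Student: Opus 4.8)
The plan is to decompose $t_{\mb x}$ into its two natural pieces,
\[
a(\mb w) \doteq \frac{\mb w^*\overline{\mb x}}{\norm{\mb w}{}}, \qquad b(\mb w) \doteq \frac{x_n}{q_n(\mb w)}\norm{\mb w}{},
\]
so that $t_{\mb x}(\mb w) = a(\mb w) - b(\mb w)$, and to control the size and the Lipschitz modulus of each piece over $\Gamma$ (respectively over $\Gamma \cap \set{\norm{\mb w}{} \ge r}$ for the Lipschitz statements). For the two magnitude bounds it is cleaner not to use the triangle inequality but to note, as in the proof of Proposition~\ref{prop:concentration-gradient}, that $t_{\mb x}(\mb w) = \mb u(\mb w)^* \mb x$ for the vector $\mb u(\mb w) \doteq \paren{\mb w/\norm{\mb w}{},\, -\norm{\mb w}{}/q_n(\mb w)} \in \R^n$, whose squared norm is $1 + \norm{\mb w}{}^2/q_n^2(\mb w) = 1/q_n^2(\mb w)$ because $\norm{\mb w}{}^2 + q_n^2(\mb w) = 1$. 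Since $q_n(\mb w) \ge 1/(2\sqrt{n})$ on $\Gamma$ (recorded just after the definition of $\Gamma$), Cauchy--Schwarz gives $\abs{t_{\mb x}(\mb w)} \le \norm{\mb u(\mb w)}{}\norm{\mb x}{} \le 2\sqrt{n}\norm{\mb x}{}$, and squaring yields $\abs{t^2_{\mb x}(\mb w)} \le 4n\norm{\mb x}{}^2$.

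For the Lipschitz bound on the first piece, I would restrict to $\norm{\mb w}{}, \norm{\mb w'}{} \ge r$ and use the add--subtract estimate $\norm{\mb w/\norm{\mb w}{} - \mb w'/\norm{\mb w'}{}}{} \le \norm{\mb w - \mb w'}{}/\norm{\mb w}{} + \abs{\norm{\mb w'}{} - \norm{\mb w}{}}/\norm{\mb w}{} \le 2\norm{\mb w - \mb w'}{}/r$, i.e.\ the normalization map is $2/r$-Lipschitz on that set (this sidesteps the nonconvexity of the annular domain); composing with the $\norm{\overline{\mb x}}{}$-Lipschitz linear functional $\mb v \mapsto \mb v^*\overline{\mb x}$ via Lemma~\ref{lem:composition} and using $\norm{\overline{\mb x}}{} \le \norm{\mb x}{}$ makes $a$ a $\paren{2\norm{\mb x}{}/r}$-Lipschitz function. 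For the second piece, write $b(\mb w) = x_n\,\varphi\paren{\norm{\mb w}{}}$ with $\varphi(s) = s/\sqrt{1-s^2}$; since $\varphi'(s) = (1-s^2)^{-3/2}$ and $1 - \norm{\mb w}{}^2 = q_n^2(\mb w) \ge 1/(4n)$ on $\Gamma$, the function $\varphi$ is $(4n)^{3/2} = 8n^{3/2}$-Lipschitz there, and composing with the $1$-Lipschitz map $\mb w \mapsto \norm{\mb w}{}$ shows $b$ is $8n^{3/2}\abs{x_n} \le 8n^{3/2}\norm{\mb x}{\infty}$-Lipschitz. Adding the two moduli and pulling out a factor $2$ gives the first display, $2\paren{\norm{\mb x}{}/r + 4n^{3/2}\norm{\mb x}{\infty}}$.

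The last display is then immediate from Lemma~\ref{lem:Lip-combined} applied with $g_1 = g_2 = t_{\mb x}$: take $M_1 = M_2 = 2\sqrt{n}\norm{\mb x}{}$ from the magnitude bound and $L_1 = L_2 = 2\paren{\norm{\mb x}{}/r + 4n^{3/2}\norm{\mb x}{\infty}}$ from the first display, so that $L = M_1 L_2 + M_2 L_1 = 8\sqrt{n}\norm{\mb x}{}\paren{\norm{\mb x}{}/r + 4n^{3/2}\norm{\mb x}{\infty}}$, exactly as claimed. There is no real conceptual obstacle in this lemma; the only point requiring care is that the natural domains here (an annular region for $a$, and a ball whose radius tends to $1$ for $b$) are not the whole space, so the estimates must be kept uniform up to the boundary of $\Gamma$ — which is precisely what the bounds $\norm{\mb w}{} \ge r$ and $q_n(\mb w) \ge 1/(2\sqrt{n})$ provide — and otherwise it is a matter of tracking constants so that the stated numerical factors come out right.
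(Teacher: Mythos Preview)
Your proof is correct and follows essentially the same route as the paper: the paper also writes $t_{\mb x}(\mb w)$ as an inner product with the vector $\paren{\mb w/\norm{\mb w}{},\,-\norm{\mb w}{}/q_n(\mb w)}$ of norm $1/q_n(\mb w)$ for the magnitude bound, then splits the Lipschitz estimate into the same two pieces, controlling the normalization piece via the same add--subtract bound $2\norm{\mb x}{}/r$ and the second piece via the mean value theorem on $s \mapsto s/\sqrt{1-s^2}$ to get $8n^{3/2}\norm{\mb x}{\infty}$, and finally obtains the bound on $t^2_{\mb x}$ by the factorization $\abs{t^2_{\mb x}(\mb w)-t^2_{\mb x}(\mb w')}\le 2\sup\abs{t_{\mb x}}\cdot\abs{t_{\mb x}(\mb w)-t_{\mb x}(\mb w')}$, which is exactly your invocation of Lemma~\ref{lem:Lip-combined} with $g_1=g_2=t_{\mb x}$.
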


\begin{proof}
First of all, we have 
\begin{align*}
\abs{ t_{\mb x}(\mb w) }\; =\; \brac{\frac{\mb w^*}{\norm{\mb w}{}},-\frac{\norm{\mb w}{}}{q_n(\mb w)}}\mb x \;\le\; \norm{\mb x}{}\paren{1+ \frac{\norm{\mb w}{}^2}{q_n^2(\mb w)}}^{1/2} =\frac{\norm{\mb x}{}}{\abs{q_n(\mb w)}}\le 2\sqrt{n}\norm{\mb x}{}, 
\end{align*}
where we have used the assumption that $q_n\paren{\mb w} \ge 1/(2\sqrt{n})$ to simplify the final result. The claim about $\abs{t_{\mb x}^2\paren{\mb w}}$ follows immediately. Now 
\begin{align*}
\abs{ t_{\mb x}(\mb w)- t_{\mb x}(\mb w^\prime)} \le \abs{\paren{\frac{\mb w}{\norm{\mb w}{}} -\frac{\mb w^\prime}{\norm{\mb w^\prime}{}}}^*\overline{\mb x}  } + \abs{x_n}\abs{ \frac{\norm{\mb w}{}}{q_n(\mb w)}- \frac{\norm{\mb w^\prime}{}}{q_n(\mb w^\prime)} }. 
\end{align*}
Moreover we have 
\begin{align*}
\abs{\paren{\frac{\mb w}{\norm{\mb w}{}} -\frac{\mb w^\prime}{\norm{\mb w^\prime}{}}}^*\overline{\mb x}  } 
& \le \norm{\overline{\mb x}}{}\norm{\frac{\mb w}{\norm{\mb w}{}} -\frac{\mb w^\prime}{\norm{\mb w^\prime}{}}}{} \le \norm{\mb x}{} \frac{\norm{\mb w - \mb w'}{} \norm{\mb w'}{} + \norm{\mb w'}{} \abs{\norm{\mb w}{} - \norm{\mb w'}{}}  }{\norm{\mb w}{} \norm{\mb w'}{}} \\
& \le \frac{2\norm{\mb x}{}}{r} \norm{\mb w - \mb w'}{},  
\end{align*}
where we have used the assumption that $\norm{\mb w}{} \ge r$ to simplify the result. Noticing that $t \mapsto t/\sqrt{1-t^2}$ is continuous over $\brac{a, b}$ and differentiable over $\paren{a, b}$ for any $0 < a < b < 1$, by mean value theorem, 
\begin{align*}
\abs{ \frac{\norm{\mb w}{}}{q_n(\mb w)}- \frac{\norm{\mb w^\prime}{}}{q_n(\mb w^\prime)} } \;\le\; \sup_{\mb w\; \in\; \Gamma } \frac{1}{\paren{1-\norm{\mb w}{}^2}^{3/2}} \norm{\mb w-\mb w^\prime}{} \; \le \; 8n^{3/2} \norm{\mb w-\mb w^\prime}{}, 
\end{align*}
where we have again used the assumption that $q_n\paren{\mb w} \ge 1/(2\sqrt{n})$ to simplify the last result. Collecting the above estimates, we obtain 
\begin{align*}
\abs{ t_{\mb x}(\mb w)- t_{\mb x}(\mb w^\prime)} \le \paren{2\frac{\norm{\mb x}{}}{r}+ 8n^{3/2} \norm{\mb x}{\infty}} \norm{\mb w-\mb w^\prime}{}, 
\end{align*}
as desired. For the last one, we have 
\begin{align*}
\abs{ t^2_{\mb x}(\mb w)- t^2_{\mb x}(\mb w^\prime)} \;&=\; \abs{ t_{\mb x}(\mb w)- t_{\mb x}(\mb w^\prime)} \abs{ t_{\mb x}(\mb w)+t_{\mb x}(\mb w^\prime)} \nonumber \\
\;&\le \; 2\sup_{\mb s\; \in\; \Gamma} \abs{t_{\mb x}(\mb s)}\abs{ t_{\mb x}(\mb w)- t_{\mb x}(\mb w^\prime)}, 
\end{align*}
leading to the claimed result once we substitute estimates of the involved quantities. 
\end{proof}

\begin{lemma}\label{lem:lp-Phi}
For any fixed $\mb x$, consider the function 
\begin{align*}
\mb \Phi_{\mb x}(\mb w) = \frac{x_n}{q_n(\mb w)}\mb I + \frac{x_n}{q_n^3(\mb w)}\mb w\mb w^*
\end{align*}
defined over $\mb w \in \Gamma$. Then, for all $\mb w, \mb w' \in \Gamma$ such that $\norm{\mb w}{} < r$ and $\norm{\mb w'}{} < r$ with any constant $r \in \paren{0, 1/2}$, it holds that 
\begin{align*}
 \norm{\mb \Phi_{\mb x}(\mb w) }{} \;&\le\;2\norm{\mb x}{\infty},  \\
\norm{\mb \Phi_{\mb x}(\mb w) - \mb \Phi_{\mb x}(\mb w^\prime)}{}\;&\le\;  4\norm{\mb x}{\infty}\norm{\mb w - \mb w^\prime}{}.  
\end{align*}
\end{lemma}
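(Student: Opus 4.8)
The plan is to handle the two claims separately, and in both cases to exploit that on the region of interest $q_n(\mb w) = \sqrt{1-\norm{\mb w}{}^2}$ is bounded away from zero: since $\norm{\mb w}{} < r < 1/2$, we have $q_n^2(\mb w) = 1 - \norm{\mb w}{}^2 > 3/4$, hence $q_n(\mb w) > \sqrt 3/2$. Note also that $\set{\mb w : \norm{\mb w}{} < r}$ is a convex set contained in $\Gamma$, which will matter for the Lipschitz part.

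For the uniform bound, I would simply apply the triangle inequality to the two summands defining $\mb \Phi_{\mb x}(\mb w)$, using $\norm{\mb w\mb w^*}{} = \norm{\mb w}{}^2$ and then collapsing via $q_n^2(\mb w) + \norm{\mb w}{}^2 = 1$:
\[
\norm{\mb \Phi_{\mb x}(\mb w)}{} \le \frac{\abs{x_n}}{q_n(\mb w)} + \frac{\abs{x_n}\,\norm{\mb w}{}^2}{q_n^3(\mb w)} = \abs{x_n}\,\frac{q_n^2(\mb w)+\norm{\mb w}{}^2}{q_n^3(\mb w)} = \frac{\abs{x_n}}{q_n^3(\mb w)}.
\]
Since $q_n^3(\mb w) > (3/4)^{3/2} = 3\sqrt3/8$ and $\abs{x_n} \le \norm{\mb x}{\infty}$, this is at most $\tfrac{8}{3\sqrt3}\norm{\mb x}{\infty} < 2\norm{\mb x}{\infty}$, giving the first claim.

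For the Lipschitz estimate I would use the mean value inequality: because the segment joining $\mb w$ and $\mb w'$ stays inside the convex set $\set{\norm{\mb w}{} < r}$, it suffices to bound the operator norm of the differential $D\mb \Phi_{\mb x}(\mb w)$ uniformly there. Using $\nabla_{\mb w} q_n(\mb w) = -\mb w/q_n(\mb w)$, a direct computation gives, for any unit $\mb u$,
\[
D\mb \Phi_{\mb x}(\mb w)[\mb u] = x_n\,\frac{\mb w^*\mb u}{q_n^3(\mb w)}\,\mb I + x_n\,\frac{\mb u\mb w^* + \mb w\mb u^*}{q_n^3(\mb w)} + 3x_n\,\frac{\mb w^*\mb u}{q_n^5(\mb w)}\,\mb w\mb w^*.
\]
Taking norms, using $\abs{\mb w^*\mb u} \le \norm{\mb w}{}$ and $\norm{\mb u\mb w^*+\mb w\mb u^*}{} \le 2\norm{\mb w}{}$, and collapsing $q_n^2(\mb w)+\norm{\mb w}{}^2 = 1$ once more,
\[
\norm{D\mb \Phi_{\mb x}(\mb w)[\mb u]}{} \le \abs{x_n}\paren{\frac{3\norm{\mb w}{}}{q_n^3(\mb w)} + \frac{3\norm{\mb w}{}^3}{q_n^5(\mb w)}} = \frac{3\abs{x_n}\,\norm{\mb w}{}}{(1-\norm{\mb w}{}^2)^{5/2}}.
\]
For $\norm{\mb w}{} < 1/2$ the right side is at most $\tfrac{3\cdot(1/2)}{(3/4)^{5/2}}\abs{x_n} = \tfrac{16}{3\sqrt3}\abs{x_n} < 4\abs{x_n} \le 4\norm{\mb x}{\infty}$, so $\mb \Phi_{\mb x}$ is $4\norm{\mb x}{\infty}$-Lipschitz on $\set{\norm{\mb w}{} < r}$, as claimed.

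Everything here is essentially routine; the only points requiring a bit of care are the numeric comparison $\tfrac{3\cdot(1/2)}{(3/4)^{5/2}} < 4$ (which holds with room to spare) and invoking the mean value inequality for a matrix-valued map, which is legitimate precisely because the domain is convex — so the real ``obstacle'' is just bookkeeping of constants. If one prefers to avoid differentiation, an equivalent route is to split $\mb \Phi_{\mb x}(\mb w) - \mb \Phi_{\mb x}(\mb w')$ into an identity part and a rank-one part and bound each using the elementary estimates $\abs{q_n(\mb w)-q_n(\mb w')} \le \tfrac{\norm{\mb w}{}+\norm{\mb w'}{}}{q_n(\mb w)+q_n(\mb w')}\norm{\mb w-\mb w'}{}$ (as derived in the proof of Lemma~\ref{lem:lip-h-mu}) and $\norm{\mb w\mb w^*-\mb w'\mb w'^*}{} \le (\norm{\mb w}{}+\norm{\mb w'}{})\norm{\mb w-\mb w'}{}$; this produces the same constant after the same kind of arithmetic.
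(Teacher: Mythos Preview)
Your proof is correct. The first bound is handled exactly as in the paper: triangle inequality on the two summands, collapse via $q_n^2+\norm{\mb w}{}^2=1$, and then use $q_n(\mb w)>\sqrt{3}/2$.

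For the Lipschitz bound you take a different route than the paper. The paper argues elementarily by splitting $\mb \Phi_{\mb x}(\mb w)-\mb \Phi_{\mb x}(\mb w')$ into the identity part and the rank-one part, bounding the former via the estimate $\abs{1/q_n(\mb w)-1/q_n(\mb w')}\le \tfrac{4}{3\sqrt3}\norm{\mb w-\mb w'}{}$ and the latter via the scalar mean value theorem applied to $t\mapsto t^2/(1-t^2)^{3/2}$, then adding. You instead compute the full Fr\'echet differential of the matrix-valued map and bound $\norm{D\mb \Phi_{\mb x}(\mb w)[\mb u]}{}$ uniformly on the convex region $\set{\norm{\mb w}{}<r}$, which is a clean one-shot application of the mean value inequality. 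Your arithmetic is right: the three pieces combine to $3\abs{x_n}\norm{\mb w}{}/q_n^5$, and on $\norm{\mb w}{}<1/2$ this is at most $16/(3\sqrt3)\,\norm{\mb x}{\infty}<4\norm{\mb x}{\infty}$. The elementary alternative you sketch at the end (using $\norm{\mb w\mb w^*-\mb w'\mb w'^*}{}\le(\norm{\mb w}{}+\norm{\mb w'}{})\norm{\mb w-\mb w'}{}$) is in fact closer in spirit to what the paper does, and arguably handles the rank-one part more carefully than the paper's own bound, which replaces $\norm{q_n^{-3}(\mb w)\mb w\mb w^*-q_n^{-3}(\mb w')\mb w'\mb w'^*}{}$ by the scalar difference $\abs{\norm{\mb w}{}^2/q_n^3(\mb w)-\norm{\mb w'}{}^2/q_n^3(\mb w')}$.
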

\begin{proof}
Simple calculation shows 
\begin{align*}
\norm{\mb \Phi_{\mb x}(\mb w) }{} \le \norm{\mb x}{\infty} \paren{ \frac{1}{q_n(\mb w)}+ \frac{\norm{\mb w}{}^2}{q_n^3(\mb w)} } = \frac{\norm{\mb x}{\infty}}{q_n^3(\mb w)}\le \frac{\norm{\mb x}{\infty}}{(1-r^2)^{3/2}} \le 2\norm{\mb x}{\infty}. 
\end{align*}
For the second one, we have 
\begin{align*}
\norm{\mb \Phi_{\mb x}(\mb w)-\mb \Phi_{\mb x}(\mb w^\prime)}{}
&\le \norm{\mb x}{\infty} \norm{ \frac{1}{q_n(\mb w)}\mb I + \frac{1}{q_n^3(\mb w)}\mb w\mb w^* -  \frac{1}{q_n(\mb w^\prime)}\mb I - \frac{1}{q_n^3(\mb w^\prime)}\mb w^\prime(\mb w^\prime)^*}{} \\
&\le \norm{\mb x}{\infty} \paren{\abs{\frac{1}{q_n\paren{\mb w}} - \frac{1}{q_n\paren{\mb w'}}} + \abs{\frac{\norm{\mb w}{}^2}{q_n^3\paren{\mb w}} - \frac{\norm{\mb w'}{}^2}{q_n^3\paren{\mb w'}}}}. 
\end{align*}
Now 
\begin{align*}
\abs{\frac{1}{q_n\paren{\mb w}} - \frac{1}{q_n\paren{\mb w'}}} = \frac{\abs{q_n\paren{\mb w} - q_n\paren{\mb w'}}}{q_n\paren{\mb w} q_n\paren{\mb w'}} \le \frac{\max\paren{\norm{\mb w}{}, \norm{\mb w'}{}}}{\min\paren{q_n^3\paren{\mb w}, q_n^3\paren{\mb w'}}} \norm{\mb w - \mb w'}{} \le \frac{4}{3\sqrt{3}} \norm{\mb w - \mb w'}{}, 
\end{align*}
where we have applied the estimate for $\abs{q_n\paren{\mb w} - q_n\paren{\mb w'}}$ as established in Lemma~\ref{lem:lip-h-mu} and also used $\norm{\mb w}{} \le 1/2$ and $\norm{\mb w'}{} \le 1/2$ to simplify the above result. Further noticing $t \mapsto t^2/\paren{1-t^2}^{3/2}$ is differentiable over $t \in \paren{0, 1}$, we apply the mean value theorem and obtain 
\begin{align*}
\abs{\frac{\norm{\mb w}{}^2}{q_n^3\paren{\mb w}} - \frac{\norm{\mb w'}{}^2}{q_n^3\paren{\mb w'}}} \le \sup_{\mb s \in \Gamma, \norm{\mb s}{} \le r < \frac{1}{2} } \frac{\norm{\mb s}{}^3 + 2\norm{\mb s}{}}{\paren{1-\norm{\mb s}{}^2}^{5/2}} \norm{\mb w - \mb w'}{} \le \frac{4}{\sqrt{3}} \norm{\mb w - \mb w'}{}. 
\end{align*}
Combining the above estimates gives the claimed result. 
\end{proof}

\begin{lemma}\label{lem:lp-zeta}
For any fixed $\mb x$, consider the function
\begin{align*}
\mb \zeta_{\mb x}(\mb w) = \overline{\mb x} - \frac{x_n}{q_n(\mb w)}\mb w
\end{align*}
defined over $\mb w \in \Gamma$. Then, for all $\mb w, \mb w' \in \Gamma$ such that $\norm{\mb w}{} \le r$ and $\norm{\mb w'}{} \le r$ for any constant $r \in \paren{0, 1/2}$, it holds that 
\begin{align*}
 \norm{\mb \zeta_{\mb x}(\mb w) \mb \zeta_{\mb x}(\mb w)^*  }{} \;&\le\;2n \norm{\mb x}{\infty}^2,  \\
\norm{\mb \zeta_{\mb x}(\mb w) \mb \zeta_{\mb x}(\mb w)^* - \mb \zeta_{\mb x}(\mb w^\prime) \mb \zeta_{\mb x}(\mb w^\prime)^* }{}\;&\le\; 8\sqrt{2}\sqrt{n}  \norm{\mb x}{\infty}^2 \norm{\mb w - \mb w^\prime}{}.
\end{align*}
\end{lemma}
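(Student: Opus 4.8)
{(\textbf{of Lemma~\ref{lem:lp-zeta}})}
The plan is to reduce both estimates to two elementary facts about the vector-valued map $\mb w \mapsto \mb \zeta_{\mb x}(\mb w)$ restricted to the ball $\set{\mb w : \norm{\mb w}{} \le r}$: a uniform bound on $\norm{\mb \zeta_{\mb x}(\mb w)}{}$, and a Lipschitz estimate, which one then combines through the standard rank-one / product-rule identity (as in Lemma~\ref{lem:Lip-combined}). Throughout, the only quantitative input is that $q_n^2(\mb w) = 1 - \norm{\mb w}{}^2 \ge 3/4$ whenever $\norm{\mb w}{} \le r < 1/2$, so that $\norm{\mb w}{}/q_n(\mb w) \le 1$ and $1/q_n^3(\mb w) \le 4$ on this region.

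First I would handle the norm bound. Since $\mb \zeta_{\mb x}(\mb w)\mb \zeta_{\mb x}(\mb w)^*$ is rank one, its operator norm equals $\norm{\mb \zeta_{\mb x}(\mb w)}{}^2$; expanding $\norm{\overline{\mb x} - (x_n/q_n(\mb w))\mb w}{}^2$, using $(a+b)^2 \le 2a^2 + 2b^2$ and $\norm{\mb w}{}^2/q_n^2(\mb w) \le 1$, one gets $\norm{\mb \zeta_{\mb x}(\mb w)}{}^2 \le 2\norm{\overline{\mb x}}{}^2 + 2x_n^2 = 2\norm{\mb x}{}^2 \le 2n\norm{\mb x}{\infty}^2$, which is the first claim and, upon taking square roots, also gives $\norm{\mb \zeta_{\mb x}(\mb s)}{} \le \sqrt{2n}\,\norm{\mb x}{\infty}$ uniformly over the ball.

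Next I would derive the Lipschitz estimate. Because $\overline{\mb x}$ does not depend on $\mb w$, $\mb \zeta_{\mb x}(\mb w) - \mb \zeta_{\mb x}(\mb w') = -x_n\paren{\mb w/q_n(\mb w) - \mb w'/q_n(\mb w')}$, so it suffices to bound $\norm{\mb w/q_n(\mb w) - \mb w'/q_n(\mb w')}{}$. The map $\mb w \mapsto \mb w/\sqrt{1-\norm{\mb w}{}^2}$ has Jacobian $q_n^{-1}(\mb w)\mb I + q_n^{-3}(\mb w)\mb w\mb w^*$, whose operator norm is $q_n^{-3}(\mb w) \le 4$ on the ball; since the ball is convex, integrating the Jacobian along the segment joining $\mb w$ and $\mb w'$ yields a Lipschitz constant at most $4$, hence $\norm{\mb \zeta_{\mb x}(\mb w) - \mb \zeta_{\mb x}(\mb w')}{} \le 4\abs{x_n}\norm{\mb w - \mb w'}{} \le 4\norm{\mb x}{\infty}\norm{\mb w - \mb w'}{}$. (An alternative that avoids the vector mean-value step is to split $\mb w/q_n(\mb w) - \mb w'/q_n(\mb w') = (\mb w - \mb w')/q_n(\mb w) + \mb w'\paren{1/q_n(\mb w) - 1/q_n(\mb w')}$ and reuse the estimate for $\abs{q_n(\mb w) - q_n(\mb w')}$ from Lemma~\ref{lem:lip-h-mu}, exactly as in the proof of Lemma~\ref{lem:lp-Phi}.) Finally, writing $\mb \zeta_{\mb x}(\mb w)\mb \zeta_{\mb x}(\mb w)^* - \mb \zeta_{\mb x}(\mb w')\mb \zeta_{\mb x}(\mb w')^* = \mb \zeta_{\mb x}(\mb w)(\mb \zeta_{\mb x}(\mb w) - \mb \zeta_{\mb x}(\mb w'))^* + (\mb \zeta_{\mb x}(\mb w) - \mb \zeta_{\mb x}(\mb w'))\mb \zeta_{\mb x}(\mb w')^*$ and taking operator norms, the two estimates combine to $2\cdot\sqrt{2n}\,\norm{\mb x}{\infty}\cdot 4\norm{\mb x}{\infty}\norm{\mb w - \mb w'}{} = 8\sqrt{2}\sqrt{n}\,\norm{\mb x}{\infty}^2\norm{\mb w - \mb w'}{}$.

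I do not expect any genuine obstacle: the statement is routine Lipschitz/boundedness bookkeeping, entirely parallel to Lemmas~\ref{lem:lip-g} and~\ref{lem:lp-Phi}. The only places needing a little care are the mean-value argument for the vector-valued map $\mb w \mapsto \mb w/q_n(\mb w)$ (one integrates the Jacobian along the segment and bounds its operator norm uniformly, which is legitimate because the ball is convex and bounded away from $\norm{\mb w}{} = 1$), and tracking the numerical constants so that the crude bounds $q_n^{-3}(\mb w) \le 4$ and $\norm{\mb x}{} \le \sqrt{n}\,\norm{\mb x}{\infty}$ still fit inside the stated factor $8\sqrt{2}\sqrt{n}$.
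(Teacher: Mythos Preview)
Your proposal is correct and essentially identical to the paper's proof: the same norm bound $\norm{\mb \zeta_{\mb x}(\mb w)}{}^2 \le 2n\norm{\mb x}{\infty}^2$, the same Lipschitz constant $4\norm{\mb x}{\infty}$ for $\mb \zeta_{\mb x}$, and the same rank-one product identity to combine them. The only cosmetic difference is that the paper obtains the Lipschitz constant for $\mb w \mapsto \mb w/q_n(\mb w)$ via the explicit splitting $(\mb w - \mb w')/q_n(\mb w) + \mb w'\paren{1/q_n(\mb w) - 1/q_n(\mb w')}$ (your stated alternative) rather than your Jacobian mean-value argument, but both yield the same bound.
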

\begin{proof}
We have $\norm{\mb w}{}^2/q_n^2\paren{\mb w} \le 1/3$ when $\norm{\mb w}{} \le r < 1/2$, hence it holds that 
\begin{align*}
\norm{\mb \zeta_{\mb x}(\mb w) \mb \zeta_{\mb x}(\mb w)^*  }{} \le \norm{\mb \zeta_{\mb x}(\mb w)}{}^2 \le 2\norm{\overline{\mb x}}{}^2 + 2x_n^2\frac{\norm{\mb w}{}}{q_n^2\paren{\mb w}} \le 2n \norm{\mb x}{\infty}^2. 
\end{align*}
For the second, we first estimate  
\begin{align*}
\norm{\mb \zeta(\mb w) - \mb \zeta(\mb w^\prime)}{} 
& = \norm{x_n\paren{\frac{\mb w}{q_n\paren{\mb w}}  - \frac{\mb w'}{q_n\paren{\mb w'}}}}{} \le \norm{\mb x}{\infty} \norm{\frac{\mb w}{q_n\paren{\mb w}}  - \frac{\mb w'}{q_n\paren{\mb w'}}}{} \\
&\le  \norm{\mb x}{\infty} \paren{ \frac{1}{q_n(\mb w)}\norm{\mb w - \mb w^\prime}{}+  \norm{\mb w^\prime}{} \abs{\frac{1}{q_n(\mb w)} - \frac{1}{q_n(\mb w^\prime)}} } \\
&\le  \norm{\mb x}{\infty} \paren{ \frac{1}{q_n(\mb w)}+ \frac{ \norm{\mb w^\prime}{}}{\min\Brac{q_n^3(\mb w),q_n^3(\mb w^\prime)}} }\norm{\mb w - \mb w^\prime}{}  \\
&\le  \norm{\mb x}{\infty} \paren{\frac{2}{\sqrt{3}}+ \frac{4}{3\sqrt{3}}}\norm{\mb w - \mb w^\prime}{} \le 4\norm{\mb x}{\infty}\norm{\mb w - \mb w^\prime}{}. 
\end{align*}
Thus, we have 
\begin{align*}
\norm{\mb \zeta_{\mb x}(\mb w) \mb \zeta_{\mb x}(\mb w)^* - \mb \zeta_{\mb x}(\mb w^\prime) \mb \zeta_{\mb x}(\mb w^\prime)^* }{}\;&\le\; \norm{\mb \zeta(\mb w)}{} \norm{\mb \zeta(\mb w) - \zeta(\mb w^\prime)}{}+ \norm{\mb \zeta(\mb w) - \zeta(\mb w^\prime)}{}\norm{\mb \zeta(\mb w^\prime)}{} \nonumber \\
\;&\le \; 8\sqrt{2}\sqrt{n}  \norm{\mb x}{\infty}^2 \norm{\mb w - \mb w^\prime}{}, 
\end{align*}
as desired.
\end{proof}

Now, we are ready to prove all the Lipschitz propositions.

\begin{proof}{\textbf{(of Proposition \ref{prop:lip-hessian-negative})}} \label{proof:cn_lips_curvature}
Let 
\begin{align*}
F_k(\mb w)=\ddot{h}_{\mu}\paren{\mb q(\mb w)^* (\mb x_0)_k} t^2_{(\mb x_0)_k}(\mb w) + \dot{h}_{\mu}\paren{\mb q(\mb w)^* (\mb x_0)_k} \frac{x_{0k}\paren{n}}{q_n^3(\mb w)}. 
\end{align*}
Then, $\mb w^*\nabla^2 g(\mb w; \mb X_0) \mb w / \norm{\mb w}{}^2 = \frac{1}{p} \sum_{k=1}^p F_k(\mb w)$. Noticing that $\ddot{h}_{\mu}\paren{\mb q(\mb w)^* (\mb x_0)_k}$ is bounded by $1/\mu$ and $\dot{h}_{\mu}\paren{\mb q(\mb w)^* (\mb x_0)_k}$ is bounded by $1$, both in magnitude. Applying Lemma \ref{lem:Lip-combined}, Lemma \ref{lem:lip-h-mu} and Lemma \ref{lem:lip-g}, we can see $F_k(\mb w)$ is $\Lconcave^k$-Lipschitz with
\begin{align*}
\Lconcave^k &= 4n \norm{(\mb x_0)_k}{}^2 \frac{4\sqrt{n}}{\mu^2} \norm{(\mb x_0)_k}{}+ \frac{1}{\mu}8\sqrt{n} \norm{(\mb x_0)_k}{} \paren{\frac{\norm{(\mb x_0)_k}{}}{\rconcave} + 4n^{3/2} \norm{(\mb x_0)_k}{\infty}} \nonumber \\
&+(2\sqrt{n})^3\norm{(\mb x_0)_k}{\infty} \frac{2\sqrt{n}}{\mu} \norm{(\mb x_0)_k}{}+ \sup_{\rconcave < a < \sqrt{\frac{2n-1}{2n}}} \frac{3}{\paren{1-a^2}^{5/2}} \norm{(\mb x_0)_k}{\infty} \nonumber \\
&=\frac{16n^{3/2}}{\mu^2} \norm{(\mb x_0)_k}{}^3 + \frac{8\sqrt{n}}{\mu \rconcave} \norm{(\mb x_0)_k}{}^2 + \frac{48 n^2}{\mu}\norm{(\mb x_0)_k}{} \norm{(\mb x_0)_k}{\infty} + 96 n^{5/2} \norm{(\mb x_0)_k}{\infty}. 
\end{align*}
Thus, $\frac{1}{\norm{\mb w}{2}}\mb w^*\nabla^2 g(\mb w; \mb X_0) \mb w$ is $\Lconcave$-Lipschitz with
\begin{align*}
\Lconcave \le \frac{1}{p}\sum_{k=1}^p \Lconcave^k \le \frac{16n^3}{\mu^2} \norm{\mb X_0}{\infty}^3 + \frac{8n^{3/2}}{\mu \rconcave} \norm{\mb X_0}{\infty}^2 + \frac{48 n^{5/2} }{\mu} \norm{\mb X_0}{\infty}^2 + 96 n^{5/2} \norm{\mb X_0}{\infty}, 
\end{align*}
as desired. 
\end{proof}

\begin{proof}{\textbf{(of Proposition \ref{prop:lip-gradient})}} \label{proof:cn_lips_gradient}
We have 
\begin{align*}
\norm{\frac{\mb w^*}{\norm{\mb w}{}}\nabla g(\mb w; \mb X_0) - \frac{\mb w'^*}{\norm{\mb w'}{}}\nabla g(\mb w^\prime; \mb X_0) }{} \le \frac{1}{p}\sum_{k=1}^p \norm{\dot{h}_{\mu}\paren{\mb q(\mb w)^* (\mb x_0)_k} t_{(\mb x_0)_k}\paren{\mb w}  - \dot{h}_{\mu}\paren{\mb q(\mb w^\prime)^* (\mb x_0)_k} t_{(\mb x_0)_k}\paren{\mb w'}}{}
\end{align*}
where $\dot{h}_\mu (t) = \tanh(t/\mu)$ is bounded by one in magnitude, and $t_{(\mb x_0)_k}(\mb w)$ and $t_{(\mb x_0)_k^\prime}(\mb w)$ is defined as in Lemma \ref{lem:lip-g}. By Lemma \ref{lem:Lip-combined}, Lemma \ref{lem:lip-h-mu} and Lemma \ref{lem:lip-g}, we know that $\dot{h}_{\mu}\paren{\mb q(\mb w)^* (\mb x_0)_k} t_{(\mb x_0)_k}\paren{\mb w}$ is $L_k$-Lipschitz with constant
\begin{align*}
L_k = \frac{2\norm{(\mb x_0)_k}{}}{r_g}+ 8n^{3/2} \norm{(\mb x_0)_k}{\infty} + \frac{4n}{\mu} \norm{(\mb x_0)_k}{}^2. 
\end{align*}
Therefore, we have
\begin{align*}
\norm{\frac{\mb w^*}{\norm{\mb w}{}}\nabla g(\mb w) - \frac{\mb w^*}{\norm{\mb w}{}}\nabla g(\mb w^\prime) }{} 
& \le \frac{1}{p}\sum_{k=1}^p\paren{ \frac{2\norm{(\mb x_0)_k}{}}{r_g}+ 8n^{3/2} \norm{(\mb x_0)_k}{\infty} + \frac{4n}{\mu} \norm{(\mb x_0)_k}{}^2} \norm{\mb w-\mb w^\prime}{} \nonumber \\
&\le \paren{\frac{2\sqrt{n}}{r_g}\norm{\mb X_0}{\infty}+ 8n^{3/2} \norm{\mb X_0}{\infty} + \frac{4n^2}{\mu}\norm{\mb X_0}{\infty}^2  }\norm{\mb w-\mb w^\prime}{}, 
\end{align*}
as desired. 
\end{proof}

\begin{proof}{\textbf{(of Proposition \ref{prop:lip-hessian-zero})}} \label{proof:cn_lips_strcvx}
Let 
\begin{align*}
\mb F_k(\mb w) &= \ddot{h}_\mu (\mb q(\mb w)^*(\mb x_0)_k)\mb \zeta_k(\mb w) \mb \zeta_k(\mb w)^*- \dot{h}_\mu\paren{\mb q(\mb w)^* (\mb x_0)_k}\bm \Phi_k(\mb w)
\end{align*}
with $\mb \zeta_k(\mb w) = \overline{\mb x_0}_k - \frac{x_{0k}\paren{n}}{q_n(\mb w)}\mb w$ and $\mb \Phi_k(\mb w) = \frac{x_{0k}\paren{n}}{q_n(\mb w)} \mb I + \frac{x_{0k}(n)}{q_n^3(\mb w)}\mb w\mb w^*$. Then, $\nabla^2 g(\mb w) =  \frac{1}{p}\sum_{k=1}^p \mb F_k(\mb w)$. Using Lemma \ref{lem:Lip-combined}, Lemma \ref{lem:lip-h-mu}, Lemma \ref{lem:lp-Phi} and Lemma \ref{lem:lp-zeta}, and the facts that $\ddot{h}_\mu(t)$ is bounded by $1/\mu$ and that $\ddot{h}_\mu(t)$ is bounded by $1$ in magnitude, we can see $\mb F_k(\mb w)$ is $\Lconvex^k$-Lipschitz continuous with 
\begin{align*}
\Lconvex^k 
& = \frac{1}{\mu}\times 8\sqrt{2}\sqrt{n}\norm{(\mb x_0)_k}{\infty}^2+ \frac{2\sqrt{n}}{\mu^2}\norm{(\mb x_0)_k}{} \times 2n \norm{(\mb x_0)_k}{\infty}^2 + 4\norm{(\mb x_0)_k}{\infty}+ \frac{2\sqrt{n}}{\mu}\norm{(\mb x_0)_k}{} \times 2\norm{(\mb x_0)_k}{\infty} \nonumber \\
& \le \frac{4n^{3/2}}{\mu^2} \norm{(\mb x_0)_k}{} \norm{(\mb x_0)_k}{\infty}^2+\frac{4\sqrt{n}}{\mu}\norm{(\mb x_0)_k}{}\norm{(\mb x_0)_k}{\infty} +  \frac{8\sqrt{2}\sqrt{n}}{\mu}\norm{(\mb x_0)_k}{\infty}^2 + 4\norm{(\mb x_0)_k}{\infty}. 
\end{align*}
Thus, we have
\begin{align*}
\Lconvex \le \frac{1}{p}\sum_{k=1}^p \Lconvex^k \le \frac{4n^2}{\mu^2} \norm{\mb X}{\infty}^3+\frac{4n}{\mu}\norm{\mb X}{\infty}^2 +  \frac{8\sqrt{2}\sqrt{n}}{\mu}\norm{\mb X}{\infty}^2 + 8\norm{\mb X}{\infty}, 
\end{align*}
as desired. 
\end{proof}

\subsection{Proofs of Theorem~\ref{thm:geometry_orth}} \label{sec:proof_geometry_orth}
Before proving Theorem \ref{thm:geometry_orth}, we record one useful lemma. 

\begin{lemma}\label{lem:X-infinty-tail-bound}
For any $\theta \in \paren{0, 1}$, consider the random matrix $\mb X \in \R^{n_1 \times n_2}$ with $\mb X \sim_{i.i.d.} \mathrm{BG}\paren{\theta}$. Define the event $\event_\infty \doteq \Brac{ 1 \le \norm{\mb X}{\infty}\leq 4\sqrt{\log\paren{n p}}}$. It holds that 
\begin{align*}
\prob{\event_\infty^c} \leq \theta \paren{np}^{-7} + \exp\paren{-0.3\theta np}. 
\end{align*}
\end{lemma}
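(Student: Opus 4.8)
The plan is to split the bad event $\event_\infty^c$ into an ``some entry is too large'' part and an ``all entries are too small'' part, and to bound each by a union bound over the $np = n_1 n_2$ entries together with the scalar Gaussian tail.  Writing each entry as $X_{ij} = \Omega_{ij} V_{ij}$ with $\Omega_{ij}\sim\mathrm{Ber}(\theta)$ and $V_{ij}\sim\mc N(0,1)$ all jointly independent, one has, for every $t\ge 0$,
\[
\prob{\abs{X_{ij}} > t} \;=\; \theta\,\prob{\abs{V_{ij}} > t}, \qquad \prob{\abs{X_{ij}} < 1} \;=\; 1 - \theta\,\prob{\abs{V_{ij}}\ge 1}.
\]
Since $\event_\infty^c = \set{\norm{\mb X}{\infty} > 4\sqrt{\log(np)}} \cup \set{\norm{\mb X}{\infty} < 1}$, it suffices to control these two probabilities separately and add them.

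For the upper tail, I would union bound over the $np$ entries and use the sharp one-sided Gaussian estimate $\prob{\abs{V} > t} = 2\Phi^c(t) \le e^{-t^2/2}$ (valid for all $t\ge 0$, e.g.\ by the Chernoff bound together with the factor-$\tfrac12$ improvement $\Phi^c(t)\le\tfrac12 e^{-t^2/2}$), which yields $\prob{\norm{\mb X}{\infty} > t} \le \theta\, np\, e^{-t^2/2}$.  Substituting $t = 4\sqrt{\log(np)}$ gives $\theta\, np\,(np)^{-8} = \theta (np)^{-7}$.  The only thing to watch here is keeping the constant $\tfrac12$ in the Gaussian tail so the leading $2$ cancels exactly.

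For the lower tail, joint independence of the entries gives $\prob{\norm{\mb X}{\infty} < 1} = \prod_{i,j}\prob{\abs{X_{ij}} < 1} = \paren{1 - \theta\,\prob{\abs{V}\ge 1}}^{np}$.  A direct numerical estimate gives $\prob{\abs{V}\ge 1} = 2\Phi^c(1) > 0.317$, so using $1-x\le e^{-x}$ one obtains $\prob{\norm{\mb X}{\infty} < 1} \le e^{-0.317\,\theta np} \le e^{-0.3\,\theta np}$.  Combining the two estimates by the union bound yields $\prob{\event_\infty^c}\le \theta(np)^{-7} + e^{-0.3\,\theta np}$, as claimed.  There is no substantive obstacle in this argument: it needs only independence of the entries and the scalar sub-Gaussian tail of a standard normal, with no matrix concentration involved; the sole care required is numerical, namely using the sharp constant in the one-sided Gaussian tail and estimating $\Phi^c(1)$ accurately enough that $2\Phi^c(1)\ge 0.3$, so as to land on precisely the stated constants.
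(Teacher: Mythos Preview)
Your proposal is correct and follows essentially the same route as the paper's proof: decompose $\event_\infty^c$ into the upper-tail and lower-tail events, bound the upper tail by a union bound over the $np$ entries with the Gaussian tail $\prob{|V|>t}\le e^{-t^2/2}$, and bound the lower tail by independence together with the numerical estimate $\prob{|V|\ge 1}=2\Phi^c(1)>0.3$ and $1-x\le e^{-x}$. The paper's argument is identical in structure and constants; your write-up is simply a bit more explicit about where the factor $\tfrac12$ in the Gaussian tail and the value of $2\Phi^c(1)$ enter.
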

\begin{proof}
See page~\pageref{proof:X_inf_tail} under Section~\ref{sec:app_aux_results}. 
\end{proof}

For convenience, we define three regions for the range of $\mb w$:
\begin{align*}
R_1 & \doteq \set{ \mb w: \norm{\mb w}{} \le \frac{\mu}{4\sqrt{2}}}, \qquad 
R_2 \doteq \set{ \mb w: \frac{\mu}{4\sqrt{2}} \le \norm{\mb w}{} \le \frac{1}{20\sqrt{5}}}, \\
R_3 & \doteq \set{ \mb w: \frac{1}{20\sqrt{5}} \le \norm{\mb w}{} \le \sqrt{\frac{4n-1}{4n}}}. 
\end{align*}

\begin{proof}{\textbf{(of Theorem \ref{thm:geometry_orth})}} 
\paragraph{Strong convexity in region $R_1$.} Proposition~\ref{prop:geometry_asymp_strong_convexity} shows that for any $\mb w \in R_1$, $\bb E \left[ \nabla^2 g(\mb w; \mb X_0) \right] \succeq \frac{c_1\theta }{\mu} \mb I$. For any $\eps \in (0, \mu/\paren{4\sqrt{2}} )$, $R_1$ has an $\eps$-net $N_1$ of size at most $(3 \mu / \paren{4\sqrt{2}\eps})^n$. On $\event_\infty$, $\nabla^2 g$ is 
\begin{equation*}
L_1 \doteq \frac{C_2 n^2}{\mu^2} \log^{3/2}(np)
\end{equation*}
Lipschitz by Proposition~\ref{prop:lip-hessian-zero}. Set $\eps = \frac{c_1 \theta}{3 \mu L_1}$, so 
\begin{equation*}
\# N_1 \le \exp\left(2 n \log \left( \frac{ C_3 n \log( n p ) }{\theta} \right) \right). 
\end{equation*}
Let $\event_1$ denote the event
\begin{equation*}
\event_1 = \set{ \max_{\mb w \in N_1} \norm{\nabla^2 g(\mb w; \mb X_0) - \bb E\left[ \nabla^2 g(\mb w; \mb X_0) \right] }{} \le \frac{c_1 \theta }{3 \mu} }.
\end{equation*}
On $\event_1 \cap \event_\infty$, 
\begin{equation*}
\sup_{\norm{\mb w}{} \le \mu/\paren{4\sqrt{2}}} \norm{\nabla^2 g(\mb w; \mb X_0) - \bb E \left[ \nabla^2 g(\mb w; \mb X_0) \right] }{} \;\le\; \frac{2c_1 \theta}{3 \mu},
\end{equation*}
and so on $\event_1 \cap \event_\infty$, \eqref{eqn:hess-zero-uni-orth} holds for any constant $c_\star \le c_1 / 3$. Setting $t = c_1 \theta / 3 \mu$ in Proposition \ref{prop:concentration-hessian-zero}, we obtain that for any fixed $\mb w$,
\begin{equation*}
\prob{ \norm{ \nabla^2 g(\mb w; \mb X_0) - \bb E\left[ \nabla^2 g(\mb w; \mb X_0) \right] }{} \ge \frac{c_1 \theta}{3 \mu} } \le 4 n \exp\left( - \frac{c_4 p \theta^2}{n^2} \right).
\end{equation*}
Taking a union bound, we obtain that 
\begin{equation*}
\prob{ \event_1^c } \;\le\; 4 n \exp\left( - \frac{c_4 p \theta^2}{n^2} + C_5 n \log(n) + C_5 n \log \log(p) \right). 
\end{equation*}

\paragraph{Large gradient in region $R_2$.} Similarly, for the gradient quantity, for $\mb w \in R_2$, Proposition~\ref{prop:geometry_asymp_gradient} shows that 
\begin{equation*}
\bb E \left [\frac{\mb w^* \nabla g(\mb w; \mb X_0) }{\norm{\mb w}{}} \right] \;\ge\; c_6 \theta. 
\end{equation*}
Moreover, on $\event_\infty$, $\mb w^* \nabla g(\mb w; \mb X_0)/\norm{\mb w}{}$
is 
\begin{equation*}
L_2 \doteq \frac{C_7 n^2}{\mu} \log(np)
\end{equation*}
Lipschitz by Proposition~\ref{prop:lip-gradient}. For any $\eps < \frac{1}{20 \sqrt{5}}$, the set $R_2$ has an $\eps$-net $N_2$ of size at most $\left(\frac{3}{20 \eps \sqrt{5}}\right)^n$. Set $\eps = \frac{c_6 \theta}{3 L_2}$, so 
\begin{equation*}
\# N_2 \;\le\; \exp\left( n \log \left( \frac{C_8 n^2 \log(np)}{\theta \mu} \right) \right). 
\end{equation*}
Let $\event_2$ denote the event 
\begin{equation*}
\event_2 = \set{ \max_{\mb w \in N_2} \magnitude{ \frac{\mb w^* \nabla g(\mb w; \mb X_0) }{\norm{\mb w}{}} - \bb E \left [\frac{\mb w^* \nabla g(\mb w; \mb X_0) }{\norm{\mb w}{}} \right]  } \;\le\; \frac{c_6 \theta}{3} }.
\end{equation*}
On $\event_2 \cap \event_\infty$, 
\begin{equation}
\sup_{\mb w \in R_2} \magnitude{ \frac{\mb w^* \nabla g(\mb w; \mb X_0) }{\norm{\mb w}{}} - \bb E \left [\frac{\mb w^* \nabla g(\mb w; \mb X_0) }{\norm{\mb w}{}} \right]  } \;\le\; \frac{2c_6 \theta}{3},
\end{equation}
and so on $\event_2 \cap \event_\infty$, \eqref{eqn:grad-uni-orth} holds for any constant $c_\star \le c_6 / 3$. Setting $t = c_6 \theta / 3$ in Proposition \ref{prop:concentration-gradient}, we obtain that for any fixed $\mb w \in R_2$, 
\begin{equation*}
\bb P \left[ \magnitude{ \frac{\mb w^* \nabla g(\mb w; \mb X_0) }{\norm{\mb w}{}} - \bb E \left [\frac{\mb w^* \nabla g(\mb w; \mb X_0) }{\norm{\mb w}{}} } \right] \right]
 \;\le\; 2 \exp\left( - \frac{c_9 p \theta^2 }{n} \right), 
\end{equation*}
and so 
\begin{equation}
\bb P\left[ \event_2^c \right] \;\le\; 2 \exp\left( - \frac{c_9 p \theta^2}{n} + n \log\left( \frac{C_8 n^2 \log(np)}{\theta \mu} \right) \right). 
\end{equation}

\paragraph{Existence of negative curvature direction in $R_3$.} Finally, for any $\mb w \in R_3$, Proposition~\ref{prop:geometry_asymp_curvature} shows that 
\begin{equation*}
\bb E\left[ \frac{ \mb w^* \nabla^2 g(\mb w; \mb X_0) \mb w }{\norm{\mb w}{}^2} \right] \;\le\; -c_9 \theta. 
\end{equation*}
On $\event_\infty$, $\mb w^* \nabla^2 g(\mb w; \mb X_0) \mb w /\norm{\mb w}{}^2$ is 
\begin{equation*}
L_3 = \frac{C_{10} n^3}{\mu^2} \log^{3/2}(np) 
\end{equation*}
Lipschitz by Proposition~\ref{prop:lip-hessian-negative}. As above, for any $\eps \le \sqrt{ \frac{4n - 1}{4n} }$, $R_3$ has an $\eps$-net $N_3$ of size at most $(3/\eps)^n$. Set $\eps = c_9 \theta / 3 L_3$. Then 
\begin{equation*}
\# N_3 \;\le\; \exp\left( n \log \left( \frac{C_{11} n^3 \log^{3/2}(np)}{\theta \mu^2 } \right) \right).
\end{equation*}
Let $\event_3$ denote the event 
\begin{equation*}
\event_3 = \set{ \max_{\mb w \in N_3} \magnitude{ \frac{ \mb w^* \nabla^2 g(\mb w; \mb X_0) \mb w }{\norm{\mb w}{}^2} - \bb E\left[ \frac{ \mb w^* \nabla^2 g(\mb w; \mb X_0) \mb w }{\norm{\mb w}{}^2} \right]  } \le \frac{c_9 \theta}{3} }
\end{equation*}
On $\event_3 \cap \event_\infty$, 
\begin{equation*}
\sup_{\mb w \in R_3} \magnitude{ \frac{ \mb w^* \nabla^2 g(\mb w; \mb X_0) \mb w }{\norm{\mb w}{}^2} - \bb E\left[ \frac{ \mb w^* \nabla^2 g(\mb w; \mb X_0) \mb w }{\norm{\mb w}{}^2} \right]  } \;\le\; \frac{ 2c_9 \theta}{3},
\end{equation*}
and \eqref{eqn:curvature-uni-orth} holds with any constant $c_\star < c_9 / 3$. Setting $t = c_9 \theta / 3$ in Proposition \ref{prop:concentration-hessian-negative} and taking a union bound, we obtain 
\begin{equation*}
\prob{\event_3^c} \;\le\; 4 \exp\left( - \frac{c_{12} p \mu^2 \theta^2}{n^2} + n \log\left( \frac{C_{11} n^3 \log^{3/2}(np)}{\theta \mu^2} \right) \right). 
\end{equation*}


\paragraph{The unique local minimizer located near $\mb 0$. } Let $\event_g$ be the event that the bounds \eqref{eqn:hess-zero-uni-orth}-\eqref{eqn:curvature-uni-orth} hold. On $\event_g$, the function $g$ is $\frac{c_\star\theta}{\mu}$-strongly convex over $R_1 = \set{\mb w: \norm{\mb w}{} \le \mu/\paren{4\sqrt{2}}}$. This implies that $f$ has at most one local minimum on $R_1$. It also implies that for any $\mb w \in R_1$,
\begin{eqnarray*}
g(\mb w; \mb X_0) \ge g(\mb 0; \mb X_0) + \innerprod{\nabla g(\mb 0; \mb X_0) }{\mb w} + \frac{c\theta}{2 \mu} \norm{\mb w}{}^2 \ge g(\mb 0; \mb X_0) - \norm{\mb w}{} \norm{\nabla g(\mb 0; \mb X_0) }{} + \frac{c_\star\theta}{2 \mu} \norm{\mb w}{}^2. 
\end{eqnarray*}
So, if $g(\mb w; \mb X_0) \le g(\mb 0; \mb X_0)$, we necessarily have 
\begin{equation*}
\norm{\mb w}{} \;\le\; \frac{2 \mu}{c_\star \theta} \norm{\nabla g(\mb 0; \mb X_0)}{}.
\end{equation*}
Suppose that 
\begin{equation}
\norm{\nabla g(\mb 0; \mb X_0)}{} \le \frac{c_\star \theta}{32}. \label{eqn:grad-zero-bound}
\end{equation}
Then $g(\mb w; \mb X_0) \le g(\mb 0; \mb X_0)$ implies that $\norm{\mb w}{} \le \mu / 16$. By Wierstrass's theorem, $g(\mb w; \mb X_0)$ has at least one minimizer $\mb w_\star$ over the compact set $S = \set{\mb w: \norm{\mb w}{} \le \mu / 10}$. By the above reasoning, $\norm{\mb w_\star}{} \le \mu / 16$, and hence $\mb w_\star$ does not lie on the boundary of $S$. This implies that $\mb w_\star$ is a local minimizer of $g$. Moreover, as above, 
\begin{equation*}
\norm{\mb w_\star}{} \;\le\; \frac{2 \mu}{c_\star \theta} \norm{\nabla g(\mb 0; \mb X_0)}{}.
\end{equation*}

We now use the vector Bernstein inequality to show that with our choice of $p$, \eqref{eqn:grad-zero-bound} is satisfied w.h.p. Notice that 
\begin{equation*}
\nabla g(\mb 0; \mb X_0) = \frac{1}{p} \sum_{k = 1}^p \dot{h}_\mu( x_{0k}(n) ) \overline{\mb x_0}_k, 
\end{equation*}
and $\dot{h}_\mu$ is bounded by one in magnitude, so for any integer $m \ge 2$,
\begin{eqnarray*}
\expect{ \norm{ \dot{h}_\mu( x_{0k}(n) ) \overline{\mb x_0}_k}{}^m }
\le \expect{ \norm{(\mb x_0)_k}{}^m } 
\le \bb E_{Z \sim \chi(n)} \left[ Z^m \right] \le m!n^{m/2}, 
\end{eqnarray*}
where we have applied the moment estimate for the $\chi\paren{n}$ distribution shown in Lemma~\ref{lem:chi_moment}. Applying the vector Bernstein inequality in Corollary \ref{cor:vector-bernstein} with $R = \sqrt{n}$ and $\sigma^2 = 2n$, we obtain 
\begin{equation*}
\prob{ \norm{ \nabla g(\mb 0; \mb X_0) }{} \;\ge\; t } \;\le\; 2 (n+1) \exp\left( - \frac{pt^2}{ 4 n + 2 \sqrt{n} t } \right) 
\end{equation*}
for all $t > 0$. Using this inequality, it is not difficult to show that there exist constants $C_{13}, C_{14} > 0$ such that when $p \ge C_{13} n \log n$, with probability at least $1- 4n p^{-10}$, 
\begin{equation}
\norm{\nabla g(\mb 0; \mb X_0)}{} \;\le\; C_3 \sqrt{\frac{n \log p}{p}}. \label{eqn:grad-zero}
\end{equation}
When $\frac{p}{\log p} \ge \frac{C_{14} n}{\theta^2}$, for appropriately large $C_{14}$, \eqref{eqn:grad-zero} implies \eqref{eqn:grad-zero-bound}. Summing up failure probabilities completes the proof. 
\end{proof}

\subsection{Proofs for Section~\ref{sec:geo_results_comp} and Theorem~\ref{thm:geometry_comp}} \label{sec:proof_geometry_comp}
\begin{proof}{\textbf{(of Lemma~\ref{lem:pert_key_mag})}} \label{proof:comp_pert_bound}
By the generative model, 
\begin{align*}
\overline{\mb Y} = \sqrt{p\theta}\paren{\mb Y \mb Y^*}^{-1/2} \mb Y = \sqrt{p\theta}\paren{\mb A_0 \mb X_0 \mb X_0^* \mb A_0^*}^{-1/2} \mb A_0 \mb X_0.
\end{align*} 
Since $\expect{\mb X_0 \mb X_0^*/\paren{p \theta}} = \mb I$, we will compare $\sqrt{p\theta}\paren{\mb A_0 \mb X_0 \mb X_0^* \mb A_0^*}^{-1/2} \mb A_0$ with $\paren{\mb A_0 \mb A_0^*}^{-1/2} \mb A_0 = \mb U \mb V^*$. By Lemma~\ref{lem:half_inverse_pert}, we have
\begin{align*}
& \norm{\sqrt{p\theta}\paren{\mb A_0 \mb X_0 \mb X_0^* \mb A_0^*}^{-1/2} \mb A_0 - \paren{\mb A_0 \mb A_0^*}^{-1/2} \mb A_0}{} \nonumber \\
\le\; & \norm{\mb A_0}{} \norm{ \sqrt{p\theta}\paren{\mb A_0 \mb X_0 \mb X_0^* \mb A_0^*}^{-1/2} - \paren{\mb A_0 \mb A_0^*}^{-1/2}}{} \nonumber \\
\le\; & \norm{\mb A_0}{} \frac{2\norm{\mb A_0}{}^3}{\sigma_{\min}^4\paren{\mb A_0}} \norm{\frac{1}{p\theta} \mb X_0 \mb X_0^* - \mb I}{}  = 2\kappa^4\paren{\mb A_0} \norm{\frac{1}{p\theta} \mb X_0 \mb X_0^* - \mb I}{}
\end{align*}
provided 
\begin{align*}
\norm{\mb A_0}{}^2\norm{\frac{1}{p\theta} \mb X_0 \mb X_0^* - \mb I}{} \le \frac{\sigma_{\min}^2\paren{\mb A_0}}{2} \Longleftrightarrow \norm{\frac{1}{p\theta} \mb X_0 \mb X_0^* - \mb I}{} \le \frac{1}{2\kappa^2\paren{\mb A_0}}.  
\end{align*}
On the other hand, by Lemma~\ref{lem:bg_identity_diff}, when $p \ge C_1 n^2 \log n$,  $\norm{\tfrac{1}{p\theta} \mb X_0 \mb X_0^* - \mb I}{} \le 10 \sqrt{\tfrac{\theta n \log p}{p}}$ with probability at least $1-p^{-8}$. Thus, when $p \ge C_2 \kappa^4\paren{\mb A_0} \theta n^2 \log (n \theta \kappa\paren{\mb A_0})$, 
\begin{align*}
\norm{\sqrt{p\theta}\paren{\mb A_0 \mb X_0 \mb X_0^* \mb A_0^*}^{-1/2} \mb A_0 - \paren{\mb A_0 \mb A_0^*}^{-1/2} \mb A_0}{}  \le 20 \kappa^4\paren{\mb A_0} \sqrt{\frac{\theta n \log p}{p}},  
\end{align*}
as desired. 
\end{proof}

\begin{proof}{\textbf{(of Lemma~\ref{lem:pert_key_grad_hess})}} \label{proof:comp_pert_bound2}
Let $\widetilde{\mb Y} \doteq \mb X_0 + \widetilde{\mb \Xi} \mb X_0$. Note the Jacobian matrix for the mapping $\mb q\paren{\mb w}$ is $\nabla_{\mb w} \mb q\paren{\mb w} = \brac{\mb I, -\mb w/\sqrt{1-\norm{\mb w}{}^2}}$. Hence for any vector $\mb z \in \R^{n}$ and all $\mb w \in \Gamma$,  
\begin{align*}
\norm{\nabla_{\mb w}\mb q\paren{\mb w} \mb z}{} \le \sqrt{n-1}\norm{\mb z}{\infty} + \frac{\norm{\mb w}{}}{\sqrt{1-\norm{\mb w}{}^2}} \norm{\mb z}{\infty} \le 3\sqrt{n} \norm{\mb z}{\infty}. 
\end{align*}
Now we have 
\begin{align*}
& \norm{\nabla_{\mb w} g\paren{\mb w; \widetilde{\mb Y}} - \nabla_{\mb w{}} g\paren{\mb w; \mb X_0} }{}\\
=\; & \norm{\frac{1}{p}\sum_{k=1}^p \dot{h}_{\mu}\paren{\mb q^*\paren{\mb w} \wt{\mb y}_k}\nabla_{\mb w}\mb q\paren{\mb w} \wt{\mb y}_k - \frac{1}{p}\sum_{k=1}^p \dot{h}_{\mu}\paren{\mb q^*\paren{\mb w} (\mb x_0)_k}\nabla_{\mb w}\mb q\paren{\mb w} (\mb x_0)_k}{} \\
\le\; & \norm{\frac{1}{p}\sum_{k=1}^p \dot{h}_{\mu}\paren{\mb q^* \wt{\mb y}_k}\nabla_{\mb w}\mb q\paren{\mb w} \paren{\wt{\mb y}_k - (\mb x_0)_k}}{} \\
& \qquad + \norm{\frac{1}{p}\sum_{k=1}^p \brac{\dot{h}_{\mu}\paren{\mb q^*\paren{\mb w} \wt{\mb y}_k} - \dot{h}_{\mu}\paren{\mb q^*\paren{\mb w} (\mb x_0)_k}}\nabla_{\mb w}\mb q\paren{\mb w} (\mb x_0)_k}{}\\
\le\;& \norm{\widetilde{\mb \Xi}}{}\paren{\max_{t} \dot{h}_{\mu}\paren{t} 3n \norm{\mb X_0}{\infty} + L_{\dot{h}_{\mu}} 3n\norm{\mb X_0}{\infty}^2},  
\end{align*}
where $L_{\dot{h}_{\mu}}$ denotes the Lipschitz constant for $\dot{h}_{\mu}\paren{\cdot}$. Similarly, suppose $\norm{\widetilde{\mb \Xi}}{} \le 1/(2n)$, and also notice that 
\begin{align*}
\norm{\frac{\mb I}{q_n\paren{\mb w}} + \frac{\mb w \mb w^*}{q_n^3\paren{\mb w}}}{} \le \frac{1}{q_n\paren{\mb w}} + \frac{\norm{\mb w}{}^2}{q_n^3\paren{\mb w}} = \frac{1}{q_n^3\paren{\mb w}} \le 2\sqrt{2} n^{3/2},
\end{align*} 
we obtain that  
\begin{align*}
& \norm{\nabla_{\mb w}^2 g\paren{\mb w; \widetilde{\mb Y}} - \nabla_{\mb w}^2 g\paren{\mb w; \mb X_0}}{} \\
\le\; & \norm{\frac{1}{p}\sum_{k=1}^p \brac{\ddot{h}\paren{\mb q^*\paren{\mb w} \widetilde{\mb y}_k} \nabla_{\mb w}\mb q\paren{\mb w} \widetilde{\mb y}_k \widetilde{\mb y}_k^* \paren{\nabla_{\mb w} \mb q\paren{\mb w}}^* - \ddot{h}\paren{\mb q^*\paren{\mb w} (\mb x_0)_k} \nabla_{\mb w}\mb q\paren{\mb w} (\mb x_0)_k (\mb x_0)_k^* \paren{\nabla_{\mb w} \mb q\paren{\mb w}}^*}}{} \\
& \qquad + \norm{\frac{1}{p}\sum_{k=1}^p \brac{\dot{h}\paren{\mb q^*\paren{\mb w} \widetilde{\mb y}_k}\paren{\frac{\mb I}{q_n\paren{\mb w}} + \frac{\mb w \mb w^*}{q_n^3}} \widetilde{\mb y}_k\paren{n} - \dot{h}\paren{\mb q^*\paren{\mb w} (\mb x_0)_k}\paren{\frac{\mb I}{q_n\paren{\mb w}} + \frac{\mb w \mb w^*}{q_n^3}} (\mb x_0)_k\paren{n}}}{}\\
\le\; & \tfrac{45}{2}L_{\ddot{h}_{\mu}} n^{3/2} \norm{\mb X}{\infty}^3 \norm{\widetilde{\mb \Xi}}{} + \max_t \ddot{h}_{\mu}\paren{t} \paren{18n^{3/2} \norm{\mb X}{\infty}^2 \norm{\widetilde{\mb \Xi}}{}  + 10 n^2 \norm{\mb X}{\infty}^2\norm{\widetilde{\mb \Xi}}{}^2} \\
& \qquad + 3\sqrt{2} L_{\dot{h}_{\mu}} n^2 \norm{\widetilde{\mb \Xi}}{} \norm{\mb X}{\infty}^2 + \max_t \dot{h}\paren{t} 2\sqrt{2} n^2 \norm{\widetilde{\mb \Xi}}{} \norm{\mb X}{\infty}, 
\end{align*}
where $L_{\ddot{h}_{\mu}}$ denotes the Lipschitz constant for $\ddot{h}_{\mu}\paren{\cdot}$. Since  
\begin{align*}
\max_{t} \dot{h}_{\mu}\paren{t}  \le 1, & \quad \max_{t} \ddot{h}_{\mu}\paren{t} \le \frac{1}{\mu}, \quad L_{h_{\mu}} \le 1, \quad L_{\dot{h}_{\mu}} \le \frac{1}{\mu}, \quad L_{\ddot{h}_{\mu}} \le \frac{2}{\mu^2},
\end{align*}
and by Lemma~\ref{lem:X-infinty-tail-bound}, $\norm{\mb X}{\infty} \le 4\sqrt{\log\paren{np}}$ with probability at least $1-\theta \paren{np}^{-7} -\exp\paren{-0.3\theta np}$, we obtain 
\begin{align*}
\norm{\nabla_{\mb w} g\paren{\mb w; \widetilde{\mb Y}} - \nabla_{\mb w{}} g\paren{\mb w; \mb X} }{} & \le C_1\frac{n}{\mu} \log\paren{np} \norm{\widetilde{\mb \Xi}}{}, \\
\norm{\nabla_{\mb w}^2 g\paren{\mb w; \widetilde{\mb Y}} - \nabla_{\mb w}^2 g\paren{\mb w; \mb X}}{} & \le C_2 \max\set{\frac{n^{3/2}}{\mu^2}, \frac{n^2}{\mu}} \log^{3/2}\paren{np} \norm{\widetilde{\mb \Xi}}{}, 
\end{align*}
completing the proof. 
\end{proof}

\begin{proof}{\textbf{(of Theorem~\ref{thm:geometry_comp})}}
Here $c_\star$ is as defined in Theorem~\ref{thm:geometry_orth}. By Lemma~\ref{lem:pert_key_mag}, when 
\begin{align*} 
p \ge \frac{C_1}{c_\star^2 \theta} \max\set{\frac{n^4}{\mu^4}, \frac{n^5}{\mu^2}} \kappa^8\paren{\mb A_0} \log^4 \paren{\frac{\kappa\paren{\mb A_0}n}{\mu \theta}},
\end{align*}
the magnitude of the perturbation is bounded as 
\begin{align*}
\norm{\widetilde{\mb \Xi}}{} \le C_2 c_\star \theta \paren{\max\set{\frac{n^{3/2}}{\mu^2}, \frac{n^2}{\mu}} \log^{3/2}\paren{np}}^{-1}, 
\end{align*}
where $C_2$ can be made arbitrarily small by making $C_1$ large. 
Combining this result with Lemma~\ref{lem:pert_key_grad_hess}, we obtain that for all $\mb w \in \Gamma$, 
\begin{align*}
\norm{\nabla_{\mb w} g\paren{\mb w; \mb X_0 + \widetilde{\mb \Xi} \mb X_0} - \nabla_{\mb w{}} g\paren{\mb w; \mb X} }{} & \le \frac{c_\star \theta}{2} \nonumber \\
\norm{\nabla_{\mb w}^2 g\paren{\mb w; \mb X_0 + \widetilde{\mb \Xi} \mb X_0} - \nabla_{\mb w}^2 g\paren{\mb w; \mb X}}{} & \le \frac{c_\star \theta}{2}, 
\end{align*}
with probability at least $1-p^{-8} - \theta\paren{np}^{-7} - \exp\paren{-0.3\theta n p}$. In view of~\eqref{eqn:curvature-uni-comp} in Theorem~\ref{thm:geometry_orth}, we have 
\begin{align*}
\frac{\mb w^* \nabla^2_{\mb w} g\paren{\mb w; \mb X_0 + \widetilde{\mb \Xi} \mb X_0} \mb w}{\norm{\mb w}{}^2} 
& = \frac{\mb w^* \nabla^2_{\mb w} g\paren{\mb w; \mb X_0} \mb w}{\norm{\mb w}{}^2} + \frac{\mb w^* \nabla^2_{\mb w} g\paren{\mb w; \mb X_0 + \widetilde{\mb \Xi} \mb X_0} \mb w}{\norm{\mb w}{}^2} - \frac{\mb w^* \nabla^2_{\mb w} g\paren{\mb w; \mb X_0} \mb w}{\norm{\mb w}{}^2} \\
& \le - c_\star \theta + \norm{\nabla_{\mb w}^2 g\paren{\mb w; \mb X_0 + \widetilde{\mb \Xi} \mb X_0} - \nabla_{\mb w}^2 g\paren{\mb w; \mb X_0}}{} \le -\frac{1}{2} c_\star \theta. 
\end{align*}
By similar arguments, we obtain~\eqref{eqn:hess-zero-uni-comp} through~\eqref{eqn:curvature-uni-comp} in Theorem~\ref{thm:geometry_comp}. 

To show the unique local minimizer over $\Gamma$ is near $\mb 0$, we note that (recall the last part of proof of Theorem~\ref{thm:geometry_orth} in Section~\ref{sec:proof_geometry_orth}) $g\paren{\mb w; \mb X_0 + \widetilde{\mb \Xi} \mb X_0}$ being $\frac{c_\star \theta}{2\mu}$ strongly convex near $\mb 0$ implies that 
\begin{align*}
\norm{\mb w_\star}{} \le \frac{4\mu}{c_\star \theta} \norm{\nabla g\paren{\mb 0; \mb X_0 + \widetilde{\mb \Xi} \mb X_0}}{}. 
\end{align*}
The above perturbation analysis implies there exists $C_3 > 0$ such that when 
\begin{align*}
p \ge \frac{C_3}{c_\star^2 \theta} \max\set{\frac{n^4}{\mu^4}, \frac{n^5}{\mu^2}} \kappa^8\paren{\mb A_0} \log^4 \paren{\frac{\kappa\paren{\mb A_0}n}{\mu \theta}}, 
\end{align*}
it holds that 
\begin{align*}
\norm{\nabla_{\mb w} g\paren{\mb 0; \mb X_0 + \widetilde{\mb \Xi} \mb X_0} - \nabla_{\mb w{}} g\paren{\mb 0; \mb X} }{} & \le \frac{c_\star \theta}{400}, 
\end{align*}
which in turn implies 
\begin{align*}
\norm{\mb w_\star}{} \le \frac{4\mu}{c_\star \theta} \norm{\nabla g\paren{\mb 0; \mb X_0}}{} + \frac{4\mu}{c_\star \theta} \frac{c_\star \theta}{400} \le \frac{\mu}{8} + \frac{\mu}{100} < \frac{\mu}{7}, 
\end{align*}
where we have recall the result that $\frac{2\mu}{c_\star \theta} \norm{\nabla g\paren{\mb 0; \mb X_0}}{} \le \mu /16$ from proof of Theorem~\ref{thm:geometry_orth}. A simple union bound with careful bookkeeping gives the success probability. 
\end{proof}

\begin{appendices}
\section{Technical Tools and Basic Facts Used in Proofs}
In this section, we summarize some basic calculations that are useful throughout, and also record major technical tools we use in proofs. 
\begin{lemma}[Derivates and Lipschitz Properties of $h_{\mu}\paren{z}$] \label{lem:derivatives_basic_surrogate}
For the sparsity surrogate 
\begin{align*}
h_{\mu}\left(z\right)= \mu \log \cosh\paren{z/\mu}, 
\end{align*}
the first two derivatives are
\begin{align*}
\dot{ h}_\mu (z) = \tanh(z/\mu),\;  \ddot{h}_\mu (z) = \brac{1-\tanh^2(z/\mu)}/\mu. 
\end{align*}
Also, for any $z \ge 0$, 
\js{we have
\begin{align*}
\max\set{1-2\exp(-2z/\mu), 1/2-\exp(-2z/\mu)/2}\;& \le \;\tanh(z/\mu) \;\le\; 1-\exp(-2z/\mu), \\
2\exp(-2z/\mu) - \exp(-4z/\mu) \;&\le \;1-\tanh^2(z/\mu) \;\le\; 4\exp(-2z/\mu) - 4\exp(-4z/\mu). 
\end{align*}
}
Moreover, for any $z,~z^\prime\in \reals$, we have
\begin{align*}
|\dot{h}_{\mu}(z) - \dot{h}_{\mu}(z^\prime) | \le |z - z^\prime|/\mu,\; |\ddot{h}_{\mu}(z) - \ddot{h}_{\mu}(z^\prime) | \le 2|z - z^\prime|/\mu^2. 
\end{align*}
\end{lemma}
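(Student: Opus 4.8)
The plan is to reduce every assertion to elementary one-variable calculus together with a few polynomial inequalities. First, the derivative formulas follow by direct differentiation: since $h_\mu(z)=\mu\log\cosh(z/\mu)$, the chain rule and $\frac{d}{dt}\log\cosh t=\tanh t$ give $\dot h_\mu(z)=\tanh(z/\mu)$, and differentiating once more with $\frac{d}{dt}\tanh t=1-\tanh^2 t$ gives $\ddot h_\mu(z)=[1-\tanh^2(z/\mu)]/\mu$. No subtlety arises here.

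Next, for the two-sided exponential bounds on $\tanh$ and $1-\tanh^2$, I would fix $z\ge 0$, write $t=z/\mu\ge 0$, and substitute $u=\exp(-2t)\in(0,1]$. From $\tanh t=(e^t-e^{-t})/(e^t+e^{-t})$ one gets the rational identities $\tanh t=(1-u)/(1+u)$ and $1-\tanh^2 t=4u/(1+u)^2$. Plugging these in turns each claimed inequality into a polynomial inequality in $u$ on $(0,1]$: for instance $\tanh t\le 1-e^{-2t}$ becomes $1-u\le(1-u)(1+u)$, i.e.\ $u^2\le u$; $\tanh t\ge 1-2e^{-2t}$ becomes $2u^2\ge 0$; $\tanh t\ge\tfrac12(1-e^{-2t})$ becomes $1+u\le 2$; and the bound on $1-\tanh^2 t$ amounts to comparing $4u/(1+u)^2$ with $u(2-u)$ and with $4u(1-u)$, each a cubic inequality in $u$ that I would dispatch by clearing denominators, factoring, and inspecting the endpoints and interior of $(0,1]$. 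This substitution step is the only place where some care is needed, and one should also note that $z\mapsto\exp(-2z/\mu)$ carries $[0,\infty)$ onto $(0,1]$, so nothing beyond $(0,1]$ has to be checked.

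Finally, the Lipschitz estimates come from the mean value theorem once the relevant derivatives are bounded uniformly. Since $0\le 1-\tanh^2(\cdot)\le 1$, we have $|\ddot h_\mu(z)|=[1-\tanh^2(z/\mu)]/\mu\le 1/\mu$, so $\dot h_\mu$ is $(1/\mu)$-Lipschitz on $\R$. Differentiating $\ddot h_\mu$ once more yields $\frac{d}{dz}\ddot h_\mu(z)=-2\tanh(z/\mu)\,[1-\tanh^2(z/\mu)]/\mu^2$, whose magnitude is at most $2/\mu^2$ because $|\tanh(\cdot)|\le 1$ and $0\le 1-\tanh^2(\cdot)\le 1$; hence $\ddot h_\mu$ is $(2/\mu^2)$-Lipschitz. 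Overall there is no deep obstacle here — the statement packages a handful of calculus facts — and the main (mild) effort is the bookkeeping for the cubic inequalities in $u$ underlying the exponential sandwich bounds.
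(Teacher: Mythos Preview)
The paper states this lemma without proof, so there is no line-by-line comparison to make; your approach---direct differentiation, the substitution $u=\exp(-2z/\mu)$ reducing the sandwich bounds to polynomial inequalities on $(0,1]$, and the mean value theorem for the Lipschitz constants---is exactly the natural one and works for all but one of the assertions.

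There is, however, a genuine gap, and it is not in your method but in the statement itself. The upper bound $1-\tanh^2(z/\mu)\le 4\exp(-2z/\mu)-4\exp(-4z/\mu)$ is \emph{false} for small $z\ge 0$: at $z=0$ the left side equals $1$ while the right side equals $0$. In your $u$-variable the inequality becomes $4u/(1+u)^2\le 4u(1-u)$, i.e.\ $1/(1+u)^2\le 1-u$, which after clearing denominators is $0\le u-u^2-u^3=u(1-u-u^2)$; this fails whenever $u>(\sqrt{5}-1)/2\approx 0.618$, in particular at $u=1$. So the cubic you promised to ``dispatch'' cannot be dispatched. You should have caught this when checking the endpoint $u=1$ rather than asserting the inequalities go through; the rest of your argument (the derivative formulas, the three bounds on $\tanh$, the lower bound on $1-\tanh^2$, and both Lipschitz estimates) is correct as written.
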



\begin{lemma}[Harris' Inequality, ~\cite{harris1960lower}, see also Theorem 2.15 of~\cite{boucheron2013concentration}] \label{lemma:harris_ineq}
Let $X_1, \dots, X_n$ be independent, real-valued random variables and $f, g: \R^n \mapsto \R$ be nonincreasing (nondecreasing) w.r.t. any one variable while fixing the others. Define a random vector $\mb X = \paren{X_1,\cdots,X_n}\in \R^n$, then we have 
\begin{align*}
\expect{f\left(\mb X\right) g\left(\mb X\right)} \geq \expect{f\left(\mb X\right)} \expect{g\left(\mb X\right)}. 
\end{align*}
Similarly, if $f$ is nondecreasing (nonincreasing) and $g$ is nonincreasing (nondecreasing) coordinatewise in the above sense, we have
\begin{align*}
\expect{f\left(\mb X\right) g\left(\mb X\right)} \leq \expect{f\left(\mb X\right)} \expect{g\left(\mb X\right)}. 
\end{align*}
\end{lemma}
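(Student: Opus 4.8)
The plan is to prove the first inequality (both $f$ and $g$ coordinatewise nondecreasing, say; the coordinatewise-nonincreasing case reduces to this instantly by replacing each $X_i$ with $-X_i$) by induction on the number $n$ of coordinates, and then to obtain the second, opposite-monotonicity inequality for free by applying the first part to $f$ and $-g$. Throughout one assumes the expectations in the statement are finite, which is automatic in every invocation of the lemma in this paper (there $f,g$ are bounded, e.g.\ products of $\tanh(\cdot)$, indicators, or smooth functions of Gaussians with at most polynomial growth against sub-Gaussian tails).

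For the base case $n=1$, I would use the classical ``two independent copies'' trick: let $X,X'$ be i.i.d.\ copies of $X_1$; since $f,g$ are both nondecreasing, $\paren{f(X)-f(X')}\paren{g(X)-g(X')}\ge 0$ pointwise, and taking expectations over the independent pair and expanding yields $2\expect{f(X)g(X)}-2\expect{f(X)}\expect{g(X)}\ge 0$, which is the claim.

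For the inductive step, write $\mb X=(\mb X',X_n)$ with $\mb X'=(X_1,\dots,X_{n-1})$ independent of $X_n$. Fix $x_n$ and apply the induction hypothesis to the $(n-1)$-variable coordinatewise-nondecreasing functions $\mb x'\mapsto f(\mb x',x_n)$ and $\mb x'\mapsto g(\mb x',x_n)$, giving $\expect{f(\mb X)g(\mb X)\mid X_n=x_n}\ge F(x_n)G(x_n)$, where $F(x_n)\doteq\expect{f(\mb X',x_n)}$ and $G(x_n)\doteq\expect{g(\mb X',x_n)}$. The key observation is that $F$ and $G$ are themselves monotone in the \emph{same} direction in the single variable $x_n$: monotonicity of $f$ in its last slot and monotonicity of the expectation give $F$ nondecreasing, and likewise for $G$; crucially, independence of $X_n$ from $\mb X'$ is what makes $\expect{f(\mb X)\mid X_n=x_n}$ equal to the average of $f(\,\cdot\,,x_n)$ over $\mb X'$, so that it inherits $f$'s monotonicity. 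One then applies the $n=1$ case to $F,G$ and the variable $X_n$, takes the outer expectation over $X_n$, and chains: $\expect{f(\mb X)g(\mb X)}\ge\expect{F(X_n)G(X_n)}\ge\expect{F(X_n)}\expect{G(X_n)}=\expect{f(\mb X)}\expect{g(\mb X)}$. For the opposite-monotonicity claim, note $-g$ is coordinatewise nondecreasing when $g$ is nonincreasing, so the established inequality applied to $f$ and $-g$ and then negated gives $\expect{f(\mb X)g(\mb X)}\le\expect{f(\mb X)}\expect{g(\mb X)}$.

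I expect the only delicate point — and it is minor — to be the inductive step, specifically the verification that the conditional expectations $F$ and $G$ retain coordinatewise monotonicity; this is precisely where the independence hypothesis is used and cannot be dropped. Integrability bookkeeping is not an obstacle given the boundedness available in all applications here.
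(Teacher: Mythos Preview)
Your proof is correct and is precisely the standard inductive argument for Harris' inequality (the one found, e.g., in Boucheron--Lugosi--Massart, Theorem~2.15). The paper itself does not prove this lemma at all: it is stated in the appendix with citations to \cite{harris1960lower} and \cite{boucheron2013concentration} and used as a black box, so there is no ``paper's own proof'' to compare against. Your write-up matches the textbook proof those references contain.
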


\begin{lemma}[Differentiation under the Integral Sign] \label{lemma:exchange_diff_int}
Consider a function $F: \R^n \times \R \mapsto \R$ such that $\frac{\partial F\left(\mb x, s\right)}{\partial s}$ is well defined and measurable over $\mc U \times \left(0, t_0\right)$ for some open subset $\mc U \subset \R^n$ and some $t_0 > 0$. For any probability measure $\rho$ on $\R^n$ and any $t \in \left(0, t_0\right)$ such that $\int_{0}^t \int_{\mc U} \abs{\frac{\partial F\left(\mb x, s\right)}{\partial s}} \; \rho\left(d \mb x\right) ds < \infty$, it holds that 
\begin{align*}
\frac{d}{dt} \int_{\mc U} F\left(\mb x, t\right) \rho\left(d\mb x\right) = \int_{\mc U} \frac{\partial F\left(\mb x, t\right)}{\partial t} \rho \left(d\mb x\right), \; \text{or} \; \frac{d}{dt} \bb E_{\mb x} \left[F\left(\mb x, t\right) \indicator{\mc U}\right] = \bb E_{\mb x} \left[\frac{\partial F\left(\mb x, t\right)}{\partial t} \indicator{\mc U}\right]. 
\end{align*}
\end{lemma}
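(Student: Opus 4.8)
The plan is to reduce the claim to two classical facts: the Tonelli--Fubini theorem (to interchange $\int_{\mc U}\rho(d\mb x)$ with $\int_0^t ds$) and the fundamental theorem of calculus applied to $s\mapsto F(\mb x,s)$ for $\rho$-almost every fixed $\mb x$. The hypothesis $\int_0^t\int_{\mc U}\abs{\partial_s F(\mb x,s)}\,\rho(d\mb x)\,ds<\infty$ is exactly the integrability needed for both: by Tonelli it equals $\int_{\mc U}\int_0^t\abs{\partial_s F(\mb x,s)}\,ds\,\rho(d\mb x)$, so the inner integral is finite for $\rho$-a.e.\ $\mb x$, i.e.\ $s\mapsto\partial_s F(\mb x,s)$ is Lebesgue integrable on $(0,t)$ for $\rho$-a.e.\ $\mb x$; and $(\mb x,s)\mapsto\partial_s F(\mb x,s)$ is jointly integrable on $\mc U\times(0,t)$, so Fubini applies to it without absolute values.

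First I would fix any $0<a<t$ and invoke the fundamental theorem of calculus: since $F(\mb x,\cdot)$ is (locally absolutely) continuous on $(0,t_0)$ with derivative $\partial_s F(\mb x,\cdot)$ --- automatic in the applications, where $F$ is $C^\infty$ in $s$ --- we have $F(\mb x,t)-F(\mb x,a)=\int_a^t\partial_s F(\mb x,s)\,ds$ for $\rho$-a.e.\ $\mb x$. Integrating against $\rho$ over $\mc U$ and swapping the order of integration by Fubini gives
\begin{align*}
\int_{\mc U}F(\mb x,t)\,\rho(d\mb x)-\int_{\mc U}F(\mb x,a)\,\rho(d\mb x)\;=\;\int_a^t\paren{\int_{\mc U}\partial_s F(\mb x,s)\,\rho(d\mb x)}ds.
\end{align*}
Writing $\Phi(t)\doteq\int_{\mc U}F(\mb x,t)\,\rho(d\mb x)$ and $\psi(s)\doteq\int_{\mc U}\partial_s F(\mb x,s)\,\rho(d\mb x)$ (finite for a.e.\ $s$ and locally integrable by the same Tonelli bound), this says $\Phi$ is an indefinite Lebesgue integral of $\psi$ on $(0,t_0)$, hence locally absolutely continuous, hence $\Phi'(s)=\psi(s)$ at every Lebesgue point of $\psi$, in particular for a.e.\ $s$.

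To obtain the identity at the prescribed point $t$, I would run the difference-quotient argument directly: as $h\to0$,
\begin{align*}
\frac{\Phi(t+h)-\Phi(t)}{h}\;=\;\int_{\mc U}\frac{F(\mb x,t+h)-F(\mb x,t)}{h}\,\rho(d\mb x),
\end{align*}
and the integrand converges $\rho$-a.e.\ to $\partial_t F(\mb x,t)$; passing the limit inside by dominated convergence --- using $\abs{h^{-1}(F(\mb x,t+h)-F(\mb x,t))}\le|h|^{-1}\int_{[t,t+h]}\abs{\partial_s F(\mb x,s)}\,ds$ together with a local integrable majorant for $\partial_s F$ near $t$ --- yields $\Phi'(t)=\psi(t)=\int_{\mc U}\partial_t F(\mb x,t)\,\rho(d\mb x)$, which is the claim. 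The second, expectation form is just a rewriting with $\rho$ the law of $\mb x$ and $\indicator{\mc U}$ restricting the integral to $\mc U$.

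The main obstacle is not the interchange itself but the passage from ``$\Phi'=\psi$ a.e.'' to ``$\Phi'(t)=\psi(t)$ at the given $t$'': under the literal hypotheses one only controls $\partial_s F$ in an $L^1(\rho\otimes ds)$ sense, so a genuine pointwise-in-$t$ domination (or continuity of $\psi$ at $t$) is what makes the clean statement go through. In this paper that gap is harmless: every relevant $F$ is built from $h_\mu$, which is smooth with $\dot h_\mu,\ddot h_\mu$ bounded, so $\partial_s F(\mb x,s)$ is jointly continuous and bounded by an explicit polynomial in $\abs{\mb x}$ (as seen in the integrand bounds used for Proposition~\ref{prop:geometry_asymp_curvature}), the dominated-convergence step above applies verbatim, and the a.e.\ statement upgrades to every $t$.
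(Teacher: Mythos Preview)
The paper does not actually prove this lemma in its text; it defers to the companion technical report~\cite{sun2015complete_tr}, so there is no in-paper argument to compare against. Your Fubini--Tonelli plus fundamental-theorem-of-calculus route is the standard one and is essentially what one would expect the technical report to contain; you also correctly flag the one genuine subtlety --- that the stated hypothesis only controls $\partial_s F$ in $L^1$ up to $t$, so the clean pointwise conclusion at the given $t$ needs either a local majorant in a neighborhood of $t$ or continuity of $s\mapsto\int_{\mc U}\partial_s F\,\rho(d\mb x)$ at $t$ --- and you correctly observe that in every application in the paper the integrands are smooth with polynomially bounded derivatives, so this gap is immaterial.
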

\begin{proof}
See proof of Lemma A.4 in the technical report~\cite{sun2015complete_tr}. 
\end{proof}

\begin{lemma}[Gaussian Tail Estimates] \label{lem:gaussian_tail_est}
Let $X \sim \mc N\paren{0, 1}$ and $\Phi\paren{x}$ be CDF of $X$. For any $x \ge 0$, we have the following estimates for $\Phi^c\paren{x} \doteq 1 - \Phi\paren{x}$: 
\begin{align*}
\paren{\frac{1}{x} - \frac{1}{x^3}}\frac{\exp\paren{-x^2/2}}{\sqrt{2\pi}} & \le \Phi^c\paren{x} \le \paren{\frac{1}{x} - \frac{1}{x^3} + \frac{3}{x^5}}\frac{\exp\paren{-x^2/2}}{\sqrt{2\pi}}, \quad (\text{Type I}) \\
\frac{x}{x^2 + 1} \frac{\exp\paren{-x^2/2}}{\sqrt{2\pi}}& \le \Phi^c\paren{x} \le \frac{1}{x} \frac{\exp\paren{-x^2/2}}{\sqrt{2\pi}},  \quad (\text{Type II}) \\
\frac{\sqrt{x^2 + 4} - x}{2} \frac{\exp\paren{-x^2/2}}{\sqrt{2\pi}} & \le  \Phi^c\paren{x} \le \paren{\sqrt{2 + x^2} - x}  \frac{\exp\paren{-x^2/2}}{\sqrt{2\pi}} \quad (\text{Type III}). 
\end{align*}
\end{lemma}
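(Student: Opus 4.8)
The plan is to reduce all six inequalities to bounds on the \emph{Mills ratio} $R(x) \doteq \Phi^c(x)/\phi(x)$, where $\phi(x) = (2\pi)^{-1/2}e^{-x^2/2}$: each claimed estimate has the form $v_-(x)\,\phi(x) \le \Phi^c(x) \le v_+(x)\,\phi(x)$, i.e.\ $v_-(x) \le R(x) \le v_+(x)$. The one structural ingredient is the first-order identity $R'(x) = xR(x) - 1$, obtained by differentiating $R(x) = e^{x^2/2}\int_x^\infty e^{-t^2/2}\,dt$ and using $\phi'=-x\phi$. Everything then follows from a one-line comparison principle: given a $C^1$ function $v$ on $(0,\infty)$ with $e^{-x^2/2}\big(R(x)-v(x)\big)\to 0$ as $x\to\infty$, put $w=R-v$; then $w'-xw = -(v'-xv+1)$, so $(e^{-x^2/2}w)' = -e^{-x^2/2}(v'-xv+1)$. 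Hence if $v'-xv+1\le 0$ on $(0,\infty)$ then $e^{-x^2/2}w$ is nondecreasing with limit $0$, forcing $w\le 0$, i.e.\ $R\le v$; the reverse inequality $v'-xv+1\ge 0$ gives $R\ge v$. Triviality or continuity at the endpoint covers $x=0$.

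It then remains to instantiate this for the four test functions. For Type~II I would take $v_+(x)=1/x$, for which $v_+'-xv_++1 = -1/x^2 \le 0$, giving $R\le 1/x$; and $v_-(x)=x/(1+x^2)$, for which a short computation gives $v_-'-xv_-+1 = 2/(1+x^2)^2 \ge 0$, giving $R\ge x/(1+x^2)$. For Type~III I would treat both bounds at once via the one-parameter family $v_c(x) = \tfrac2c(\sqrt{x^2+c}-x) = \tfrac{2}{\sqrt{x^2+c}+x}$: writing $s=\sqrt{x^2+c}$ one finds $v_c' = -v_c/s$ and $xv_c - 1 = -(s-x)/(s+x)$, whence $-(v_c'-xv_c+1) = (x(s-x)+(2-c))/(s(s+x))$. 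For $c=2$ the numerator is $x(s-x)\ge 0$, so $v_2'-xv_2+1\le 0$ and $R(x)\le \sqrt{x^2+2}-x$; for $c=4$ the numerator equals $\tfrac{4x}{s+x}-2 = \tfrac{2(x-s)}{s+x}\le 0$, so $v_4'-xv_4+1\ge 0$ and $R(x)\ge \tfrac12(\sqrt{x^2+4}-x)$. These are exactly the Type~III inequalities (and the same scheme also recovers Type~II).

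For Type~I I would instead argue directly, integrating $\int_x^\infty e^{-t^2/2}\,dt$ by parts twice (first with $e^{-t^2/2} = t^{-1}\cdot t e^{-t^2/2}$, then with $t^{-2}e^{-t^2/2} = t^{-3}\cdot t e^{-t^2/2}$) to obtain, for $x>0$,
\[
\sqrt{2\pi}\,\Phi^c(x) \;=\; \frac{e^{-x^2/2}}{x} \;-\; \frac{e^{-x^2/2}}{x^3} \;+\; 3\int_x^\infty t^{-4}e^{-t^2/2}\,dt.
\]
Discarding the positive remainder gives the Type~I lower bound; bounding it by $3\int_x^\infty t^{-4}e^{-t^2/2}\,dt \le \tfrac{3}{x^5}\int_x^\infty t e^{-t^2/2}\,dt = \tfrac{3}{x^5}e^{-x^2/2}$ gives the upper bound. (At $x=0$ the Type~I estimates are vacuous, since the lower bound is negative and the upper bound infinite.)

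I do not expect a genuine obstacle here; the content is classical. The two places that need a little care are (i) checking that $e^{-x^2/2}(R(x)-v(x))\to 0$ at infinity so the comparison principle actually closes — immediate, since $R(x)$ and every $v(x)$ above are $O(1/x)$ — and (ii) the elementary but slightly fiddly algebra verifying the sign of $v_c'-xv_c+1$ for $c=2$ and $c=4$; folding both Type~III inequalities into the single parametric computation is what keeps that step clean, and it is the part I would write out most carefully.
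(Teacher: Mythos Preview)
Your proposal is correct and complete. The Mills-ratio comparison principle for Types~II and~III and the repeated integration by parts for Type~I are both standard, classical arguments (the Type~III bounds are Birnbaum's and Sampford's inequalities, and your parametric treatment via $v_c$ is a clean way to get both at once); the algebra checks out in every case, including the sign computations for $c=2,4$ and the $t^{-4}\le x^{-5}\,t$ bound for the Type~I remainder. Note, however, that the paper itself does not give a proof of this lemma: it defers to ``proof of Lemma~A.5 in the technical report~\cite{sun2015complete_tr}'', so there is no in-paper argument to compare against. Your write-up would serve perfectly well as a self-contained proof here.
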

\begin{proof}
See proof of Lemma A.5 in the technical report~\cite{sun2015complete_tr}. 
\end{proof}

\begin{lemma}[Moments of the Gaussian RV] \label{lem:guassian_moment}
If $X \sim \mc N\left(0, \sigma^2\right)$, then it holds for all integer $p \geq 1$ that
\begin{align*}
\expect{\abs{X}^p} \leq \sigma^p \paren{p -1}!!. 
\end{align*}
\end{lemma}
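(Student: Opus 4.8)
The plan is to reduce to the standard normal by scaling, and then exploit the classical integration-by-parts recursion for Gaussian absolute moments. Since $X = \sigma Z$ with $Z \sim \mc N\paren{0,1}$, we have $\expect{\abs{X}^p} = \sigma^p \expect{\abs{Z}^p}$, so it suffices to prove $\expect{\abs{Z}^p} \le \paren{p-1}!!$ for every integer $p \ge 1$, adopting the usual conventions $\paren{-1}!! = 0!! = 1$.

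First I would record the recursion $\expect{\abs{Z}^p} = \paren{p-1}\,\expect{\abs{Z}^{p-2}}$ valid for all $p \ge 2$. This follows from integration by parts: writing $\expect{\abs{Z}^p} = \sqrt{2/\pi}\int_0^\infty z^p e^{-z^2/2}\,dz$ and using $z^p e^{-z^2/2} = -z^{p-1}\tfrac{d}{dz}e^{-z^2/2}$, the boundary term $\bigl[-z^{p-1}e^{-z^2/2}\bigr]_0^\infty$ vanishes for every $p \ge 1$, leaving $\paren{p-1}\sqrt{2/\pi}\int_0^\infty z^{p-2}e^{-z^2/2}\,dz$. Together with the base cases $\expect{\abs{Z}^0} = 1$ and $\expect{\abs{Z}} = \sqrt{2/\pi} < 1$, iterating the recursion gives, for even $p = 2k$, the exact identity $\expect{\abs{Z}^{2k}} = \paren{2k-1}\paren{2k-3}\cdots 1 = \paren{2k-1}!!$, and for odd $p = 2k+1$, the identity $\expect{\abs{Z}^{2k+1}} = \paren{2k}\paren{2k-2}\cdots 2 \cdot \sqrt{2/\pi} = \paren{2k}!!\,\sqrt{2/\pi} \le \paren{2k}!!$. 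In both parities this yields $\expect{\abs{Z}^p} \le \paren{p-1}!!$, and multiplying back by $\sigma^p$ completes the argument.

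There is no serious obstacle here; the statement is essentially a textbook computation. The only points needing a little care are the double-factorial bookkeeping — in particular that an actual inequality (rather than equality) is needed only in the odd case, and there it comes for free from the harmless constant $\sqrt{2/\pi} < 1$ — and the routine verification that the boundary term in the integration by parts indeed vanishes at both endpoints for all $p \ge 1$. An alternative route via the closed form $\expect{\abs{Z}^p} = 2^{p/2}\Gamma\paren{(p+1)/2}/\sqrt{\pi}$ is possible but would require bounding the Gamma function, so the recursion is the cleaner path.
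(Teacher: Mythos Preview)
Your proof is correct. The paper states this lemma without proof, treating it as a standard fact in the appendix of technical tools; your integration-by-parts recursion with the even/odd split is exactly the textbook derivation one would supply, and there is nothing to compare against.
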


\begin{lemma}[Moments of the $\chi^2$ RV] \label{lem:chi_sq_moment}
If $X \sim \mc \chi^2\paren{n}$, then it holds for all integer $p \geq 1$ that
\begin{align*}
\expect{X^p} = 2^p \frac{\Gamma\paren{p + n/2}}{\Gamma\paren{n/2}} =  \prod_{k=1}^p (n+2k-2) \le p!\paren{2n}^p/2. 
\end{align*}
\end{lemma}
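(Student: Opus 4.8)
The plan is to obtain the exact moment by integrating against the $\chi^2(n)$ density, then perform two elementary manipulations: a telescoping of Gamma values for the second equality, and a term-by-term comparison for the final inequality.

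\textbf{Step 1 (exact moment).} Recall $X \sim \mc \chi^2\paren{n}$ has density $f(x) = \frac{1}{2^{n/2}\Gamma\paren{n/2}} x^{n/2-1} e^{-x/2}$ on $\paren{0,\infty}$. I would write
\[
\expect{X^p} \;=\; \frac{1}{2^{n/2}\Gamma\paren{n/2}} \int_0^\infty x^{p+n/2-1} e^{-x/2}\,dx,
\]
and substitute $u = x/2$, so the integral becomes $\int_0^\infty \paren{2u}^{p+n/2-1} e^{-u}\, 2\,du = 2^{p+n/2}\Gamma\paren{p+n/2}$. Cancelling the factor $2^{n/2}$ gives the first claimed identity $\expect{X^p} = 2^p \Gamma\paren{p+n/2}/\Gamma\paren{n/2}$. (An equivalent route, which I would mention but not carry out, is to differentiate the $\chi^2$ moment generating function $\paren{1-2t}^{-n/2}$ exactly $p$ times and evaluate at $t=0$.)

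\textbf{Step 2 (finite-product form).} Apply the functional equation $\Gamma\paren{z+1} = z\Gamma\paren{z}$ exactly $p$ times to get $\Gamma\paren{p+n/2} = \paren{\prod_{k=1}^p \paren{k-1+n/2}}\Gamma\paren{n/2}$, hence
\[
2^p \frac{\Gamma\paren{p+n/2}}{\Gamma\paren{n/2}} \;=\; \prod_{k=1}^p 2\paren{k-1+n/2} \;=\; \prod_{k=1}^p \paren{n+2k-2},
\]
which is the second identity. For the upper bound, note that for every integer $k \ge 1$ and $n \ge 1$ one has $n + 2k - 2 \le 2kn$, since $2kn - \paren{n+2k-2} = n\paren{2k-1} - 2\paren{k-1}$, which equals $n \ge 0$ when $k=1$ and is at least $\paren{2k-1} - 2\paren{k-1} = 1 > 0$ when $k \ge 2$. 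The $k=1$ factor is exactly $n = \tfrac12\paren{2\cdot 1\cdot n}$, which is precisely where the constant $\tfrac12$ originates; multiplying the bounds over $k$ yields
\[
\prod_{k=1}^p \paren{n+2k-2} \;\le\; n\cdot \prod_{k=2}^p 2kn \;=\; 2^{p-1} n^p\, p! \;=\; p!\paren{2n}^p/2,
\]
completing the proof.

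I do not expect a genuine obstacle here; the argument is routine. The only points demanding a little care are the change of variables in Step 1 (tracking the powers of $2$) and making sure the term-by-term estimate in Step 2 is exploited at $k=1$ sharply enough to deliver the constant $1/2$ rather than $1$.
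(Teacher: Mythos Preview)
Your proof is correct. The paper states this lemma without proof (it is a standard computation), so there is no argument in the paper to compare against; your derivation is exactly the kind of routine verification one would supply.
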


\begin{lemma}[Moments of the $\chi$ RV] \label{lem:chi_moment}
If $X \sim \mc \chi\paren{n}$, then it holds for all integer $p \geq 1$ that
\begin{align*}
\expect{X^p} = 2^{p/2} \frac{\Gamma\paren{p/2 + n/2}}{\Gamma\paren{n/2}} \le p! n^{p/2}. 
\end{align*}
\end{lemma}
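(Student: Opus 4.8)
The plan is to establish the claim in two stages: first the exact identity $\expect{X^p}=2^{p/2}\Gamma(p/2+n/2)/\Gamma(n/2)$, and then the upper bound $\expect{X^p}\le p!\,n^{p/2}$. For the identity, recall that a $\chi(n)$ variable has density $f(x)=\tfrac{1}{2^{n/2-1}\Gamma(n/2)}x^{n-1}e^{-x^2/2}$ on $x\ge 0$, so $\expect{X^p}=\tfrac{1}{2^{n/2-1}\Gamma(n/2)}\int_0^\infty x^{n+p-1}e^{-x^2/2}\,dx$; the substitution $u=x^2/2$ turns the integral into $2^{(n+p)/2-1}\Gamma\!\paren{\tfrac{n+p}{2}}$, and the stated formula drops out after cancellation. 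This step is routine.

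For the bound I would avoid estimating the Gamma ratio directly and instead reduce to even exponents. Let $q=2\lceil p/2\rceil$ be the least even integer with $q\ge p$. Lyapunov's inequality (monotonicity of $s\mapsto(\expect{X^s})^{1/s}$, valid since $X\ge 0$) gives $\expect{X^p}\le\paren{\expect{X^{q}}}^{p/q}$. Since $X^2\sim\chi^2(n)$, Lemma~\ref{lem:chi_sq_moment} yields $\expect{X^{q}}=\expect{(X^2)^{q/2}}=\prod_{k=1}^{q/2}(n+2k-2)$. Each factor obeys $n+2k-2\le n(2k-1)$ because $n(2k-1)-(n+2k-2)=2(n-1)(k-1)\ge 0$ for integers $n\ge 1$, $k\ge 1$, hence $\expect{X^{q}}\le n^{q/2}\prod_{k=1}^{q/2}(2k-1)=n^{q/2}(q-1)!!$. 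Therefore $\expect{X^p}\le n^{p/2}\bigl((q-1)!!\bigr)^{p/q}\le n^{p/2}(q-1)!!$, using $p/q\le 1$ and $(q-1)!!\ge 1$. Finally $(q-1)!!\le p!$: if $p$ is even then $q=p$ and $(q-1)!!=(p-1)!!\le p!$, while if $p$ is odd then $q=p+1$ and $(q-1)!!=p!!\le p!$. Combining gives $\expect{X^p}\le p!\,n^{p/2}$.

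There is no serious obstacle here; the only mild subtlety is that the cleanest route to the even-moment identity $\prod_k(n+2k-2)$ covers even $p$ directly, so an interpolation device (Lyapunov, or equivalently Cauchy--Schwarz between $\expect{X^{p-1}}$ and $\expect{X^{p+1}}$) is needed to handle odd $p$ without reopening the Gamma-function estimates, and one must keep careful track of the double-factorial bookkeeping so the constant stays exactly $p!$ rather than $p!$ times a stray $\sqrt2$.
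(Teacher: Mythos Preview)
Your argument is correct. The identity is a direct density computation, and your reduction to the even case via Lyapunov together with the factor-by-factor bound $n+2k-2\le n(2k-1)$ and the double-factorial bookkeeping all check out; in particular the odd case $(q-1)!!=p!!\le p!$ closes cleanly with no stray constant.

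There is nothing to compare against: the paper records this lemma in its appendix of standard technical tools and basic facts without supplying a proof, treating the identity and the bound as known. Your write-up therefore fills a gap rather than paralleling or departing from an existing argument in the paper.
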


\begin{lemma}[Moment-Control Bernstein's Inequality for Scalar RVs, Theorem 2.10 of~\cite{foucart2013mathematical}] \label{lem:mc_bernstein_scalar}
Let $X_1, \dots, X_p$ be i.i.d. real-valued random variables. Suppose that there exist some positive numbers $R$ and $\sigma^2$ such that
\begin{align*}
\expect{\abs{X_k}^m} \leq m! \sigma^2 R^{m-2}/2, \; \; \text{for all integers}\; m \ge 2. 
\end{align*} 
Let $S \doteq \frac{1}{p}\sum_{k=1}^p X_k$, then for all $t > 0$, it holds  that 
\begin{align*}
\prob{\abs{S - \expect{S}} \ge t} \leq 2\exp\left(-\frac{pt^2}{2\sigma^2 + 2Rt}\right).   
\end{align*}
\end{lemma}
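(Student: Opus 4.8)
The plan is to prove this classical sub-exponential tail bound by the Cramér--Chernoff method, converting the Bernstein-type moment hypothesis into a bound on the moment generating function and then optimizing the exponent.

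\textbf{Step 1: an MGF bound for a single summand.} Fix $k$ and write $a_k \doteq \expect{X_k}$. For any $\lambda$ with $\abs{\lambda} < 1/R$, the hypothesis $\expect{\abs{X_k}^m} \le m!\sigma^2 R^{m-2}/2$ makes the series $\sum_{m\ge 0} \lambda^m \expect{X_k^m}/m!$ absolutely convergent, since $\sum_{m \ge 2} \abs{\lambda}^m \expect{\abs{X_k}^m}/m! \le \tfrac{\lambda^2 \sigma^2}{2}\sum_{m\ge 2}(\abs{\lambda}R)^{m-2} = \tfrac{\lambda^2\sigma^2}{2(1-\abs{\lambda}R)}$. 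Bounding the $m\ge 2$ terms by their absolute values gives
\begin{align*}
\expect{\exp(\lambda X_k)} \;\le\; 1 + \lambda a_k + \frac{\lambda^2\sigma^2}{2(1-\abs{\lambda}R)} \;\le\; \exp\!\paren{\lambda a_k + \frac{\lambda^2\sigma^2}{2(1-\abs{\lambda}R)}},
\end{align*}
using $1+x\le e^x$. Multiplying by $\exp(-\lambda a_k)$ absorbs the mean term at no cost, so $\expect{\exp(\lambda(X_k - a_k))} \le \exp\!\paren{\lambda^2\sigma^2/(2(1-\abs{\lambda}R))}$; in particular no constants are lost to centering.

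\textbf{Step 2: tensorize and apply Markov.} By independence, for $0<\lambda<1/R$,
\begin{align*}
\expect{\exp\!\paren{\lambda \textstyle\sum_{k=1}^p (X_k - a_k)}} = \prod_{k=1}^p \expect{\exp(\lambda(X_k - a_k))} \le \exp\!\paren{\frac{p\lambda^2\sigma^2}{2(1-\lambda R)}},
\end{align*}
and Markov's inequality then yields, for every $t>0$,
\begin{align*}
\prob{S - \expect{S} \ge t} \;\le\; \exp\!\paren{-p\paren{\lambda t - \frac{\lambda^2\sigma^2}{2(1-\lambda R)}}}.
\end{align*}

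\textbf{Step 3: optimize $\lambda$ and symmetrize.} I would choose $\lambda = t/(\sigma^2 + Rt)$, which indeed lies in $(0,1/R)$ and satisfies $1-\lambda R = \sigma^2/(\sigma^2+Rt)$, so the bracketed exponent collapses to $t^2/(2(\sigma^2+Rt))$. This gives $\prob{S - \expect{S}\ge t}\le \exp(-pt^2/(2\sigma^2+2Rt))$. The lower tail follows by applying the same three steps to $-X_1,\dots,-X_p$, which obey the identical moment hypothesis since $\abs{-X_k}^m = \abs{X_k}^m$, and a union bound over the two events produces the stated factor $2$.

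Since this is a textbook result (quoted here from~\cite{foucart2013mathematical}), there is no genuine obstacle; the only points needing care are justifying the termwise bound in the MGF expansion via absolute convergence on $\abs{\lambda}<1/R$, and checking that the chosen $\lambda$ is admissible and reproduces exactly the claimed denominator $2\sigma^2+2Rt$.
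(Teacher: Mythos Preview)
Your proof is correct and is exactly the standard Cram\'er--Chernoff argument for Bernstein's inequality under a moment-growth hypothesis. The paper does not give its own proof of this lemma: it is stated as a technical tool and attributed to Theorem~2.10 of~\cite{foucart2013mathematical}, where the same argument appears. So there is nothing to compare against; your proposal simply fills in the omitted textbook derivation.
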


\begin{lemma}[Moment-Control Bernstein's Inequality for Matrix RVs, Theorem 6.2 of~\cite{tropp2012user}] \label{lem:mc_bernstein_matrix}
Let $\mb X_1, \dots, \mb X_p\in \R^{d \times d}$ be i.i.d. random, symmetric matrices. Suppose there exist some positive number $R$ and $\sigma^2$ such that
\begin{align*}
\expect{\mb X_k^m} \preceq m! \sigma^2 R^{m-2}/2 \cdot \mb I \; \text{and} -\expect{\mb X_k^m} \preceq m! \sigma^2 R^{m-2}/2 \cdot \mb I\;,  \; \text{for all integers $m \ge 2$}. 
\end{align*}
Let $\mb S \doteq \frac{1}{p} \sum_{k = 1}^p \mb X_k$, then for all $t > 0$, it holds that 
\begin{align*}
\prob{\norm{\mb S - \expect{\mb S}}{} \ge t} \le 2d\exp\paren{-\frac{pt^2}{2\sigma^2 + 2Rt}}.
\end{align*}
\end{lemma}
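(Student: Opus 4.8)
The result is the matrix Bernstein inequality of~\cite{tropp2012user}, and the plan is to reproduce the Ahlswede--Winter/Tropp matrix Laplace-transform argument. First I would reduce to a one-sided statement. It suffices to treat the case $\expect{\mb X_k} = \mb 0$, so that $\mb S = \mb S - \expect{\mb S}$; the general case follows by applying this case to the centered matrices $\mb X_k - \expect{\mb X_k}$, whose moments are controlled in the same fashion. Since $\norm{\mb M}{} = \max\set{\lambda_{\max}(\mb M), \lambda_{\max}(-\mb M)}$ for symmetric $\mb M$, a union bound reduces the claim to a one-sided tail bound $\prob{\lambda_{\max}(\tfrac1p\sum_k \mb X_k)\ge t}$, plus the analogous bound for $-\mb X_k$; the two-sidedness of the moment hypothesis guarantees $-\mb X_k$ obeys the same bounds, which is exactly where the factor $2$ in $2d$ comes from.

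Next I would bound the matrix moment generating function of a single summand. Under the hypothesis, the series below converges absolutely for $0<\theta<1/R$, and
\begin{align*}
\expect{e^{\theta \mb X_k}} = \mb I + \sum_{m\ge 2}\frac{\theta^m}{m!}\expect{\mb X_k^m} \preceq \mb I + \frac{\sigma^2}{2}\paren{\sum_{m\ge 2}\theta^m R^{m-2}}\mb I = \mb I + \frac{\theta^2\sigma^2}{2(1-\theta R)}\,\mb I \preceq \exp\paren{\frac{\theta^2\sigma^2}{2(1-\theta R)}\,\mb I},
\end{align*}
where the first equality uses $\expect{\mb X_k}=\mb 0$ and the last inequality is $\mb I + \mb M\preceq e^{\mb M}$ for positive semidefinite $\mb M$. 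Hence $\log\expect{e^{\theta\mb X_k}}\preceq \frac{\theta^2\sigma^2}{2(1-\theta R)}\,\mb I$.

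The crucial ingredient is Tropp's master tail bound --- a consequence of Lieb's concavity theorem --- giving subadditivity of matrix cumulant generating functions: $\expect{\trace\exp(\theta\sum_k\mb X_k)}\le\trace\exp(\sum_k\log\expect{e^{\theta\mb X_k}})$. Combining with the previous bound and $\trace\exp(c\mb I)=d\,e^{c}$ yields $\expect{\trace\exp(\theta\sum_k\mb X_k)}\le d\exp\paren{\tfrac{p\theta^2\sigma^2}{2(1-\theta R)}}$, and then, using $e^{\theta\lambda_{\max}(\sum_k\mb X_k)}\le\trace\exp(\theta\sum_k\mb X_k)$ together with Markov's inequality,
\begin{align*}
\prob{\lambda_{\max}\paren{\tfrac1p\textstyle\sum_{k}\mb X_k}\ge t} \le e^{-\theta p t}\,\expect{\trace\exp\paren{\theta\textstyle\sum_{k}\mb X_k}} \le d\exp\paren{-\theta p t + \frac{p\theta^2\sigma^2}{2(1-\theta R)}}.
\end{align*}
Finally I would optimize over $\theta\in(0,1/R)$: the Bernstein choice $\theta = t/(\sigma^2+Rt)$ makes the exponent equal to $-\frac{pt^2}{2(\sigma^2+Rt)} = -\frac{pt^2}{2\sigma^2+2Rt}$, and combining with the matching estimate for $-\mb X_k$ gives $\prob{\norm{\mb S-\expect{\mb S}}{}\ge t}\le 2d\exp\paren{-\frac{pt^2}{2\sigma^2+2Rt}}$. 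The only step that is not routine bookkeeping --- and hence the real obstacle --- is the master tail bound, whose proof rests on Lieb's concavity theorem; since that is classical, it is natural in the main text to simply invoke~\cite{tropp2012user} rather than reproduce it.
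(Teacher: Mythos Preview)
The paper does not give an in-text proof of this lemma; it simply cites Tropp and defers details to the companion technical report. Your sketch is exactly the Tropp matrix--Laplace argument that the citation points to, and the MGF bound, the Lieb/master-tail-bound step, and the optimization over $\theta$ are all correct as you state them.

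The one place that needs more care is the reduction to $\expect{\mb X_k}=\mb 0$. You assert that the centered matrices $\mb X_k-\mb\mu$ (with $\mb\mu=\expect{\mb X_k}$) have ``moments controlled in the same fashion,'' but in the matrix setting this is not automatic: since $\mb X_k$ and $\mb\mu$ need not commute, $(\mb X_k-\mb\mu)^m$ does not admit a binomial expansion, and the semidefinite bounds on $\expect{\mb X_k^m}$ do not transfer to $\expect{(\mb X_k-\mb\mu)^m}$ with the same constants in any obvious way. A clean fix stays inside Tropp's framework: carry the deterministic shift $\mb H=-p\theta\mb\mu$ into the master bound, so that
\[
\prob{\lambda_{\max}\Bigl(\textstyle\sum_k\mb X_k-p\mb\mu\Bigr)\ge pt}\;\le\;e^{-\theta pt}\,\trace\exp\Bigl(-p\theta\mb\mu + p\log\expect{e^{\theta\mb X_1}}\Bigr),
\]
and then bound $\expect{e^{\theta\mb X_1}}\preceq(1+c)\mb I+\theta\mb\mu$ with $c=\tfrac{\theta^2\sigma^2}{2(1-\theta R)}$, keeping the linear term. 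Because $-\theta\mb\mu$ and $\log\bigl((1+c)\mb I+\theta\mb\mu\bigr)$ are both spectral functions of $\mb\mu$, they commute; applying the scalar inequality $1+x\le e^x$ eigenvalue-by-eigenvalue (together with operator monotonicity of $\log$ and monotonicity of $\trace\exp$) gives $-p\theta\mb\mu+p\log\bigl((1+c)\mb I+\theta\mb\mu\bigr)\preceq pc\,\mb I$, so the trace-exp is at most $d\,e^{pc}$, exactly as in the centered case. With this adjustment your argument goes through with no change in constants.
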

\begin{proof}
See proof of Lemma A.10 in the technical report~\cite{sun2015complete_tr}. 
\end{proof}

\begin{corollary}[Moment-Control Bernstein's Inequality for Vector RVs] \label{cor:vector-bernstein} 
Let $\mb x_1, \dots, \mb x_p \in \reals^d$ be i.i.d. random vectors. Suppose there exist some positive number $R$ and $\sigma^2$ such that
\begin{equation*}
\bb E\left[ \norm{\mb x_k }{}^m \right] \;\le\; m! \sigma^2R^{m-2}/2, \quad \text{for all integers $m \ge 2$}. 
\end{equation*}
Let $\mb s = \frac{1}{p}\sum_{k=1}^p \mb x_k$, then for any $t > 0$, it holds that
\begin{align*}
\bb P\brac{\norm{\mb s - \bb E\brac{\mb s}}{} \geq t} \; \leq \; 2(d+1)\exp\paren{-\frac{pt^2}{2\sigma^2+2Rt}}.
\end{align*}
\end{corollary}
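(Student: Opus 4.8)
The plan is to deduce the vector statement from the matrix Bernstein inequality of Lemma~\ref{lem:mc_bernstein_matrix} through the self-adjoint dilation trick. Define the linear map $\mc D: \reals^d \to \reals^{(d+1)\times(d+1)}$ by
\begin{align*}
\mc D(\mb z) = \begin{pmatrix} \mb 0 & \mb z \\ \mb z^* & 0 \end{pmatrix},
\end{align*}
which sends each $\mb z$ to a real symmetric matrix with $\norm{\mc D(\mb z)}{} = \norm{\mb z}{}$. Setting $\mb X_k \doteq \mc D(\mb x_k)$, the $\mb X_k$ are i.i.d.\ symmetric random matrices in $\reals^{(d+1)\times(d+1)}$; by linearity of $\mc D$, $\frac{1}{p}\sum_{k=1}^p \mb X_k = \mc D(\mb s)$ and $\expect{\frac{1}{p}\sum_{k=1}^p \mb X_k} = \mc D(\expect{\mb s})$, so
\begin{align*}
\norm{\frac{1}{p}\sum_{k=1}^p \mb X_k - \expect{\frac{1}{p}\sum_{k=1}^p \mb X_k}}{} = \norm{\mc D(\mb s - \expect{\mb s})}{} = \norm{\mb s - \expect{\mb s}}{}.
\end{align*}
Hence it suffices to check that $\mb X_k$ satisfies the moment hypothesis of Lemma~\ref{lem:mc_bernstein_matrix} with the same $R$ and $\sigma^2$, in ambient dimension $d+1$.

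First I would compute the powers of the dilation: since $\mc D(\mb z)^2 = \diag\paren{\mb z\mb z^*,\ \norm{\mb z}{}^2}$, an easy induction gives, for $\ell \ge 1$,
\begin{align*}
\mc D(\mb z)^{2\ell} = \diag\paren{\norm{\mb z}{}^{2\ell-2}\mb z\mb z^*,\ \norm{\mb z}{}^{2\ell}}, \qquad \mc D(\mb z)^{2\ell+1} = \mc D\paren{\norm{\mb z}{}^{2\ell}\mb z}.
\end{align*}
For even $m = 2\ell$, $\mb X_k^m$ is positive semidefinite with operator norm $\norm{\mb x_k}{}^m$, so $\mb 0 \preceq \expect{\mb X_k^m} \preceq \expect{\norm{\mb x_k}{}^m}\mb I \preceq \tfrac{m!}{2}\sigma^2 R^{m-2}\mb I$, and $-\expect{\mb X_k^m} \preceq \tfrac{m!}{2}\sigma^2 R^{m-2}\mb I$ holds trivially. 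For odd $m = 2\ell+1$, $\expect{\mb X_k^m} = \mc D\paren{\expect{\norm{\mb x_k}{}^{m-1}\mb x_k}}$ has eigenvalues $\pm \norm{\expect{\norm{\mb x_k}{}^{m-1}\mb x_k}}{}$ (and $0$), and Jensen's inequality gives $\norm{\expect{\norm{\mb x_k}{}^{m-1}\mb x_k}}{} \le \expect{\norm{\mb x_k}{}^m} \le \tfrac{m!}{2}\sigma^2 R^{m-2}$; thus both $\expect{\mb X_k^m} \preceq \tfrac{m!}{2}\sigma^2 R^{m-2}\mb I$ and $-\expect{\mb X_k^m} \preceq \tfrac{m!}{2}\sigma^2 R^{m-2}\mb I$. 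Finiteness of all these expectations follows from the moment assumption on $\mb x_k$.

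With the hypothesis in place, Lemma~\ref{lem:mc_bernstein_matrix} applied in dimension $d+1$ yields, for every $t > 0$,
\begin{align*}
\prob{\norm{\mb s - \expect{\mb s}}{} \ge t} = \prob{\norm{\frac{1}{p}\sum_{k=1}^p \mb X_k - \expect{\frac{1}{p}\sum_{k=1}^p \mb X_k}}{} \ge t} \le 2(d+1)\exp\paren{-\frac{pt^2}{2\sigma^2 + 2Rt}},
\end{align*}
which is the claimed bound. The argument is essentially routine; the only delicate point is that Lemma~\ref{lem:mc_bernstein_matrix} requires a two-sided spectral moment bound, so for the odd powers of the dilation the control must come from the operator-norm (Jensen) estimate above rather than from positive-semidefiniteness.
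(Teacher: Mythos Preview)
Your proof is correct. The paper itself does not give an in-text proof of this corollary (it defers to Lemma~A.11 of the companion technical report), but the self-adjoint dilation reduction you use is precisely the standard way to derive a vector Bernstein inequality from the matrix version in Lemma~\ref{lem:mc_bernstein_matrix}, and your verification of the two-sided moment hypothesis for both even and odd powers is clean and complete.
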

\begin{proof}
See proof of Lemma A.11 in the technical report~\cite{sun2015complete_tr}. 
\end{proof}

\begin{lemma}[Integral Form of Taylor's Theorem]\label{lem:Taylor-integral-form}
	Let $f(\mb x): \bb R^n \mapsto \bb R$ be a twice continuously differentiable function, then for any direction $\mb y\in \bb R^n$, we have
	\begin{align*}
		f(\mb x+t\mb y) &= f(\mb x) + t \int_0^1\innerprod{\nabla f(\mb x+st\mb y)}{\mb y} \; ds, \\
		f(\mb x+t\mb y) &= f(\mb x) + t \innerprod{\nabla f(\mb x)}{\mb y} + t^2 \int_0^1 (1-s)\innerprod{\nabla^2 f(\mb x+st\mb y) \mb y}{\mb y}\; ds.
	\end{align*}
\end{lemma}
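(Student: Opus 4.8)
The plan is to reduce the multivariate identities to one-dimensional calculus facts by restricting $f$ to the line segment emanating from $\mb x$ in the direction $\mb y$. Concretely, I would introduce the scalar function $\phi : \brac{0,1} \to \bb R$ defined by $\phi(s) \doteq f(\mb x + st\mb y)$. Because $f$ is twice continuously differentiable and $s \mapsto \mb x + st\mb y$ is affine, the composition $\phi$ is twice continuously differentiable on $\brac{0,1}$, and the chain rule yields $\phi'(s) = t\innerprod{\nabla f(\mb x + st\mb y)}{\mb y}$ and $\phi''(s) = t^2 \innerprod{\nabla^2 f(\mb x + st\mb y)\mb y}{\mb y}$. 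Observe also $\phi(0) = f(\mb x)$, $\phi(1) = f(\mb x + t\mb y)$, and $\phi'(0) = t\innerprod{\nabla f(\mb x)}{\mb y}$.

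For the first identity, I would simply invoke the fundamental theorem of calculus for $\phi$ on $\brac{0,1}$, namely $\phi(1) - \phi(0) = \int_0^1 \phi'(s)\,ds$, and substitute the expressions above; this gives $f(\mb x + t\mb y) = f(\mb x) + t\int_0^1 \innerprod{\nabla f(\mb x + st\mb y)}{\mb y}\,ds$ directly.

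For the second identity, I would integrate by parts in $\int_0^1 \phi'(s)\,ds$, choosing the antiderivative of the constant factor $1$ to be $s-1$ rather than $s$, so that the boundary contribution at $s=1$ vanishes. This produces $\int_0^1 \phi'(s)\,ds = \brac{(s-1)\phi'(s)}_0^1 - \int_0^1 (s-1)\phi''(s)\,ds = \phi'(0) + \int_0^1 (1-s)\phi''(s)\,ds$. Combining with $\phi(1) - \phi(0) = \int_0^1 \phi'(s)\,ds$ and resubstituting $\phi$, $\phi'$, $\phi''$ yields $f(\mb x + t\mb y) = f(\mb x) + t\innerprod{\nabla f(\mb x)}{\mb y} + t^2 \int_0^1 (1-s)\innerprod{\nabla^2 f(\mb x + st\mb y)\mb y}{\mb y}\,ds$, as claimed.

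The argument is entirely routine, so I do not anticipate any genuine obstacle; the only step meriting a moment's attention is the deliberate choice of the antiderivative $s-1$ in the integration by parts, which is precisely what produces the $(1-s)$ weight and the clean Peano-type remainder in the second-order expansion.
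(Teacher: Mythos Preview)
Your argument is correct and is the standard textbook derivation of Taylor's theorem with integral remainder: restrict to the line via $\phi(s)=f(\mb x+st\mb y)$, apply the fundamental theorem of calculus for the first identity, and integrate by parts with the antiderivative $s-1$ for the second. The paper itself does not supply a proof of this lemma; it is recorded in the appendix as a basic technical fact without proof, so there is nothing to compare against beyond noting that your derivation is exactly the routine one the authors implicitly rely on.
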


\section{Auxillary Results for Proofs} \label{sec:app_aux_results}
\begin{lemma}\label{lem:aux_asymp_proof_a}
Let $X\sim \mc N(0,\sigma_X^2)$ and $Y\sim \mc N(0,\sigma_Y^2)$ be independent random variables and $\Phi^c\paren{t} = \frac{1}{\sqrt{2\pi}} \int_t^{\infty} \exp\paren{-x^2/2}\; dx$ be the complementary cumulative distribution function of the standard normal. For any $a>0$, we have 
\begin{align}
\expect{X\indicator{X>0}} \;&= \;\frac{\sigma_X}{\sqrt{2\pi}}, \label{eqn:lem-aux-aymp-proof-a-1} \\
\expect{\exp\paren{-aX}X\indicator{X>0}} \;&= \; \frac{\sigma_X}{\sqrt{2\pi}} - a \sigma_X^2\exp\paren{\frac{a^2\sigma_X^2}{2}}\Phi^c\paren{a\sigma_X}, \label{eqn:lem-aux-aymp-proof-a-2} \\
\bb E\brac{\exp\paren{-aX}\indicator{X>0} }\;&=\; \exp\paren{\frac{a^2\sigma_X^2}{2}} \Phi^c\paren{a\sigma_X},  \label{eqn:lem-aux-aymp-proof-a-3}\\
\bb E\brac{\exp\paren{-a(X+Y)}X^2\indicator{X+Y>0}}\; &=\;\sigma_X^2\paren{1+a^2\sigma_X^2}\exp\paren{\frac{a^2\sigma_X^2 + a^2 \sigma_Y^2}{2}}\Phi^c\paren{a\sqrt{\sigma_X^2 + \sigma_Y^2}} \nonumber \\
& \qquad - \frac{a\sigma_X^4}{\sqrt{2\pi}\sqrt{\sigma_X^2 + \sigma_Y^2}}, \label{eqn:lem-aux-aymp-proof-a-4}\\
\bb E\brac{\exp\paren{-a(X+Y)}XY\indicator{X+Y>0} }\; &=\; a^2\sigma_X^2\sigma_Y^2 \exp\paren{\frac{a^2 \sigma_X^2 + a^2 \sigma_Y^2}{2}} \Phi^c\paren{a\sqrt{\sigma_X^2 + \sigma_Y^2}} \nonumber \\
& \qquad - \frac{a\sigma_X^2\sigma_Y^2}{\sqrt{2\pi}\sqrt{\sigma_X^2 + \sigma_Y^2}},  \label{eqn:lem-aux-aymp-proof-a-5} \\
\bb E\brac{\tanh\paren{aX}X} \;&=\; a\sigma_X^2 \bb E\brac{1-\tanh^2\paren{aX}},  \label{eqn:lem-aux-aymp-proof-a-6}\\
\bb E\brac{\tanh\paren{a(X+Y)}X} \;&=\; a\sigma_X^2 \bb E\brac{1-\tanh^2\paren{a(X+Y)}}\label{eqn:lem-aux-aymp-proof-a-7}. 
\end{align}
\end{lemma}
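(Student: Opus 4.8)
The plan is to prove the seven identities of Lemma~\ref{lem:aux_asymp_proof_a} by elementary Gaussian calculus, organized into three groups. First I would dispatch the scalar identities~\eqref{eqn:lem-aux-aymp-proof-a-1}--\eqref{eqn:lem-aux-aymp-proof-a-3}; then I would reduce the bivariate identities~\eqref{eqn:lem-aux-aymp-proof-a-4}--\eqref{eqn:lem-aux-aymp-proof-a-5} to the scalar ones via Gaussian conditioning on $X+Y$; and finally I would establish the two $\tanh$ identities~\eqref{eqn:lem-aux-aymp-proof-a-6}--\eqref{eqn:lem-aux-aymp-proof-a-7} using Gaussian integration by parts (Stein's identity).

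For~\eqref{eqn:lem-aux-aymp-proof-a-1}, I would simply integrate $x$ against the $\mc N(0,\sigma_X^2)$ density over $(0,\infty)$ using the antiderivative $-\tfrac{\sigma_X}{\sqrt{2\pi}}e^{-x^2/(2\sigma_X^2)}$. For~\eqref{eqn:lem-aux-aymp-proof-a-3} I would complete the square in the exponent, $-\tfrac{x^2}{2\sigma_X^2}-ax = -\tfrac{(x+a\sigma_X^2)^2}{2\sigma_X^2}+\tfrac{a^2\sigma_X^2}{2}$, and shift the integration variable to $y = x+a\sigma_X^2$ to recognize $\exp(a^2\sigma_X^2/2)\Phi^c(a\sigma_X)$. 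Identity~\eqref{eqn:lem-aux-aymp-proof-a-2} then follows by differentiating~\eqref{eqn:lem-aux-aymp-proof-a-3} in $a$ and using $\tfrac{d}{da}\Phi^c(a\sigma_X) = -\tfrac{\sigma_X}{\sqrt{2\pi}}e^{-a^2\sigma_X^2/2}$; the exchange of $\tfrac{d}{da}$ with $\bb E$ is licensed by Lemma~\ref{lemma:exchange_diff_int}, with an integrable envelope of the form $|X|\,e^{(|a|+\delta)|X|}$ valid locally in $a$. Differentiating~\eqref{eqn:lem-aux-aymp-proof-a-2} once more yields the auxiliary second-moment formula $\bb E[\exp(-aS)S^2\indicator{S>0}] = (\sigma_S^2+a^2\sigma_S^4)\exp(a^2\sigma_S^2/2)\Phi^c(a\sigma_S) - \tfrac{a\sigma_S^3}{\sqrt{2\pi}}$ for any $S\sim\mc N(0,\sigma_S^2)$, which I will reuse below.

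For~\eqref{eqn:lem-aux-aymp-proof-a-4}--\eqref{eqn:lem-aux-aymp-proof-a-5} I would set $S \doteq X+Y\sim\mc N(0,\sigma_X^2+\sigma_Y^2)$ and condition on $S$. Joint Gaussian conditioning gives $\bb E[X\mid S] = \tfrac{\sigma_X^2}{\sigma_X^2+\sigma_Y^2}S$ and $\mathrm{Var}(X\mid S) = \tfrac{\sigma_X^2\sigma_Y^2}{\sigma_X^2+\sigma_Y^2}$, hence $\bb E[X^2\mid S] = \tfrac{\sigma_X^2\sigma_Y^2}{\sigma_X^2+\sigma_Y^2} + \tfrac{\sigma_X^4}{(\sigma_X^2+\sigma_Y^2)^2}S^2$ and $\bb E[XY\mid S] = S\,\bb E[X\mid S] - \bb E[X^2\mid S] = \tfrac{\sigma_X^2\sigma_Y^2}{(\sigma_X^2+\sigma_Y^2)^2}S^2 - \tfrac{\sigma_X^2\sigma_Y^2}{\sigma_X^2+\sigma_Y^2}$. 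Writing $\bb E[\exp(-aS)(\cdot)\indicator{S>0}] = \bb E[\exp(-aS)\,\bb E[(\cdot)\mid S]\,\indicator{S>0}]$ and substituting~\eqref{eqn:lem-aux-aymp-proof-a-3} together with the auxiliary $S^2$-moment above, the claimed right-hand sides drop out after collecting terms: the $\exp(a^2\sigma_S^2/2)\Phi^c(a\sigma_S)$-coefficients telescope to $\sigma_X^2(1+a^2\sigma_X^2)$ and $a^2\sigma_X^2\sigma_Y^2$ respectively, and the remaining terms reduce to $-\tfrac{a\sigma_X^4}{\sqrt{2\pi}\sqrt{\sigma_X^2+\sigma_Y^2}}$ and $-\tfrac{a\sigma_X^2\sigma_Y^2}{\sqrt{2\pi}\sqrt{\sigma_X^2+\sigma_Y^2}}$. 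Finally,~\eqref{eqn:lem-aux-aymp-proof-a-6} is Stein's identity $\bb E[Xg(X)] = \sigma_X^2\,\bb E[g'(X)]$ applied to $g(x) = \tanh(ax)$, $g'(x) = a(1-\tanh^2(ax))$; and~\eqref{eqn:lem-aux-aymp-proof-a-7} follows by applying Stein's identity in the variable $X$ with $Y$ held fixed (legitimate since $\tanh$ is bounded and $C^\infty$, so there is no integrability obstruction) and then taking the outer expectation over $Y$.

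No step is genuinely deep — this is routine Gaussian computation. The point that will need the most care is the justification of differentiating under the expectation sign when deriving~\eqref{eqn:lem-aux-aymp-proof-a-2} and the auxiliary $S^2$-moment from~\eqref{eqn:lem-aux-aymp-proof-a-3}, which Lemma~\ref{lemma:exchange_diff_int} settles with a polynomial-times-exponential dominating envelope, together with keeping the bookkeeping of the completed-square translations consistent so that no sign or coefficient is dropped when the several terms in~\eqref{eqn:lem-aux-aymp-proof-a-4}--\eqref{eqn:lem-aux-aymp-proof-a-5} are combined.
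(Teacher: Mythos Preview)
Your proposal is correct and matches the paper's own proof, which is extremely terse: the paper simply states that \eqref{eqn:lem-aux-aymp-proof-a-1}--\eqref{eqn:lem-aux-aymp-proof-a-5} follow from ``direct integrations'' and \eqref{eqn:lem-aux-aymp-proof-a-6}--\eqref{eqn:lem-aux-aymp-proof-a-7} from ``integration by parts.'' Your use of differentiation in $a$ for \eqref{eqn:lem-aux-aymp-proof-a-2} and Gaussian conditioning on $S=X+Y$ for \eqref{eqn:lem-aux-aymp-proof-a-4}--\eqref{eqn:lem-aux-aymp-proof-a-5} is a clean way to organize what would otherwise be brute-force completion-of-squares in a double integral, and Stein's identity is exactly Gaussian integration by parts; all of this falls under the paper's two-sentence umbrella.
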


\begin{proof}
Equalities \eqref{eqn:lem-aux-aymp-proof-a-1}, \eqref{eqn:lem-aux-aymp-proof-a-2}, \eqref{eqn:lem-aux-aymp-proof-a-3}, \eqref{eqn:lem-aux-aymp-proof-a-4} and \eqref{eqn:lem-aux-aymp-proof-a-5} can be obtained by direct integrations. Equalities \eqref{eqn:lem-aux-aymp-proof-a-6} and \eqref{eqn:lem-aux-aymp-proof-a-7} can be derived using integration by part. 
\end{proof}

\begin{proof}{\textbf{(of Lemma~\ref{lem:neg_curvature_norm_bound})}} \label{proof:poly_approx_tanh}
Indeed $\frac{1}{\paren{1+\beta t}^2} = \sum_{k=0}^\infty (-1)^k(k+1)\beta^kt^k$, as 
\begin{align*}
\sum_{k=0}^\infty (-1)^k(k+1)\beta^kt^k = \sum_{k=0}^\infty(- \beta t)^k + \sum_{k=0}^\infty k (-\beta t)^k = \frac{1}{1+\beta t} + \frac{-\beta t}{(1+\beta t)^2} = \frac{1}{(1+\beta t)^2}. 
\end{align*}
The magnitude of the coefficient vector is 
\begin{align*}
\norm{\mb b}{\ell^1} &= \sum_{k=0}^\infty \beta^k (1+k) = \sum_{k=0}^\infty \beta^k + \sum_{k=0}^\infty k \beta^k = \frac{1}{1-\beta} +\frac{\beta}{(1-\beta)^2} = \frac{1}{(1-\beta)^2} = T. 
\end{align*}
Observing that $\frac{1}{\paren{1+\beta t}^2} > \frac{1}{\paren{1+t}^2}$ for $t \in \brac{0, 1}$ when $0 < \beta < 1$, we obtain 
\begin{align*}
\norm{p -f}{L^1[0,1]} &= \int_0^1 \abs{p(t) - f(t)}dt = \int_0^1 \brac{\frac{1}{(1+\beta t)^2} -\frac{1}{(1+t)^2}} dt=\frac{1-\beta}{2(1+\beta)}  \le \frac{1}{2\sqrt{T}}. 
\end{align*}
Moreover, we have 
\begin{align*}
\norm{f-p}{L^\infty[0,1]} = \max_{t\in[0,1]} p(t) - f(t) = \max_{t\in[0,1]} \frac{t(1-\beta)\paren{2+t(1+\beta)}}{(1+t)^2(1+\beta t)^2} \le 1-\beta = \frac{1}{\sqrt{T}}.  
\end{align*}
Finally, notice that
\begin{align*}
\sum_{k=0}^\infty \frac{b_k}{(1+k)^3} = \sum_{k=0}^\infty \frac{\paren{-\beta}^k}{(1+k)^2} 
& = \sum_{i=0}^\infty\brac{\frac{\beta^{2i}}{(1+2i)^2} - \frac{\beta^{2i+1}}{(2i+2)^2} } \nonumber \\
& = \sum_{i=0}^\infty \beta^{2i} \frac{(2i+2)^2 -\beta(2i+1)^2 }{(2i+2)^2(2i+1)^2} > 0, 
\end{align*}
where at the second equality we have grouped consecutive even-odd pair of summands. In addition, we have
\begin{align*}
\sum_{k=0}^n \frac{b_k}{(1+k)^3}\le \sum_{k=0}^n \frac{\abs{b_k}}{(1+k)^3} = \sum_{k=0}^n\frac{\beta^k}{(1+k)^2} \le 1+ \sum_{k=1}^n \frac{1}{(1+k)k} = 2-\frac{1}{n+1}, 
\end{align*}
which converges to $2$ when $n \to \infty$, completing the proof. 
\end{proof}

\begin{proof}{\textbf{(of Lemma~\ref{lem:U-moments-bound})}} \label{proof:ng_g_comparison}
The first inequality is obviously true for $\mb v = \mb 0$. When $\mb v \neq \mb 0$, we have 
\begin{align*}
\expect{\abs{\mb v^* \mb z}^m } 
& =  \sum_{\ell = 0}^n \theta^\ell \paren{1-\theta}^{n - \ell} \sum_{\mc J \in \binom{[n]}{\ell}} \bb E_{Z \sim \mc N\paren{0, \norm{\mb v_{\mc J}}{}^2}}\brac{\abs{Z}^m} \\
& \le \sum_{\ell = 0}^n \theta^\ell \paren{1-\theta}^{n - \ell} \sum_{\mc J \in \binom{[n]}{\ell}} \bb E_{Z \sim \mc N\paren{0, \norm{\mb v}{}^2}}\brac{\abs{Z}^m}\\
& = \bb E_{Z \sim \mc N\paren{0, \norm{\mb v}{}^2}}\brac{\abs{Z}^m} \sum_{\ell = 0}^n \theta^\ell \paren{1-\theta}^{n - \ell} \binom{n}{\ell} \\
& = \bb E_{Z \sim \mc N\paren{0, \norm{\mb v}{}^2}}\brac{\abs{Z}^m}, 
\end{align*}
where the second line relies on the fact $\norm{\mb v_{\mc J}}{} \le \norm{\mb v}{}$ and that for a fixed order, central moment of Gaussian is monotonically increasing w.r.t. its variance. Similarly, to see the second inequality, 
\begin{align*}
\expect{\norm{\mb z}{}^m}
& = \sum_{\ell = 0}^n \theta^\ell \paren{1 -\theta}^{n-\ell} \sum_{\mc J \in \binom{[n]}{\ell}} \expect{\norm{\mb z'_{\mc J}}{}^m} \\
& \le \expect{\norm{\mb z'}{}^m}  \sum_{\ell = 0}^n \theta^\ell \paren{1 -\theta}^{n-\ell} \binom{n}{\ell} = \expect{\norm{\mb z'}{}^m}, 
\end{align*}
as desired. 
\end{proof}

\begin{proof}{\textbf{(of Lemma~\ref{lem:X-infinty-tail-bound})}} \label{proof:X_inf_tail}
Consider one component of $\mb X$, i.e., $X_{ij}=B_{ij}V_{ij}$ for $i \in [n]$ and $j \in [p]$, where $B_{ij}\sim \mathrm{Ber}\paren{\theta}$) and $V_{ij}\sim \mc N(0,1)$. We have 
\begin{align*}
\bb P\brac{\abs{X_{ij}}> 4\sqrt{\log\paren{np}}} \leq \theta \bb P\brac{\abs{V_{ij}}> 4\sqrt{\log(np)}} \leq \theta \exp\paren{-8\log(np)}= \theta (np)^{-8}. 
\end{align*}
And also 
\begin{align*}
\prob{\abs{X_{ij}} < 1} = 1 -\theta + \theta \prob{\abs{V_{ij}} < 1} \le 1-0.3\theta. 
\end{align*}
Applying a union bound as  
\begin{align*}
\prob{\norm{\mb X}{\infty} \le 1 \; \text{or} \; \norm{\mb X}{\infty} \ge 4\sqrt{\log\paren{np}}} \le \paren{1-0.3\theta}^{np} + np\theta \paren{np}^{-8} \le \exp\paren{-0.3\theta n p} + \theta \paren{np}^{-7}, 
\end{align*}
we complete the proof. 
\end{proof}

\begin{lemma} \label{lem:half_inverse_pert}
Suppose $\mb A \succ \mb 0$. Then for any symmetric perturbation matrix $\mb \Delta$ with $\norm{\mb \Delta}{} \le \tfrac{\sigma_{\min}\paren{\mb A}}{2}$, it holds that 
\begin{align}
\norm{\paren{\mb A + \mb \Delta}^{-1/2} - \mb A^{-1/2}}{} \le \frac{2\norm{\mb A}{}^{1/2} \norm{\mb \Delta}{} }{\sigma_{\min}^2\paren{\mb A}}. 
\end{align}
\end{lemma}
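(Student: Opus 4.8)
## Proof Proposal for Lemma~\ref{lem:half_inverse_pert}

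The plan is to reduce the claim to an integral representation of the operator square-root, via the standard formula
\[
\mb M^{-1/2} = \frac{1}{\pi} \int_0^\infty \paren{\mb M + t \mb I}^{-1} \, t^{-1/2} \, dt
\]
valid for any $\mb M \succ \mb 0$. Applying this to $\mb M = \mb A$ and $\mb M = \mb A + \mb \Delta$ (both are positive definite: the bound $\norm{\mb \Delta}{} \le \sigma_{\min}\paren{\mb A}/2$ guarantees $\mb A + \mb \Delta \succeq \tfrac{1}{2}\sigma_{\min}\paren{\mb A}\mb I \succ \mb 0$), one gets
\[
\paren{\mb A + \mb \Delta}^{-1/2} - \mb A^{-1/2} = \frac{1}{\pi} \int_0^\infty \brac{\paren{\mb A + \mb \Delta + t\mb I}^{-1} - \paren{\mb A + t\mb I}^{-1}} t^{-1/2}\, dt.
\]
The integrand is controlled by the resolvent identity $\paren{\mb A+\mb\Delta+t\mb I}^{-1} - \paren{\mb A+t\mb I}^{-1} = -\paren{\mb A+\mb\Delta+t\mb I}^{-1}\mb\Delta\paren{\mb A+t\mb I}^{-1}$, so that its norm is at most $\norm{\mb\Delta}{} \cdot \norm{\paren{\mb A+\mb\Delta+t\mb I}^{-1}}{} \cdot \norm{\paren{\mb A+t\mb I}^{-1}}{} \le \norm{\mb\Delta}{} \cdot \frac{1}{\tfrac12\sigma_{\min}\paren{\mb A}+t} \cdot \frac{1}{\sigma_{\min}\paren{\mb A}+t}$.

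The key steps, in order, would be: (1) verify $\mb A + \mb\Delta \succ \mb 0$ and record the eigenvalue lower bound $\sigma_{\min}\paren{\mb A + \mb\Delta} \ge \tfrac12\sigma_{\min}\paren{\mb A}$; (2) invoke the integral formula and the resolvent identity to bound the norm of the difference by $\frac{\norm{\mb\Delta}{}}{\pi}\int_0^\infty \frac{t^{-1/2}}{\paren{\tfrac12\sigma_{\min}\paren{\mb A}+t}\paren{\sigma_{\min}\paren{\mb A}+t}}\, dt$; (3) estimate this scalar integral. Write $\sigma \doteq \sigma_{\min}\paren{\mb A}$. A clean way to finish: bound $\tfrac12\sigma + t \ge \tfrac12(\sigma + t)$, so the integral is at most $2\int_0^\infty \frac{t^{-1/2}}{(\sigma+t)^2}\,dt = 2 \cdot \frac{\pi}{2}\sigma^{-3/2} = \pi\sigma^{-3/2}$ (using $\int_0^\infty \frac{t^{-1/2}}{(\sigma+t)^2}dt = \frac{\pi}{2\sigma^{3/2}}$, a standard Beta-function evaluation). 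This yields $\norm{\paren{\mb A+\mb\Delta}^{-1/2} - \mb A^{-1/2}}{} \le \norm{\mb\Delta}{}\,\sigma_{\min}^{-3/2}\paren{\mb A}$, which is in fact slightly stronger than the stated bound $\tfrac{2\norm{\mb A}{}^{1/2}\norm{\mb\Delta}{}}{\sigma_{\min}^2\paren{\mb A}}$, since $\norm{\mb A}{}^{1/2} \ge \sigma_{\min}\paren{\mb A}^{1/2}$ gives $\sigma_{\min}^{-3/2}\paren{\mb A} \le 2\norm{\mb A}{}^{1/2}\sigma_{\min}^{-2}\paren{\mb A}$; so the claimed inequality follows a fortiori.

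An alternative, more elementary route avoids the integral formula entirely: diagonalize with respect to a common framework is not possible since $\mb A$ and $\mb\Delta$ need not commute, but one can use the identity $\mb B^{-1/2} - \mb A^{-1/2} = \mb B^{-1/2}\paren{\mb A^{1/2} - \mb B^{1/2}}\mb A^{-1/2}$ together with a bound on $\norm{\mb A^{1/2} - \mb B^{1/2}}{}$ in terms of $\norm{\mb A - \mb B}{}$ (the operator-monotonicity / Lipschitz estimate $\norm{\mb A^{1/2} - \mb B^{1/2}}{} \le \norm{\mb A - \mb B}{}^{1/2}$ is too weak here; one wants the sharper $\norm{\mb A^{1/2} - \mb B^{1/2}}{} \le \norm{\mb A - \mb B}{}/(\sigma_{\min}\paren{\mb A}^{1/2} + \sigma_{\min}\paren{\mb B}^{1/2})$, which itself follows from the same resolvent/integral machinery). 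I expect the main obstacle to be precisely this: handling the non-commutativity of $\mb A$ and $\mb\Delta$ cleanly, which is exactly what the integral (resolvent) representation is designed to circumvent, so I would commit to that route. The scalar integral evaluation and the eigenvalue perturbation bound $\lambda_{\min}\paren{\mb A+\mb\Delta} \ge \lambda_{\min}\paren{\mb A} - \norm{\mb\Delta}{}$ (Weyl) are both routine.
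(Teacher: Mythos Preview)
Your proof via the integral representation $\mb M^{-1/2} = \tfrac{1}{\pi}\int_0^\infty (\mb M + t\mb I)^{-1}t^{-1/2}\,dt$ and the resolvent identity is correct, and the scalar integral $\int_0^\infty t^{-1/2}(\sigma+t)^{-2}\,dt = \tfrac{\pi}{2}\sigma^{-3/2}$ is evaluated correctly, yielding the sharper bound $\norm{(\mb A+\mb\Delta)^{-1/2}-\mb A^{-1/2}}{}\le \norm{\mb\Delta}{}\,\sigma_{\min}^{-3/2}(\mb A)$ from which the stated estimate follows. The paper does not include a proof of this lemma in the body of the text---it simply defers to Lemma~B.2 of the companion technical report---so there is no in-paper argument to compare against; your route is the standard one for Lipschitz-type perturbation bounds on operator-monotone functions and is entirely appropriate here.
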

\begin{proof}
See proof of Lemma B.2 in the technical report~\cite{sun2015complete_tr}. 
\end{proof}

\begin{lemma} \label{lem:bg_identity_diff}
For any $\theta \in \paren{0, 1/2}$, $\mb X \in \R^{n_1 \times n_2}$ with $\mb X\sim_{i.i.d.} \mathrm{BG}\paren{\theta}$ obeys
\begin{align}
\norm{\frac{1}{n_2 \theta} \mb X \mb X^* - \mb I}{} \le 10\sqrt{\frac{\theta n_1 \log n_2}{n_2}}
\end{align}
with probability at least $1 - n_2^{-8}$, provided $n_2 > C n_1^2 \log n_1$. Here $C > 0$ is a constant. 
\end{lemma}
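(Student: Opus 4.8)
Write $\mb X = \brac{\mb x_1, \dots, \mb x_{n_2}}$ with the columns $\mb x_k \sim_{i.i.d.} \mathrm{BG}\paren{\theta}$ in $\R^{n_1}$, so that, since $\expect{\mb x_k\mb x_k^*} = \theta\mb I$,
\[
\frac{1}{n_2\theta}\mb X\mb X^* - \mb I \;=\; \frac{1}{n_2}\sum_{k=1}^{n_2}\mb Z_k, \qquad \mb Z_k \doteq \frac{1}{\theta}\mb x_k\mb x_k^* - \mb I, \qquad \expect{\mb Z_k} = \mb 0.
\]
The summands are heavy-tailed (the nonzero entries of $\mb x_k$ are Gaussian), so I would use a truncate-then-Bernstein argument: first restrict to a high-probability event on which every column is bounded, then apply the moment-control matrix Bernstein inequality (Lemma~\ref{lem:mc_bernstein_matrix}) to the resulting bounded, centered summands.

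\emph{Truncation.} Let $\event = \Brac{\norm{\mb X}{\infty} \le 4\sqrt{\log\paren{n_1 n_2}}}$, which by Lemma~\ref{lem:X-infinty-tail-bound} fails with probability at most $\theta\paren{n_1 n_2}^{-7} + \exp\paren{-0.3\theta n_1 n_2}$ (enlarging the constant $4$ if a bound of $\tfrac12 n_2^{-8}$ is wanted, at no cost to the argument). On $\event$ we have $\norm{\mb x_k}{}^2 \le 16 n_1 \log\paren{n_1 n_2}$ for all $k$. Let $\widetilde{\mb x}_k$ be the coordinatewise truncation $\paren{\widetilde{\mb x}_k}_i = \paren{\mb x_k}_i \indicator{\abs{(\mb x_k)_i}\le 4\sqrt{\log(n_1 n_2)}}$ and $\widetilde{\mb Z}_k \doteq \tfrac{1}{\theta}\widetilde{\mb x}_k\widetilde{\mb x}_k^* - \mb I$, so that $\mb Z_k = \widetilde{\mb Z}_k$ for every $k$ on $\event$. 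Because the coordinates of $\widetilde{\mb x}_k$ are independent and symmetric, $\expect{\widetilde{\mb x}_k\widetilde{\mb x}_k^*}$ is a scalar multiple of $\mb I$; hence $\expect{\widetilde{\mb Z}_k}$ is too, with $\norm{\expect{\widetilde{\mb Z}_k}}{} = \expect{V^2\indicator{\abs V > 4\sqrt{\log(n_1 n_2)}}}$ for $V\sim\mc N(0,1)$, which is far below the target tolerance. It therefore suffices to control $\norm{\tfrac{1}{n_2}\sum_k \paren{\widetilde{\mb Z}_k - \expect{\widetilde{\mb Z}_k}}}{}$.

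\emph{Moment control.} Put $R \doteq 32 n_1\log\paren{n_1 n_2}/\theta$; then $\norm{\widetilde{\mb Z}_k - \expect{\widetilde{\mb Z}_k}}{}\le R$ surely. For the variance proxy, the fourth-moment identity $\expect{\paren{\mb x_k\mb x_k^*}^2} = \expect{\norm{\mb x_k}{}^2\,\mb x_k\mb x_k^*} = \paren{3\theta + (n_1-1)\theta^2}\mb I$ — whose off-diagonal entries vanish because the coordinates are independent and centered — carries over to $\widetilde{\mb x}_k$ up to shrinking the diagonal constant, so that with $\sigma^2 \doteq \tfrac{3}{\theta} + n_1$,
\[
\expect{\paren{\widetilde{\mb Z}_k - \expect{\widetilde{\mb Z}_k}}^2} \;\preceq\; \expect{\widetilde{\mb Z}_k^2} \;=\; \tfrac{1}{\theta^2}\expect{\norm{\widetilde{\mb x}_k}{}^2\,\widetilde{\mb x}_k\widetilde{\mb x}_k^*} - \tfrac{2}{\theta}\expect{\widetilde{\mb x}_k\widetilde{\mb x}_k^*} + \mb I \;\preceq\; \sigma^2\mb I .
\]
The crucial point is that this \emph{matrix} variance is only $O\paren{n_1/\theta}$, far smaller than the scalar quantity $\expect{\norm{\widetilde{\mb Z}_k}{}^2}\asymp n_1^2/\theta^2$: the rank-one terms $\mb x_k\mb x_k^*$ point in essentially isotropic directions and average out. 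Finally, for any symmetric $\mb A$ with $\norm{\mb A}{}\le R$ one has $\pm\mb A^m \preceq R^{m-2}\mb A^2$ for every integer $m\ge 2$ (compare eigenvalues), hence $\pm\expect{\paren{\widetilde{\mb Z}_k - \expect{\widetilde{\mb Z}_k}}^m}\preceq R^{m-2}\sigma^2\mb I \preceq \tfrac{m!}{2}\sigma^2 R^{m-2}\mb I$, which is exactly the hypothesis needed for Lemma~\ref{lem:mc_bernstein_matrix}.

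\emph{Conclusion and the main difficulty.} Lemma~\ref{lem:mc_bernstein_matrix} now gives $\prob{\norm{\tfrac1{n_2}\sum_k\paren{\widetilde{\mb Z}_k - \expect{\widetilde{\mb Z}_k}}}{}\ge t}\le 2 n_1\exp\paren{-\tfrac{n_2 t^2}{2\sigma^2 + 2 R t}}$. Taking $t$ to be a constant multiple of $\sqrt{\sigma^2\log n_2 / n_2}$, the hypothesis $n_2 > C n_1^2\log n_1$ forces $R t \ll \sigma^2$, so the exponent is $\gtrsim \log n_2$ and the probability is $\le \tfrac12 n_2^{-8}$; a union bound with $\prob{\event^c}$ and the (negligible) truncation bias $\norm{\expect{\widetilde{\mb Z}_k}}{}$ then yields the claimed spectral estimate. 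The real work is not in any single bound but in the bookkeeping that makes them compatible: one must choose the truncation level and the deviation $t$ so that $\prob{\event^c}$, the bias, and the linear term $Rt$ in the Bernstein exponent are simultaneously negligible relative to $n_2^{-8}$, $t$, and $\sigma^2$ respectively — and it is precisely the strong sample hypothesis $n_2 > C n_1^2\log n_1$ that pushes the spectral deviation $\sqrt{\sigma^2\log n_2/n_2}$ into the regime where these lower-order (non-Gaussian) terms no longer bind.
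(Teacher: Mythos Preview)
Your plan is sound but takes a longer route than the paper. The paper dispenses with truncation entirely: since Lemma~\ref{lem:mc_bernstein_matrix} is already the moment-control form of Bernstein (it only needs $\pm\expect{\mb X_k^m}\preceq \tfrac{m!}{2}\sigma^2 R^{m-2}\mb I$, not boundedness), the paper simply verifies that hypothesis directly for the raw summands $\tfrac{1}{\theta}\mb x_k\mb x_k^*$. The key identity is $(\mb x_k\mb x_k^*)^m = \|\mb x_k\|^{2m-2}\mb x_k\mb x_k^*$, whose expectation is a scalar multiple of $\mb I$ by coordinate symmetry; the paper then bounds the scalar $\expect{\|\mb x\|^{2m-2}x(1)^2}$ for every $m\ge 2$ by writing $\|\mb x\|^2 = x(1)^2 + t^2(\mb x)$, expanding binomially, and invoking the Gaussian and $\chi^2$ moment estimates (Lemmas~\ref{lem:guassian_moment} and~\ref{lem:chi_sq_moment}). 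Your truncate-then-Bernstein route is more portable --- it needs only the second moment together with a crude sup-norm, and the reduction $\pm\mb A^m\preceq R^{m-2}\mb A^2$ is a clean device --- but it pays for that simplicity with the bias-and-event bookkeeping you flag at the end, and the resulting $R$ carries an extra $\log(n_1 n_2)$ factor that the direct moment computation avoids. The decisive observation, shared by both arguments, is exactly the one you highlight: the matrix variance is of order $n_1 + 1/\theta$, not the naive $n_1^2/\theta^2$.
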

\begin{proof}
Observe that $\expect{\tfrac{1}{\theta}\mb x_k \mb x_k^*} = \mb I$ for any column $\mb x_k$ of $\mb X$ and so $\tfrac{1}{n_2 \theta} \mb X \mb X^*$ can be considered as a normalize sum of independent random matrices. Moreover, for any integer $m \ge 2$, 
\begin{align*}
\expect{\paren{\frac{1}{\theta} \mb x_k \mb x_k^*}^m} = \frac{1}{\theta^m} \expect{\norm{\mb x_k}{}^{2m-2} \mb x_k \mb x_k^*}. 
\end{align*}
Now $\expect{\norm{\mb x_k}{}^{2m-2} \mb x_k \mb x_k^*}$ is a diagonal matrix (as $\expect{\norm{\mb x_k}{}^2 x_k\paren{i} x_k\paren{j}} = - \expect{\norm{\mb x_k}{}^2 x_k\paren{i} x_k\paren{j}}$ for any $i \neq j$ by symmetry of the distribution) in the form $\expect{\norm{\mb x_k}{}^{2m-2} \mb x_k \mb x_k^*} = \expect{\norm{\mb x}{}^{2m-2} x(1)^2}\mb I$ for $\mb x \sim_{i.i.d.} \mathrm{BG}\paren{\theta}$ with $\mb x \in \R^{n_1}$. Let $t^2\paren{\mb x} = \norm{\mb x}{}^2 - x(1)^2$. Then if $m = 2$, 
\begin{align*}
\expect{\norm{\mb x}{}^2 x(1)^2} 
& = \expect{x(1)^4} + \expect{t^2\paren{\mb x}} \expect{x(1)^2} \\
& = \expect{x(1)^4} + \paren{n_1-1} \paren{\expect{x(1)^2}}^2 = 3\theta + \paren{n_1-1} \theta^2 \le 3n_1 \theta, 
\end{align*}
where for the last simplification we use the assumption $\theta \le 1/2$. For $m \ge 3$, 

\begin{align*}
\expect{\norm{\mb x}{}^{2m-2} x(1)^2} 
& = \sum_{k=0}^{m-1} \binom{m-1}{k} \expect{t^{2k}\paren{\mb x} x(1)^{2m-2k}} =  \sum_{k=0}^{m-1} \binom{m-1}{k}  \expect{t^{2k}\paren{\mb x}} \expect{x(1)^{2m-2k}} \\
& \le \sum_{k=0}^{m-1} \binom{m-1}{k} \bb E_{Z \sim \chi^2\paren{n_1 -1}}\brac{Z^k} \theta \bb E_{W \sim \mc N\paren{0, 1}}\brac{W^{2m-2k}} \\
& \le \theta \sum_{k=0}^{m-1} \binom{m-1}{k} \frac{k!}{2} \paren{2n_1 - 2}^k \paren{2m-2k}!! \\
& \le \theta 2^m \frac{m!}{2} \sum_{k=0}^{m-1} \binom{m-1}{k} \paren{n_1-1}^k \\
& \le \frac{m!}{2} n_1^{m-1} 2^{m-1}, 
\end{align*}
where we have used the moment estimates for Gaussian and $\chi^2$ random variables from Lemma~\ref{lem:guassian_moment} and Lemma~\ref{lem:chi_sq_moment}, and also $\theta \le 1/2$. Taking $\sigma^2 = 3n_1 \theta$ and $R = 2n_1$, and invoking the matrix Bernstein in Lemma~\ref{lem:mc_bernstein_matrix}, we obtain 
\begin{align}
\expect{\norm{\frac{1}{p\theta} \sum_{k=1}^p \mb x_k \mb x_k^* - \mb I}{} > t} \le \exp\paren{-\frac{n_2 t^2}{6n_1 \theta + 4n_1 t} + 2\log n_1}
\end{align}
for any $t \ge 0$. Taking $t = 10\sqrt{\theta n_1 \log\paren{n_2}/n_2}$ gives the claimed result. 
\end{proof}

\end{appendices}

\section*{Acknowledgment}
We thank Dr. Boaz Barak for pointing out an inaccurate comment made on overcomplete dictionary learning using SOS. We thank Cun Mu and Henry Kuo of Columbia University for discussions related to this project. We also thank the anonymous reviewers for their careful reading of the paper, and for comments which have helped us to substantially improve the presentation. JS thanks the Wei Family Private Foundation for their generous support. This work was partially supported by grants ONR N00014-13-1-0492, NSF 1343282, NSF CCF 1527809, NSF IIS 1546411, and funding from the Moore and Sloan Foundations.

\bibliographystyle{IEEEtran}
\bibliography{IEEEabrv,../dl_tit,../ncvx}








\end{document}